\newtheorem{thm}{Theorem}
\newtheorem{lem}[thm]{Lemma}
\newtheorem{cor}[thm]{Corollary}
\begin{document}

\frontmatter
\setstretch{\dnormalspacing}
\authorlist



\chapter{Introduction\label{cha:Introduction}}

\section{Invariant Object Computation}

Object recognition is one of the most fundamental cognitive tasks
performed by the brain. A successful object recognition requires a
brain to discriminate between different classes of objects despite
variabilities in the stimulus space. For example, a mammalian visual
system can recognize objects despite a variation in the orientation,
position, and background context, etc. Such impressive robustness
to noise is not only specific to visual object recognition, but also
similar tasks done by other brain modalities. Auditory systems are
able to recognize auditory 'objects' such as songs, and languages,
despite variabilities in the sound intensity, relative pitches, or
sound textures (such as voice of a person). In general, human perception
has to operate with discrete entities such as objects, faces, words,
smells, and tasks. Hence, it is of fundamental interest to understand
to evaluate the emergence of neural representations of these entities
along the sensory hierarchies. Artificial intelligent systems aim
to solve similar perceptual tasks. The recent success of Deep Networks
has been foremost their ability to perform object recognition tasks
despite the immense variabilities in the signals input representations,
in both training and testing examples \cite{krizhevsky2012imagenet}.
An artificial face recognition tasks have to be done despite variabilities
of facial expressions, image scale and occlusion, etc. Autonomous
driving systems have to recognize objects in the driving environment
fast and accurately, despite the various conditions such as speed
of approach, location, confounding objects. Likewise, voice recognition
systems need to overcome enormous variability in many stimulus dimensions.
Indeed, it is a common practice in Machine Learning to augment the
training set by performing a variety of transformations representing
the natural inherent variability in the relevant object domain ('data
augmentation'). Therefore, understanding how brain achieves an invariant
object recognition tasks is not only important scientific challenge,
but may also provide insight on how to improve artificial intelligent
systems. 

\begin{figure}[h]

\begin{centering}
\includegraphics[width=0.9\textwidth]{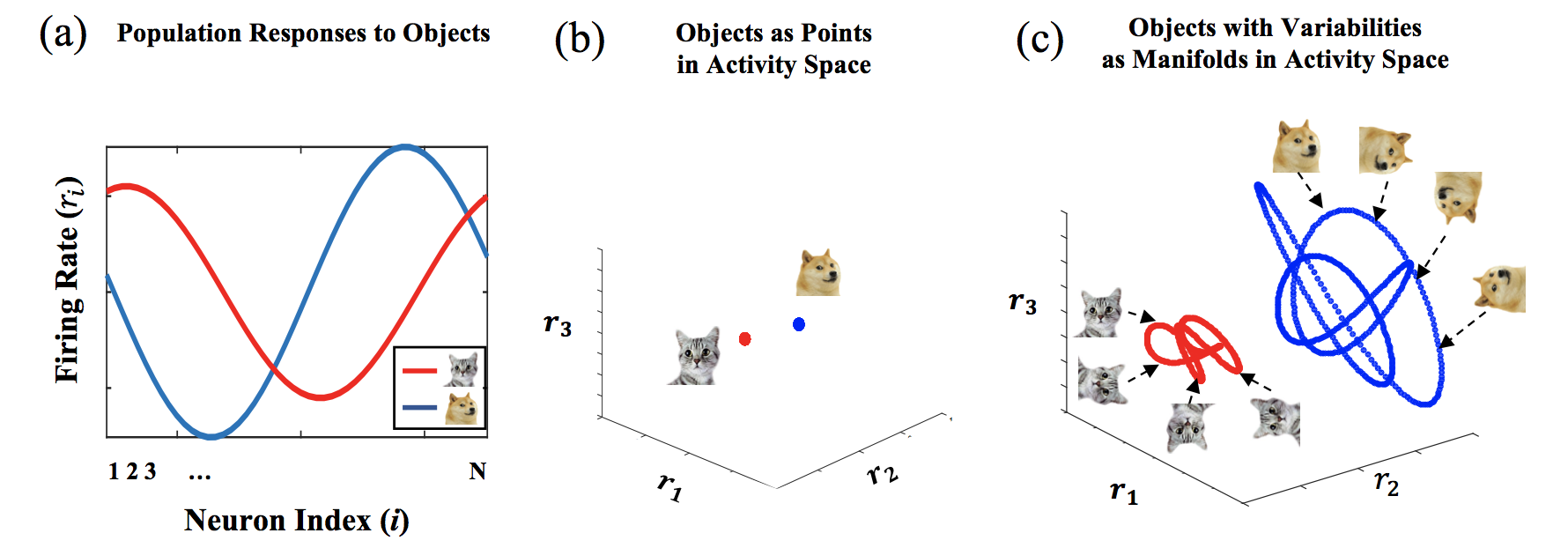}
\par\end{centering}
\caption{\textbf{Activity Patterns as Points and Manifolds in the Neural State
Space. }(a) (Illustration) Firing Rate of N neurons (Neuron Index:
1,...,N), responding to two objects. Neural activity pattern shown
as red line is a population response to a cat, and blue to a dog.
(b) In the $N$ dimensional neural state space, the population response
is a $N$-dimensional vector, representing a point. The blue line
in (a) , representing a response to a dog, is a point, $\vec{r}_{dog}$,
in $\mathbb{R}^{N}$space. Likewise, the red line in $(a)$ representing
a cat is a point $\vec{r}_{cat}$ in $\mathbb{R}^{N}$space (only
3 axis are shown for illustration). (c) When the stimulus variabilities
are introduced, such as change in orientation, the neural responses
undergo a smooth change, causing the point representing each object
in the state space move around, forming an object manifold in the
neural state space. Blue manifold is a set of neural activities representing
a dog at different orientations, and a red manifold is a set of neural
activities representing a cat in different orientations. \label{fig:PatternAsPoint}}

\end{figure}

\section{Object Manifolds}

Consider a set of neurons responding to different objects (Figure
\ref{fig:Manifolds_Illustration}(a)). Without additional variabilities,
two stimuli belonging to different classes are mapped into two points
in the neural state space, $R^{N}$ (Figure. \ref{fig:Manifolds_Illustration}(b)).
We will occasionally call each such point, a neural state or an \emph{activity
pattern}. If however, one varies continuously the physical parameters
in the stimulus space which do not change the object class, e.g.,
orientation, location, distortion, the neural state vector will vary
so that the set of neural states or activity patterns that correspond
to an object can be thought of as a manifold in the neural state space
(Figure \ref{fig:Manifolds_Illustration}(a)). In this geometrical
perspective, object recognition and discrimination can be viewed as
the the task of discriminating or recognition of manifolds. These
manifolds vary as the signals propagate from one processing stage
to another. We will therefore refer to these manifolds also as neural
manifolds or manifold representations, when dealing with object manifolds
as they are reflected in the state space of a specific neural stage. 

\begin{figure}[h]
\begin{centering}
\includegraphics[width=0.8\textwidth]{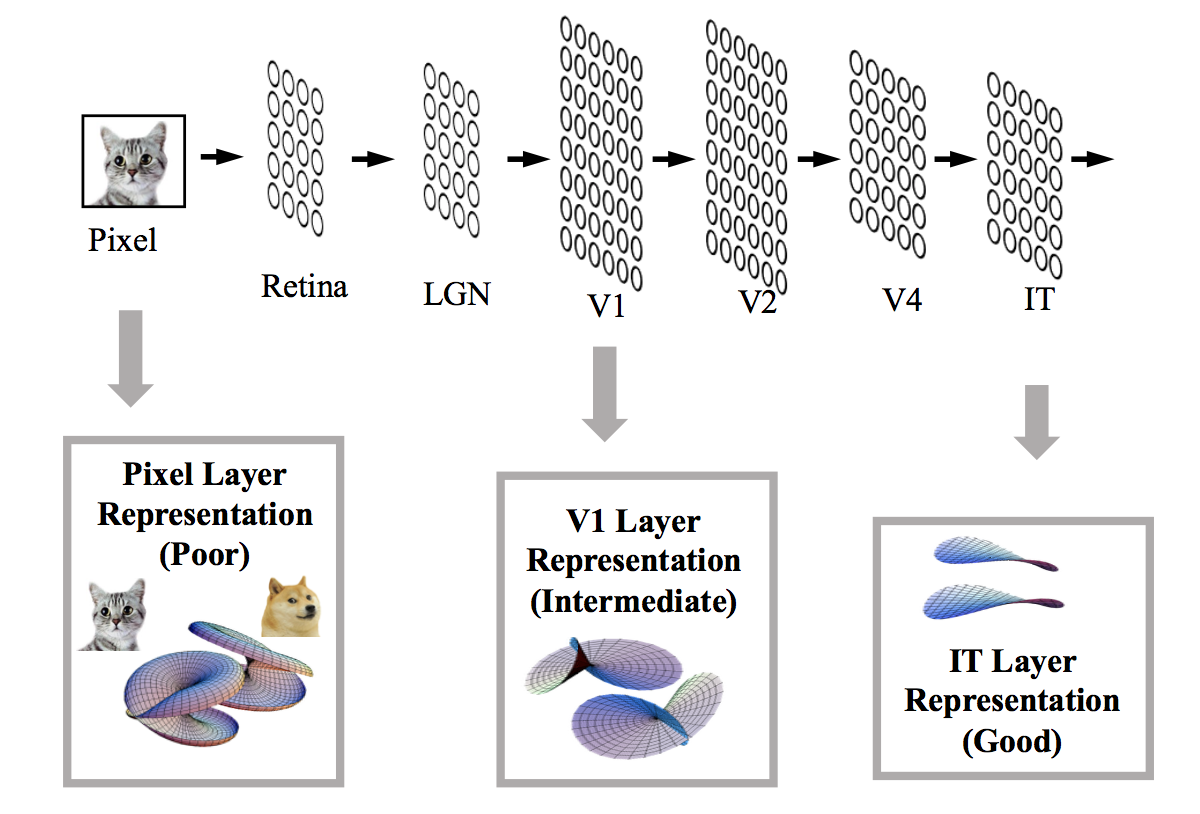}
\par\end{centering}
\caption{\textbf{Untanglement of Object Manifolds in Sensory Hierarchy.} Manifolds
corresponding to different objects are thought to be highly tangled
in the first stage of sensory processing (such as pixel layer representation),
and undergo transformations across different layers in the sensory
hierarchy and become more linearly separable in the downstream of
the sensory processing. \label{fig:UntanglementManifold} }

\end{figure}

\section{Linear Separability of Manifolds}

So, how does the brain and the deep networks overcome stimulus variabilities
in object recognition? In the feedforward visual hierarchy, it has
been suggested that the stages of nonlinear transformations reformat
the object manifolds so that they become increasingly easier to be
readout out by a simple downstream neural systems \cite{dicarlo2007untangling}.
The downstream circuit is assumed to implement a biologically plausible
linear readout. Hence, the reformatting of object manifolds is translated
as 'untangling' them so that they are eventually amenable to be separated
by a \emph{linear classifier}. The idea that 'intermediate' neural
representations help to discriminate complex stimuli by a linear readout,
has been applied to explain features of a variety of sensory representations
in the brain (including 'mixed representations' in prefrontal cortex\cite{rigotti2013importance},
sparse expansions in neocortical\cite{babadi2014sparseness}, memory
allocations in hippocampal and cerebellar systems\cite{valiant2012hippocampus}).
Deep Networks for object recognition has similarly employed an architecture
where at the top layer a linear classifier operates as a readout of
the networks. 

Linear separation of neural manifolds can be described by a decision
hyperplane that separates entire manifolds to one of the two sides
of the hyperplane, fig. \ref{fig:Manifolds_Illustration}. The separating
hyperplane is determined by the vector $\mathbf{w}$, a direction
vector normal to the hyperplane. The components of this vectors are
the synaptic weights of the Linear Readout, also known as the Perceptron,
as it computes the weighted sum of each vector and thresholds the
result to produce a binary output. One of the focus of this work is
to evaluate what aspects of the the neural manifolds representation
gives a better linear separability. Before continuing it is important
to emphasize that by separating manifolds we mean separating all points
on the manifolds according to a rule that assigns to all points belonging
to a single manifold the same label. Thus, at any given time, the
system classifies \emph{a single input vector. }

\begin{figure}
\begin{centering}
\includegraphics[width=0.8\textwidth]{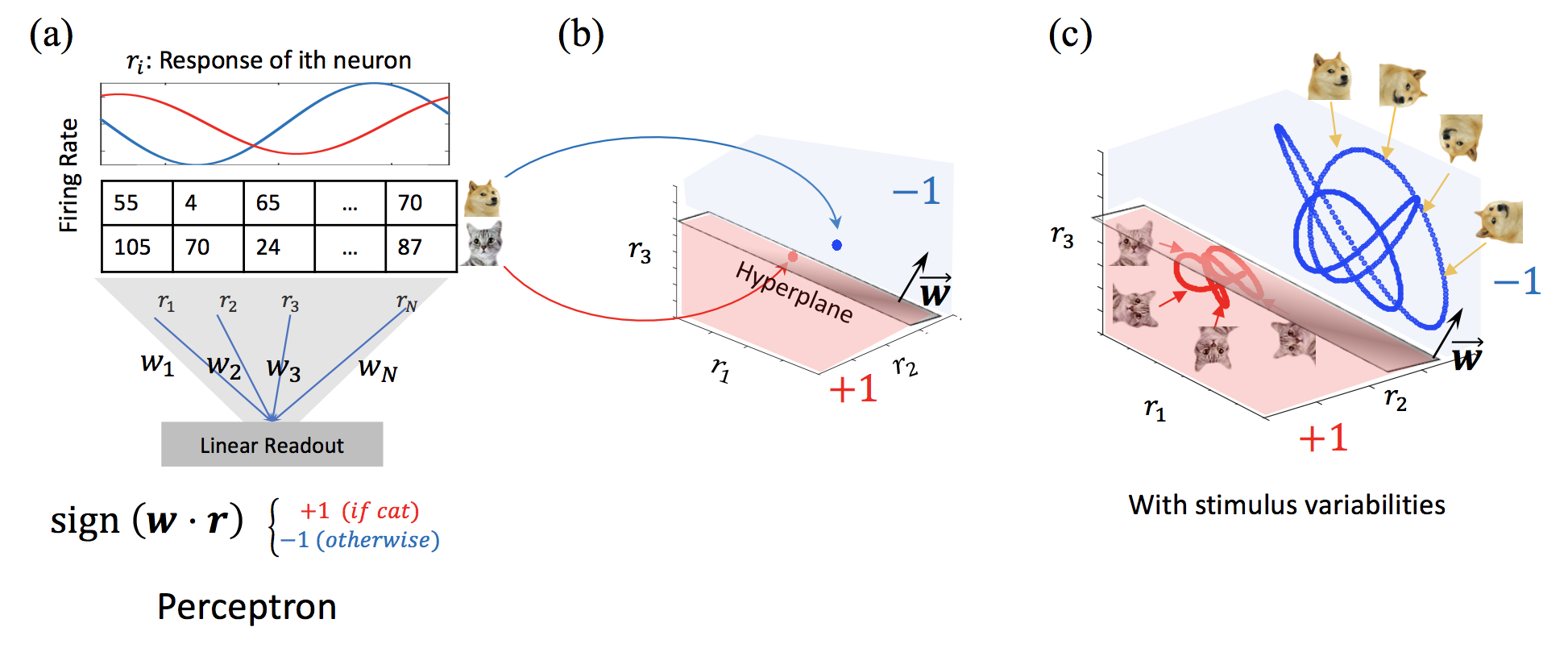}
\par\end{centering}
\caption{\textbf{Invariant Object Discrimination as Linear Separation of Manifolds.}
(a) A perceptron with a weight $\mathbf{w}$, which can classify different
objects corresponds to a hyperplane characterized by the orthogonal
vector $\mathbf{w}$ which separates the two object points in the
state space (b). (c) In this perspective, a perceptron weight $\mathbf{w}$,
which can classify different objects with their variabilities, corresponds
to a hyperplane characterized by the orthogonal vector $\mathbf{w}$,
which separates the two object manifolds in the state space. \label{fig:Manifolds_Illustration}}
\end{figure}

\section{Theory of Linear Classification}

Quantifying linear separability has been extensively studied in the
context of linearly classifying isolated points. Perceptron capacity,
first introduced by Cover \cite{cover1965geometrical}. He asked the
following question in his formulation of Cover's Theorem. Suppose
there are $P$ points in an N-dimensional ambient space, and they
are in general position. Each of the point represent a distinct pattern,
and half of the points are labeled positive, and the other half negative.
Then, what is the maximum number of $P$ where most of the dichotomies
are linearly separable? If there are only a few points, it is easy
to find a linearly separable solution, and with an increasing number
of points, it becomes harder to find linearly separable solution.
When there are too many points, they become linearly non-separable.
He derived an analytic formula for the probability that a random classification
of P points in N dimensions can be implemented by perceptron as 

\begin{equation}
\frac{C(P,N)}{2^{P}}=\frac{2\sum_{k=0}^{N-1}\left(\begin{array}{c}
P-1\\
k
\end{array}\right)}{2^{P}}\label{eq:CoverTheorem}
\end{equation}

and the notion of perceptron capacity deals with the question of what
is the maximum number of patterns allowed for linear such that almost
all dichotomies are linearly separable (Figure \ref{fig:CoverCapacity}).
Cover's perceptron capacity refers to the maximum number of patterns
$P_{max}$ allowed per ambient dimension $N$, also known as load
($\alpha=P_{max}/N$),such that the probability of linear separability
is larger than 0.5. VC dimension refers to the maximum load $\alpha$
such that the probability of linear separability is 1(Figure \ref{fig:CoverCapacity}). 

\begin{figure}

\begin{centering}
\includegraphics[width=0.9\textwidth]{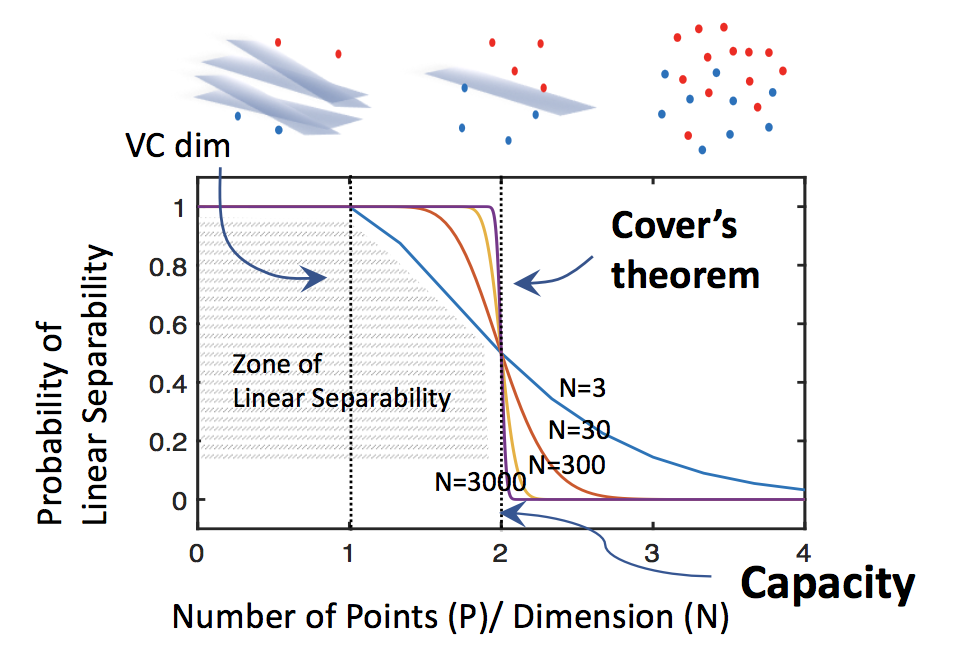}
\par\end{centering}
\caption{\textbf{Linear Separability of Points: Cover's Theorem of Perceptron
Capacity.} Cover's theorem specifies the perceptron capacity for isolated
points in general position, as the maximum number of points $P$ (in
dimension $N$) for which at least half of the possible dichotomies
have a linear classifier. \label{fig:CoverCapacity}}

\end{figure}

A statistical mechanical theory of the perceptron was first introduced
by Elizabeth Gardner\cite{gardner1988space,gardner87}. Gardner's
theory is extremely important as it provides accurate estimates of
the Perceptron capacity beyond the Cover theorem. In particular, the
theory allows to evaluate the capacity for solutions with a given
robustness measures. Similar to Support Vector Machines \cite{vapnik1998statistical}.
robustness of linear classifiers can be quantified by the margin,
ie., the distance between the separating hyperplane and the closest
point. And the solutions with maximum margins are known as the SVM
solutions. 

Formally, Gardner's theory evaluates the maximal number of points
in $R^{N}$ for which there is a vector $w$ that obeys the following
set of inequalities

\begin{equation}
y^{\mu}\left(\vec{w}\cdot\vec{x}^{\mu}+b\right)/||\vec{w}||>\kappa\geq0;\:y^{\mu}=\pm1
\end{equation}

Unlike Cover result, the answer to this question depends on the statistics
of the inputs and labels. The simplest case is where all components
$x_{i}^{\mu}$are iid with zero mean and finite variance (which can
be taken as 1). (The shape of the distribution are less important
as long as mild conditions are obeyed). The labels are randomly assigned
to these points each with probability $1/2$ . Finally, the theory
becomes exact in the the thermodynamic limit $N,P\rightarrow\infty$,
while $\alpha,\kappa,=O(1)$ . Using replica theory in the theory
of spin glasses (more detailed treatment is in the appendix to the
chapter), Gardner has evaluated analytically the volume of possible
solutions for a given load $\alpha$ and margin $\kappa$ . The volume
is exponentially large (in $N$) below the capacity, $\alpha_{G}(\kappa)$
and is zero above it. The maximal margin solution is right at the
border between the two regimes. Using the vanishing volume condition,
Gardner obtained an elegant expression for the perceptron capacity
with finite margin $\kappa$ 

\begin{equation}
\alpha_{G}(\kappa)=\left(\int_{-\kappa}^{\infty}Dt\left(t+\kappa\right)^{2}\right)^{-1}\label{eq:GardnerCapacity}
\end{equation}

where $Dt=\frac{\text{exp}\left(-\frac{1}{2}t^{2}\right)}{\left(2\pi\right)^{1/2}}\text{d}t$
(Figure. \ref{fig:GardnerResults}(a)). Furthermore, the Gardner framework
allows for the calculation of fraction of support vectors on the margin,
which has an important bearing on its robustness and generalization
performance (Ref) (Figure. \ref{fig:GardnerResults}(b)). 

\begin{figure}
\begin{centering}
\includegraphics[width=0.8\textwidth]{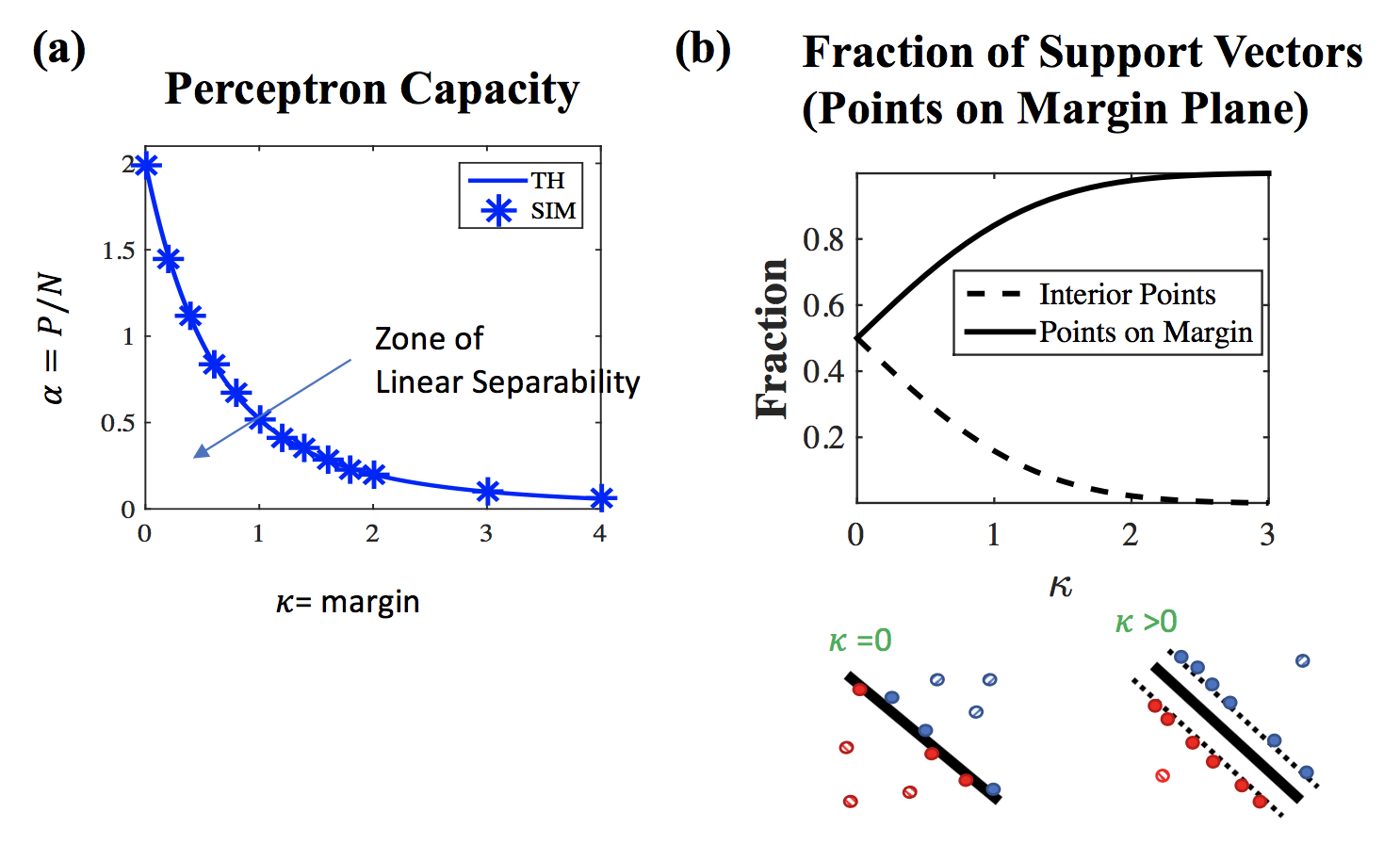}
\par\end{centering}
\caption{\textbf{Gardner's Perceptron Theory: Capacity and Support Vectors.}
(a) Gardner's replica analysis specifies the perceptron capacity $\alpha=P/N$
as a function of margin $\kappa$. $P$ is number of points, $N$
is the network size. (b) The fraction of support vectors amongst the
total number of points can be calculated as a function of margin $\kappa$
. At zero margin, half of the points are support vectors (black solid),
and the other half are interior points, that are in the space shattered
by the hyperplane (black dashed). The fraction of support vectors
in crease with increasing margin $\kappa$. \label{fig:GardnerResults}}

\end{figure}

Gardner theory is also applicable to more complex statistical ensembles,
such as the case of sparse labels where the labels are not uniformly
distributed. However, the current theory is inapplicable to the problem
of manifold classification, where the strong correlations between
points belonging to the same manifold is of primary importance. The
thesis addresses the following questions:

1. What is the capacity of manifolds, and the nature of solution?
What geometric features of the manifolds are relevant for the manifold
capacity? 

2. How to implement the practical aspects of analyzing data manifolds
numerically? In order to test the manifold capacity with simulation,
what is the most efficient algorithm to find a classification solution
for manifolds? To get an estimate of the manifold capacity, how to
numerically solve it? 

3. What are the necessary extensions required to understand and analyze
more realistic problems? We extend it to manifold classification problem
with sparse labeling, correlation, classification with nonlinearities
such as multilayer and nonlinear kernels, and apply the theory to
realistic data. 

\section{Outline of Thesis}

This thesis introduces a theory that generalizes Gardner's analysis
of perceptron capacity for isolated points to the perceptron capacity
for manifolds. The theory assumes (most of the time) that the manifolds
span a low dimensional hyperspace (strictly speaking the embedding
dimension is held finite as $N\rightarrow\infty).$ In the following
chapters, we introduce a set of investigations that lays groundwork
for a comprehensive theory of linear manifold classification. In chapter
2, we provides the basic tools for applying the replica theory to
compute linear classification of manifolds. Here we focus on the simple
manifolds: line segments, $L_{2}$ balls, and $L_{p}$ balls. In chapter
3, we address the numerical question of how to solve max margin problems
on manifolds, which consists of uncountable set of training examples.
We use methods from Quadratic Semi-Infinite Programming (QSIP) to
develop a novel algorithm denoted M4 (Max Margin Manifolds Machines).
In chapter 4, we generalize the theory of chapter 2 to address more
complex manifold geometries, for both smooth and non-smooth manifolds.
In chapter 5, we present a set of important extensions of the theory
to cover more realistic conditions, such as correlated manifolds,
and sparse coding tasks. We also discuss extensions to nonlinear manifold
classifications. Finally, we demonstrate how the theory can be applied
to analyze deep networks for in visual object recognition. 

\section{Chapter 2: Linear Separation of $L_{p}$Balls}

\begin{figure}[h]
\begin{centering}
\includegraphics[width=0.7\textwidth]{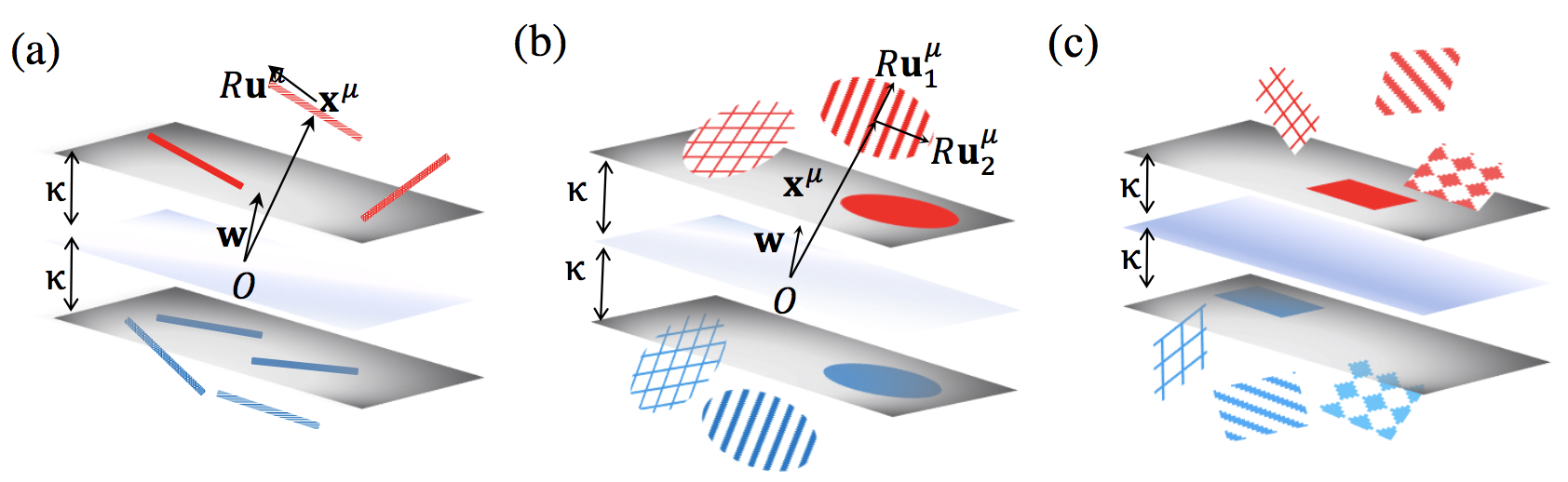}
\par\end{centering}
\caption{\textbf{Linear Classification of $L_{p}$ balls.} (a) Classification
of line segments with length $2R$. (Example of $D=1$ balls with
radius $R$). (b) Classification of $D$-dimensional $L_{2}$ balls
with radius $R$. (c) Classification of $D$-dimensional $L_{1}$
balls with radius $R$. \label{fig:LpClassification} }
\end{figure}

In this chapter we lay the ground for the statistical mechanical theory
of linear classification of manifolds. We consider manifolds which
can be described as $L_{p}$ balls in $D$ dimensions with a radius
$R$ . We write points on the manifolds as,

\begin{equation}
x^{\mu}=\left\{ x_{0}^{\mu}+R\sum_{i=1}^{D}s_{i}u_{i}^{\mu}\right\} \label{eq:BallsEqnIntro}
\end{equation}

where $x_{0}^{\mu}$ is the center of the $\mu$th ball, $\mu=1,...,P$.
 The axes of the balls are given by the D vectors $u_{i}^{\mu}$where
$i=1,...,D$. The vector $\vec{s}$ parameterizes the point on the
ball and obeys the constraint $||s||_{p}\leq1$ . The case of $p=2$
corresponds to the usual Euclidean balls in $D$ dimensions. The case
of $D=1$ is the special case of line segments with length $2R$.
Other examples are shown in fig. \ref{fig:LpClassification}. As we
show in this chapter, linear classification of these balls corresponds
to the requirements that the closest points on each manifolds obeys
inequalities, eq \ref{eq:BallsEqnIntro} above. For the $L_{p}$ balls,
with $1<p\leq\infty$1, this amounts to the following constraints
(where we consider zero bias for simplicity)

\begin{equation}
h_{0}^{\mu}-R||\vec{h}^{\mu}||_{q}\geq\kappa,\label{eq:minS}
\end{equation}

\begin{equation}
q=p/(p-1),\;1<p<\infty\label{eq:minS-1}
\end{equation}

\begin{equation}
q=\infty\;0<p\leq1\label{eq:minS-1-1}
\end{equation}

where $h_{0}^{\mu}=||\mathbf{w}||^{-1}y^{\mu}\mathbf{w}\cdot\mathbf{x}^{\mu}$
are the fields induced by the centers and $h_{i}^{\mu}=||\mathbf{w}||^{-1}y^{\mu}\mathbf{w}\cdot\mathbf{u_{i}}^{\mu}$
are the fields induced by the $i$ th basis vectors of $\mu$th manifold,
$\kappa$ is the margin of the linear classifier. 

Importantly, linear classification of manifolds depends on the geometric
properties of the \textbf{convex hulls} of the data manifolds. Thus,
when $p\le1$, the convex hull of the manifold becomes faceted, consisting
of vertices, flat edges and faces. For these geometries, the constraints
on the fields associated with a solution vector $\mathbf{w}$ becomes:
$h_{0}^{\mu}-R\max_{i}\left\Vert h_{i}^{\mu}\right\Vert \ge\kappa$
for all $p<1$ (Fig. \ref{fig:LpClassification}(c)). 

The statistical mechanical theory evaluates the average of the log
of solution volume, 

\begin{equation}
V=\int_{\left\Vert \mathbf{w}\right\Vert ^{2}=N}d^{N}\mathbf{w}\:\prod_{\mu=1}^{P}\Theta\left(h_{0}^{\mu}-R||\vec{h}^{\mu}||_{q}-\kappa\right).\label{eq:Volume}
\end{equation}

and identifying the point of vanishing volume allowed us to evaluate
the capacity, in the form of $\alpha_{B_{p}}(\kappa,R,D)$ for various
norms $L_{p}$ . Beyond the capacity, the theory provides an important
insight into the nature of the max margin solution. In particular
it generalizes the notion of support vectors to support manifolds.
As we show, some of the support manifolds are fully embedded in the
margin hyperplanes, some are touching the planes in a single point,
while in the case of $L_{1}$balls, they may have edges or faces in
the hyperplanes. These properties have important implications for
noise robustness of the solutions. Finally, these examples already
reveal the tradeoff between $D$ and $R$, and the effect of large
$R$ and large $D$. Specifically, we show that for large $D$ $L_{2}$
balls,

\begin{equation}
\alpha_{B}(\kappa,R,D\gg1)=(1+R^{2})\alpha_{G}(\kappa+R\sqrt{D}),\,\ D\gg1
\end{equation}
relating linear separation of balls to linear separation of points
with an additional effective margin $R\sqrt{D}$. 

\section{Chapter 3: \textcolor{black}{The Max Margin Manifold Machine}}

\begin{figure}[h]

\begin{centering}
\includegraphics[width=0.9\textwidth]{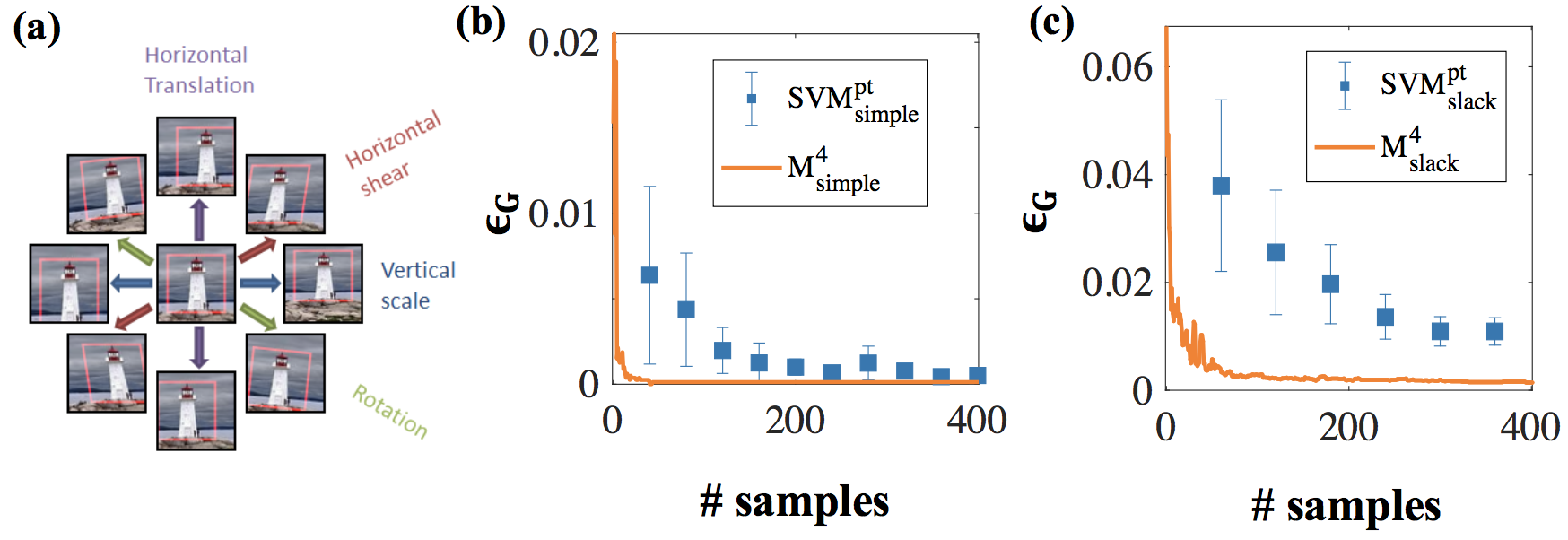}
\par\end{centering}
\caption{\textbf{The Maximum Margin Manifold Machine Performance.} (a) Object-based
manifolds are created by affine transformations of images from ImageNet
dataset. (b) Generalization error versus number of training samples,
for the linear $SVM$ (blue markers) and the $M_{simple}^{4}$ algorithm,
when the manifolds are in the separable regime. (b) Generalization
error versus the number of training samples for slack-SVM (blue) and
$M_{slack}^{4}$ algorithm when manifolds are in the non-separable
regime (above capacity). Our $M_{simple}^{4}$ and $M_{slack}^{4}$
show superior generalization error performance for the same number
of training samples. \label{fig:M4performance} }

\end{figure}

\textcolor{black}{Most learning algorithms assume the number of training
examples is finite. In this work, we consider the problem of classifying
data manifolds utilizing the underlying manifold structure consisting
of an uncountable number of points. We propose an efficient iterative
algorithm called $M^{4}$ that solves a quadratic semi-infinite programming
problem to find the maximum margin solution. Our method is based upon
a cutting-plane approach which converges to an approximate solution
in a finite number of iterations. We provide a proof of convergence
as well as a polynomial bound on the number of iterations and training
examples required for a desired tolerance in the objective function.
The efficiency and performance of $M^{4}$ are demonstrated on high-dimensional
synthetic data in addition to object manifolds generated by continuous
transformations of images from the ImageNet dataset. Our results indicate
that $M^{4}$ is able to rapidly learn good classifiers and shows
superior generalization performance than traditional support vector
machines using data augmentation methods (Fig. }\ref{fig:M4performance}\textcolor{black}{). }

\section{Chapter 4: Linear Classification of General Low Dimensional Manifolds}

In this chapter we generalize the perceptron capacity for the classification
of manifolds further, to classification of general manifolds. The
theory is exact in the thermodynamic limit, i.e., $N,P\rightarrow\infty$,
$\alpha=P/N$ is finite as in the Gardner's analysis. In addition,
for the mean field theory to be exact, the dimensionality of the manifolds
$D$ has to be finite in this limit (note: this holds except for the
special case of parallel manifolds, section \ref{subsec:ParallelSpheres},
where $D$ is proportional to $N$). To set the stage, we first consider
linear classification capacity of $L_{2}$ ellipsoids. We present
explicit analytical solution to the classification problem, and show
that the capacity and solution properties depend in general on all
$D$ radii. Like the balls, the max margin solution is characterized
by two types of support ellipsoids (touching or fully embedded). 

\begin{figure}[h]

\begin{centering}
\includegraphics[width=0.8\textwidth]{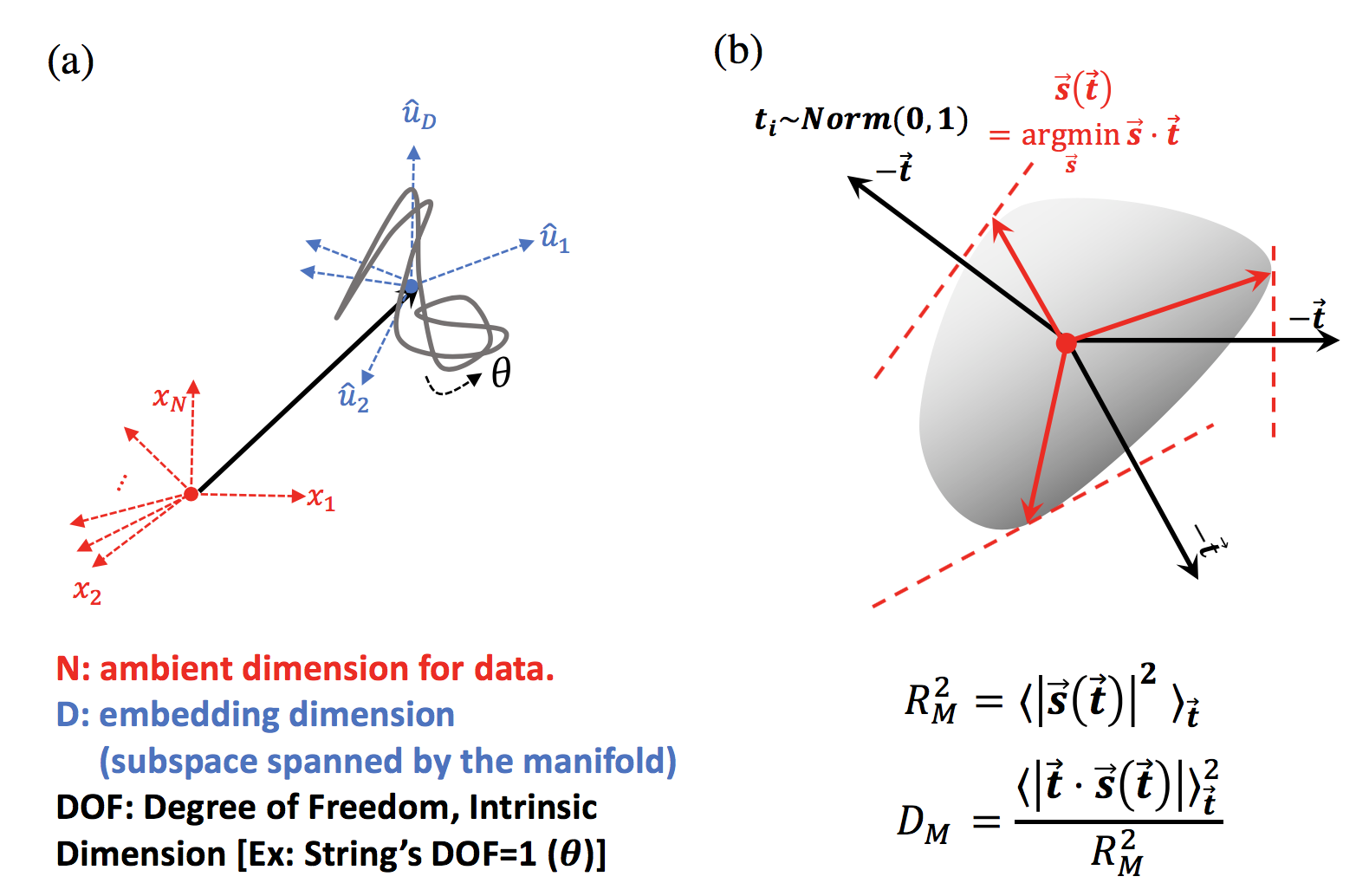}
\par\end{centering}
\caption{\textbf{Manifold's Sizes and Dimensions. }(a) Dimensions of a Random
String. A random string's degree of freedom (intrinsic dimension)
is 1, but is spanning $D$-dimensional (embedding dimension) and defined
in $N$ ambient dimension. (b) Effective manifold radius $R_{M}$
and effective manifold dimension $D_{M}$, which are relevant properties
for the manifold's linear classification capacity. \label{fig:ManifoldDimensions}}

\end{figure}

\emph{Effective Sizes and Dimensions: }In general, manifolds considered
here are characterized by several dimensionalities. All points on
all manifolds are in $\mathbb{R}^{N}$, so $N$ is the ambient dimension.
All points on a given manifold (relative to its center) span $D$
dimensions, thus $D$ is the manifold embedding dimension. In addition,
manifolds may be characterized by intrinsic dimensionality which may
be much smaller than $D$. See Fig. \ref{fig:ManifoldDimensions}(a)
for an example of a string in $D$ dimension. This intrinsic dimension
is important practically, but will not play an important role in the
theory of linear classification. In addition to the above, the manifold
classification properties may be described in certain regime by effective
dimensions and effective size (Fig. \ref{fig:ManifoldDimensions}(b)). 

Here we present the results for ellipsoids in the important limit
of large $D$. In this limit we find that the capacity can be well
approximated as,

\begin{equation}
\alpha_{E}(\kappa,R)=(1+R_{E}^{2})\alpha_{G}(\kappa+R_{E}\sqrt{D_{E}}),\,\ D\gg1
\end{equation}

where $E$ stands for ellipsoids, and with effective ellipsoid radius
$R_{E}$ and effective ellipsoid dimension $D_{E}$ given by,

\begin{equation}
R_{E}^{2}=\sum_{i=1}^{D}\frac{R_{i}^{4}}{(1+R_{i}^{2})^{2}}/\sum_{j=1}^{D}\frac{R_{j}^{2}}{(1+R_{j}^{2})^{2}}\label{eq:Reff}
\end{equation}

\begin{equation}
D_{E}=\left(\sum_{i=1}^{D}\frac{R_{i}^{2}}{1+R_{i}^{2}}\right)^{2}/\sum_{i=1}^{D}\frac{R_{i}^{4}}{(1+R_{i}^{2})^{2}}\label{eq:Deff}
\end{equation}
where $R_{i}$ are the different radii of the ellipsoid. Finally,
when the radii are small, $R_{i}\ll1$ (i.e., relative to the center
norms which is normalized here to $1$). these quantities reduce to
the simple formulae

\begin{equation}
R_{E}^{2}=\frac{\sum_{i}R_{i}^{4}}{\sum_{i}R_{i}^{2}}\label{eq:Reff_elps_scale}
\end{equation}

\begin{equation}
D_{E}=\frac{(\sum_{i}R_{i}^{2})^{2}}{\sum_{i}R_{i}^{4}}=D_{svd}\label{eq:Deff_elps_scale}
\end{equation}
where $D_{svd}$ is the participation ratio evaluated from the SVD
of the ellipsoids (with a uniform measure). These results set the
stage for a derivation of a theory applicable to \emph{general low
dimensional manifolds}. Briefly, general smooth convex manifolds behave
qualitatively the same as the ellipsoids, for the geometric reason
that they can either be interior to, fully embedded in or touching
the margin planes. 

Non-smooth manifold can have a large spectrum of overlaps with the
planes (as the example of $L_{1}$ ball indicates). Nevertheless,
we have derived self consistent mean field equations that describe
the capacity (and solution properties) for a general manifold, and
present numerical procedures to solve these equations iteratively.
Here we briefly discuss the theoretical prediction for the limit of
large $D$. In this regime, capacity is well approximated by 

\begin{equation}
\alpha_{M}(\kappa)=(1+R_{M}^{2})\alpha_{G}(\kappa+R_{M}\sqrt{D_{M}}),\,\ D\gg1
\end{equation}
with self consistent equations for $R_{M}$ and $D_{M}$, which need
to be solved numerically by iterative mean field methods. Remarkably,
in the regime where $R_{M}\ll1$, $R_{M}$ and $D_{M}$ simplify to
the quantities shown in Fig. \ref{fig:ManifoldDimensions}(b) and
are related to the well known Gaussian Mean Width of convex bodies
(Fig. \ref{fig:MeanWidthOfManifold}). 

An important application of this theory is finite point cloud manifolds
that arise when subsampled points of each potentially continuous manifold
is given. In this case, $R_{M}$ and $D_{M}$ (of the training manifolds)
can be estimated from the given finite training set. The interesting
question of how these quantities are related to the effective radius
and dimension underlying full manifold is touched upon in the following
section. An interesting example is the case of $L_{1}$balls in $D$
dimensions with radius $R$. In the limit of large $D$ and small
$R$, the effective radius is simply $R$ but the dimension is 
\begin{equation}
D_{M}=2\log D\label{eq:L1DM_2LogD}
\end{equation}
In general, in other point cloud manifolds we expect that $D_{M}\propto\log m$
where $m$ is the number of samples per manifold. 

\emph{Infinite size manifolds:}\textbf{ }Finally, it should be noted
that as the manifold size grows to infinity (in all dimensions), their
geometric details don't matter; only the number of dimensions they
span. Here we obtain

\begin{equation}
D_{M}\rightarrow D\label{eq:L1DM_2LogD-1}
\end{equation}

reflecting the need of the classifying weight vector to be orthogonal
to all the $DP$ dimensional hyperspace that the manifolds span, namely
the capacity reduces to

\begin{equation}
\alpha_{M}=\frac{1}{D}
\end{equation}

where $D$ denotes the embedding dimensions of the manifolds (where
we assume for simplicity that the manifolds are not bounded in any
of the $D$ directions). 

\section{Chapter 5: Extensions }

\begin{figure}
\begin{centering}
\includegraphics[width=0.9\textwidth]{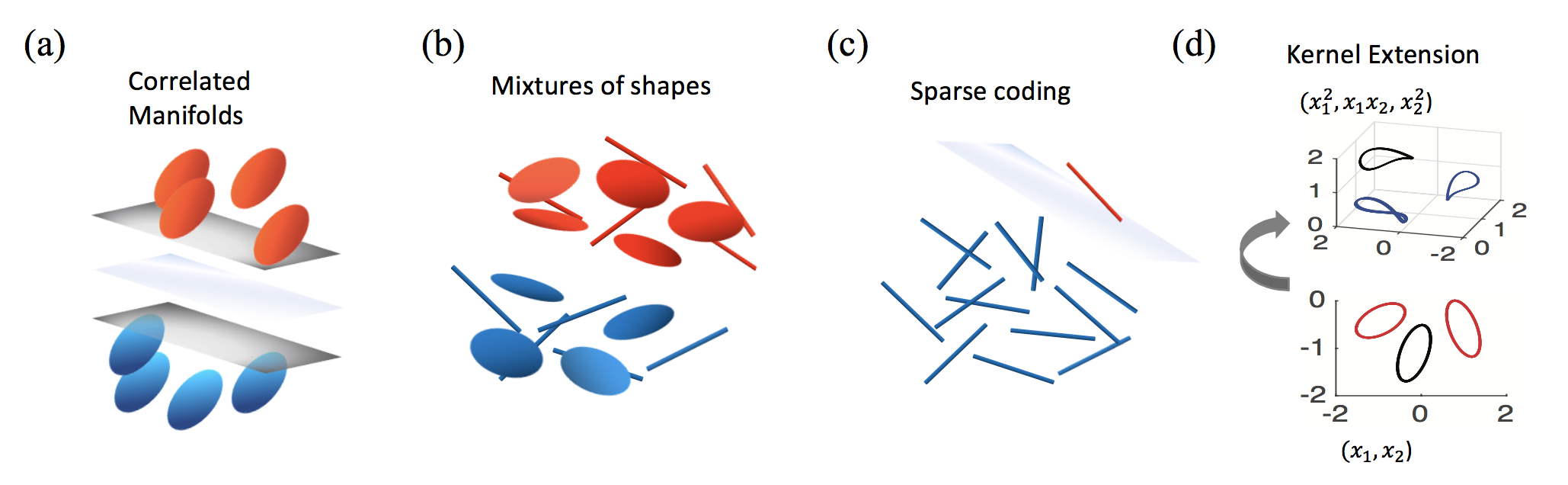}
\par\end{centering}
\caption{\textbf{Extensions of Manifold Classification Theory. }(a) Classification
of Correlated Manifolds. (b) Classification of Mixtures of Manifolds.
(c) Sparse Coding (Classification with Sparse Labels) and Object Recognition
Limit (One versus All Classification). (d) Extension to Kernel Framework.
(Red/black) $2D$ disks in the kernel input space, transformed to
(blue/black) $5D$ manifolds in the quadratic kernel's feature space.
\label{fig:ExtensionsManifolds}}
\end{figure}

In Chapter 5, we further extend the theory in directions likely relevant
to applications to real data. We have extended our general manifold
classification theory to incorporate correlated manifolds, mixtures
of manifold geometries, sparse labels and nonlinear classification,
see Fig. \ref{fig:ExtensionsManifolds}. We highlight here briefly
several important results. 
\medskip{}
\emph{1. Correlated manifolds:} when manifold axes are strongly parallel
(fig. \ref{fig:ExtensionsManifolds}(a)) we expect the capacity to
be relatively large. For example if their spanning spaces are fully
aligned but they are large in extent, $\mathbf{w}$ can solve the
problem by orthogonalize to the $D$ common directions (rather than
$DP$ in the uncorrelated case). Interestingly, for high dimensional
parallel balls we find a phase transition whereby above some finite
critical radius the max margin solution fully orthogonalize to the
manifolds subspace. In real data we expect positive correlations but
not full alignment of the different manifolds. 

\medskip{}
\emph{2. Sparse labels:} In this case, the fraction of say plus manifolds,
$f$, is smaller than that of the minus ones. In many real life tasks
this is to be expected. An extreme case is that of object recognition
task defined as classifying one manifold as one and the rest as minus
one. This can be viewed as a binary classification with $f=\frac{1}{P}$
. As in Gardner's theory the capacity grows as $f\rightarrow0$. However,
we show that the size of the manifolds substantially limits this growth.
For instance, in balls with large radius $R$ , the $f$ is small
but larger than $1/R$ the capacity remains of order unity.

\medskip{}
\emph{3. Nonlinear manifold separation}: We consider two schemes of
two layer classification of manifolds in cases where they are not
linearly separable. One is in the form of a nonlinear kernel, similar
to Kernel SVM. For this we present a version of the $M^{4}$algorithm
in a \emph{'dual' }form, appropriate for kernels. We briefly discuss
the effect of the kernel on the geometry of the manifold and the classification
capacity. The second architecture is that of hidden layer of binary
units, forming a sparse intermediate representation of the manifolds.
We show how this extra layer formed by unsupervised learning can enhance
the capacity and robustness of the classification of the manifolds.

\medskip{}
\emph{4. Generalization properties :} Computation with manifolds raises
a specific type of generalization problem, namely how training with
a subsampled training points perform when new points from the same
underlying manifolds are presented in the test phase. Exact analytical
expression for the generalization error is complicated; also the error
depends on the assumed sampling measure on the manifold (whereas the
separability problem is measure invariant). However, in the case of
linearly separable manifolds with high $D$ we can use the insight
from the above theory (the notions of effective dimensions and radii)
to derive a particularly simple approximation. Assume $\alpha$ is
such that the full manifolds are linearly separable with a max margin
$\kappa$. Then the generalization error will eventually vanish as
more samples per manifold , $m$ , are presented. In the limit of
large $m$ , we obtain, 

\begin{equation}
\epsilon_{g}(m)\propto\frac{\exp[-\kappa\sqrt{2\log m}]}{m}
\end{equation}
Interestingly, this decay is faster than the generic power law, $\epsilon_{g}(m)\propto m^{-1}$
of generalization bounds in linearly separable problem and reflects
the presence of finite margin of the entire manifold. We also discuss
the generalization error of subsampled manifolds in the case where
the full manifolds are not linearly separable. 

\medskip{}
\emph{5. Application to Deep Networks: }We close this section by applying
some of the theoretical concepts to Deep Networks trained to perform
visual object recognition tasks. We show how the theory can be used
to characterize the change in the geometry of the manifolds, and changes
in the manifold correlation structure at different stages of the network
(using ImageNet\cite{deng2009imagenet} as an example). 

\section{Conclusion and Future Direction }

In this thesis, we generalized Gardner's theory of linear classification
of points to the classification of general randomly oriented low dimensional
manifolds. The theory, exact in the thermodynamic limit, describes
the relation between the detailed geometry of the convex hulls of
the data manifolds and the ability to linearly classify them. The
problem simplifies considerably when the manifold dimension is high.
In this limit, the classification properties depend on two geometric
parameters of the convex hulls: the effective dimension $D_{M}$ and
effective radius $R_{M}$. In high dimensional manifold with small
sizes, capacity depends on $R_{M}$ and $D_{M}$ mainly through the
scaling relation $R_{M}\sqrt{D_{M}}$. This quantity is related to
the well known Gaussian Mean Width of convex bodies. Optimal solution
exhibits support manifold structures with potential consequences for
noise robustness. We developed a novel efficient training algorithm,
the Maximum Margin Manifold Machines, for finding the maximum margin
solution for classifying manifolds with uncountable number of training
samples, and provide convergence proof with polynomial bounds on the
number of iterations required for convergence. Our theory has been
extended to  incorporate correlations in the manifolds, mixtures of
shapes, sparse coding, nonlinear processing such as multilayer network
or kernel framework, as well as an analysis of manifold generalization
error. With these extensions, our theory provides qualitative and
quantitative measures for assessing the ability to decode object information
from the different stages of Deep biological and artificial neural
networks. 

Ongoing work includes suggesting design principles for deep networks
by taking into account the network size, dimension, sparsity, as well
as role of nonlinearities in reformatting of the manifolds such that
the capacity is increased. Whether manifold capacity can be used as
an object function of the training of a network is an interesting
question to pursue. We are exploring applications of our theory on
several neural data bases from IT and other areas in visual cortex,
responding to different object stimuli with a variety of physical
transformations. We hope that our theory will provide new insights
into the computational principles underlying processing of sensory
representations in the brain. As manifold representations of the sensory
world are ubiquitous in both biological and artificial neural systems,
exciting future work lies ahead.


\chapter{Linear Classification of Spherical Manifolds \label{cha:spheres}}

High-level perception in the brain involves classifying or identifying
objects which are represented by continuous manifolds of neuronal
states in all stages of sensory hierarchies \cite{dicarlo2007untangling,pagan2013signals,alemi2013multifeatural,bizley2013and,meyers2015intelligent,schwarzlose2008distribution,gottfried2010central}
Each state in an object manifold corresponds to the vector of firing
rates of responses to a particular variant of physical attributes
which do not change object's identity, e.g., intensity, location,
scale, and orientation. It has been hypothesized that object identity
can be decoded from high level representations, but not from low level
ones, by simple downstream readout networks \cite{hung2005fast,dicarlo2007untangling,pagan2013signals,freiwald2010functional,cadieu2014deep,kobatake1994neuronal,rust2010selectivity,schwarzlose2008distribution}.
A particularly simple decoder is the perceptron, which performs classification
by thresholding a linear weighted sum of its input activities \cite{minsky1987perceptrons,gardnerEPL}.
However, it is unclear what makes certain representations well suited
for invariant decoding by simple readouts such as perceptrons. Similar
questions apply to the hierarchy of artificial deep neural networks
for object recognition \cite{serre2005object,goodfellow2009measuring,ranzato2007unsupervised,bengio2009learning,cadieu2014deep}.
Thus, a complete theory of perception requires characterizing the
ability of linear readout networks to classify objects from variable
neural responses in their upstream layer.

A theoretical understanding of the perceptron was pioneered by Elizabeth
Gardner who formulated it as a statistical mechanics problem and analyzed
it using replica theory \cite{gardner1988space,engel2001statistical,advani2013statistical,brunel2004optimal,sompolinsky1990learning,opper1991generalization,rubin2010theory,amit1989perceptron,monasson1992properties}.
In this work, we generalize the statistical mechanical analysis and
establish a theory of linear classification of manifolds synthesizing
statistical and geometric properties of high dimensional signals.
We apply the theory to simple classes of manifolds and show how changes
in the dimensionality, size, and shape of the object manifolds affect
their readout by downstream perceptrons.

\section{Line Segments}

One-dimensional object manifolds arise naturally from variation of
stimulus intensity, such as visual contrast, which leads to approximate
linear modulation of the neuronal responses of each object. We model
these manifolds as line segments and consider classifying $P$ such
segments in $N$ dimensions, expressed as $\left\{ \mathbf{x}_{0}^{\mu}+Rs\mathbf{u}^{\mu}\right\} $,
$-1\le s\le1$, $\mu=1,...,P$. The $N$-dimensional vectors $\mathbf{x}^{\mu}\in\mathbb{R}^{N}$
and $\mathbf{u}^{\mu}\in\mathbb{R}^{N}$ denote respectively, the
\emph{centers} and \emph{directions }of\emph{ }the $\mu$-th segment,
and the scalar $s$ parameterizes the continuum of points along the
segment. The parameter $R$ measures the extent of the segments relative
to the distance between the centers (Fig. \ref{fig:PerceptronLines}).

We seek to partition the different line segments into two classes
defined by binary labels $y^{\mu}=\pm1$ . To classify the segments,
a weight vector $\mathbf{w}\in\mathcal{R}^{N}$ must obey $y^{\mu}\mathbf{w}\cdot\left(\mathbf{x}^{\mu}+Rs\mathbf{u}^{\mu}\right)\ge\kappa$
for all $\mu$ and $s$. The parameter $\kappa\ge0$ is known as the
margin; in general, a larger $\kappa$ indicates that the perceptron
solution will be more robust to noise and display better generalization
properties \cite{vapnik1998statistical}. Hence, we are interested
in maximum margin solutions, i.e., weight vectors $\mathbf{w}$ that
yield the maximum possible value for $\kappa$. Since line segments
are convex, only the endpoints of each line segment need to be checked,
namely $\min\,h_{0}^{\mu}\pm Rh^{\mu}=h_{0}^{\mu}-R\left|h^{\mu}\right|\ge\kappa$
where $h_{0}^{\mu}=||\mathbf{w}||^{-1}y^{\mu}\mathbf{w}\cdot\mathbf{x}^{\mu}$
are the fields induced by the centers and $h^{\mu}=||\mathbf{w}||^{-1}y^{\mu}\mathbf{w}\cdot\mathbf{u}^{\mu}$
are the fields induced by the line directions.

\begin{figure}
\noindent \begin{centering}
\includegraphics[width=0.9\textwidth]{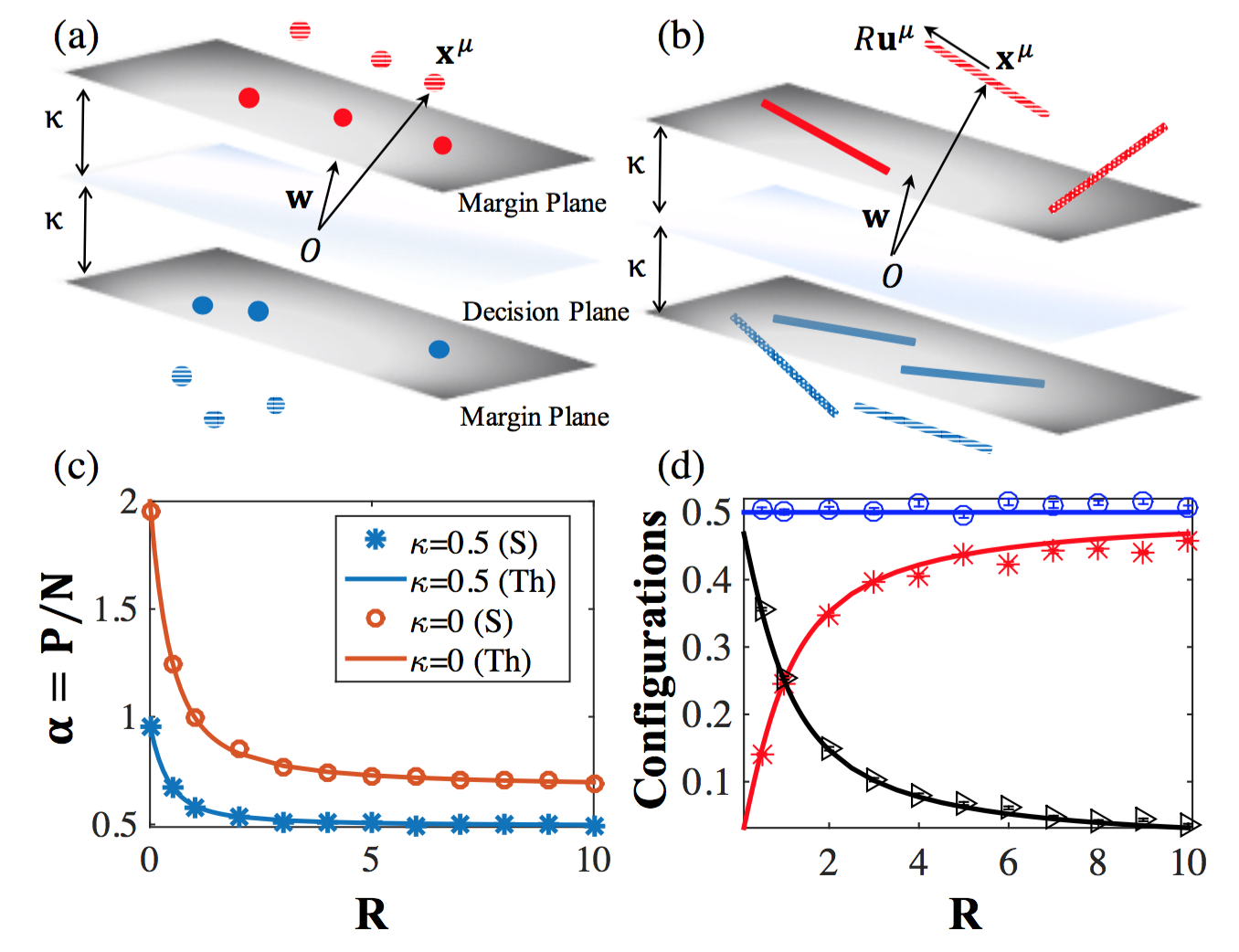} 
\par\end{centering}
\caption{(a) Linear classification of points. (solid) points on the margin,
(striped) internal points. (b) Linear classification of line segments.
(solid) lines embedded in the margin, (dotted) lines touching the
margin, (striped) interior lines. (c) Capacity $\alpha=P/N$ of a
network $N=200$ as a function of $R$ with margins $\kappa=0$ (red)
and $\kappa=0.5$ (blue). Theoretical predictions (lines) and numerical
simulation (markers, see Appendix for details) are shown. (d) Fraction
of different line configurations at capacity with $\kappa=0$. (red)
lines in the margin, (blue) lines touching the margin, (black) internal
lines. \label{fig:PerceptronLines} }
\end{figure}

\subsection{Replica Theory}

The existence of a weight vector $\mathbf{w}$ that can successfully
classify the line segments depends upon the statistics of the segments.
We consider random line segments where the components of $\mathbf{x}^{\mu}$
and $\mathbf{u}^{\mu}$ are i.i.d. Gaussians with zero mean and unit
variance, and random binary labels $y^{\mu}$. We study the thermodynamic
limit where the dimensionality $N\rightarrow\infty$ and number of
segments $P\rightarrow\infty$ with finite $\alpha=P/N$ and $R$.
Following Gardner \cite{gardner1988space} we compute the average
of $\log V$ where $V$ is the volume of the space of perceptron solutions:
\begin{equation}
V=\int_{\left\Vert \mathbf{w}\right\Vert ^{2}=N}d^{N}\mathbf{w}\:\prod_{\mu=1}^{P}\Theta\left(h_{0}^{\mu}-R\left\Vert h^{\mu}\right\Vert -\kappa\right).\label{eq:Volume}
\end{equation}
$\Theta(x)$ is the Heaviside step function. According to replica
theory, the fields are described as sums of random Gaussian fields
$h_{0}^{\mu}=t_{0}^{\mu}+z_{0}^{\mu}$ and $h^{\mu}=t^{\mu}+z^{\mu}$
where $t_{0}$ and $t$ are quenched components arising from fluctuations
in the input vectors $\mathbf{x}^{\mu}$ and $\mathbf{u}^{\mu}$ respectively,
and the $z_{0}$, $z$ fields represent the variability in $h_{0}^{\mu}$
and $h^{\mu}$ resulting from different solutions of $\mathbf{w}$.
These fields must obey the constraint $z_{0}+t_{0}-R\left|z+t\right|\ge\kappa.$
The capacity function $\alpha_{L}(\kappa,R)$ (the subscript $L$
denotes the line) describes for which $P/N$ ratio the perceptron
solution volume shrinks to a unique weight vector. The reciprocal
of the capacity is given by the replica symmetric calculation (details
provided in the Appendix \ref{subsec:AppendixReplicaLines}): 
\begin{equation}
\alpha_{L}^{-1}(\kappa,R)=\left\langle \min_{z_{0}+t_{0}-R\left|z+t\right|\ge\kappa}\frac{1}{2}\left[z_{0}^{2}+z^{2}\right]\right\rangle _{t_{0},t}\label{eq:alphaLinesAverage}
\end{equation}
where the average is over the Gaussian statistics of $t_{0}$ and
$t$. To compute Eq. \eqref{eq:alphaLinesAverage}, three regimes
need to be considered. First, when $t_{0}$ is large enough so that
$t_{0}>\kappa+R\left|t\right|$, the minimum occurs at $z_{0}=z=0$
which does not contribute to the capacity. In this regime, $h_{0}^{\mu}>\kappa$
and $h^{\mu}>0$ implying that neither of the two segment endpoints
reach the margin. In the other extreme, when $t_{0}<\kappa-R^{-1}|t|$,
the minimum is given by $z_{0}=\kappa-t_{0}$ and $z=-\left|t\right|$,
i.e. $h_{0}^{\mu}=\kappa$ and $h^{\mu}=0$ indicating that both endpoints
of the line segment lie on the margin planes. In the intermediate
regime where $\kappa-R^{-1}\left|t\right|<t_{0}<\kappa+R\left|t\right|$,
$z_{0}=\kappa-t_{0}+R|z+t$|, i.e., $h_{0}^{\mu}-R|h^{\mu}|=\kappa$
but $h_{0}^{\mu}>\kappa$, corresponding to only one of the line segment
endpoints touching the margin. In this regime, the solution is given
by minimizing the function $(R\left|z+t\right|+\kappa-t_{0})^{2}+z^{2}$
with respect to $z$. Combining these contributions, we can write
the perceptron capacity of line segments: 
\begin{eqnarray}
\alpha_{L}^{-1}(\kappa,R) & = & \int_{-\infty}^{\infty}Dt\int_{\kappa-R^{-1}|t|}^{\kappa+R|t|}Dt_{0}\frac{\left(R\left|t\right|+\kappa-t_{0}\right)^{2}}{R^{2}+1}\nonumber \\
 & + & \int_{-\infty}^{\infty}Dt\int_{-\infty}^{\kappa-R^{-1}|t|}Dt_{0}\left[(\kappa-t_{0})^{2}+t^{2}\right]\quad\quad\label{eq:alphaCLine}
\end{eqnarray}

with integrations over the Gaussian measure, $Dx\equiv\frac{1}{\sqrt{2\pi}}e^{-\frac{1}{2}x^{2}}dx$.
It is instructive to consider special limits. When $R\rightarrow0,$
Eq. \eqref{eq:alphaCLine} reduces to $\alpha_{L}(\kappa,0)=\alpha_{0}(\kappa)$
where $\alpha_{0}(\kappa)$ is Gardner's original capacity result
for perceptrons classifying $P$ points (the subscript $0$ stands
for zero-dimensional manifolds) with margin $\kappa$ \ref{fig:PerceptronLines}-(a).
Interestingly, when $R=1$, then $\alpha_{L}(\kappa,1)=\frac{1}{2}\alpha_{0}(\kappa/\sqrt{2})$.
This is because when $R=1$ there are no statistical correlations
between the line segment endpoints and the problem becomes equivalent
to classifying $2P$ random points with average norm $\sqrt{2N}$
.

Finally, when $R\rightarrow\infty$, the capacity is further reduced:
$\alpha_{L}^{-1}(\kappa,\infty)=\alpha_{0}^{-1}(\kappa)+1$. This
is because when $R$ is large, the segments become unbounded lines.
In this case, the only solution is for $\mathbf{w}$ to be orthogonal
to all $P$ line directions. The problem is then equivalent to classifying
$P$ center points in the $N-P$ null space of the line directions,
so that at capacity $P=\alpha_{0}(\kappa)(N-P)$.

We see this most simply at zero margin, $\kappa=0$. In this case,
Eq. \eqref{eq:alphaCLine} reduces to a simple analytic expression
for the capacity: $\alpha_{L}^{-1}(0,R)=\frac{1}{2}+\frac{2}{\pi}\arctan R$
(Appendix \ref{subsec:AppendixReplicaLines}). The capacity is seen
to decrease from $\alpha_{L}(0,R=0)=2$ to $\alpha_{L}(0,R=1)=1$
and $\alpha_{L}(0,R=\infty)=\frac{2}{3}$ for unbounded lines. We
have also calculated analytically the distribution of the center and
direction fields $h_{0}^{\mu}$ and $h^{\mu}$ \cite{abbott1989universality}.
The distribution consists of three contributions, corresponding to
the regimes that determine the capacity. One component corresponds
to line segments fully embedded in these planes. The fraction of these
manifolds is simply the volume of phase space of $t$ and $t_{0}$
in the last term of Eq. \eqref{eq:alphaCLine}. Another fraction,
given by the volume of phase space in the first integral of \eqref{eq:alphaCLine}
corresponds to line segments touching the margin planes at only one
endpoint. The remainder of the manifolds are those interior to the
margin planes. Fig. \ref{fig:PerceptronLines} shows that our theoretical
calculations correspond nicely with our numerical simulations for
the perceptron capacity of line segments, even with modest input dimensionality
$N=200$. Note that as $R\rightarrow\infty$, half of the manifolds
lie in the plane while half only touch it; however, the angles between
these segments and the margin planes approach zero in this limit.
As $R\rightarrow0$ , half of the points lie in the plane \cite{abbott1989universality}.

\section{$D$-dimensional Balls}

Higher dimensional manifolds arise from multiple sources of variability
and their nonlinear effects on the neural responses. An example is
varying stimulus orientation, resulting in two-dimensional object
manifolds under the cosine tuning function (Fig. \ref{fig:Disks}(a)).
Linear classification of these manifolds depends only upon the properties
of their convex hulls \cite{de2000computational}. We consider simple
convex hull geometries as $D$-dimensional balls embedded in $N$-dimensions:
$\left\{ \mathbf{x_{0}}^{\mu}+R\sum_{i=1}^{D}s_{i}\mathbf{u}_{i}^{\mu}\right\} $,
so that the $\mu$-th manifold is centered at the vector $\mathbf{x}^{\mu}\in\mathbb{R}^{N}$
and its extent is described by a set of $D$ basis vectors $\left\{ \mathbf{u}_{i}^{\mu}\in\mathbb{R}^{N},\:i=1,...,D\right\} $.
The points in each manifold are parameterized by the $D$-dimensional
vector $\vec{s}\in\mathbb{R}^{D}$ whose Euclidean norm is constrained
by: $\left\Vert \vec{s}\right\Vert \leq1$ and the radius of the balls
are quantified by $R$ .

Statistically, all components of $\mathbf{x}_{0}^{\mu}$ and $\mathbf{u}_{i}^{\mu}$
are i.i.d. Gaussian random variables with zero mean and unit variance.
We define $h_{0}^{\mu}=N^{-1/2}y^{\mu}\mathbf{w}\cdot\mathbf{x}^{\mu}$
as the field induced by the manifold centers and $h_{i}^{\mu}=N^{-1/2}y^{\mu}\mathbf{w}\cdot\mathbf{u}_{i}^{\mu}$
as the $D$ fields induced by each of the basis vectors and with normalization
$\left\Vert \mathbf{w}\right\Vert =\sqrt{N}$. To classify all the
points on the manifolds correctly with margin $\kappa$, $\mathbf{w}\in\mathbb{R}^{N}$
must satisfy the inequality $h_{0}^{\mu}-R||\vec{h}^{\mu}||\geq\kappa$
where $||\vec{h}^{\mu}||$ is the Euclidean norm of the $D$-dimensional
vector $\vec{h}^{\mu}$ whose components are $h_{i}^{\mu}$ . This
corresponds to the requirement that the field induced by the points
on the $\mu$-th manifold with the smallest projection on $\mathbf{w}$
be larger than the margin $\kappa$.

We solve the replica theory in the limit of $N,\,P\rightarrow\infty$
with finite $\alpha=P/N$, $D$, and $R$. The fields for each of
the manifolds can be written as sums of Gaussian quenched and entropic
components, $\left(t_{0}\in\mathbb{R},\:\vec{t}\in\mathbb{R}^{D}\right)$
and $\left(z_{0}\in\mathbb{R},\:\vec{z}\in\mathbb{R}^{D}\right)$
, respectively. The capacity for $D$-dimensional manifolds is given
by the replica symmetric calculation (Appendix \ref{subsec:AppendixReplicaBalls}):

\begin{equation}
\alpha_{B}^{-1}(\kappa,R,D)=\left\langle \min_{t_{0}+z_{0}-R\left\Vert \vec{t}+\vec{z}\right\Vert >\kappa}\frac{1}{2}\left[z_{0}^{2}+\left\Vert \vec{z}\right\Vert ^{2}\right]\right\rangle _{t_{0},\vec{t}}
\end{equation}

where $B$ stands for $L_{2}$balls. The capacity calculation can
be partitioned into three regimes. For large $t_{0}>\kappa+Rt$, where
$t=\left\Vert \vec{t}\right\Vert $, $z_{0}=0$ and $\vec{z}=0$ corresponding
to manifolds which lie interior to the margin planes of the perceptron.
On the other hand, when $t_{0}<\kappa-R^{-1}t$, the minimum is obtained
at $z_{0}=\kappa-t_{0}$ and $\vec{z}=-\vec{t}$ corresponding to
manifolds which are fully embedded in the margin planes. Finally,
in the intermediate regime, when $\kappa-R^{-1}t<t_{0}<\kappa+Rt$,
$z_{0}=R\left\Vert \vec{t}+\vec{z}\right\Vert -t_{0}+\kappa$ but
$\vec{z}\ne-\vec{t}$ indicating that these manifolds only touch the
margin plane. Decomposing the capacity over these regimes and integrating
out the angular components, the capacity of the perceptron can be
written as: 
\begin{eqnarray}
\alpha_{B}^{-1}(\kappa,R,D) & = & \int_{0}^{\infty}dt\,\chi_{D}(t)\int_{\kappa-\frac{1}{R}t}^{\kappa+Rt}Dt_{0}\frac{\left(Rt+\kappa-t_{0}\right)^{2}}{R^{2}+1}\nonumber \\
 & + & \int_{0}^{\infty}dt\,\chi_{D}(t)\int_{-\infty}^{\kappa-\frac{1}{R}t}Dt_{0}\left[\left(\kappa-t_{0}\right)^{2}+t^{2}\right]\qquad\label{eq:alphaCDisks}
\end{eqnarray}

where $\chi_{D}(t)=\frac{2^{1-\frac{D}{2}}}{\Gamma(\frac{D}{2})}t^{D-1}e^{-\frac{1}{2}t^{2}}$
is the \emph{D-}Dimensional Chi probability density function. For
large $R\rightarrow\infty$, Eq. \eqref{eq:alphaCDisks} reduces to:
$\alpha_{B}^{-1}(\kappa,R=\infty,D)=\alpha_{0}^{-1}(\kappa)+D$ which
indicates that $\mathbf{w}$ must be in the null space of the $PD$
basis vectors $\left\{ \mathbf{u}_{i}^{\mu}\right\} $ in this limit.
This case is equivalent to the classification of $P$ points (the
projections of the manifold centers) by a perceptron in the $N-PD$
dimensional null space.

To probe the fields, we consider the joint distribution of the field
induced by the center, $h_{0}$, and the norm of the fields induced
by the manifold directions, $h\equiv\left\Vert \vec{h}\right\Vert $
. There are three contributions. The first term corresponds to $h_{0}-Rh>\kappa$,
i.e. balls that lie interior to the perceptron margin planes; the
second component corresponds to $h_{0}-Rh=\kappa$ but $h>0$, i.e.
balls that touch the margin planes; and the third contribution represents
the fraction of balls obeying $h_{0}=\kappa$ and $h=0$, i.e. balls
fully embedded in the margin. The dependence of these contributions
on $R$ for $D=2$ is shown in Fig. \ref{fig:Disks}(b). Interestingly,
when $\kappa=0$ , the case of $R=1$ is particularly simple for all
$D$ . The capacity is $\alpha_{B}(R=1,D)=2/(D+1)$ ; in addition,
the fraction of embedded and interior balls are equal and the fraction
of touching balls have a maximum, see Fig. \ref{fig:Disks}(b) and
Appendix.

\begin{figure}
\noindent \begin{centering}
\includegraphics[width=0.8\textwidth]{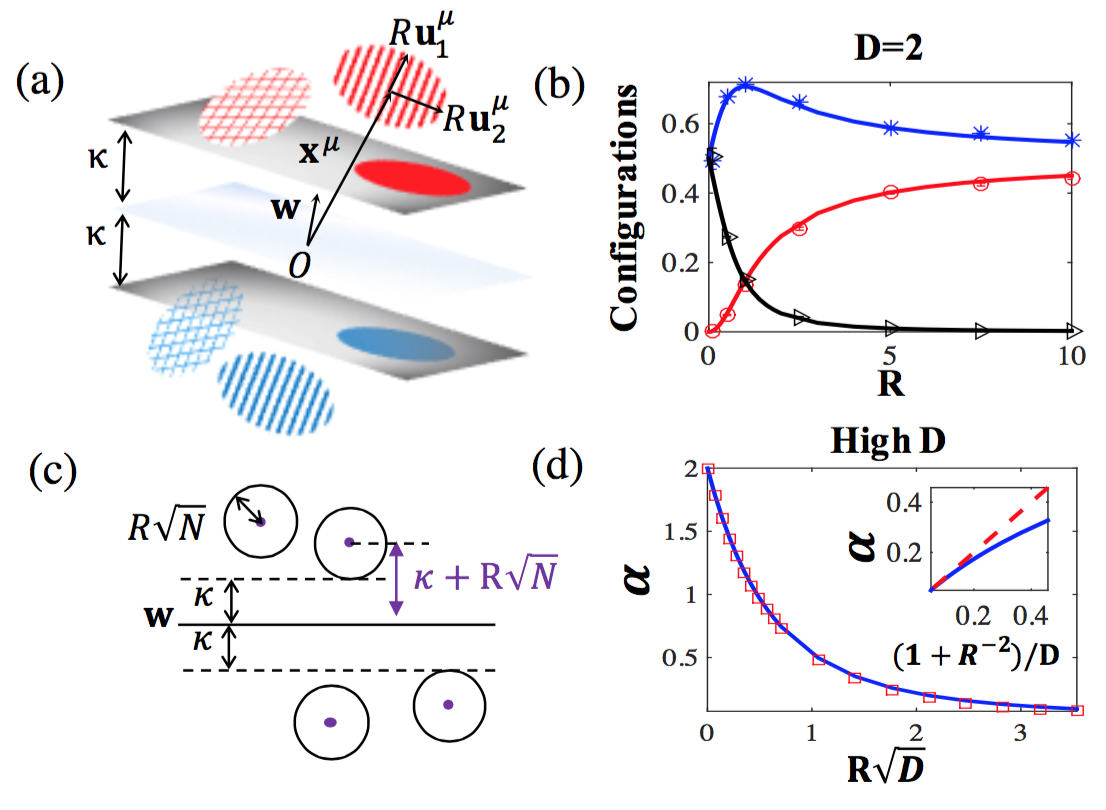} 
\par\end{centering}
\caption{Random $D$-dimensional balls: (a) Linear classification of $D=2$
balls. (b) Fraction of 2-$D$ ball configurations as a function of
$R$ at capacity with $\kappa=0$, comparing theory (lines) with simulations
(markers). (red) balls embedded in the plane, (blue) balls touching
the plane, (black) interior balls. (c) Linear classification of balls
with $D=N$ at margin $\kappa$ (black circles) is equivalent to point
classification of centers with effective margin $\kappa+R\sqrt{N}$
(purple points). (d) Capacity $\alpha=P/N$ for $\kappa=0$ for large
$D=50$ and $R\propto D^{-1/2}$ as a function of $R\sqrt{D}$. (blue
solid) $\alpha_{B}(\kappa=0,R,D)$ compared with $\alpha_{0}(\kappa=R\sqrt{D})$
(red square). (Inset) Capacity $\alpha$ at $\kappa=0$ for $0.35\leq R\leq20$
and $D=20$: (blue) theoretical $\alpha$ compared with approximate
form $(1+R^{-2})/D$ (red dashed).\label{fig:Disks}}
\end{figure}

In a number of realistic problems, the dimensionality $D$ of the
object manifolds could be quite large. Hence, we analyze the limit
$D\gg1$. In this situation, for the capacity to remain finite, $R$
has to be small, scaling as $R\propto D^{-\frac{1}{2}}$, and the
capacity is $\alpha_{B}(\kappa,R,D)\approx\alpha_{0}(\kappa+R\sqrt{D})$.
In other words, the problem of separating $P$ high dimensional balls
with margin $\kappa$ is equivalent to separating $P$ points but
with a margin $\kappa+R\sqrt{D}$. This is because when the distance
of the closest point on the $D$-dimensional ball to the margin plane
is $\kappa$, the distance of the center is $\kappa+R\sqrt{D}$ (see
Fig. \ref{fig:Disks}). When $R$ is larger, the capacity vanishes
as $\alpha_{B}(0,R,D)\approx\left(1+R^{-2}\right)/D$. When $D$ is
large, making $\mathbf{w}$ orthogonal to a significant fraction of
high dimensional manifolds incurs a prohibitive loss in the effective
dimensionality. Hence, in this limit, the fraction of manifolds that
lie in the margin plane is zero. Interestingly, when $R$ is sufficiently
large, $R\propto\sqrt{D}$, it becomes advantageous for $\mathbf{w}$
to be orthogonal to a finite fraction of the manifolds.

\section{$L_{p}$ Balls}

To study the effect of changing the geometrical shape of the manifolds,
we replace the Euclidean norm constraint on the manifold boundary
by a constraint on their $L_{p}$ norm. Specifically, we consider
$D$-dimensional manifolds $\left\{ \mathbf{x}_{0}^{\mu}+R\sum_{i=1}^{D}s_{i}\mathbf{u}_{i}^{\mu}\right\} $
where the $D$ dimensional vector $\vec{s}$ parameterizing points
on the manifolds is constrained: $\left\Vert \vec{s}\right\Vert _{p}\leq1$.
For $1<p<\infty$, these $L_{p}$ manifolds are smooth and convex.
Their linear classification by a vector $\mathbf{w}$ is determined
by the field constraints $h_{0}^{\mu}-R||\vec{h}{}^{\mu}||_{q}\geq\kappa$
where, as before, $h_{0}^{\mu}$ are the fields induced by the centers,
and $||\vec{h}^{\mu}||_{q}$, $q=p/(p-1)$, are the $L_{q}$ dual
norms of the $D$-dimensional fields induced by $\mathbf{u}_{i}^{\mu}$
(Appendix \ref{fig:AppendixLp}). The resultant solutions are qualitatively
similar to what we observed with $L_{2}$ ball manifolds.

However, when $p\le1$, the convex hull of the manifold becomes faceted,
consisting of vertices, flat edges and faces. For these geometries,
the constraints on the fields associated with a solution vector $\mathbf{w}$
becomes: $h_{0}^{\mu}-R\max_{i}\left\Vert h_{i}^{\mu}\right\Vert \ge\kappa$
for all $p<1$ . We have solved in detail the case of $D=2$ (Appendix
\ref{subsec:AppendixReplicaLp}). There are four manifold classes:
interior; touching the margin plane at a single vertex point; a flat
side embedded in the margin; and fully embedded. The fractions of
these classes are shown in Fig. \ref{fig:L1manifolds}.

\begin{figure}
\noindent \begin{centering}
\includegraphics[width=0.8\textwidth]{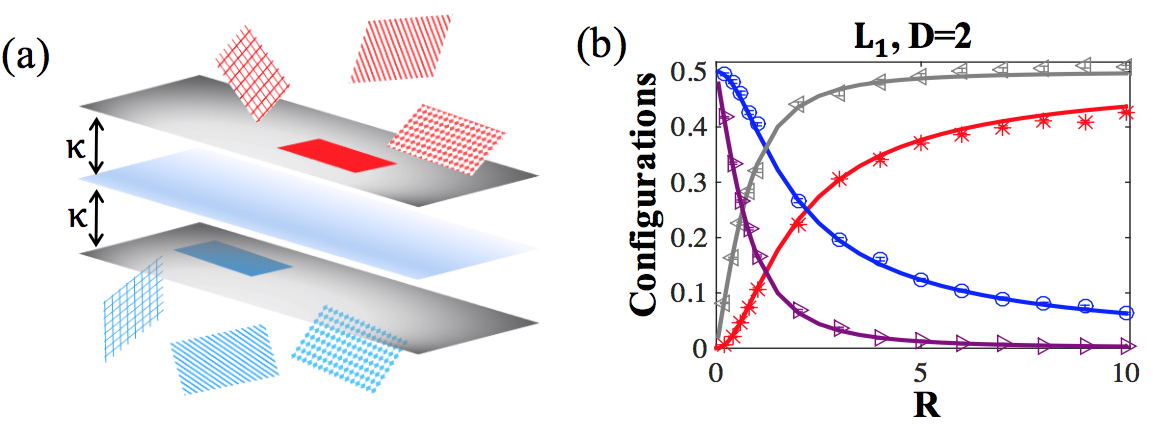} 
\par\end{centering}
\caption{$L_{1}$ balls: (a) Linear classification of 2-$D$ $L_{1}$ balls.
(b) Fraction of manifold configurations as a function of radius $R$
at capacity with $\kappa=0$ comparing theory (lines) to simulations
(markers). (red) entire manifold embedded, (blue) manifold touching
margin at a single vertex, (gray) manifold touching with two corners
(one side), (purple) interior manifold.\label{fig:L1manifolds}}
\end{figure}

\paragraph*{Discussion:}

We have extended Gardner's theory of the linear classification of
isolated points to the classification of continuous manifolds. Our
analysis shows how linear separability of the manifolds depends intimately
upon the dimensionality, size and shape of the convex hulls of the
manifolds. Some or all of these properties are expected to differ
at different stages in the sensory hierarchy. Thus, our theory enables
systematic analysis of the degree to which this reformatting enhances
the capacity for object classification at the higher stages of the
hierarchy.

We focused here on the classification of fully observed manifolds
and have not addressed the problem of generalization from finite input
sampling of the manifolds. Nevertheless, our results about the properties
of maximum margin solutions can be readily utilized to estimate generalization
from finite samples. The current theory can be extended in several
important ways. Additional geometric features can be incorporated,
such as non-uniform radii for the manifolds as well as heterogeneous
mixtures of manifolds. The influence of correlations in the structure
of the manifolds as well as the effect of sparse labels can also be
considered. The present work lays the groundwork for a computational
theory of neuronal processing of objects, providing quantitative measures
for assessing the properties of representations in biological and
artificial neural networks. 

\section{Appendix \label{sec:AppendixSpheres}}

\subsection{Perceptron Capacity of Line Segments. \label{subsec:AppendixReplicaLines}}

The simplest example of linear separability of manifolds is when the
manifolds are line segments. Specifically,

we consider the problem of classification of $P$ line segments of
length $2R$, given by

\begin{equation}
\left\{ \mathbf{x}_{0}^{\mu}+Rs\mathbf{u}^{\mu}\right\} ,\:|s|\leq1,\:\mu=1,...P
\end{equation}

the $N-$dimensional vectors $\mathbf{x_{0}}^{\mu}$ and $\mathbf{u}^{\mu}$,
which are, respectively, the centers and the directions of the $\mu$
segment. {[}We use the boldface style to denote $N$-dim vectors{]}.
We consider random line segments, specifically assume that the components
of all $\mathbf{x_{0}}^{\mu}$ and $\mathbf{u}^{\mu}$ are i.i.d.
normally distributed (with zero mean and unit variance). The target
classification labels of the manifolds are $y^{\mu}=\pm1$ and are
drawn randomly with equal probability of $\pm1.$

We search for an $N$-dimenional weight vector $\mathbf{w}$ that
classifies correctly the line segments. Since the line segments are
convex this is equivalent to the requirement that $\mathbf{w}$ classifies
correctly the end points of each segments, This condition can be written
using two local fields for each segment. One is the field induced
by the center of the line $x^{\mu},$ giving

\begin{equation}
h_{0}^{\mu}=||\mathbf{w}||^{-1}y^{\mu}\mathbf{w}\cdot\mathbf{x_{0}}^{\mu}
\end{equation}

The other is the field induced by the direction vector $\mathbf{u}^{\mu},$

\begin{equation}
h^{\mu}=||\mathbf{w}||^{-1}y^{\mu}\mathbf{w}\cdot\mathbf{u}^{\mu}
\end{equation}

Note that all the fields are defined with the target label $y^{\mu}$,
and they are normalized by the norm of $\mathbf{w}$. With these definitions,
$h_{0}^{\mu}\pm Rh^{\mu}$ are the signed distance of the endpoints
of the segment from the separating plane which is the plane orthogonal
to $\mathbf{w}$ . Thus, $\mathbf{w}$ has to obey,

\begin{equation}
h_{0}^{\mu}-R\left\Vert h_{i}^{\mu}\right\Vert \geq\kappa\label{eq:lines}
\end{equation}

where $h_{0}^{\mu}-R\left\Vert h_{i}^{\mu}\right\Vert $ is the field
of the endpoint with the smallest (signed) distance to the plane.
The parameter $\kappa>0$ is a parameter defining two margin planes.
According to Eq. (\ref{eq:lines}) all the positively labeled inputs
must lie either above the 'positive' margin plane. Conversely the
negatively labeled points must lie below the negative margin plane
(See Fig. 1).

\subsubsection{Replica Theory}

We consider a thermodynamic limit where $N,\,P\rightarrow\infty$
whereas $\alpha=P/N$$,$ and $R$ are finite. We use the Gardner
framework to compute the volume of space of solutions. 
\begin{equation}
V=\int d^{N}\mathbf{w_{\alpha}}\delta(\mathbf{w}^{2}-N)\Pi_{\mu=1}^{P}\Theta(h_{0}^{\mu}-R\left\Vert h^{\mu}\right\Vert -\kappa)\label{eq:V}
\end{equation}

where $\Theta$ is the Heaviside function. According to replica theory,
$\langle\ln V\rangle=\lim_{n\rightarrow0}\frac{\langle V^{n}\rangle-1}{n}$,
where $V^{n}$ can be written as,

\begin{eqnarray}
V^{n} & = & \prod_{\alpha=1}^{n}\int d\mathbf{w}_{\alpha}\delta(\mathbf{w}_{\alpha}^{2}-N)\prod_{\mu=1}^{P}\int_{\kappa}^{\infty}dh_{\alpha}^{\mu+}\int_{\kappa}^{\infty}dh_{\alpha}^{\mu-}\int_{-\infty}^{\infty}d\tilde{h}_{\alpha}^{\mu+}\int_{-\infty}^{\infty}d\tilde{h}_{\alpha}^{\mu-}X\label{eq:Vn}
\end{eqnarray}

\begin{equation}
X=e^{\sum_{\pm}i\tilde{h}_{\alpha}^{\mu\pm}\left(h_{\alpha}^{\mu\pm}-\frac{1}{\sqrt{N}}\left\{ y^{\mu}\mathbf{w}_{\alpha}^{T}(\mathbf{x_{0}}^{\mu}\pm R\mathbf{u}^{\mu})\right\} \right)}
\end{equation}

where $h^{\mu+}=h_{0}^{\mu}+Rh^{\mu}$, $h^{\mu-}=h_{0}^{\mu}-Rh^{\mu}$.
Averaging over the random inputs $\mathbf{x}^{\mu}$ and $\mathbf{u}^{\mu}$
, the above fields can be written as sums of two random fields, where
$t_{0}$ and $t$ are the quenched component resulting from the quenched
random variables, namely the input vectors $\mathbf{x}_{0}^{\mu}$and
$\mathbf{u}_{i}^{\mu}$, while the $z_{0}$and $z$ fields represent
the variability of different $\mathbf{w}$'s within the volume of
solutions for each realization of inputs and labels,

\begin{equation}
h_{0}^{\mu}=\sqrt{q}t_{0}^{\mu}+\sqrt{1-q}z_{0}^{\mu},\;h^{\mu}=\sqrt{q}t^{\mu}+\sqrt{1-q}z^{\mu}
\end{equation}

where the replica symmetric order parameter $q$ is $q=\frac{1}{N}\mathbf{w}_{\alpha}\cdot\mathbf{w}_{\beta},\,\alpha\neq\beta$
. The resultant 'free energy' $G$ is:

\begin{equation}
\langle V^{n}\rangle_{t_{0},t}=e^{Nn\left[G(q)\right]}=e^{Nn\left[G_{0}(q)+\alpha G_{1}(q)\right]}\label{eq:G}
\end{equation}

where,

\begin{equation}
G_{0}(q)=\frac{1}{2}\ln(1-q)+\frac{q}{2(1-q)}\label{eq:G0}
\end{equation}

is the entropic term representing the volume of $\mathbf{w}_{\alpha}$
subject to the constraint that $q=\frac{1}{N}\mathbf{w}_{\alpha}\cdot\mathbf{w}_{\beta}$
. The classification constraints contributes

\begin{equation}
G_{1}(q)=\langle\mbox{ln}Z(q,t_{0,}t)\rangle_{t_{0},t}\label{eq:G1}
\end{equation}
\begin{equation}
Z(q,t_{0},t)=\int_{-\infty}^{\infty}Dz_{0}\int_{-\infty}^{\infty}Dz\Theta\left[\left(\sqrt{q}t_{0}+\sqrt{1-q}z_{0}\right)-R\left|\sqrt{q}t+\sqrt{1-q}z\right|-\kappa\right]\label{eq:Z}
\end{equation}

where $Dx\equiv\frac{dx}{\sqrt{2\pi}}\exp-\frac{x^{2}}{2}$ and the
average wrt $t_{0}$, $t$ denotes integrals over the gaussian variables
$t_{0}$, $t$ with measures $Dt_{0}$ and $Dt$, respectively. Finally,
$q$ is determined by solving $\frac{\partial G}{\partial q}=0$ .
Solution with $q<1$ indicates a finite volume of solutions. For each
$\kappa$ there is a maximum value of $\alpha$ where a solution exists.
As $\alpha$ approaches this maximal value, $q\rightarrow1$ indicating
the existence of a unique solution, which is the max margin solution
for this $\alpha$.

In this chapter we focus on the properties of the \textit{max margin}
solution, i.e., on the limit $q\rightarrow1.$

\paragraph*{$q\rightarrow1$ Limit }

We define

\begin{equation}
Q=\frac{q}{1-q}
\end{equation}
and study the limit of $Q\rightarrow\infty$. In this limit the leading
order is $G_{0}=\frac{Q}{2}$.

\begin{equation}
\langle\ln V\rangle=\frac{Q}{2}\left[1-\alpha\langle g(t_{0},t)\rangle_{t_{0},t}\right]\label{eq:logVLargeQ}
\end{equation}

where, $g\equiv-\frac{2}{Q}\log Z$ is independent of $Q$ and is
given by replacing the integrals in Eq. (\ref{eq:Z}) by their saddle
point, yielding

\begin{equation}
g(t_{0},t)=\min_{z_{0}+t_{0}-R|z+t|\geq\kappa}[z_{0}{}^{2}+z^{2}]\label{eq:gLines}
\end{equation}

Note that here we have scaled variables $z_{0}$ and $z$ such that
$z_{0}\rightarrow\sqrt{Q}z_{0}$ and similarly for $z$.

Finally, at the capacity, $\ln V$ vanishes, hence

\begin{equation}
\alpha_{1}^{-1}(\kappa,R)=\langle g(t_{0},t)\rangle_{t_{0},t}\label{eq:alphaLines}
\end{equation}

where we have denoted the capacity for one dimensional manifolds as
$\alpha_{1}$.

\subsubsection{Capacity }

The nature of solution of Eq. (\ref{eq:gLines}) depends on the values
of $t$ and $t_{0}$. There are three regimes.

\paragraph*{a) Regime 1: }

\begin{equation}
t_{0}-\kappa>R\left\Vert t\right\Vert \label{eq:regime1Lines}
\end{equation}

in which case the solution is $z_{0}=z=0$ which does not contribute
to Eq. (\ref{eq:alphaLines}). 

For values of $t_{0}-\kappa\leq R||t||$, the solution obeys $z_{0}+t_{0}-R||z+t||=\kappa$
, meaning that one of the endpoints touches the margin plane. This
regime is further divided into two cases.

\paragraph*{b) Regime 2: 
\begin{equation}
-R^{-1}||t||<t_{0}-\kappa<R||t||\label{eq:Regime2Lines}
\end{equation}
}

Here, the center field $z_{0}+t_{0}$ is larger than the margin (i.e.,
the center points are interior) and the fields can be determined by
minimizing Eq. (\ref{eq:gLines}) $(R|z+t|-t_{0}+\kappa)^{2}+z^{2}$
w.r.t. $z$ yielding

\begin{equation}
z=\frac{R^{2}||t||+R(\kappa-t_{0})}{1+R^{2}}
\end{equation}

\begin{equation}
z_{0}=\frac{R||t||+\kappa-t_{0}}{1+R^{2}}
\end{equation}

and its contribution to Eq. (\ref{eq:alphaLines}) is

\begin{equation}
g=\frac{(\kappa-t_{0}+R||t||)^{2}}{1+R^{2}}
\end{equation}

\paragraph*{c) Regime 3: 
\begin{equation}
t_{0}-\kappa<-R^{-1}||t||
\end{equation}
}

Here the center points are also on the margin plane, hence $z=-||t||$
and $z_{0}+t_{0}=\kappa$, contributing

\begin{equation}
g=(t_{0}-\kappa)^{2}+t^{2}
\end{equation}

Finally, combining the contributions from Regimes 2 and 3 yields,

\textsl{ 
\begin{equation}
\alpha_{L}^{-1}(\kappa,R)=\int_{-\infty}^{\infty}Dt\left[\int_{\kappa-|t|R^{-1}}^{\kappa+R|t|}Dt_{0}\frac{(|t|R-(t_{0}-\kappa))^{2}}{(1+R^{2})}+\int_{-\infty}^{\kappa-|t|R^{-1}}Dt_{0}((t_{0}-\kappa)^{2}+t^{2})\right]\label{eq:alphaCLine-1}
\end{equation}
}

For $\kappa=0$, this expression reduces to\textsl{ 
\begin{equation}
\alpha_{L}^{-1}(0,R)=\int_{-\infty}^{\infty}Dt\left[\int_{-|t|R^{-1}}^{R|t|}Dt_{0}\frac{(|t|R-t_{0})^{2}}{(1+R^{2})}+\int_{-\infty}^{-|t|R^{-1}}Dt_{0}(t_{0}^{2}+t^{2})\right]\label{eq:alphaCLinek0}
\end{equation}
}

By switching to polar coordinates: $t=r\cos\phi$, $t_{0}=r\sin\phi,$
these integrals reduce to

\begin{equation}
\alpha_{L}^{-1}(R)=\frac{1}{2}+\frac{2}{\pi}\arctan R\label{eq:tanR}
\end{equation}

\subsubsection{Limits of R }

In the limit of $R\rightarrow0,$ Eq. (\ref{eq:alphaCLine-1}) reduces
to $\alpha_{L}(\kappa,R=0)=\alpha_{0}(\kappa)$ where $\alpha_{0}(\kappa)$
is the Gardner's result for classifying $P$ random points. 

Interestingly, $\alpha_{L}(\kappa,1)=\frac{1}{2}\alpha_{0}(\kappa/\sqrt{2})$
. This is because when $R=1$ the distance between edge points on
the line segments is statistically the same as that between points
of different segments, hence the problem is equivalent to classifying
randomly $2P$ points with norms $\sqrt{2N}$. 

Finally, when $R\rightarrow\infty,$ the capacity becomes 

\textsl{ 
\begin{equation}
\alpha_{L}^{-1}(\kappa,R=\infty)=\int_{-\infty}^{\infty}Dt\left[\int_{\kappa}^{\infty}Dt_{0}t^{2}+\int_{-\infty}^{\kappa}Dt_{0}((t_{0}-\kappa)^{2}+t^{2})\right]\label{eq:InftyRLine1}
\end{equation}
}

\textsl{ 
\begin{equation}
=1+\int_{-\infty}^{\kappa}Dt_{0}((t_{0}-\kappa)^{2}=1+\alpha_{0}^{-1}(\kappa)\label{eq:InftyRline}
\end{equation}
}

The reason for this is that when $R$ is large, the manifolds are
essentially unbounded lines. The only way to classify them correctly
is for $\mathbf{w}$ to be orthogonal to all $P$ lines, reducing
the problem to classifying $P$ points which are the projections of
the centers on the null space of the lines. Thus, this is equivalent
to classifying random points in a space with dimensionality $N-P=N(1-\alpha)$
from which Eq. (\ref{eq:InftyRLine1}) follows. These limits can be
readily seen in the simple case of $\kappa=0$. It is readily seen
from Eq. (\ref{eq:tanR}) that $\alpha=2$, $1,$ and $2/3$ for $R=0,$1,
and $\infty$ respectively.

\subsubsection{Distribution of Fields }

It is instructive to calculate the distribution of fields $P(h_{0},h)$
induced by the manifolds with the max margin solution $\mathbf{w}$.
Using the above theory, we find that

\begin{eqnarray}
P(h_{0},h) & = & \langle\frac{1}{Z}\int_{-\infty}^{\infty}Dz_{0}\int_{-\infty}^{\infty}Dz\Theta\left[\left(\sqrt{q}t_{0}+\sqrt{1-q}z_{0}\right)-R\left|\sqrt{q}t+\sqrt{1-q}z\right|-\kappa\right]\\
 &  & \delta(h_{0}-\sqrt{q}t_{0}-\sqrt{1-q}z_{0})\delta(h-\sqrt{q}t-\sqrt{1-q}z)\rangle_{t,t_{0}}
\end{eqnarray}

Considering the three above regimes for $(t,t_{0})$, we obtain the
dominant contribution in the limit of $Q\rightarrow\infty$,

\begin{equation}
P(h_{0},h)=A(h_{0},h)\Theta(h_{0}-R||h||-\kappa)+B(h_{0})\delta(||h||-R^{-1}(h_{0}-\kappa))+C\delta(h_{0}-\kappa)\delta(h)
\end{equation}

\begin{equation}
A(h_{0},h)=\frac{\exp(-\frac{1}{2}(h_{0}^{2}+h^{2})}{2\pi},\,h_{0}-R|h|-\kappa\geq0
\end{equation}

\begin{equation}
B(h_{0})=2\sqrt{\frac{1+R^{-2}}{2\pi}}H(-R^{-1}\kappa')\exp\left[-\frac{(1+R^{-2})(h_{0}-R^{-2}\kappa')^{2}}{2}\right],\,h_{0}\geq0
\end{equation}

\begin{equation}
C=\int Dz\int_{-\infty}^{\kappa-R^{-1}||z||}Dt
\end{equation}

where $H(x)=\int_{x}^{\infty}Dz$ , and

\begin{equation}
\kappa'=\frac{\kappa}{1+R^{-2}}
\end{equation}

The integrated weights are:

\begin{equation}
\int dh_{0}dhA(h_{0},h)=2\int_{0}^{\infty}DtH(\kappa+Rt)
\end{equation}

\begin{equation}
\int dh_{0}B(h_{0})=\int_{-\infty}^{\infty}Dt\int_{\kappa-|t|R^{-1}}^{\kappa+R|t|}Dt_{0}=2\int_{0}^{\infty}Dt\left[H(\kappa-tR^{-1})-H(\kappa+Rt)\right]
\end{equation}

\begin{equation}
C=\int_{-\infty}^{\infty}Dt\int_{-\infty}^{\kappa-|t|R^{-1}}Dt_{0}=1-2\int_{0}^{\infty}DtH(\kappa-tR^{-1})
\end{equation}

The first term represents the fraction of line segments that are interior
to the margin plane (corresponding to Regime 1); the second component
corresponds to segments that touch the margin planes but do not lie
on the margin plane (Regime 2); the third term corresponds to the
segments that lie completely on the margin planes (see Fig. 1 in main
text). When $R\rightarrow\infty,$ we obtain,

\begin{equation}
\int dh_{0}B(h_{0})=H(\kappa)
\end{equation}

\begin{equation}
C=1-H(\kappa)
\end{equation}

The reason for this is as follows. when $R\rightarrow\infty$, $\mathbf{w}$
becomes increasingly orthogonal to all the directors, hence the fraction
of interior points vanish. The value of $B$ represents the fraction
of segments that touch the margin planes. The fields associated with
the centers is finite, larger than $\kappa$. However, the angle between
the segments and \textbf{$\mathbf{w}$} vanish, since the angle is
roughly $||h||$ which is $R^{-1}(h_{0}-\kappa)$ . In contrast, the
fields of the segments represented by $C$ equal $\kappa$, hence
they lie in the margin planes. Thus, in this limit, the fields are
the same as the separation of the centers in the null space (of dimension
$N-P)$.

\subsection{Perceptron Capacity of $D$-dimensional Balls \label{subsec:AppendixReplicaBalls}}

We now consider linear classification of higher dimensional manifolds,
modeling them as $D$ dimensional balls with radius $R$, 
\begin{equation}
\mathbf{x_{0}}^{\mu}+R\sum_{i=1}^{D}s_{i}\mathbf{u}_{i}^{\mu},\:\forall s,\:||\vec{s}||\le1\label{eq:manifoldDef}
\end{equation}
{[}We use $\vec{}\:$ sign to denote $D$ - dimensional vectors and
$||...||$ for $L_{2}$ norm {]}. For each manifold, the center $\mathbf{x}_{0}^{\mu}$,
and the $D$ basis vectors $\{\mathbf{u}_{i}^{\mu}\}$ are $N$ dimensional
vectors ($i=1,..,D$), the components of which are all independent
Gaussian random variables with zero mean and unit variance. The target
labels of the manifolds are random assignments of $y^{\mu}=\pm1$.
To classify all the points on the manifolds correctly (with a given
margin), the weight vector $\mathbf{w}$ (normalized for convenience
by $||\mathbf{w}||=\sqrt{N}),$ must satisfy 
\begin{equation}
h_{0}^{\mu}+R\min_{\vec{s},\,||\vec{s}||^{2}=1}\sum_{i=1}^{D}s_{i}h_{i}^{\mu}\geq\kappa\label{eq:minS}
\end{equation}

where $h_{0}^{\mu}=N^{-1/2}y^{\mu}\mathbf{w}\cdot\mathbf{x^{\mu}}$
is the field induced by the manifold centers and $h_{i}^{\mu}=N^{-1/2}y^{\mu}\mathbf{w}\cdot\mathbf{u}_{i}^{\mu}\:i=1,...,D$
are $D$ fields induced by each of the basis vectors. Differentiating
$\sum_{i=1}^{D}s_{i}h_{i}^{\mu}+\lambda\sum_{i}s_{i}^{2}$ (where
$\lambda$ is a Lagrange multiplier enforcing the norm constraint)
wrt $s_{i}$, we obtain,

\begin{equation}
s_{i}=-\frac{h_{i}^{\mu}}{||\vec{h}^{\mu}||}
\end{equation}

where $||\vec{h}^{\mu}||$ is the $L_{2}$ norm of the $D$-dimensional
vector $h_{i}^{\mu}$, hence $\sum_{i}s_{i}h_{i}^{\mu}=-||\vec{h}^{\mu}||$
and the constraints can be written as

\begin{equation}
h_{0}^{\mu}-R||\vec{h}^{\mu}||\geq\kappa\label{eq:L2 Inequality}
\end{equation}

Geometrically, the LHS corresponds to the field induced by the point
on the manifold $\mu$ which has the smallest (signed) projection
on $\mathbf{w}$. We consider a thermodynamic limit where $N,\,P\rightarrow\infty$
while $\alpha=P/N$, $D,$ and $R$ are finite.

\subsubsection{Capacity}

The replica theory as outlined above, yields

\begin{equation}
\langle V^{n}\rangle_{t_{0},t}=e^{Nn\left[G(q)\right]}=e^{Nn\left[G_{0}(q)+\alpha G_{1}(q)\right]}\label{eq:G-1}
\end{equation}

where as before,

\begin{equation}
G_{0}(q)=\frac{1}{2}\ln(1-q)+\frac{q}{2(1-q)}\label{eq:G0-1}
\end{equation}

and

\begin{equation}
G_{1}(q)=\langle\ln Z(q,t_{0,}\vec{t})\rangle_{t_{0},\vec{t}}\label{eq:G1-1}
\end{equation}
\begin{equation}
Z(q,t_{0},\mathbf{t})=\int_{-\infty}^{\infty}Dz_{0}\int_{-\infty}^{\infty}D\vec{z}\Theta\left[\left(\sqrt{q}t_{0}+\sqrt{1-q}z_{0}\right)-R||\sqrt{q}\vec{t}+\sqrt{1-q}\vec{z}||-\kappa\right]\label{eq:Z-1}
\end{equation}

where

\begin{equation}
h_{0}^{\mu}=\sqrt{q}t_{0}^{\mu}+\sqrt{1-q}z_{0}^{\mu},\;h_{i}^{\mu}=\sqrt{q}t_{i}^{\mu}+\sqrt{1-q}z_{i}^{\mu},\,i=1,...,D
\end{equation}
and $||...||$ is the $L_{2}$ norm of the $D$-dimensional vectors.
All variables $z_{0},t_{0},\vec{z},\vec{t}$ are normally distributed.

\[
\ln Z(q,t_{0},\vec{t})=\ln\int_{-\infty}^{\infty}Dz_{0}\int_{-\infty}^{\infty}D\vec{z}\,\Theta\left(\sqrt{Q}t_{0}+z_{0}-R||\sqrt{q}\vec{t}+\sqrt{1-q}\vec{z}||-\kappa\right)
\]

where the saddle point behavior in the limit of $q\rightarrow1,Q\rightarrow\infty$
gives $g\equiv-\frac{2}{Q}\log Z$,

\begin{equation}
g(t_{0,}t)=\min_{t_{0}+z_{0}-R||\vec{t}+\vec{z}||>\kappa}\frac{1}{2}\left[z_{0}^{2}+\left\Vert \vec{z}\right\Vert ^{2}\right]\label{eq:logZSpeheres}
\end{equation}

and the capacity is given by

\begin{equation}
\alpha_{B}^{-1}(\kappa,R,D)=\langle g(t_{0},\vec{t})\rangle_{t_{0},\mathbf{\vec{t}}}\label{eq:alphaSpheres}
\end{equation}

Again, there are three regimes.

\paragraph*{a) Regime 1: }

Defining $t=||\vec{t}||$, when $t_{0}-\kappa>Rt$: then $z_{0}\approx0$,
$\vec{z}\approx0$ $g\approx0$ corresponding to manifolds which obey
the inequality (not equality) of Eq. (\ref{eq:minS}), hence are interior
to the plane.

\paragraph*{b) Regime 2:}

When $-R^{-1}t<t_{0}-\kappa<Rt$: then $z_{0}=R||\vec{t}+\vec{z}||-t_{0}+\kappa$
and

\begin{equation}
\vec{z}=-z\vec{t}/t
\end{equation}

the scalar $z$ can be calculated by 
\begin{equation}
g\approx\min_{z}\frac{1}{2}\left[\left(R(t-z)-t_{0}+\kappa\right)^{2}+z^{2}\right]\label{eq:logZSpheres}
\end{equation}

\begin{equation}
z=\frac{R^{2}t-R(\kappa-t_{0})}{1+R^{2}}
\end{equation}
\begin{equation}
z_{0}=\frac{Rt+\kappa-t_{0}}{1+R^{2}}
\end{equation}

\begin{equation}
g=\frac{(\kappa-t_{0}-Rt)^{2}}{1+R^{2}}
\end{equation}

\paragraph*{c) Regime 3:}

When $t_{0}<-\frac{1}{R}t$: then $z=t$ and $z_{0}=\kappa-t_{0}$
so that $g\approx(t_{0}-\kappa)^{2}+t^{2}$ .

Combining these contributions, the capacity is: 
\begin{equation}
\alpha_{B}^{-1}(\kappa,R,D)=\int_{0}^{\infty}dt\chi{}_{D}(t)\left[\int_{\kappa-\frac{1}{R}t}^{\kappa+Rt}Dt_{0}\frac{(Rt-t_{0}+\kappa)^{2}}{R^{2}+1}+\int_{-\infty}^{\kappa-\frac{1}{R}t}Dt_{0}([t_{0}-\kappa]^{2}+t^{2})\right]\label{eq:alphaCSpheres}
\end{equation}

where $\chi_{D}$ is the $D$-dim \emph{Chi distribution, }

\begin{equation}
\chi_{D}(t)=\int D\vec{t}\delta\left(t-||\vec{t}||\right)=\frac{2^{1-\frac{D}{2}}t^{D-1}e^{-\frac{1}{2}t^{2}}}{\Gamma(\frac{D}{2})}\label{eq:PD}
\end{equation}

\subsubsection{Distribution of Fields}

We consider the joint distribution of two fields: $h_{0}$ which is
the field induced by the manifold centers, and $h\equiv||\vec{h}||$,
namely the $L_{2}$ norm of the $D$ dimensional vector of fields
induced by the $D$ $\mathbf{u}_{i}$'s. Taking into account the above
three regimes, we have,

\begin{equation}
P(h_{0},h)=A(h_{0},h)\Theta(h_{0}-Rh-\kappa)+B(h_{0})\delta(h-R^{-1}(h_{0}-\kappa))+C\delta(h_{0}-\kappa)\delta(h)
\end{equation}

\paragraph*{1. Field Distribution for $\kappa=0$.}

\begin{equation}
A(h_{0},h)=\frac{\exp(-\frac{1}{2}h_{0}^{2})}{\sqrt{2\pi}}\chi_{D}(h),\,h_{0}-R|\vec{h}|\geq0
\end{equation}

\begin{equation}
B(h_{0})=(1+R^{2})^{-1}\int Dt_{0}\int_{h}^{\infty}dt\chi_{D}(t)\,\delta(t_{0}-(1+R^{-2})h_{0}-R^{-1}t)
\end{equation}

\begin{equation}
C=\int Dz\int_{-\infty}^{\kappa-R^{-1}|\vec{z}|}Dt
\end{equation}

\paragraph*{2. Integrated Weights:}

\begin{equation}
\int dh_{0}dhA(h_{0},h)=\int_{0}^{\infty}dtP_{D}(t)H(\kappa+Rt)
\end{equation}

\begin{equation}
\int dh_{0}B(h_{0})=\int_{0}^{\infty}dt\chi_{D}(t)\int_{\kappa-\frac{1}{R}t}^{\kappa+Rt}Dt_{0}=\int_{0}^{\infty}dt\chi_{D}(t)\left[H(\kappa-\frac{t}{R})-H(\kappa+Rt)\right]
\end{equation}

\begin{equation}
C=\int_{0}^{\infty}dt\chi_{D}(t)\int_{-\infty}^{\kappa-\frac{t}{R}}Dt_{0}=1-\int_{0}^{\infty}dt\chi_{D}(t)H(\kappa-\frac{t}{R})
\end{equation}

As in the case of line segments, the first term corresponds to the
fraction of $D$-dim balls that lie in the interior space; the second
component corresponds to the fraction of balls that touch the margin
planes, whereas $C$ stands for the fraction of balls that are fully
embedded in these planes.

\subsubsection{$R=1$ }

\paragraph*{1. Capacity for $\kappa=0$ }

In the case of $R=1$, the capacity obtains a simple form:

\[
\alpha_{B}^{-1}(\kappa=0,R=1,D)=\int_{0}^{\infty}dt\chi{}_{D}(t)\left[\int_{-t}^{+t}Dt_{0}\frac{t^{2}+t_{0}^{2}}{2}-\int_{-t}^{+t}Dt_{0}tt_{0}+\int_{-\infty}^{-t}Dt_{0}(t_{0}{}^{2}+t^{2})\right]
\]

\[
\alpha_{B}^{-1}(\kappa=0,R=1,D)=\int_{0}^{\infty}dt\chi{}_{D}(t)\left[\int_{0}^{\infty}Dt_{0}(t_{0}{}^{2}+t^{2})\right]
\]

\[
\alpha_{B}^{-1}(\kappa=0,R=1,D)=\frac{D+1}{2}\:\square
\]

\paragraph*{2. Manifold Geometry Configurations for $\kappa=0$ }

\paragraph*{a) Interior vs. Embedded:}

The fraction of embedded manifolds:

\[
p_{\mbox{embedded}}=\int_{0}^{\infty}dt\chi{}_{D}(t)\left[\int_{-\infty}^{-t}Dt_{0}\right]
\]

Fraction of interior manifolds:

\[
p_{\mbox{interior}}=\int_{0}^{\infty}dt\chi{}_{D}(t)\int_{t}^{\infty}Dt_{0}=p_{\mbox{embedded}}
\]

The fraction of touching manifolds:

\[
p_{\mbox{touching}}=\int_{0}^{\infty}dt\chi{}_{D}(t)\int_{-t}^{t}Dt_{0}
\]

Thus, the fraction of interior manifolds and embedded manifolds are
equal. $\square$.

\paragraph*{b) Touching Manifolds: }

In general,

\[
p_{\mbox{touching}}=\int_{0}^{\infty}dt\chi{}_{D}(t)\left[\int_{-\frac{1}{R}t}^{+Rt}Dt_{0}\right]
\]

\[
=\int_{0}^{\infty}dt\chi{}_{D}(t)\left[1-H(Rt)-H(t/R)\right]
\]

The radius $R$ at which $p_{\mbox{touching}}$is at maximum can be
found by

\[
\frac{\partial}{\partial R}\left(p_{\mbox{touching}}\right)=\int_{0}^{\infty}dt\chi{}_{D}(t)\left[-tH'(Rt)+tR^{-2}H'(t/R)\right]=0
\]

The solution for above is $R=1$ for all $D$. For $D=2$ , $p_{\mbox{touching}}(R=1,D=2)=\int_{0}^{\infty}dt\chi_{2}(t)[1-2H(t)]\sim0.7$.

Therefore, at $R=1$, the fraction of touching disks is at maximum,
and for $D=2$, the value is about 0.7. $\square.$

\subsubsection{Large $R$ Limit}

In the limit of large $R$, Eq. (\ref{eq:alphaCSpheres}) reduces
to:

\begin{equation}
\alpha_{B}^{-1}(\kappa,R=\infty,D)=\int_{0}^{\infty}dt\chi{}_{D}(t)\left[\int_{\kappa}^{\infty}Dt_{0}t^{2}+\int_{-\infty}^{\kappa}Dt_{0}([t_{0}-\kappa]^{2}+t^{2})\right]\label{eq:alphaCSpheres-1}
\end{equation}

\begin{equation}
=\int_{0}^{\infty}dt\chi{}_{D}(t)t^{2}+\int_{-\infty}^{\kappa}Dt_{0}(t_{0}-\kappa)^{2}=\alpha_{0}^{-1}(\kappa)+D\label{eq:alphaCSpheres-1-1}
\end{equation}

which reflects the fact that when $R$ is large $\mathbf{w}$ must
be in the null space of the $PD$ vectors $\mathbf{u}_{i}^{\mu}$;
thus, the classification problem is that of $P$ points (i.e., the
projections of the centers onto the null space) in $N-PD$ dimensions.
Likewise, in this limit $A$ vanishes and the angle between the manifold
centers and the margin planes vanish.

\subsubsection{Limit of Large $D$ }

In many realistic problems it is expected that the dimension of the
object manifolds is large, hence it is of interest to examine the
results in the limit of $D\gg1$. In his limit, $\chi_{D}(t)$ is
centered around $t=\sqrt{D}$, yielding

\begin{equation}
\alpha_{B,D\gg1}^{-1}=\int_{\kappa-\frac{\sqrt{D}}{R}}^{\kappa+R\sqrt{D}}Dt_{0}\frac{(R\sqrt{D}-t_{0}+\kappa)^{2}}{R^{2}+1}+\int_{-\infty}^{\kappa-\frac{\sqrt{D}}{R}}Dt_{0}([t_{0}-\kappa]^{2}+D)\label{eq:alphacSpheresLargeD}
\end{equation}
As long as $R\ll\sqrt{D}$ , the second term in Eq. (\ref{eq:alphacSpheresLargeD})
vanishes and yields

\begin{equation}
\alpha_{B,D\gg1}^{-1}=\int_{-\infty}^{\kappa+R\sqrt{D}}Dt_{0}\frac{(R\sqrt{D}-t_{0}+\kappa)^{2}}{R^{2}+1}=\frac{\alpha_{0}^{-1}(\kappa+R\sqrt{D})}{1+R^{2}}\label{eq:scalingAlpha}
\end{equation}

Thus, $\alpha$remains finite in the limit of large $D$ only if $R$
is not larger than the order of $D^{-1/2}$. If, on the other hand,
$R\sqrt{D}\gg1$, Eq. (\ref{eq:scalingAlpha}) implies

\begin{equation}
\alpha_{B,D\gg1}^{-1}=\frac{R^{2}D}{1+R^{2}}\label{eq:alphaD}
\end{equation}
(where we have used the asymptote $\alpha_{0}^{-1}(x)\rightarrow x$
for large $x$).

Numerically, this approximation works very well for $R\geq0.5$ and
all $D$ (as long as $R\ll\sqrt{D}$).

\subparagraph{Field Distribution in Large $D$ : }

In the limit of large $D$ , the fraction of manifolds that lie on
the margin plane, $C$, is zero. The overall fraction of interior
manifolds is $H(\kappa+R\sqrt{D})$ whereas the fraction of manifolds
that touch the margin planes is $1-H(\kappa+R\sqrt{D})$ .

\subparagraph*{Large $R$ : }

In the limit of $R\propto\sqrt{D}$,

\begin{equation}
\alpha_{B,R\gg1}^{-1}=D\int_{\kappa-\frac{\sqrt{D}}{R}}^{\infty}Dt_{0}+D\int_{-\infty}^{\kappa-\frac{\sqrt{D}}{R}}Dt_{0}=D\label{eq:alphacSpheresLargeD2}
\end{equation}
Note that in this case, both terms in Eq. (\ref{eq:alphacSpheresLargeD})
contribute. This reflects the fact that when $R$ is $O(\sqrt{D})$
it is again advantageous for $\mathbf{w}$ to be orthogonal to some
of the spheres. This is seen in the field distribution. In this limit,
it consists of a fraction of $H(\sqrt{D}/R)$ lying on the plane whereas
the fraction of touching balls is $1-H(\sqrt{D}/R)$. Finally, when
$R/\sqrt{D}$ is large, most of the spheres lie on the margin, as
expected.

\subsection{Perceptron Capacity of $L_{p}$ Manifolds \label{subsec:AppendixReplicaLp}}
\noindent \begin{center}
\begin{figure}
\noindent \begin{centering}
\includegraphics[clip,width=0.9\textwidth]{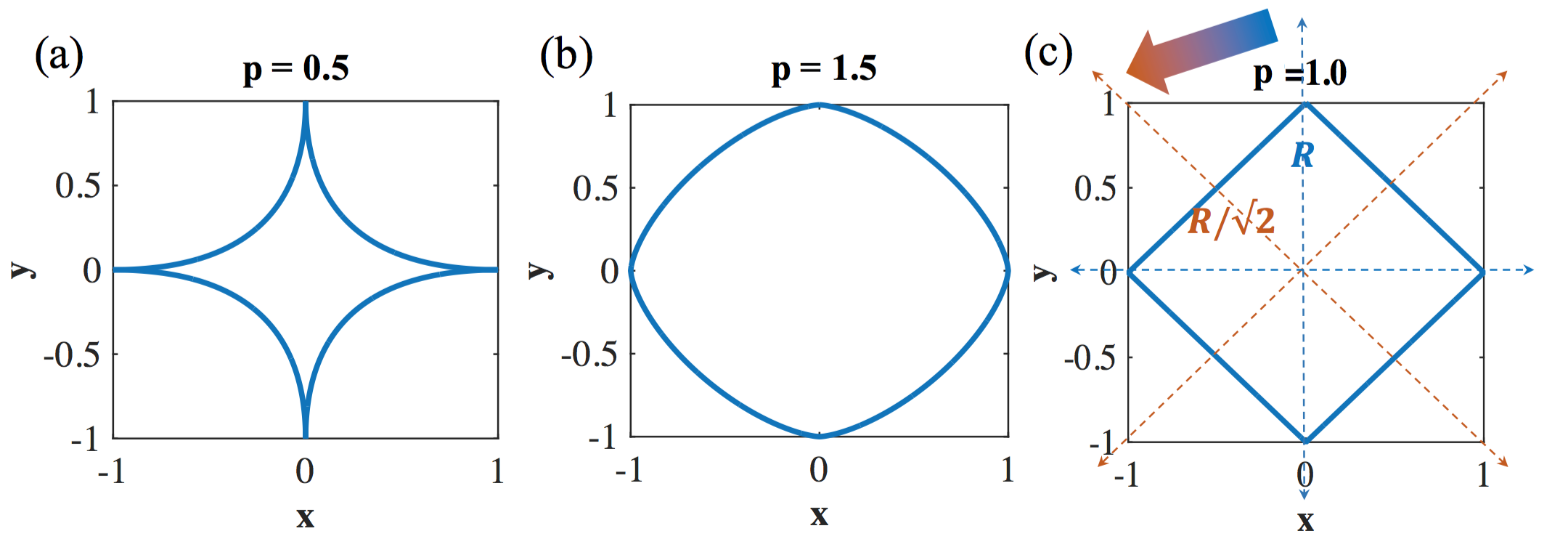}
\par\end{centering}
\caption{\textbf{$L_{p}$ balls.} Illustration of $L_{p}$ balls of norm $p$
for (a) $p$ = 0.5, (b) $p$ =1.5, (c) $p$ = 1. Notice that rotation
of the axis introduces the reduction of effective $R$ by factor of
$1/\sqrt{2}$. \label{fig:AppendixLp}}
\end{figure}
\par\end{center}

We consider manifolds defined with $L_{p}$ norm,

\begin{equation}
\mathbf{x_{0}}^{\mu}+R\sum_{i=1}^{D}s_{i}\mathbf{u}_{i}^{\mu},\:\forall s,\:\left\Vert \vec{s}\right\Vert _{p}\le1\label{eq:manifoldDefLp}
\end{equation}

where $\vec{\left\Vert s\right\Vert }{}_{p}$ is the $L_{p}$norm
of $\vec{s}$ . Linear classification requires

\begin{equation}
h_{0}^{\mu}+R\min_{s,\,\left\Vert \vec{s}\right\Vert _{p}=1}\sum_{i=1}^{D}s_{i}h_{i}^{\mu}\geq\kappa\label{eq:minSLp}
\end{equation}

\paragraph*{$1<p<\infty$:}

Differentiating $\sum_{i}s_{i}h_{i}^{\mu}+\lambda\sum_{i}\left\Vert s_{i}\right\Vert {}^{p}$
wrt $s_{i}$ yields,

\begin{equation}
s_{i}=-\mbox{sign}(h_{i}^{\mu})\frac{||h_{i}^{\mu}||^{1/(p-1)}}{(||h^{\mu}||_{q})^{1/p}}\label{eq:SLp}
\end{equation}

where $q=\frac{p}{p-1}$ is the dual norm of $p$ , hence, $\min_{\vec{s},\,\vec{\left\Vert s\right\Vert }_{p}=1}\sum_{i=1}^{D}s_{i}h_{i}^{\mu}=-\left\Vert \vec{h}^{\mu}\right\Vert {}_{q}$
. Thus, linear classification of $L_{p}$ manifolds is equivalent
to the constraints on the fields,

\begin{equation}
h_{0}^{\mu}-R\left\Vert \vec{h}^{\mu}\right\Vert {}_{q}\geq\kappa\label{eq:Lp Inequality}
\end{equation}

Smoothness of the $L_{q}$ norm guarantees that the solution will
be qualitatively similar to spheres (i.e., $p=q=2$). (See Fig. \ref{fig:AppendixLp}(a)
in the Appendix)

\paragraph*{$0<p\leq1$: }

In this regime differentiating with respect to $s_{i}$ does not minimize
$\sum_{i}s_{i}h_{i}^{\mu}$ . Instead, the minima are at the $D$
extremal points: $s_{i}=1,\,s_{j}=0,\,j\neq i$ corresponding to the
corners of the manifolds (see Fig. \ref{fig:AppendixLp}(b)). Thus,
for all $p\leq1$ the linear classification constraint is the same
and is given by the corner with the smallest projection on $w$ ,
namely

\begin{equation}
h_{0}^{\mu}-R\max_{i}h_{i}^{\mu}\geq\kappa
\end{equation}

We can now use the replica theory, where now the capacity is given
by 

\begin{equation}
\alpha_{B_{p}}^{-1}(\kappa,R,D)=\langle g(t_{0},\vec{t})\rangle_{t_{0},\vec{t}}\label{eq:alphaLp}
\end{equation}

where $B_{p}$ stands for Balls with $L_{p}$ norm, 

\begin{equation}
g(t_{0,}t)=\min_{t_{0}+z_{0}-R\max_{i}(z_{i}+t_{i})>\kappa}\frac{1}{2}\left[z_{0}^{2}+\left\Vert \vec{z}\right\Vert ^{2}\right]\label{eq:gLp}
\end{equation}

where,

\begin{equation}
z_{0}+t_{0}-R\max_{i}(z_{i}+t_{i})\geq\kappa
\end{equation}

\subsubsection{$L_{1}$ in $D=2$ }

\paragraph{Rotated Coordinates}

Without loss of generality, we assume the $t_{i}$ are ordered: $t_{2}\ge t_{1}\ge0$
and similarly for $z_{i}+t_{i}$ .

It is easier to consider the following transformation

\begin{equation}
t_{1}'=\frac{1}{\sqrt{2}}(t_{2}-t_{1})
\end{equation}

\begin{equation}
t_{2}'=\frac{1}{\sqrt{2}}(t_{2}+t_{1})
\end{equation}

\begin{equation}
z_{1}'=\frac{1}{\sqrt{2}}(z_{2}-z_{1})
\end{equation}
\begin{equation}
z_{2}'=\frac{1}{\sqrt{2}}(z_{2}+z_{1})
\end{equation}

\begin{equation}
R'=R/\sqrt{2}
\end{equation}

see the geometry of the rotation in Fig. \ref{fig:AppendixLp} (c).

In these coordinates and convention, $\max_{i}(z_{i}+t_{i})=z_{2}+t_{2}=\frac{1}{\sqrt{2}}\sum_{i}(z'_{i}+t'_{i})$
hence,

\begin{equation}
g(t_{0,}t)=\min_{t_{0}+z_{0}-R\left\Vert \vec{z}+\vec{t}\right\Vert {}_{1}>\kappa}\frac{1}{2}\left[z_{0}^{2}+\left\Vert \vec{z}\right\Vert {}^{2}\right]
\end{equation}

where we have dropped the primes.

General solution:

\textsf{
\begin{equation}
z=-a,\:a>0
\end{equation}
}

The sign is well defined only for $t+z\ne0$ . Hence the general solution
takes the form

\begin{equation}
z_{i}=-\min(a,t_{i}),\:a\geq0
\end{equation}

a) $t_{0}-\kappa>R||\vec{t}||_{1}:$

\begin{equation}
z_{0},\:z=0
\end{equation}

b) $t_{0}-\kappa<R|\vec{t}|_{1}:$

\begin{equation}
z_{0}=\kappa-t_{0}+R|\left\Vert \vec{z}+\vec{t}\right\Vert {}_{1}
\end{equation}

\begin{equation}
a=\frac{R(R\left\Vert \vec{t}\right\Vert {}_{1}-t_{0}+\kappa)}{1+2R^{2}}
\end{equation}

\begin{equation}
g=\frac{(R\left\Vert \vec{t}\right\Vert {}_{1}-t_{0}+\kappa)^{2}}{(1+2R^{2})}
\end{equation}

This is consistent if 
\begin{equation}
0<\frac{R(R\left\Vert \vec{t}\right\Vert {}_{1}-t_{0}+\kappa)}{1+2R^{2}}<t_{1}
\end{equation}
\begin{equation}
t_{0}-\kappa>R\left\Vert \vec{t}\right\Vert {}_{1}-R^{-1}(1+2R^{2})t_{1}=t^{1}
\end{equation}

\begin{equation}
t_{0}-\kappa>t^{1}
\end{equation}

c) $t_{0}<t^{1}$

\begin{equation}
t^{1}=R(t_{1}+t_{2})-R^{-1}(1+2R^{2})t_{1}=R(t_{2}-t_{1})-R^{-1}t_{1}
\end{equation}

Assume

\begin{equation}
z_{2}=-a,\:z_{1}=-t_{1},\:t_{2}>a>t_{1}
\end{equation}

\begin{equation}
a=\frac{R(Rt_{2}-t_{0}+\kappa)}{1+R^{2}}
\end{equation}

\begin{equation}
g=t_{1}^{2}+\frac{(Rt_{2}-t_{0}+\kappa)^{2}}{(1+R^{2})}
\end{equation}

\begin{equation}
t_{0}-\kappa>-R^{-1}t_{2}=t^{2}
\end{equation}

d) $t_{0}<t^{2}$

\begin{equation}
z=-t
\end{equation}
\begin{equation}
g=(t_{0}-\kappa)^{2}+t^{2}
\end{equation}

\paragraph{Capacity }

Finally, converting back the above regimes and values of $g$ to the
original coordinates, we have

\begin{equation}
\alpha_{B_{1}}^{-1}=8\int_{0}^{\infty}Dt_{2}\int_{0}^{t_{2}}Dt_{1}\int_{\kappa+Rt_{1}-R^{-1}(t_{2-}t_{1})}^{\kappa+Rt_{2}}Dt_{0}\frac{(Rt_{2}-t_{0}+\kappa)^{2}}{(1+R^{2})}
\end{equation}

\begin{equation}
+8\int_{0}^{\infty}Dt_{2}\int_{0}^{t_{2}}Dt_{1}\int_{\kappa-R^{-1}(t_{1}+t_{2})}^{\kappa+Rt_{1}-R^{-1}(t_{2-}t_{1})}Dt_{0}\left[\frac{(t_{2}-t_{1})^{2}}{2}+\frac{(\frac{1}{2}R(t_{1}+t_{2})-t_{0}+\kappa)^{2}}{(1+\frac{1}{2}R^{2})}\right]
\end{equation}
\begin{equation}
+8\int_{0}^{\infty}Dt_{2}\int_{0}^{t_{2}}Dt_{1}\int_{-\infty}^{\kappa-R^{-1}(t_{1}+t_{2})}Dt_{0}\left[(t_{0}-\kappa)^{2}+t_{1}^{2}+t_{2}^{2}\right]
\end{equation}

where the subscript for $\alpha_{B_{1}}^{-1}$ is used to denote capacity
for balls with $L_{1}$ norm.

1. $R\rightarrow0$

\begin{equation}
\alpha_{B_{1}}^{-1}=8\int_{0}^{\infty}Dt_{2}\int_{0}^{t_{2}}Dt_{1}\int_{-\infty}^{\kappa}Dt_{0}(-t_{0}+\kappa)^{2}=\alpha_{0}^{-1}(\kappa)
\end{equation}

2. $R\rightarrow\infty$

\begin{equation}
\alpha_{B_{1}}^{-1}=8\int_{0}^{\infty}Dt_{2}\int_{0}^{t_{2}}Dt_{1}\int_{\kappa}^{\infty}Dt_{0}\left[\frac{(t_{2}-t_{1})^{2}+(t_{1}+t_{2})^{2}}{2}\right]
\end{equation}

\begin{equation}
+8\int_{0}^{\infty}Dt_{2}\int_{0}^{t_{2}}Dt_{1}\int_{-\infty}^{\kappa}Dt_{0}\left[[t_{0}-\kappa]^{2}+t_{1}^{2}+t_{2}^{2}\right]
\end{equation}

\begin{equation}
=\int_{-\infty}^{\infty}Dt_{2}\int_{-\infty}^{\infty}Dt_{1}\int_{-\infty}^{\infty}Dt_{0}\left[t_{1}^{2}+t_{2}^{2}\right]+\int_{-\infty}^{\infty}Dt_{2}\int_{-\infty}^{\infty}Dt_{1}\int_{-\infty}^{\kappa}Dt_{0}[t_{0}-\kappa]^{2}
\end{equation}

\begin{equation}
\alpha_{B_{1}}^{-1}=2+\alpha_{0}^{-1}(\kappa)
\end{equation}

as expected in this case of $D=2$. The effective dimensionality is
$N-2P$ .

\paragraph{Fields.}

The integrated weight of manifolds that touch the margin planes is 

\begin{equation}
8\int_{0}^{\infty}Dt_{2}\int_{0}^{t_{2}}Dt_{1}\int_{\kappa+Rt_{1}-R^{-1}(t_{2-}t_{1})}^{\kappa+Rt_{2}}Dt_{0}
\end{equation}

The integrated weight of manifold that have a side on the planes is

\begin{equation}
8\int_{0}^{\infty}Dt_{2}\int_{0}^{t_{2}}Dt_{1}\int_{\kappa-R^{-1}(t_{1}+t_{2})}^{\kappa+Rt_{1}-R^{-1}(t_{2-}t_{1})}Dt_{0}
\end{equation}

The fraction of manifolds that lie on the planes is 
\begin{equation}
8\int_{0}^{\infty}Dt_{2}\int_{0}^{t_{2}}Dt_{1}\int_{-\infty}^{\kappa-R^{-1}(t_{1}+t_{2})}Dt_{0}
\end{equation}

\subsection{Simulation Details }

\subsubsection{Linear Classification of Line Segments}

\subparagraph*{Linear Classification of Line Segments. }

The classification problem of $P$ line segments is cast in the form
of linear classification of the $2P$ endpoints where each pair of
endpoints receive the same target label. These labeled inputs were
classified using IBM cplex package which uses quadratic programming
solving the primal support vector problem. To compute the network
capacity $\alpha=P/N$, 100 trials were used for each $P$ and the
fraction of converged trials was computed. $P$ was gradually increased.
Maximum capacity was defined as the value of $P$ for which the convergence
rate reached 0.5. In Fig. 1(c) (main text) $N=200$ was used. To obtain
the capacity for $\kappa=0.5$, $P$ was varied until SVM's maximum
margins averaged over $100$ runs was close to $\kappa=0.5$.

\subparagraph{Fraction of Line Segment Configurations.}

Once the data was determined to be separable, the fraction of the
different line segment configurations was computed. Each line segment's
configuration was determined based on the number of endpoints on the
main plane. Endpoints were considered to be on the margin plane iff
their field was larger than the margin by an amount smaller than a
tolerance of $\epsilon=10^{-8}$ .

\subsubsection{Linear Classification of D-dimensional $L_{2}$ balls }

\subparagraph*{Sampling of $L_{2}$ balls of Dimension $D$.}

Unlike the case of the line segments, where it is sufficient to consider
the endpoints, finding SVM solution for classifying $D$ dimensional
$L_{2}$ norm balls requires an iterative algorithm to sample the
points on the $L_{2}$ balls so that the decision plane is efficiently
determined. First, we sample randomly a number of points on all manifolds
and find the max margin solution $\mathbf{w}$ and its margin $\kappa$,
for this set of points. Next, for each manifold we find analytically
the point on the boundary which has the minimum (signed) distance
from the decision plane given by \textbf{$\mathbf{w}$}. If the field
of this point lies below the margin this point is added to the training
data and a new $\mathbf{w}$ is computed. This iterative procedure
stops when all the minimal points lie above or on the current margin,
guaranteeing the correct classifications of the entire manifolds.
For details, see Alg.\ref{alg:Procedure-for-sampling}.

\begin{algorithm}
\textbf{Initialize}:

$\quad$ $\mathbf{x}^{\mu},\mathbf{\vec{u}}^{\mu}\sim Norm(0,1)$
( $\mu=1,...,P$) {[}Sample centers and direction vectors{]}

$\quad$ $y^{\mu}\sim\mbox{sign}\left\{ \mbox{unif}(-1,1)\right\} $
( $\mu=1,...,P$) {[}Sample labels for manifolds{]}

$\quad$ $s_{i}^{k}\sim\mbox{unif}\left(-1,1\right)$ and $||\vec{s}^{k}||=1$
$k=1,...,m$. {[}Sample $m$ coefficient vectors{]}

$\quad$ $\mathbf{x}^{\mu}+R\sum_{i=1}^{D}s_{i}^{k}\mbox{\textbf{u}}_{i}^{\mu}$$\in D_{\mbox{data}}$
{[}Construct $m$ points on each manifold{]}

$\quad$ t=0; $w_{svm}^{t}=\mbox{svmsolver}(D_{data},Y)$ {[}Check
separability, find SVM solution{]}

$\quad$ t=0; $h_{min}^{t}=\mbox{argmin}_{\mu,i,k}||\mathbf{w}_{svm}^{t}||^{-1/2}y^{\mu}\mathbf{w}^{T}\left\{ \mathbf{x}^{\mu}+R\sum_{i=1}^{D}s_{i}^{k}\mathbf{u}_{i}^{\mu}\right\} $
{[}Get margin{]}

\textbf{Repeat:} while $t<t_{max}$

$\quad$ $t=t+1$

$\quad$ \textbf{Repeat: }for $\mu=1:P$ {[}For each manifold{]}

$\quad$$\quad$ $s_{i}^{\mbox{min}}=-\frac{h_{i}^{\mu}(w_{svm}^{t})}{||\vec{h}^{\mu}(w_{svm}^{t})||}$
{[}Coefficients of point with a minimum field{]}

$\quad$ $\quad$$\mathbf{d_{min}^{\mu}}=\mathbf{x}^{\mu}+R\sum_{i=1}^{D}s_{i}^{\mbox{min}}\mbox{\textbf{u}}_{i}^{\mu}$
{[}Point with smallest (signed) distance to the current margin plane{]}

$\quad$ $\quad$If $y^{\mu}\frac{\mathbf{w}_{svm}^{t}\cdot\mathbf{d_{min}^{\mu}}}{|\mathbf{w}_{svm}^{t}|}<h_{\mbox{min}}^{t}$
then add $\mathbf{d}_{min}^{\mu}$ to $D_{\mbox{data}}$

$\quad$ \textbf{End}

$\quad$ $w_{svm}^{t}=\mbox{svmsolver}(D_{data},Y)$ {[}Check separability,
find new SVM solution{]}

$\quad$ $h_{min}^{t}=\mbox{argmin}_{\mu,i,k}||\mathbf{w}_{svm}^{t}||^{-1/2}y^{\mu}\mathbf{w}^{T}\left\{ \mathbf{x}^{\mu}+R\sum_{i=1}^{D}s_{i}^{k}\mathbf{u}_{i}^{\mu}\right\} $
{[}Get new margin{]}

\textbf{End}

\textbf{Continue:} until no more points are added

\caption{Pseudocode for linear classification of $L_{2}$ spheres. \label{alg:Procedure-for-sampling}}
\end{algorithm}

Fraction of ball configurations were computed similar to the line
case.

\subparagraph{Simulation Results: }

Fig. 2-(b) (main text): We have used network of size $N=200$, and
$m=20$ initial points on each manifold. Each point (marker) displayed
is an average over 50 trials.

\subparagraph*{Numerical Results for High $D$ $L_{2}$ Balls.}

The test of the network capacity for large dimensional balls, we performed
simulations to evaluate the capacity for balls with $1<D<25$ and
$R=1,5,10$ . Here the capacity was estimated using 20 trials. Good
agreement was achieved with the theory, See Fig. \ref{fig:AppendixCapacityD}.

\begin{figure}
\noindent \begin{centering}
\includegraphics[width=0.7\textwidth]{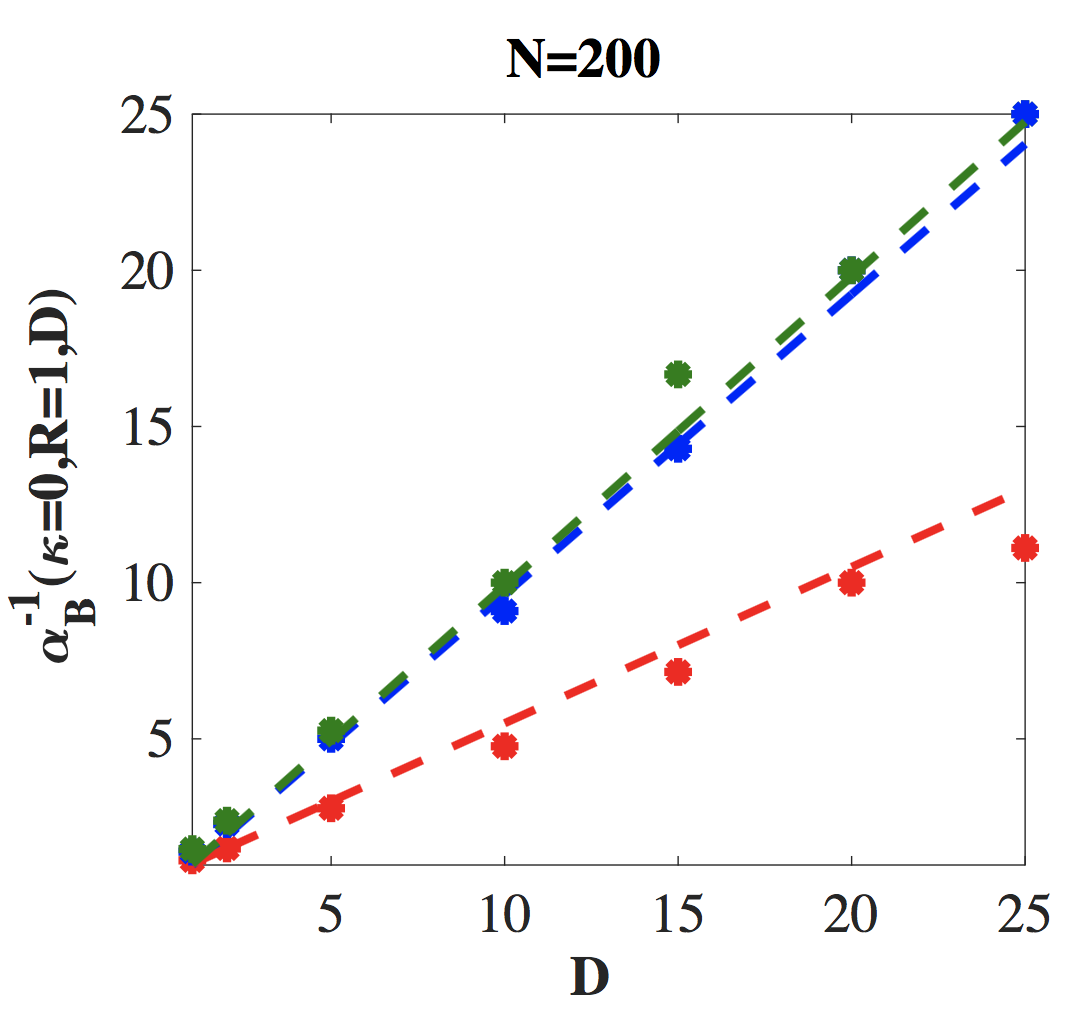} 
\par\end{centering}
\caption{\textbf{Capacity of high dimensional $L_{2}$ balls.} $\alpha=P/N$
at capacity with $\kappa=0$ as a function of \emph{D}. (red) $R=1$
(blue) $R=5$ (green) $R=10$. (markers) Simulation results. (dashed)
full evaluation of $\alpha_{B}(\kappa=0,R,D)$. Note that for $R>5$,
$\alpha_{B}^{-1}(\kappa=0,R\gg1,D)\sim D$ . \label{fig:AppendixCapacityD}}
\end{figure}

\subsubsection{Linear Classification of $L_{1}$ Balls }

\subparagraph*{Sampling of $L_{1}$ Balls.}

Because the sides of $L_{1}$ balls are straight lines, if all the
vertices are on the same side of the plane, all the points in the
interior of $L_{1}$ ball are on the same side of the plane as well.
Therefore linear classifications of the entire $L_{1}$ball is equivalent
to linear classification of all the vertices. In Fig. 3 (main text),
we consider $D=2,$ thus, we simulated SVM solutions of the $4P$
points, $\mathbf{z}^{\mu}=\mathbf{x_{0}}^{\mu}\pm R\mathbf{u}_{1}^{\mu}\pm R\mathbf{u}_{2}^{\mu}$
where each set of $4$ points on the same manifold receive the same
label. In the simulations shown in the figure, network size of $N=200$
was used and the simulation was repeated 100 times to get the convergence
rate (of 0.5 for estimating capacity). The fractions of manifold geometry
configurations were computed similarly to the previous cases. Here,
however, there are $4$ configurations, corresponding to configurations
with $0,1,2,$ or $4$ number of vertices on the margin plane.

\chapter{The Maximum Margin Manifold Machines\label{cha:m4}}

\section{\textcolor{black}{Introduction }}

\textcolor{black}{Handling object variability is a major challenge
for machine learning systems. For example, in visual recognition tasks,
changes in pose, lighting, identity or background can result in large
variability in the appearance of objects \cite{hinton1997modeling}.
Techniques to deal with this variability has been the focus of much
recent work, especially with convolutional neural networks consisting
of many layers. The manifold hypothesis states that natural data variability
can be modeled as lower-dimensional manifolds embedded in higher dimensional
feature representations \cite{bengio2013representation}. A deep neural
network can then be understood as disentangling or flattening the
data manifolds so that they can be more easily read out in the final
layer \cite{brahma2016deep}. Manifold representations of stimuli
have also been utilized in neuroscience, where different brain areas
are believed to untangle and reformat their representations \cite{riesenhuber1999hierarchical,serre2005object,hung2005fast,dicarlo2007untangling,pagan2013signals}.}

\textcolor{black}{This chapter addresses the problem of efficiently
utilizing manifold structures to learn classifiers. The manifold structures
may be known from prior knowledge, or could be estimated from data
using a variety of manifold learning algorithms \cite{tenenbaum1998mapping,roweis2000nonlinear,tenenbaum2000global,belkin2003laplacian,belkin2006manifold,canas2012learning}.
Based upon knowledge of these structures, some areas of prior research
have focused on building invariant representations \cite{anselmi2013unsupervised}
or constructing invariant metrics \cite{simard1994memory}. On the
other hand, most approaches today rely upon data augmentation by explicitly
generating ``virtual'' examples from these manifolds \cite{niyogi1998incorporating,scholkopf1996incorporating}.
Unfortunately, the number of samples needed to successfully learn
the underlying manifolds may increase the original training set by
more than a thousand-fold \cite{krizhevsky2012imagenet}.}
\begin{figure}[h]
\noindent \begin{centering}
\textcolor{black}{\includegraphics[width=0.4\columnwidth]{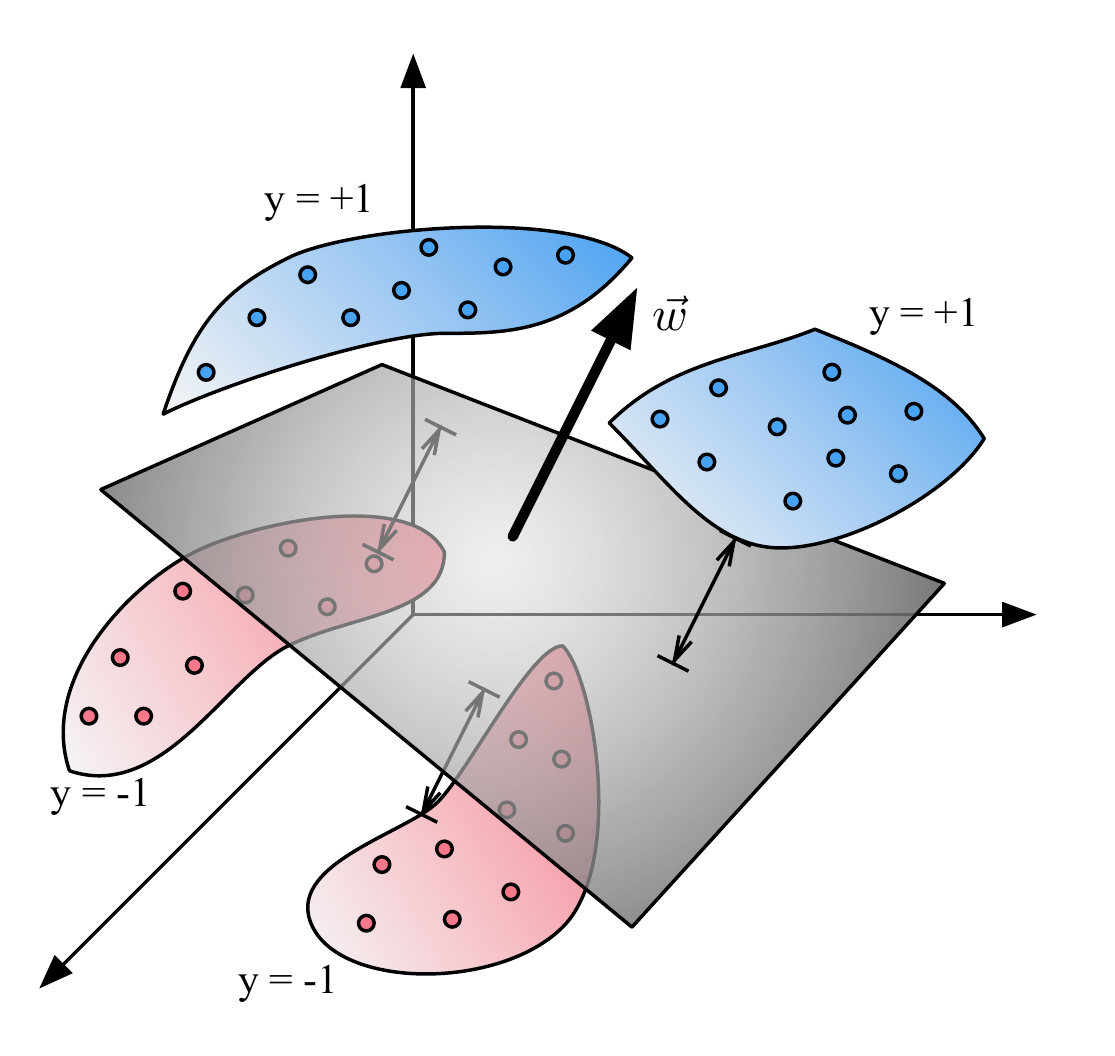}}
\par\end{centering}
\textcolor{black}{\caption{The maximum margin manifold binary classification problem. The optimal
linear hyperplane is parameterized by the weight vector $\vec{w}$
which separates positively labeled manifolds from negatively labeled
manifolds. Traditional data augmentation techniques would sample a
large number of points from each manifold to train a conventional
SVM. \label{fig:ManifoldClassification} }
}
\end{figure}

We propose a new method, called the Maximum Margin Manifold Machine
or $M^{4}$, that uses knowledge of the manifolds to efficiently learn
a maximum margin classifier. Figure \ref{fig:ManifoldClassification}
illustrates the problem in its simplest form, binary classification
of manifolds with a linear hyperplane. Given a number of manifolds
embedded in a feature space, the $M^{4}$ algorithm learns a weight
vector $\vec{w}$ that separates positively labeled manifolds from
negatively labeled manifolds with the maximum margin. Although the
manifolds consist of uncountable sets of points, the $M^{4}$ algorithm
is able to find a good solution in a provably finite number of iterations
and training examples.

Support vector machines (SVM) are widely used method to learn a maximum
margin classifier based upon a set of training examples \textcolor{black}{\cite{vapnik1998statistical}}.
However, the standard SVM algorithm quickly becomes computationally
intractable in time and memory as the number of training examples
increases, rendering data augmentation methods impractical for SVMs.
Methods to reduce the space complexity of SVM have been studied before,
in the context of dealing with large-scale datasets. Chunking makes
large problems solvable by breaking up the problem into subproblems
\cite{smola1998learning}, but the resultant kernel matrix may still
be very large. One method that subsamples the training data include
the reduced SVM (RSVM), which utilize a random rectangular subset
of the kernel matrix\textcolor{black}{{} \cite{lee2001rsvm}}. But this
approach and other methods that attempt to reduce the number of training
samples \textcolor{black}{\cite{wang2005training,smola1998learning}}
may result in suboptimal solutions that do not generalize well. 

Our $M^{4}$ algorithm directly handles the uncountable set of points
in the manifolds by solving a quadratic semi-infinite programming
problem (QSIP). $M^{4}$ is based upon a cutting-plane method which
iteratively refines a finite set of training examples to solve the
underlying QSIP \cite{fang2001solving,kortanek1993central,liu2004new}.
The cutting-plane method was also previously shown to efficiently
handle learning problems with a finite number of examples but an exponentially
large number of constraints \textcolor{black}{\cite{joachims2006training}}.
We provide a novel analysis of the convergence of $M^{4}$ with both
hard and soft margins. When the problem is realizable, the convergence
bound explicitly depends upon the margin value whereas with a soft
margin and slack variables, the bound depends linearly on the number
of manifolds.

\textcolor{black}{The chapter is organized as follows. We first consider
the hard margin problem and analyze the simplest form of the $M^{4}$
algorithm. Next, we introduce slack variables in $M^{4}$, one for
each manifold, and analyze its convergence with those additional auxiliary
variables. We then demonstrate application of $M^{4}$ to both synthetic
data where the manifold geometry is known as well as to actual object
images undergoing a variety of warpings. We compare its performance,
both in efficiency and generalization error, with conventional SVMs
using data augmentation techniques. Finally, we discuss some natural
extensions and potential future work on $M^{4}$ and its applications.}

\section{\textcolor{black}{Maximum Margin Manifold Machines with Hard Margin
\label{sec:M4-simple}}}

\textcolor{black}{In this section, we first consider the problem of
classifying a set of manifolds when they are linearly separable. This
allows us to introduce the simplest version of the $M^{4}$ algorithm
along with the appropriate definitions and QSIP formulation. We analyze
the convergence of the simple algorithm and prove an upper bound on
the number of errors the algorithm can make in this setting.}

\subsection{\textcolor{black}{Hard Margin QSIP \label{subsec:simpeM4-optprob}}}

\textcolor{black}{Formally, we are given a set of $P$ manifolds $M_{p}\subset\mathbb{R}^{N}$,
$p=1,\ldots,P$ with binary labels $y_{p}=\pm1$ (all points in the
same manifold share the same label). Each $M_{p}$ is defined by a
parametrization $\vec{x}=M_{p}(\vec{s}$) where $\vec{s}\in S_{p}$,
$S_{p}$ is a compact subset of $\mathbb{R}^{D}$, $M_{p}(\vec{s}):\mathbb{R}^{D}\rightarrow\mathbb{R}^{N}$
is a continuous function of $\vec{s}\in S_{p}$ so that the manifolds
are bounded: $\forall\vec{s},\,\left\Vert M_{p}(\vec{s})\right\Vert <L$
by some $L$. We would like to solve the following semi-infinite quadratic
programming problem for the weight vector $\vec{w}\in\mathbb{R}^{N}$
: }

\begin{equation}
\begin{array}{c}
SVM_{simple}:\underset{\vec{w}}{\text{argmin}}\frac{1}{2}\left\Vert \vec{w}\right\Vert ^{2}\\
s.t.\;\forall p,\forall\vec{x}\in M_{p}:\;y_{p}\left\langle \vec{w},\vec{x}\right\rangle \ge1
\end{array}\label{eq:SVMopt}
\end{equation}

\textcolor{black}{This is the primal formulation of the problem, where
maximizing the margin $\kappa=\frac{1}{||\vec{w}||}$ is equivalent
to minimizing the squared norm $\frac{1}{2}||\vec{w}||^{2}.$ We denote
the maximum margin attainable by $\kappa^{*}$, and the optimal solution
as $\vec{w}^{*}$. For simplicity, we do not explicitly include the
bias term here. A non-zero bias can be modeled by adding an additional
feature of constant value as a component to all the $\vec{x}$. Note
that the dual formulation of this QSIP is more complicated, involving
optimization of non-negative measures over the manifolds. In order
to solve the hard margin QSIP, we propose the following simple $M^{4}$
algorithm. }

\subsection{\textcolor{black}{$M_{simple}^{4}$ Algorithm}}

\textcolor{black}{The $M_{simple}^{4}$ algorithm is an iterative
algorithm to find the optimal $\vec{w}$ in \eqref{eq:SVMopt}, based
upon a cutting plane method for solving the QSIP. The general idea
behind $M_{simple}^{4}$ is to start with a finite number of training
examples, find the maximum margin solution for that training set,
augment the training set by looking for a point on the manifolds that
most violates the constraints, and iterating this process until a
tolerance criterion is reached.}

\textcolor{black}{At each stage $k$ of the algorithm there is a finite
set of training points and associated labels. The training set at
the $k$-th iteration is denoted by the set: $T_{k}=\left\{ \left(\vec{x}^{i}\in M_{p_{i}},y_{p_{i}}\right)\right\} $
with $i=1,\ldots,\left|T_{k}\right|$ examples. For the $i$-th pattern
in $T_{k}$, $p_{i}$ is the index of the manifold, and $y_{p_{i}}$
is its associated label.}

\textcolor{black}{On this set of examples, we solve the following
finite quadratic programming problem:}

\begin{equation}
\begin{array}{c}
SVM_{T_{k}}:\underset{\vec{w}}{\text{argmin}}\frac{1}{2}\left\Vert \vec{w}\right\Vert ^{2}\\
s.t.\forall\vec{x}^{i}\in T_{k}:\;y_{p_{i}}\left\langle \vec{w},\vec{x}^{i}\right\rangle \ge1
\end{array}\label{eq:SVMTk}
\end{equation}

\textcolor{black}{to obtain the optimal weights $\vec{w}^{(k)}$ on
the training set $T_{k}$. We then find a worst constraint-violating
point $\vec{x}_{k+1}\in M_{p_{k+1}}$ from one of the manifolds such
that 
\begin{equation}
\left\{ p_{k+1},\,\vec{x}_{k+1}\right\} :\underset{p,\,\vec{x}\in M_{p}}{\text{argmin}}y_{p}\left\langle \vec{w}^{(k)},\vec{x}\right\rangle <1-\text{\ensuremath{\delta}}\label{eq:violation_point-1}
\end{equation}
}

\textcolor{black}{with a required tolerance $\delta>0.$ If there
is no such point, the $M_{simple}^{4}$ algorithm terminates. If such
a point exists, it is added to the training set, defining the new
set $T_{k+1}=T_{k}\cup\left\{ \left(\vec{x}_{k+1},y_{p_{k+1}}\right)\right\} $.
The algorithm then proceeds at the next iteration to solve $SVM_{T_{k+1}}$to
obtain $\vec{w}^{(k+1)}$. For $k=1$, the set $T_{1}$ is initialized
with at least one point from each manifold. The pseudocode for $M_{simple}^{4}$
is shown in \ref{alg:M4}. }

\textcolor{black}{}
\begin{algorithm}[h]
\begin{enumerate}
\item \textcolor{black}{Input: $\delta$ (tolerance), $P$ manifolds and
labels $\left\{ M_{p},\,y_{p}=\pm1\right\} $, $p=1,...,P$.}
\item \textcolor{black}{Initialize the iteration number $k=1$, and the
set $T_{1}=\left\{ \left(\vec{x}^{i}\in M_{p_{i}},\,y_{p_{i}}\right)\right\} $
with at least one sample from each manifold $M_{p}$.}
\item \textcolor{black}{Solve for $\vec{w}^{(k)}$ in $SVM_{T_{k}}$: $\min\frac{1}{2}\left\Vert \vec{w}\right\Vert ^{2}$
such that $y^{p_{i}}\langle\vec{w,}\vec{x}^{i}\rangle\ge1$ for all
$\left(x^{i},y^{p_{i}}\right)\in T_{k}$.}
\item \textcolor{black}{Find a worst point $x^{k+1}\in M_{p_{k+1}}$ among
the manifolds $\left\{ p_{k+1}=1,...,P\right\} $ with a margin smaller
than $1-\delta$:}

\textcolor{black}{
\[
\vec{x}_{k+1}=\underset{\vec{x}\in M_{p_{k+1}},p_{k+1}=1,...P}{\text{argmin}}y_{p_{k+1}}\left\langle \vec{w}^{(k)},\vec{x}\right\rangle <1-\text{\ensuremath{\delta}}
\]
}
\item \textcolor{black}{If there is no such point, then stop. Else, augment
the point set: $T_{k+1}=T_{k}\cup\left\{ \left(\vec{x}_{k+1},y_{p_{k+1}}\right)\right\} $.}
\item \textcolor{black}{$k\leftarrow k+1$ and go to 3. }
\end{enumerate}
\textcolor{black}{\caption{Pseudocode for the $M_{simple}^{4}$ algorithm.\label{alg:M4} }
}
\end{algorithm}

\textcolor{black}{In step 4 of the $M_{simple}^{4}$ algorithm, a
point among the manifolds needs to be found with the worst margin
constraint violation. This is particularly convenient if the manifolds
are given by analytic parametric forms, where this point could be
computed analytically as for the case of manifolds with $L_{2}$ balls
or ellipses. However, for the algorithm to converge it is sufficient
that a constraint violation point is found. Thus, local optimization
procedures such as gradient descent may be used to search for such
a point. However, the speed of convergence in the latter stages of
$M_{simple}^{4}$ might be improved by a larger difference in the
constraint violation of the point found in this step.}

\subsection{\textcolor{black}{Convergence of $M_{simple}^{4}$ }}

\textcolor{black}{The $M_{simple}^{4}$ algorithm will converge asymptotically
to an optimal solution when it exists. Here we show that the $\vec{w}^{(k)}$
asymptotically converges to an optimal $\vec{w}^{\star}$. Denote
the change in the weight vector in the $k$-th iteration as $\Delta\vec{w}^{(k)}=\vec{w}^{(k+1)}-\vec{w}^{(k)}$.
First we have the following lemma:}
\begin{lem}
\textcolor{black}{\label{lem:delta_projection_simple}The change in
the weights satisfies $\langle\Delta\vec{w}^{(k)},\vec{w}^{(k)}\rangle\geq0$
. }
\end{lem}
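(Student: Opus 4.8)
The plan is to exploit the fact that $\vec{w}^{(k+1)}$ is the optimal (minimum-norm) solution of a relaxation of the constraint set that defined $\vec{w}^{(k)}$. Concretely, $T_{k}\subset T_{k+1}$, so every constraint active for the problem $SVM_{T_k}$ is also a constraint of $SVM_{T_{k+1}}$; hence the feasible region of $SVM_{T_{k+1}}$ is contained in that of $SVM_{T_k}$. This does not immediately give the inequality, so the real mechanism is convexity of the objective $\tfrac12\|\vec{w}\|^2$ together with the optimality of $\vec{w}^{(k)}$ on the \emph{larger} feasible set.

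First I would set $f(\vec{w}) = \tfrac12\|\vec{w}\|^2$ and consider the segment $\vec{w}(\lambda) = \vec{w}^{(k)} + \lambda\,\Delta\vec{w}^{(k)}$ for $\lambda\in[0,1]$, which interpolates between $\vec{w}^{(k)}$ and $\vec{w}^{(k+1)}$. The key observation is that $\vec{w}(\lambda)$ is feasible for $SVM_{T_k}$ for all $\lambda\in[0,1]$: indeed $\vec{w}^{(k)}$ satisfies the $T_k$-constraints by definition, $\vec{w}^{(k+1)}$ satisfies the $T_{k+1}\supset T_k$ constraints hence in particular the $T_k$-constraints, and each constraint $y_{p_i}\langle \vec{w},\vec{x}^i\rangle \ge 1$ defines a half-space, so the segment joining two feasible points stays feasible. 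Since $\vec{w}^{(k)}$ \emph{minimizes} $f$ over the feasible region of $SVM_{T_k}$, the function $\lambda \mapsto f(\vec{w}(\lambda))$ is minimized at $\lambda = 0$ over $[0,1]$, so its right derivative at $0$ is nonnegative:
\begin{equation}
0 \le \left.\frac{d}{d\lambda}\right|_{\lambda=0^+} f\!\left(\vec{w}^{(k)}+\lambda\Delta\vec{w}^{(k)}\right) = \langle \vec{w}^{(k)}, \Delta\vec{w}^{(k)}\rangle,
\end{equation}
which is exactly the claim.

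The main obstacle — really the only thing requiring care — is the feasibility-of-the-segment step, i.e. justifying that $\vec{w}^{(k+1)}$ is feasible for the \emph{earlier} problem $SVM_{T_k}$. This follows purely from $T_k\subseteq T_{k+1}$, but it is worth stating explicitly since it is the place where the monotone growth of the training set enters. Everything else is the elementary fact that a convex differentiable function restricted to a segment emanating from its constrained minimizer into the feasible set is nondecreasing at that endpoint. One should also note the degenerate case $\Delta\vec{w}^{(k)} = \vec{0}$, where the inequality holds trivially; and the case where no violating point is found at iteration $k$, in which the algorithm terminates and $\vec{w}^{(k+1)}$ is undefined, so the lemma is vacuous there.
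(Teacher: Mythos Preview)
Your proof is correct and follows essentially the same approach as the paper: both interpolate along the segment $\vec{w}(\lambda)=\vec{w}^{(k)}+\lambda\,\Delta\vec{w}^{(k)}$, use that this segment lies in the feasible region of $SVM_{T_k}$ (since $T_k\subset T_{k+1}$ and the constraints are linear), and invoke optimality of $\vec{w}^{(k)}$ for $SVM_{T_k}$. The only cosmetic difference is that the paper phrases the last step as a contradiction (if the inner product were negative some $\vec{w}(\lambda')$ would have smaller norm), whereas you take the equivalent direct route via the nonnegativity of the right derivative at $\lambda=0$.
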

\begin{proof}
\textcolor{black}{Define $\vec{w}(\lambda)=\vec{w}^{(k)}+\lambda\Delta\vec{w}^{(k)}$.
Then for all $0\le\lambda\le1$, $\vec{w}(\lambda)$ satisfies the
constraints on the point set $T_{k}$: $y_{p_{i}}\left\langle \vec{w}(\lambda),\vec{x}_{i}\right\rangle \ge1$
for all $\left(\vec{x}_{i},y_{p_{i}}\right)\in T_{k}$. However, if
$\langle\Delta\vec{w}^{(k)},\vec{w}^{(k)}\rangle<0$, there exists
a $0<\lambda^{\prime}<1$ such that $\left\Vert \vec{w}(\lambda^{\prime})\right\Vert ^{2}<\left\Vert \vec{w}^{(k)}\right\Vert ^{2}$,
contradicting the fact that $\vec{w}^{(k)}$ is a solution to $SVM_{T_{k}}$.}
\end{proof}
\textcolor{black}{Next, we show that the norm $\left\Vert \vec{w}^{(k)}\right\Vert ^{2}$
must monotonically increase by a finite amount at each iteration.} 
\begin{thm}
\textcolor{black}{\label{thm:delta_finite_simple}In the $k_{th}$
iteration of $M_{simple}^{4}$ algorithm, the increase in the norm
of $\vec{w}^{(k)}$ is lower bounded by $\left\Vert \vec{w}^{(k+1)}\right\Vert ^{2}\ge\left\Vert \vec{w}^{(k)}\right\Vert ^{2}+\frac{\delta_{k}^{2}}{L^{2}}$,
where $\delta_{k}=1-y_{p_{k+1}}\left\langle \vec{w}^{(k)},\vec{x}_{k+1}\right\rangle $
and }$\left\Vert \vec{x}_{k+1}\right\Vert \leq L$\textcolor{black}{{}
. }
\end{thm}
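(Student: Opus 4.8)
The plan is to exploit the nested structure $T_{k+1}=T_k\cup\{(\vec{x}_{k+1},y_{p_{k+1}})\}$ together with Lemma~\ref{lem:delta_projection_simple}. Writing $\Delta\vec{w}^{(k)}=\vec{w}^{(k+1)}-\vec{w}^{(k)}$, I would expand
\[
\Vert\vec{w}^{(k+1)}\Vert^{2}=\Vert\vec{w}^{(k)}\Vert^{2}+2\langle\vec{w}^{(k)},\Delta\vec{w}^{(k)}\rangle+\Vert\Delta\vec{w}^{(k)}\Vert^{2}.
\]
By Lemma~\ref{lem:delta_projection_simple} the cross term is nonnegative, so the claim reduces to the single estimate $\Vert\Delta\vec{w}^{(k)}\Vert^{2}\ge\delta_{k}^{2}/L^{2}$.

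To obtain that estimate I would compare the value of the margin functional at the newly added point $\vec{x}_{k+1}$ under the old and new weights. Since $\vec{w}^{(k+1)}$ is feasible for $SVM_{T_{k+1}}$ and $(\vec{x}_{k+1},y_{p_{k+1}})\in T_{k+1}$, we have $y_{p_{k+1}}\langle\vec{w}^{(k+1)},\vec{x}_{k+1}\rangle\ge1$; by the definition of $\delta_{k}$ we have $y_{p_{k+1}}\langle\vec{w}^{(k)},\vec{x}_{k+1}\rangle=1-\delta_{k}$. Subtracting gives $y_{p_{k+1}}\langle\Delta\vec{w}^{(k)},\vec{x}_{k+1}\rangle\ge\delta_{k}$, and then Cauchy--Schwarz together with $\Vert\vec{x}_{k+1}\Vert\le L$ yields $\delta_{k}\le\Vert\Delta\vec{w}^{(k)}\Vert\,\Vert\vec{x}_{k+1}\Vert\le L\,\Vert\Delta\vec{w}^{(k)}\Vert$, i.e. $\Vert\Delta\vec{w}^{(k)}\Vert^{2}\ge\delta_{k}^{2}/L^{2}$. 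Substituting back into the expansion above completes the argument.

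I do not expect a genuine obstacle here: once Lemma~\ref{lem:delta_projection_simple} is available the whole proof is two lines of algebra plus Cauchy--Schwarz. The only point needing care is the reduction step, where it is precisely Lemma~\ref{lem:delta_projection_simple} (itself relying on the norm-minimality of $\vec{w}^{(k)}$ on $T_k$ and the convexity of the feasible set of $SVM_{T_k}$, so that the segment joining $\vec{w}^{(k)}$ to $\vec{w}^{(k+1)}$ stays feasible for $T_k$) that forbids the cross term from being negative and thus lets us discard it. It is also worth remarking, for later use, that $\delta_{k}>\delta>0$ by the selection rule \eqref{eq:violation_point-1}, so the per-iteration increase is in fact bounded below by $\delta^{2}/L^{2}$, which is what will drive the polynomial bound on the number of iterations.
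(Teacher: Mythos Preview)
Your proposal is correct and follows essentially the same argument as the paper: expand $\Vert\vec{w}^{(k+1)}\Vert^{2}$, drop the cross term via Lemma~\ref{lem:delta_projection_simple}, then bound $\Vert\Delta\vec{w}^{(k)}\Vert$ from below by comparing the margin at $\vec{x}_{k+1}$ under the old and new weights and applying Cauchy--Schwarz with $\Vert\vec{x}_{k+1}\Vert\le L$. Your closing remark that $\delta_{k}>\delta$ is also exactly what the paper notes at the outset of its proof.
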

\begin{proof}
\textcolor{black}{First, note that $\delta_{k}>\delta\geq0$ , otherwise
the algorithm stops. We have: }$\left\Vert \vec{w}^{(k+1)}\right\Vert ^{2}=\left\Vert \vec{w}^{(k)}+\Delta\vec{w}^{(k)}\right\Vert ^{2}$$=\left\Vert \vec{w}^{(k)}\right\Vert ^{2}+\left\Vert \Delta\vec{w}^{(k)}\right\Vert ^{2}+2\left\langle \vec{w}^{(k)},\Delta\vec{w}^{(k)}\right\rangle $
$\geq\left\Vert \vec{w}^{(k)}\right\Vert ^{2}+\left\Vert \Delta\vec{w}^{(k)}\right\Vert ^{2}$\textcolor{black}{{}
due to Lemma \ref{lem:delta_projection_simple}. Now we consider the
point added to set $T_{k+1}=T_{k}\cup\left(\vec{x}_{k+1},y_{p_{k+1}}\right)$.
At this point, we know $y_{p_{k+1}}\left\langle \vec{w}^{(k+1)},\overrightarrow{x}_{k+1}\right\rangle \ge1$,
$y_{p_{k+1}}\left\langle w^{(k)},\overrightarrow{x}_{k+1}\right\rangle =1-\delta_{k}$,
hence $y_{p_{k+1}}\left\langle \Delta\vec{w}^{(k)},\vec{x}_{k+1}\right\rangle \ge\delta_{k}$.
Then, from the Cauchy-Schwartz inequality,
\begin{equation}
\left\Vert \Delta\vec{w}^{(k)}\right\Vert ^{2}\ge\frac{\delta_{k}^{2}}{\left\Vert \vec{x}_{k+1}\right\Vert ^{2}}>\frac{\delta_{k}^{2}}{L^{2}}>\frac{\delta^{2}}{L^{2}}\label{eq:deltaWLowerBound}
\end{equation}
}

\textcolor{black}{Since the optimal solution $\vec{w}^{\star}$ satisfies
the constraints for $T_{k}$, we have $\left\Vert \vec{w}^{(k)}\right\Vert \le\frac{1}{\kappa^{*}}$.
We thus have a sequence of iterations whose norms monotonically increase
and are upper bounded by $\frac{1}{\kappa^{\star}}$. Due to convexity,
there is a single global optimum and the $M_{simple}^{4}$ algorithm
is guaranteed to converge, asymptotically if the tolerance $\delta=0$,
and in a finite number of steps if $\delta>0$ . }
\end{proof}
\textcolor{black}{As a corollary, we see that this procedure is guaranteed
to find a realizable solution if it exists in a finite number of steps.}
\begin{cor}
\textcolor{black}{\label{cor:ZeroTrErrConv}The $M_{simple}^{4}$
algorithm converges to a zero error classifier in less than $\frac{L^{2}}{\left(\kappa^{\star}\right)^{2}}$
iterations, where $\kappa^{\star}$ is the optimal margin and $L$
bounds the norm of the points on the manifolds.}
\end{cor}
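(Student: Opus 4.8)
The plan is to run the norm-growth argument of Theorem~\ref{thm:delta_finite_simple} only over those iterations in which the current weight vector still misclassifies some point of some manifold, and to show there can be fewer than $L^{2}/(\kappa^{\star})^{2}$ of them.

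First I would record the uniform bound on the iterates. Since each finite training set $T_{k}$ is a subset of $\bigcup_{p}M_{p}$, the global optimum $\vec{w}^{\star}$ of $SVM_{simple}$ is feasible for every subproblem $SVM_{T_{k}}$; as $\vec{w}^{(k)}$ is the \emph{minimum}-norm such vector, $\|\vec{w}^{(k)}\|\le\|\vec{w}^{\star}\|=1/\kappa^{\star}$, hence $\|\vec{w}^{(k)}\|^{2}\le 1/(\kappa^{\star})^{2}$ for all $k$ (this is the observation already used at the end of the proof of Theorem~\ref{thm:delta_finite_simple}).

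Next I would specialize the per-step increment. Suppose at iteration $k$ the vector $\vec{w}^{(k)}$ is not a zero-error classifier, i.e.\ some point $\vec{x}$ on some manifold $M_{p}$ has $y_{p}\langle\vec{w}^{(k)},\vec{x}\rangle\le 0$. Because step~4 of Algorithm~\ref{alg:M4} selects the \emph{global} minimizer $\vec{x}_{k+1}$ of $y_{p}\langle\vec{w}^{(k)},\vec{x}\rangle$ over all manifolds, that minimizer also satisfies $y_{p_{k+1}}\langle\vec{w}^{(k)},\vec{x}_{k+1}\rangle\le 0$, so its violation obeys $\delta_{k}=1-y_{p_{k+1}}\langle\vec{w}^{(k)},\vec{x}_{k+1}\rangle\ge 1>\delta$ (taking the tolerance $\delta<1$, in particular $\delta=0$). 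Hence the point is indeed added and Theorem~\ref{thm:delta_finite_simple} gives $\|\vec{w}^{(k+1)}\|^{2}\ge\|\vec{w}^{(k)}\|^{2}+\delta_k^2/L^{2}\ge\|\vec{w}^{(k)}\|^{2}+1/L^{2}$.

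Finally I would telescope. If the algorithm went through $m$ iterations at each of which $\vec{w}^{(k)}$ was not yet zero-error, then $\|\vec{w}^{(m+1)}\|^{2}\ge\|\vec{w}^{(1)}\|^{2}+m/L^{2}>m/L^{2}$ (using $\|\vec{w}^{(1)}\|^{2}>0$, which holds since $\vec{w}=0$ is infeasible for $SVM_{T_1}$), and combining with the uniform bound forces $m/L^{2}<1/(\kappa^{\star})^{2}$, i.e.\ $m<L^{2}/(\kappa^{\star})^{2}$. Thus within strictly fewer than $L^{2}/(\kappa^{\star})^{2}$ iterations the weight vector classifies every point of every manifold with the correct sign, which is the claim. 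I expect the only delicate step to be the implication ``$\vec{w}^{(k)}$ not zero-error $\Rightarrow$ worst violation $\delta_k\ge 1$'': it relies on step~4 returning the true global argmin (or at least any point with non-positive margin whenever one exists), and on fixing a convention for the borderline case $y_{p}\langle\vec{w}^{(k)},\vec{x}\rangle=0$; everything else is the same bookkeeping as in Theorem~\ref{thm:delta_finite_simple}.
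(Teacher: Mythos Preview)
Your proposal is correct and follows essentially the same argument as the paper: an ``error'' iteration forces $\delta_k\ge 1$, so Theorem~\ref{thm:delta_finite_simple} gives a per-step norm increase of at least $1/L^{2}$, and combining with the uniform bound $\|\vec w^{(k)}\|^{2}\le 1/(\kappa^{\star})^{2}$ caps the number of such iterations. The paper's version is terser (it simply asserts $\delta_k>1$ at an error and invokes \eqref{eq:deltaWLowerBound}), while you spell out the telescoping and flag the borderline $y_p\langle\vec w^{(k)},\vec x\rangle=0$ convention, but the underlying route is identical.
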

\begin{proof}
\textcolor{black}{When there is an error, we have $\delta_{k}>1$,
and $\left\Vert \vec{w}^{(k+1)}\right\Vert ^{2}\ge\left\Vert \vec{w}^{(k)}\right\Vert ^{2}+\frac{1}{L^{2}}$
(See \eqref{eq:deltaWLowerBound}). This implies the total number
of possible errors is upper bounded by $\frac{L^{2}}{\left(\kappa^{\star}\right)^{2}}$. }
\end{proof}
\textcolor{black}{With a finite tolerance $\delta>0$, we obtain a
bound on the number of iterations required for convergence:}
\begin{cor}
\textcolor{black}{The $M_{simple}^{4}$ algorithm for a given tolerance
$\delta>0$ will terminate after a finite number of iterations, with
less than $\frac{L^{2}}{\left(\kappa^{\star}\delta\right)^{2}}$ iterations
where $\kappa^{\star}$ is the optimal margin and $L$ bounds the
norm of the points on the manifolds.}
\end{cor}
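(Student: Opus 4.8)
The plan is to chain together the two facts already established for the hard-margin algorithm: a per-iteration lower bound on the growth of $\left\Vert \vec{w}^{(k)}\right\Vert ^{2}$, and a uniform upper bound on $\left\Vert \vec{w}^{(k)}\right\Vert ^{2}$ coming from optimality of $\vec{w}^{\star}$. First I would invoke Theorem~\ref{thm:delta_finite_simple}: as long as the algorithm has not terminated at iteration $k$, the point $\vec{x}_{k+1}$ returned in step~4 satisfies $\delta_{k}=1-y_{p_{k+1}}\left\langle \vec{w}^{(k)},\vec{x}_{k+1}\right\rangle >\delta$ by the stopping criterion, so the theorem gives $\left\Vert \vec{w}^{(k+1)}\right\Vert ^{2}\ge\left\Vert \vec{w}^{(k)}\right\Vert ^{2}+\frac{\delta_{k}^{2}}{L^{2}}>\left\Vert \vec{w}^{(k)}\right\Vert ^{2}+\frac{\delta^{2}}{L^{2}}$.

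The second step is to telescope this inequality. Since $\left\Vert \vec{w}^{(1)}\right\Vert ^{2}\ge0$, after $k$ non-terminating iterations we obtain $\left\Vert \vec{w}^{(k+1)}\right\Vert ^{2}>k\,\frac{\delta^{2}}{L^{2}}$. On the other hand, because the optimal solution $\vec{w}^{\star}$ of $SVM_{simple}$ is feasible for every finite subproblem $SVM_{T_{k}}$, minimality of the norm in $SVM_{T_{k}}$ forces $\left\Vert \vec{w}^{(k)}\right\Vert \le\left\Vert \vec{w}^{\star}\right\Vert =\frac{1}{\kappa^{\star}}$ for all $k$ (this is exactly the bound used in the proof of Theorem~\ref{thm:delta_finite_simple} and Corollary~\ref{cor:ZeroTrErrConv}). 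Combining the two, $k\,\frac{\delta^{2}}{L^{2}}<\frac{1}{\left(\kappa^{\star}\right)^{2}}$, hence $k<\frac{L^{2}}{\left(\kappa^{\star}\delta\right)^{2}}$; once $k$ reaches this value no constraint-violating point with violation exceeding $\delta$ can remain, so step~5 triggers and the algorithm halts. This yields the claimed bound on the number of iterations.

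I do not expect a genuine obstacle here — the statement is essentially a corollary of Theorem~\ref{thm:delta_finite_simple} plus the feasibility-of-$\vec{w}^{\star}$ observation, in the same spirit as Corollary~\ref{cor:ZeroTrErrConv} (which is the special case $\delta=1$, i.e. counting misclassifications). The only points requiring a little care are: (i) keeping the inequalities strict throughout, so that the bound comes out as ``less than'' rather than ``at most''; (ii) noting that the initialization $T_{1}$ already contains at least one point from each manifold, so $\vec{w}^{(1)}$ is well defined and the recursion starts cleanly; and (iii) observing that $\kappa^{\star}>0$ is guaranteed by the standing assumption that the manifolds are linearly separable, so the bound is finite.
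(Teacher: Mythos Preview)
Your proposal is correct and follows exactly the paper's approach: combine the per-iteration lower bound $\frac{\delta^{2}}{L^{2}}$ on the growth of $\left\Vert \vec{w}^{(k)}\right\Vert^{2}$ from Theorem~\ref{thm:delta_finite_simple} with the upper bound $\left\Vert \vec{w}^{(k)}\right\Vert^{2}\le\frac{1}{(\kappa^{\star})^{2}}$ coming from feasibility of $\vec{w}^{\star}$. The paper's proof is a single sentence stating precisely these two facts; your write-up simply makes the telescoping and the strict inequalities explicit.
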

\begin{proof}
\textcolor{black}{Again, $\left\Vert \vec{w}^{k}\right\Vert ^{2}\le\left\Vert \vec{w}^{\star}\right\Vert ^{2}=\frac{1}{\left(\kappa^{\star}\right)^{2}}$
and each iteration increases the squared norm by at least $\frac{\delta^{2}}{L^{2}}$.}
\end{proof}
\textcolor{black}{We can also bracket the error in the objective function
after $M_{simple}^{4}$ terminates:}
\begin{cor}
\textcolor{black}{With tolerance $\delta>0$, after $M_{simple}^{4}$
terminates with solution $\vec{w}_{M^{4}}$, the optimal value $\left\Vert \vec{w}^{\star}\right\Vert $
of $SVM_{simple}$ is bracketed by:
\begin{equation}
\left\Vert \vec{w}_{M^{4}}\right\Vert ^{2}\le\left\Vert \vec{w}^{\star}\right\Vert ^{2}\le\frac{1}{\left(1-\delta\right)^{2}}\left\Vert \vec{w}_{M^{4}}\right\Vert ^{2}.
\end{equation}
}
\end{cor}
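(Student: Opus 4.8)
The plan is to establish the two inequalities separately; each is an immediate consequence of a defining property of the terminated algorithm, so the proof is short.

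\textbf{Lower bound.} By construction $\vec{w}_{M^{4}}$ is the optimum of $SVM_{T_{k}}$ for the final training set $T_{k}$, whose constraints form a finite subset of the (uncountably many) constraints of $SVM_{simple}$. Thus $SVM_{T_{k}}$ is a relaxation of $SVM_{simple}$. In particular the true optimum $\vec{w}^{\star}$ satisfies $y_{p}\langle\vec{w}^{\star},\vec{x}\rangle\ge1$ for every point on every manifold, hence for every point in $T_{k}$, so $\vec{w}^{\star}$ is feasible for $SVM_{T_{k}}$. Minimality of $\vec{w}_{M^{4}}$ on $T_{k}$ then gives $\left\Vert\vec{w}_{M^{4}}\right\Vert^{2}\le\left\Vert\vec{w}^{\star}\right\Vert^{2}$. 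This is exactly the observation already used in Theorem \ref{thm:delta_finite_simple} to bound $\left\Vert\vec{w}^{(k)}\right\Vert\le1/\kappa^{\star}$.

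\textbf{Upper bound.} The termination criterion of $M_{simple}^{4}$ (step 5 of Algorithm \ref{alg:M4}) is precisely that no point $\vec{x}\in M_{p}$, over any manifold $p$, has $y_{p}\langle\vec{w}_{M^{4}},\vec{x}\rangle<1-\delta$; equivalently, $y_{p}\langle\vec{w}_{M^{4}},\vec{x}\rangle\ge1-\delta$ for all $p$ and all $\vec{x}\in M_{p}$. Assuming $\delta<1$ (for $\delta\ge1$ the bound is vacuous), define the rescaled vector $\vec{w}'=\vec{w}_{M^{4}}/(1-\delta)$. Then $y_{p}\langle\vec{w}',\vec{x}\rangle\ge1$ for every point on every manifold, so $\vec{w}'$ is feasible for $SVM_{simple}$. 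Optimality of $\vec{w}^{\star}$ yields $\left\Vert\vec{w}^{\star}\right\Vert^{2}\le\left\Vert\vec{w}'\right\Vert^{2}=\left\Vert\vec{w}_{M^{4}}\right\Vert^{2}/(1-\delta)^{2}$, which is the claimed upper bound. Combining the two inequalities gives the bracket, and in particular shows $\left\Vert\vec{w}_{M^{4}}\right\Vert\to\left\Vert\vec{w}^{\star}\right\Vert$ as $\delta\to0$.

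There is no substantive obstacle here: the argument is entirely elementary, and the only point requiring a word of care is the rescaling step, which needs $\delta<1$ so that $1/(1-\delta)$ is positive and finite — a case we may restrict to without loss, since Corollary \ref{cor:ZeroTrErrConv} already guarantees that once the algorithm has run past any error it is producing a feasible (zero-error) classifier.
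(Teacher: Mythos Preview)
Your proof is correct and follows exactly the same approach as the paper: the lower bound comes from $\vec{w}^{\star}$ being feasible for the relaxed problem $SVM_{T_k}$, and the upper bound comes from rescaling $\vec{w}_{M^4}$ by $1/(1-\delta)$ to obtain a feasible point for $SVM_{simple}$. You have simply written out the details more fully than the paper's terse version.
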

\begin{proof}
\textcolor{black}{The lower bound on $\left\Vert \vec{w}^{\star}\right\Vert ^{2}$
is as before. Since $M_{simple}^{4}$ has terminated, setting $\vec{w}^{\prime}=\frac{1}{(1-\delta)}\vec{w}_{M^{4}}$
would make $\vec{w}^{\prime}$ feasible for $SVM_{simple}$, resulting
in the upper bound on $\left\Vert \vec{w}^{\star}\right\Vert ^{2}$. }
\end{proof}

\section{\textcolor{black}{$M^{4}$ with Slack Variables \label{sec:M4-slack}}}

\textcolor{black}{In many classification problems, the manifolds may
not be linearly separable due to their dimensionality, size, and/or
correlations. In these situations, $M_{simple}^{4}$ will not even
be able to find a feasible solution. To handle these problems, the
classic approach is to introduce slack variables. Naively, we could
introduce a slack variable for every point on the manifolds as below:}

\[
\begin{array}{c}
SVM_{naive}^{slack}:\underset{\vec{w},\xi_{p}(\vec{x})}{\text{argmin}}\frac{1}{2}\left\Vert \vec{w}\right\Vert ^{2}+C\sum_{p=1}^{P}\int_{\vec{x}\in M_{p}}\xi_{p}(\vec{x})\\
s.t.\;\forall p,\forall\vec{x}\in M_{p}:\;y_{p}\left\langle \vec{w},\vec{x}\right\rangle +\xi_{p}(\vec{x})\ge1,\\
\xi_{p}(\vec{x})\ge0
\end{array}
\]

\textcolor{black}{The parameter $C$ represents the tradeoff between
fitting the manifolds to obey the margin constraints while allowing
some set of points to be misclassified. This approach cannot be used
when training data consists of entire manifolds as in general, it
would require replacing the sum over a finite number of training points
in the cost function, to an integral with an appropriate measure over
the manifolds. Thus, we formulate an alternative version of the QSIP
with slack variables below.}

\subsection{\textcolor{black}{QSIP with Manifold Slacks }}

In this work, we propose using only one slack variable per manifold
for classification problems with non-separable manifolds. This formulation
demands that all the points on each manifold $\vec{x}\in M_{p}$ obey
an inequality constraint with one manifold slack variable, $y_{p}\left\langle \vec{w},\vec{x}\right\rangle +\xi_{p}\ge1$.
As we see below, solving for this constraint is tractable and the
algorithm has good convergence guarantees.

However, this single slack requirement for each manifold by itself
may not be sufficient for good generalization performance. Indeed,
our empirical studies show that generalization performance can be
improved if we additionally demand that some representative points
$\vec{x}_{p}\in M_{p}$ on each manifold also obey the margin constraint:
$y_{p}\left\langle \vec{w},\vec{x}_{p}\right\rangle \ge1$. In this
work, we implement this intuition by specifying appropriate center
points $\vec{x}_{p}^{c}$ for each manifold $M_{p}$. This center
point could be the center of mass of the manifold, a representative
point, or an exemplar used to generate the manifolds \textcolor{black}{\cite{krizhevsky2012imagenet}}.
For simplicity, we demand that these points strictly obey the margin
inequalities corresponding to their manifold label, but we could have
alternatively introduced additional slack variables for these constraints.
Formally, the QSIP optimization problem is summarized below, where
the objective function is minimized over the\textcolor{black}{{} weight
vector $\vec{w}\in\mathbb{R}^{N}$ and slack variables $\vec{\xi}\in\mathbb{R}^{P}$: }

\[
\begin{array}{c}
SVM_{manifold}^{slack}:\,\underset{\vec{w},\vec{\xi}}{\text{argmin}}F(\vec{w},\vec{\xi})=\frac{1}{2}\left\Vert \vec{w}\right\Vert ^{2}+C\sum_{p=1}^{P}\xi_{p}\\
s.t.\;\forall p,\forall\vec{x}\in M_{p}:\;y_{p}\left\langle \vec{w},\vec{x}\right\rangle +\xi_{p}\ge1\ \text{(manifolds)}\\
\forall p:\ y_{p}\left\langle \vec{w},\vec{x}_{p}^{c}\right\rangle \ge1\ \text{(centers)}\\
\forall p:\;\xi_{p}\ge0
\end{array}
\]

\subsection{\textcolor{black}{$M_{slack}^{4}$ Algorithm }}

With these definitions, we introduce our $M_{slack}^{4}$ algorithm
with slack variables below. 

\textcolor{black}{}
\begin{algorithm}[h]
\begin{enumerate}
\item \textcolor{black}{Input: $\delta$ (tolerance), $P$ manifolds and
labels $\left\{ M_{p},y_{p}=\pm1\right\} $, and centers $\vec{x}_{p}^{c}$}
\item \textcolor{black}{Initialize the iteration number $k=1$, and the
set $T_{1}=\left\{ \left(\vec{x}^{i}\in M_{p_{i}},\,y_{p_{i}}\right)\right\} $
with at least one sample from each manifold $M_{p}$.}
\item \textcolor{black}{Solve for $\vec{w}^{(k)}$,$\vec{\xi}^{(k)}$: $\min\frac{1}{2}\left\Vert \vec{w}\right\Vert ^{2}+C\sum_{p=1}^{P}\xi_{p}$
such that $y_{p_{\mu}}\langle\vec{w,}\vec{x}^{\mu}\rangle+\xi_{p_{\mu}}\ge1$
for all $\left(\vec{x}^{\mu},y_{p_{\mu}}\right)\in T_{k}$ and $y_{p}\langle\vec{w,}\vec{x}_{p}^{c}\rangle\ge1$
for all $p$.}
\item \textcolor{black}{Find a point $\vec{x}^{k+1}\in M_{p_{k+1}}$ among
the manifolds $\left\{ p=1,...,P\right\} $ with slack violation larger
than the tolerance $\delta$:}

\textcolor{black}{
\[
y_{p_{k+1}}\left\langle \vec{w}^{(k)},\vec{x}_{k+1}\right\rangle +\xi_{p_{k+1}}<1-\text{\ensuremath{\delta}}
\]
}
\item \textcolor{black}{If there is no such point, then stop. Else, augment
the point set: $T_{k+1}=T_{k}\cup\left\{ \left(\vec{x}_{k+1},y_{p_{k+1}}\right)\right\} $.}
\item \textcolor{black}{$k\leftarrow k+1$ and go to 3. }
\end{enumerate}
\textcolor{black}{\caption{Pseudocode for the $M_{slack}^{4}$ algorithm.\label{alg:M4-slack-L1} }
}
\end{algorithm}

\textcolor{black}{The proposed $M_{slack}^{4}$ algorithm modifies
the cutting plane approach to solve a semi-infinite, semi-definite
quadratic program. Each iteration involves a finite set: $T_{k}=\left\{ \left(\vec{x}^{i}\in M_{p_{i}},y_{p_{i}}\right)\right\} $
with $i=1,\ldots,\left|T_{k}\right|$ examples that is used to define
the following soft margin SVM:}

\[
\begin{array}{c}
SVM_{T_{k}}^{slack}:\underset{\vec{w},\vec{\xi}}{\text{argmin}}\frac{1}{2}\left\Vert \vec{w}\right\Vert ^{2}+C\sum_{p=1}^{P}\xi_{p}\\
s.t.\;\forall\left(\vec{x}^{i},y_{p_{i}}\right)\in T_{k}:\;y_{p_{i}}\left\langle \vec{w},\vec{x}^{i}\right\rangle +\xi_{p_{i}}\ge1\\
\forall p:\ y_{p}\left\langle \vec{w},\vec{x}_{p}^{c}\right\rangle \ge1\;(centers)\\
\forall p:\;\xi_{p}\ge0
\end{array}
\]

\textcolor{black}{to obtain the weights $\vec{w}^{(k)}$ and slacks
$\vec{\xi}^{(k)}$ at each iteration. We then find a point $\vec{x}_{k+1}\in M_{p_{k+1}}$
from one of the manifolds so that:
\begin{equation}
y_{p_{k+1}}\left\langle \vec{w}^{(k)},\vec{x}_{k+1}\right\rangle +\xi_{p_{k+1}}^{(k)}=1-\delta_{k}\label{eq:violation_point}
\end{equation}
where $\delta_{k}>\delta$. If there is no such a point, the $M_{slack}^{4}$
algorithm terminates. Otherwise, the point $\vec{x}_{k+1}$ is added
as a training example to the set $T_{k+1}=T_{k}\cup\left\{ \left(\vec{x}_{k+1},y_{p_{k+1}}\right)\right\} $
and the algorithm proceeds to solve $SVM_{T_{k+1}}^{slack}$ to obtain
$\vec{w}^{(k+1)}$ and $\vec{\xi}^{(k+1)}$. Note that }$M_{slack}^{4}$
\textcolor{black}{embodies the fact that for the algorithm to converge,
it is not necessary to find the point with the }\textcolor{black}{\emph{worst
}}\textcolor{black}{constraint violation at each iteration. }

\subsection{\textcolor{black}{Convergence of $M_{slack}^{4}$ }}

\textcolor{black}{Here we show that the objective function $F\left(\vec{w},\vec{\xi}\right)=\frac{1}{2}\left\Vert \vec{w}\right\Vert ^{2}+C\sum_{p=1}^{P}\xi_{p}$
is guaranteed to increase by a finite amount with each iteration.
This result is similar to \cite{tsochantaridis2005large}, but here
we demonstrate a proof in the primal domain over an infinite number
of examples. We first have the following lemmas, }
\begin{lem}
\textcolor{black}{\label{lem:delta_projection_slack} The change in
the weights and slacks satisfy:
\begin{equation}
\left\langle \Delta\vec{w}^{(k)},\vec{w}^{(k)}\right\rangle +C\sum_{p}\Delta\vec{\xi}_{p}^{(k)}\ge0\label{eq:delta_projection}
\end{equation}
}

where\textcolor{black}{{} $\Delta\vec{w}^{(k)}=\vec{w}^{(k+1)}-\vec{w}^{(k)}$
and $\Delta\vec{\xi}^{(k)}=\vec{\xi}^{(k+1)}-\vec{\xi}^{(k)}$.}
\end{lem}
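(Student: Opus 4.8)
The plan is to carry over the argument used for Lemma~\ref{lem:delta_projection_simple}, but now interpolating the weight vector \emph{and} the slack vector simultaneously. Define, for $0\le\lambda\le1$,
\[
\vec{w}(\lambda)=\vec{w}^{(k)}+\lambda\Delta\vec{w}^{(k)},\qquad \vec{\xi}(\lambda)=\vec{\xi}^{(k)}+\lambda\Delta\vec{\xi}^{(k)},
\]
so that $(\vec{w}(0),\vec{\xi}(0))$ is the optimal solution of $SVM_{T_k}^{slack}$ and $(\vec{w}(1),\vec{\xi}(1))$ is the optimal solution of $SVM_{T_{k+1}}^{slack}$. The goal is to show that the right derivative at $\lambda=0$ of the objective along this segment is nonnegative, which is exactly \eqref{eq:delta_projection}.

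First I would check that $(\vec{w}(\lambda),\vec{\xi}(\lambda))$ is feasible for $SVM_{T_k}^{slack}$ for every $\lambda\in[0,1]$. Since $T_k\subseteq T_{k+1}$, the pair $(\vec{w}^{(k+1)},\vec{\xi}^{(k+1)})$ satisfies all manifold constraints attached to points of $T_k$ as well as all the center constraints; the pair $(\vec{w}^{(k)},\vec{\xi}^{(k)})$ does so by definition. Each of these constraints is affine in $(\vec{w},\vec{\xi})$ — note the slack variables are indexed by manifold and shared across all sampled points of that manifold, so the two problems use a consistent $\vec{\xi}\in\mathbb{R}^P$ and the interpolation is well defined — hence each constraint is preserved under convex combinations. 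The constraints $\xi_p\ge0$ are likewise preserved, since a convex combination of nonnegative numbers is nonnegative. Therefore the whole segment lies in the feasible region of $SVM_{T_k}^{slack}$.

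Then I would invoke optimality together with convexity. The objective $F(\vec{w},\vec{\xi})=\tfrac12\|\vec{w}\|^2+C\sum_p\xi_p$ is convex, so $\phi(\lambda):=F(\vec{w}(\lambda),\vec{\xi}(\lambda))$ is convex on $[0,1]$; moreover $\phi(\lambda)\ge\phi(0)$ for all $\lambda\in[0,1]$ because $(\vec{w}^{(k)},\vec{\xi}^{(k)})$ is the global minimizer of $F$ over the feasible set of $SVM_{T_k}^{slack}$, which by the previous step contains the segment. Hence the right derivative satisfies $\phi'(0^+)\ge0$. Computing it gives $\phi'(0^+)=\langle\vec{w}^{(k)},\Delta\vec{w}^{(k)}\rangle+C\sum_p\Delta\xi_p^{(k)}$, which is precisely \eqref{eq:delta_projection}. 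Equivalently, if this quantity were negative, then choosing a sufficiently small $\lambda'\in(0,1)$ would yield $F(\vec{w}(\lambda'),\vec{\xi}(\lambda'))<F(\vec{w}^{(k)},\vec{\xi}^{(k)})$, contradicting that $(\vec{w}^{(k)},\vec{\xi}^{(k)})$ solves $SVM_{T_k}^{slack}$, exactly mirroring the hard-margin proof.

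I do not expect a genuine obstacle here; the only things that need care are the bookkeeping points just used — that the training sets are nested (so $\vec{w}^{(k+1)}$ stays feasible for the earlier problem), that the per-manifold slack variables are shared consistently between the two SVMs (so the interpolation makes sense), and that the nonnegativity constraints on $\vec{\xi}$ are convex. Once these are in place, the inequality is immediate from convexity and optimality.
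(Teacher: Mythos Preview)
Your proposal is correct and follows essentially the same approach as the paper: interpolate both $\vec{w}$ and $\vec{\xi}$ along the segment between the $k$-th and $(k{+}1)$-th solutions, observe that the segment lies in the feasible set of $SVM_{T_k}^{slack}$ (since $T_k\subseteq T_{k+1}$ and all constraints are affine), and then use optimality of $(\vec{w}^{(k)},\vec{\xi}^{(k)})$ to rule out a negative first-order term. The paper writes out the explicit quadratic expansion of $F(\vec{w}(\lambda),\vec{\xi}(\lambda))-F(\vec{w}^{(k)},\vec{\xi}^{(k)})$ and argues by contradiction, which is exactly the ``equivalently'' version you give at the end; your derivative formulation $\phi'(0^+)\ge0$ is the same content.
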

\begin{proof}
\textcolor{black}{Define $\vec{w}(\lambda)=\vec{w}^{(k)}+\lambda\Delta\vec{w}^{(k)}$
and $\vec{\xi}(\lambda)=\vec{\xi}^{(k)}+\lambda\Delta\vec{\xi}^{(k)}$.
Then for all $0\le\lambda\le1$, $\vec{w}(\lambda)$ and $\vec{\xi}(\lambda)$
satisfy the constraints for $SVM_{T_{k}}^{slack}$. The resulting
change in the objective function is given by:
\begin{multline}
F\left(\vec{w}(\lambda),\vec{\xi}(\lambda)\right)-F\left(\vec{w}^{(k)},\vec{\xi}^{(k)}\right)=\\
\lambda\left[\left\langle \Delta\vec{w}^{(k)},\vec{w}^{(k)}\right\rangle +C\sum_{p}\Delta\xi_{p}^{(k)}\right]+\frac{1}{2}\lambda^{2}\left\Vert \Delta\vec{w}^{(k)}\right\Vert ^{2}
\end{multline}
If \eqref{eq:delta_projection} is not satisfied, then there is some
$0<\lambda^{\prime}<1$ such that $F\left(\vec{w}(\lambda^{\prime}),\vec{\xi}(\lambda^{\prime})\right)<F\left(\vec{w}^{(k)},\vec{\xi}^{(k)}\right)$,
which contradicts the fact that $\vec{w}^{(k)}$ and $\vec{\xi}^{(k)}$
are a solution to $SVM_{T_{k}}$.}
\end{proof}
\begin{lem}
\textcolor{black}{\label{lem:new_support_vector}In each iteration
of $M_{slack}^{4}$ algorithm, the added point $\left(\vec{x}_{k+1},y_{p_{k+1}}\right)$
must be a support vector for the new weights and slacks, s.t. 
\begin{equation}
y_{p_{k+1}}\left\langle \vec{w}^{(k+1)},\vec{x}_{k+1}\right\rangle +\xi_{p_{k+1}}^{(k+1)}=1\label{eq:new_support}
\end{equation}
and}

\textcolor{black}{
\begin{equation}
y_{p_{k+1}}\left\langle \Delta\vec{w}^{(k)},\vec{x}_{k+1}\right\rangle +\Delta\xi_{p_{k+1}}^{(k)}=\delta_{k}\label{eq:new_support_delta}
\end{equation}
}
\end{lem}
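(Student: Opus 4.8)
The plan is to prove \eqref{eq:new_support} first — that the point $\vec{x}_{k+1}$ just added is a support vector of the updated problem — since \eqref{eq:new_support_delta} then drops out by subtracting the selection equation \eqref{eq:violation_point}, namely $y_{p_{k+1}}\langle\vec{w}^{(k)},\vec{x}_{k+1}\rangle+\xi_{p_{k+1}}^{(k)}=1-\delta_k$, from it. So all the work is in showing that at the optimum $(\vec{w}^{(k+1)},\vec{\xi}^{(k+1)})$ of $SVM_{T_{k+1}}^{slack}$ the constraint tied to $\vec{x}_{k+1}$ is active. I would do this by contradiction, via a line-segment argument between the two consecutive optima, using convexity of the objective $F$ together with Lemma~\ref{lem:delta_projection_slack}.

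Concretely: assume $y_{p_{k+1}}\langle\vec{w}^{(k+1)},\vec{x}_{k+1}\rangle+\xi_{p_{k+1}}^{(k+1)}=1+\eta$ with $\eta>0$, and interpolate $(\vec{w}(\lambda),\vec{\xi}(\lambda))=(\vec{w}^{(k)},\vec{\xi}^{(k)})+\lambda(\Delta\vec{w}^{(k)},\Delta\vec{\xi}^{(k)})$. On $[0,1]$ this interpolant satisfies every $T_k$-point constraint, every center constraint, and $\vec{\xi}(\lambda)\ge 0$, because both endpoints do and these constraints are convex (note $T_k\subset T_{k+1}$, so the new optimum still obeys the old constraints). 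Along the segment the one extra constraint, the one for $\vec{x}_{k+1}$, equals $1-\delta_k+\lambda(\delta_k+\eta)$, hence holds once $\lambda\ge\lambda^{\ast}:=\delta_k/(\delta_k+\eta)<1$; thus $(\vec{w}(\lambda),\vec{\xi}(\lambda))$ is feasible for $SVM_{T_{k+1}}^{slack}$ throughout $[\lambda^{\ast},1]$. I then combine two facts about $F(\vec{w}(\lambda),\vec{\xi}(\lambda))$, which is a convex quadratic in $\lambda$: (i) its derivative at $\lambda=0$ is $\langle\Delta\vec{w}^{(k)},\vec{w}^{(k)}\rangle+C\sum_p\Delta\xi_p^{(k)}\ge 0$ by Lemma~\ref{lem:delta_projection_slack}, so $F$ is nondecreasing on $[0,1]$; (ii) $(\vec{w}^{(k+1)},\vec{\xi}^{(k+1)})$ minimizes $F$ over $SVM_{T_{k+1}}^{slack}$, so $F(\vec{w}(\lambda),\vec{\xi}(\lambda))\ge F(\vec{w}(1),\vec{\xi}(1))$ on $[\lambda^{\ast},1]$. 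Together these force $F$ to be constant on the nondegenerate interval $[\lambda^{\ast},1]$.

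A convex quadratic that is constant on an interval has zero leading coefficient, i.e. $\Vert\Delta\vec{w}^{(k)}\Vert=0$ and $\vec{w}^{(k+1)}=\vec{w}^{(k)}$. But when the weight vector is held fixed, the optimal per-manifold slack in $SVM_{T_{k+1}}^{slack}$ must equal the worst margin violation over that manifold's sample points (clipped at zero), otherwise it could be lowered, decreasing $F$. Since $\vec{x}_{k+1}$ violates the margin at $\vec{w}^{(k)}$ by exactly $\delta_k+\xi_{p_{k+1}}^{(k)}$, which exceeds the violations of all older sample points of that manifold, this pins $\xi_{p_{k+1}}^{(k+1)}=\delta_k+\xi_{p_{k+1}}^{(k)}$, and hence $y_{p_{k+1}}\langle\vec{w}^{(k+1)},\vec{x}_{k+1}\rangle+\xi_{p_{k+1}}^{(k+1)}=1$, contradicting $\eta>0$. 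This yields \eqref{eq:new_support}, and subtracting \eqref{eq:violation_point} from it yields \eqref{eq:new_support_delta}.

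I expect the crux to be exactly this interval-constancy step. One is tempted to argue ``the new constraint is inactive $\Rightarrow$ it is redundant $\Rightarrow$ the old optimum is still optimal $\Rightarrow\vec{w}^{(k+1)}=\vec{w}^{(k)}$'', but that shortcut is invalid in general because $F$ genuinely increases from one iteration to the next (that increase is precisely what drives the convergence proof). The convexity-plus-Lemma~\ref{lem:delta_projection_slack} argument is what legitimately forces $\Delta\vec{w}^{(k)}=0$ under the inactivity assumption. A minor additional care point is that $F$ is only linear in $\vec{\xi}$, so $\vec{\xi}^{(k+1)}$ is not obviously unique; this is handled by the observation above that, once $\vec{w}^{(k+1)}$ is fixed, the optimal slacks are pinned to the per-manifold worst violations.
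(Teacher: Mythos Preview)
Your proof is correct and follows essentially the same interpolation-and-contradiction approach as the paper (assume slack $\eta>0$, interpolate between the consecutive optima, invoke Lemma~\ref{lem:delta_projection_slack}, and contradict optimality of $(\vec{w}^{(k+1)},\vec{\xi}^{(k+1)})$). You are in fact more careful than the paper at the key step: the paper asserts the strict inequality $F(\vec{w}(\lambda'),\vec{\xi}(\lambda'))<F(\vec{w}^{(k+1)},\vec{\xi}^{(k+1)})$ directly from Lemma~\ref{lem:delta_projection_slack}, which on its own only yields $\le$, whereas you explicitly dispose of the equality case by showing it forces $\Delta\vec{w}^{(k)}=0$ and then pinning the slacks via the per-manifold worst-violation argument.
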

\begin{proof}
\textcolor{black}{Suppose $y_{p_{k+1}}\left\langle \vec{w}^{(k+1)},\vec{x}_{k+1}\right\rangle +\xi_{p_{k+1}}^{(k+1)}=1+\epsilon$
for some $\epsilon>0$. Then we can choose $\lambda^{\prime}=\frac{\delta_{k}}{\delta_{k}+\epsilon}<1$
so that $\vec{w}(\lambda^{\prime})=\vec{w}^{(k)}+\lambda^{\prime}\Delta\vec{w}^{(k)}$
and $\vec{\xi}(\lambda^{\prime})=\vec{\xi}^{(k)}+\lambda^{\prime}\Delta\vec{\xi}^{(k)}$
satisfy the constraints for $SVM_{T_{k+1}}^{slack}$. But, from Lemma
\ref{lem:delta_projection_slack}, we have $F\left(\vec{w}(\lambda^{\prime}),\vec{\xi}(\lambda^{\prime})\right)<F\left(\vec{w}^{(k+1)},\vec{\xi}^{(k+1)}\right)$
which contradicts the fact that $\vec{w}^{(k+1)}$ and $\text{\ensuremath{\vec{\xi}}}^{(k+1)}$
are a solution to $SVM_{T_{k+1}}$. Thus, $\epsilon=0$ and the point
$\left(\vec{x}_{k+1},y_{p_{k+1}}\right)$ must be a support vector
for $SVM_{T_{k+1}}.$ \eqref{eq:new_support_delta} results from subtracting
\eqref{eq:violation_point} from \eqref{eq:new_support}.}
\end{proof}
\textcolor{black}{We also derive a bound on the following quadratic
function over nonnegative values:}
\begin{lem}
\textcolor{black}{\label{lem:quadratic_bound}Given $K>0$,$\delta>0$,
then $\forall x\ge0$}

\textcolor{black}{
\begin{equation}
\frac{1}{2}(x-\delta)^{2}+Kx\ge\min\left(\frac{1}{2}\delta^{2},\frac{1}{2}K\delta\right)
\end{equation}
}
\end{lem}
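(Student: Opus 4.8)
The plan is to treat the left-hand side as a one-variable convex quadratic in $x$, minimize it over the feasible half-line $x\ge 0$, and then compare the resulting minimum value against the two candidates appearing on the right. Write $f(x)=\tfrac12(x-\delta)^2+Kx$. Since $f'(x)=x-\delta+K$, the unconstrained minimizer of $f$ is $x^\star=\delta-K$, and because $f$ is strictly convex its minimum over $[0,\infty)$ is attained at $\max(0,\delta-K)$. The argument then splits on the sign of $\delta-K$, i.e.\ on whether this vertex is feasible.

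First I would dispose of the case $K\ge\delta$. Here $x^\star\le 0$, so $f$ is nondecreasing on $[0,\infty)$ and $\min_{x\ge 0}f(x)=f(0)=\tfrac12\delta^2$. Moreover $K\ge\delta$ gives $\tfrac12 K\delta\ge\tfrac12\delta^2$, so $\min\bigl(\tfrac12\delta^2,\tfrac12 K\delta\bigr)=\tfrac12\delta^2$, and the claimed inequality holds (with equality at $x=0$).

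Next, the case $0<K<\delta$. Now $x^\star=\delta-K\in(0,\delta)$ is feasible, so $\min_{x\ge 0}f(x)=f(\delta-K)=\tfrac12 K^2+K(\delta-K)=K\delta-\tfrac12 K^2$. It remains to check $K\delta-\tfrac12 K^2\ge\tfrac12 K\delta$, which is immediate since the difference equals $\tfrac12 K(\delta-K)>0$; hence $f(x)\ge\tfrac12 K\delta\ge\min\bigl(\tfrac12\delta^2,\tfrac12 K\delta\bigr)$ for all $x\ge 0$. Combining the two cases proves the lemma.

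There is no real obstacle here — the statement is elementary — and the only point requiring care is precisely the interaction with the constraint $x\ge 0$: one must not simply substitute the unconstrained vertex $x^\star=\delta-K$ without checking it is nonnegative, which is exactly why a $\min$ of two terms, rather than a single closed-form expression, appears on the right-hand side. One could equivalently organize the proof by splitting on $x\ge\delta$ versus $0\le x\le\delta$ (on the first piece $f(x)\ge Kx\ge K\delta$, and the second piece reduces to the same convex-quadratic bound); I would pick whichever split reads most cleanly alongside the subsequent application of the lemma in the convergence analysis of $M_{slack}^{4}$.
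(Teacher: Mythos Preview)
Your proof is correct and follows essentially the same approach as the paper: minimize the convex quadratic over $x\ge 0$ by locating the constrained minimizer $x^\star=[\delta-K]_+$, then split on the sign of $\delta-K$ and bound the resulting minimum value by $\tfrac12\delta^2$ or $\tfrac12 K\delta$ in the two cases. Your write-up simply spells out the derivative computation and the inequality $K\delta-\tfrac12 K^2\ge\tfrac12 K\delta$ more explicitly than the paper does.
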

\begin{proof}
\textcolor{black}{The minimum value occurs when $x^{\star}=\left[\delta-K\right]_{+}$.
When $K\ge\delta$, then $x^{\star}=0$ and the minimum is $\frac{1}{2}\delta^{2}$.
When $K<\delta$, the minimum occurs at $K\left(\delta-\frac{1}{2}K\right)\ge\frac{1}{2}K\delta$.
Thus, the lower bound is the smaller of these two values.}
\end{proof}
\begin{thm}
\textcolor{black}{\label{prop:objective_increase} In each iteration
$k$ of $M_{slack}^{4}$ algorithm, the increase in the objective
function for $SVM_{manifold}^{slack}$ is lower bounded by
\begin{equation}
\Delta F^{(k)}\ge\min\left(\frac{1}{8}\frac{\delta_{k}^{2}}{L^{2}},\,\frac{1}{2}C\delta_{k}\right)\label{eq:dFbound}
\end{equation}
}

where $\Delta F^{(k)}=F\left(\vec{w}^{(k+1)},\vec{\xi}^{(k+1)}\right)-F\left(\vec{w}^{(k)},\vec{\xi}^{(k)}\right)$.
\end{thm}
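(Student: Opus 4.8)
The plan is to compare the two quadratic programs $SVM_{T_k}^{slack}$ and $SVM_{T_{k+1}}^{slack}$, which differ only by the single new constraint attached to the point $\vec{x}_{k+1}\in M_{p_{k+1}}$ that was violated by $\delta_k>\delta$ under the stage-$k$ solution. Writing $\vec{w}^{(k+1)}=\vec{w}^{(k)}+\Delta\vec{w}^{(k)}$ and $\vec{\xi}^{(k+1)}=\vec{\xi}^{(k)}+\Delta\vec{\xi}^{(k)}$, the quadratic form of $F$ gives the exact identity
\begin{equation}
\Delta F^{(k)}=\Bigl\langle\Delta\vec{w}^{(k)},\vec{w}^{(k)}\Bigr\rangle+C\sum_{p}\Delta\xi_p^{(k)}+\tfrac{1}{2}\bigl\|\Delta\vec{w}^{(k)}\bigr\|^{2}.
\end{equation}
By Lemma~\ref{lem:delta_projection_slack} the first two terms are non-negative, so the first (``weight-side'') bound is $\Delta F^{(k)}\ge\tfrac12\|\Delta\vec{w}^{(k)}\|^{2}$.

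Second, I would derive a complementary (``slack-side'') bound of the form $\Delta F^{(k)}\ge C\,\Delta\xi_{p_{k+1}}^{(k)}-(\text{correction})$. The idea is to take $(\vec{w}^{(k+1)},\vec{\xi}^{(k+1)})$ and lower its slack on manifold $p_{k+1}$ to the smallest value still consistent with the constraints of $T_k$ on that manifold. The resulting point is feasible for $SVM_{T_k}^{slack}$, so optimality of $(\vec{w}^{(k)},\vec{\xi}^{(k)})$ for $SVM_{T_k}^{slack}$ forces its objective to be $\ge F(\vec{w}^{(k)},\vec{\xi}^{(k)})$; since the trimming only lowered $\sum_p\xi_p$, this lower-bounds $\Delta F^{(k)}$ by $C$ times the forced growth of $\xi_{p_{k+1}}$, up to a correction controlled by $\|\Delta\vec{w}^{(k)}\|$. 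The same estimate can alternatively be read off the KKT conditions of $SVM_{T_k}^{slack}$, using stationarity in $\xi_{p_{k+1}}$ and complementary slackness.

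Third, Lemma~\ref{lem:new_support_vector} says the added point becomes a support vector of the re-optimized problem, giving $y_{p_{k+1}}\langle\Delta\vec{w}^{(k)},\vec{x}_{k+1}\rangle+\Delta\xi_{p_{k+1}}^{(k)}=\delta_k$; with Cauchy--Schwarz and $\|\vec{x}_{k+1}\|\le L$ this becomes $L\,\|\Delta\vec{w}^{(k)}\|\ge\bigl|\delta_k-\Delta\xi_{p_{k+1}}^{(k)}\bigr|$, i.e.\ the deficit $\delta_k$ must be absorbed either by moving $\vec{w}$ or by growing the slack of manifold $p_{k+1}$. Combining: if $\Delta\xi_{p_{k+1}}^{(k)}\ge\delta_k/2$ the slack-side bound gives $\Delta F^{(k)}\ge\tfrac12 C\delta_k$; otherwise $L\|\Delta\vec{w}^{(k)}\|>\delta_k/2$ and the weight-side bound gives $\Delta F^{(k)}>\tfrac18\delta_k^2/L^2$ (in particular the case $\Delta\xi_{p_{k+1}}^{(k)}<0$, where the slack-side estimate is vacuous, lands in this latter branch since then $L\|\Delta\vec{w}^{(k)}\|>\delta_k$). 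This two-branch optimization over how $\delta_k$ splits between $\Delta\vec{w}^{(k)}$ and $\Delta\xi_{p_{k+1}}^{(k)}$ is precisely the content of Lemma~\ref{lem:quadratic_bound}, which packages it into $\Delta F^{(k)}\ge\min\!\bigl(\tfrac18\delta_k^2/L^2,\ \tfrac12 C\delta_k\bigr)$.

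The main obstacle is the slack-side bound. A priori, adding the new constraint could let the slacks of the \emph{other} manifolds shrink and partly offset the growth of $\xi_{p_{k+1}}$, so one cannot simply read $\Delta F^{(k)}$ off a single coordinate of $\vec{\xi}$; making this step rigorous requires genuinely using that exactly one constraint, on exactly one manifold, was added (Lemma~\ref{lem:new_support_vector}) together with the optimality/KKT structure of $SVM_{T_k}^{slack}$, and carefully tracking how $\|\Delta\vec{w}^{(k)}\|$ enters the comparison so that the constants $\tfrac18$ and $\tfrac12$ come out exactly as stated.
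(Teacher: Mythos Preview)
Your high-level strategy coincides with the paper's: write $\Delta F^{(k)}=\langle\Delta\vec w^{(k)},\vec w^{(k)}\rangle+C\sum_p\Delta\xi_p^{(k)}+\tfrac12\|\Delta\vec w^{(k)}\|^2$, use Lemma~\ref{lem:delta_projection_slack} for a ``weight-side'' bound $\Delta F^{(k)}\ge\tfrac12\|\Delta\vec w^{(k)}\|^2$, use Lemma~\ref{lem:new_support_vector} to tie $\|\Delta\vec w^{(k)}\|$ and $\Delta\xi_{p_{k+1}}^{(k)}$ to $\delta_k$, and close with Lemma~\ref{lem:quadratic_bound}. Where you diverge from the paper is in how the ``slack-side'' inequality is obtained and in which variable you case-split on.

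Your trimming argument gives $\Delta F^{(k)}\ge C(\xi_{p_{k+1}}^{(k+1)}-\tilde\xi_{p_{k+1}})$, but $\tilde\xi_{p_{k+1}}$ is the minimal slack at $\vec w^{(k+1)}$ for the \emph{old} $p_{k+1}$-constraints, not $\xi_{p_{k+1}}^{(k)}$; bounding the difference brings in a correction of order $L\|\Delta\vec w^{(k)}\|$ on \emph{both} sides (once for the old points, once via Lemma~\ref{lem:new_support_vector} for the new point), so the clean split ``$\Delta\xi_{p_{k+1}}^{(k)}\ge\delta_k/2\Rightarrow\Delta F^{(k)}\ge\tfrac12 C\delta_k$'' does not follow as stated, and carrying the correction through the min--max yields constants worse than $1/8$. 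The paper avoids this by case-splitting not on $\Delta\xi_{p_{k+1}}^{(k)}$ but on whether $T_k$ already contains a support vector from $M_{p_{k+1}}$: if none ($\eta>0$), one can freeze $\xi_{p_{k+1}}$ at $\xi_{p_{k+1}}^{(k)}$ along a feasible line and obtain the refined first-order inequality $\langle\Delta\vec w^{(k)},\vec w^{(k)}\rangle+C\sum_{p\neq p_{k+1}}\Delta\xi_p^{(k)}\ge 0$, which is exactly $\Delta F^{(k)}\ge C\,\Delta\xi_{p_{k+1}}^{(k)}+\tfrac12\|\Delta\vec w^{(k)}\|^2$ with \emph{no} correction; if there is one ($\eta=0$), the paper introduces $\varepsilon=\min_{\epsilon_\nu^{(k)}=0}\epsilon_\nu^{(k+1)}\ge 0$, gets $\langle\Delta\vec w^{(k)},\vec w^{(k)}\rangle+C\sum_p\Delta\xi_p^{(k)}\ge C\varepsilon$ together with $\|\Delta\vec w^{(k)}\|^2\ge(\delta_k-\varepsilon)^2/(4L^2)$ (by comparing the new support vector with an old one on the same manifold), and then applies Lemma~\ref{lem:quadratic_bound} with $x=\varepsilon$. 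It is this second case that produces the factor $1/8$. Your final paragraph correctly anticipates that this is where the work lies; the concrete device you are missing is the auxiliary variable $\varepsilon$ tied to an existing support vector on $M_{p_{k+1}}$, rather than $\Delta\xi_{p_{k+1}}^{(k)}$ itself.
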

\begin{proof}
(Sketch) First, if\textcolor{black}{{} $\Delta\vec{w}^{(k)}\ne0$, Lemma
\ref{lem:delta_projection_slack} completes the proof. If $\Delta\vec{w}^{(k)}=0$,
then $\Delta\xi_{p_{k}}^{(k)}=\delta_{k}$ from Lemma \ref{lem:new_support_vector}
and $\Delta\xi_{p\ne p_{k}}^{(k)}=0$ since $\vec{\xi}^{(k)}$ is
the solution for $SVM_{T_{k}}$. So for $\Delta\vec{w}^{(k)}=0$,
$\Delta F^{(k)}=C\delta_{k}$. }

\textcolor{black}{The added point $\vec{x}_{k+1}$ is from a particular
manifold $M_{p_{k+1}}$. If $\Delta\xi_{p_{k+1}}^{(k)}\le0$, we have
$y_{p_{k+1}}\left\langle \Delta\vec{w}^{(k)},\vec{x}_{k+1}\right\rangle \ge\delta_{k}$
($\because$ Lemma \ref{lem:new_support_vector}). Then, $\left\Vert \Delta\vec{w}^{(k)}\right\Vert ^{2}\ge\frac{\delta_{k}^{2}}{L^{2}}$
, yielding $\Delta F^{(k)}\ge\frac{1}{2}\frac{\delta_{k}^{2}}{L^{2}}$.}

\textcolor{black}{We next analyze $\Delta\xi_{p_{k+1}}^{(k)}>0$ and
consider the finite set of points $\left(\vec{x}^{\nu},\,p_{k+1}\right)\in T_{k}$,
coming from the $p_{k+1}$ manifold. Each of these points obeys the
constraints:}

\textcolor{black}{
\begin{align}
y_{p_{k+1}}\left\langle \vec{w}^{(k)},\vec{x}^{\nu}\right\rangle +\xi_{p_{k+1}}^{(k)} & =1+\epsilon_{\nu}^{(k)}\\
y_{p_{k+1}}\left\langle \vec{w}^{(k+1)},\vec{x}^{\nu}\right\rangle +\xi_{p_{k+1}}^{(k+1)} & =1+\epsilon_{\nu}^{(k+1)}\\
\epsilon_{\nu}^{(k)},\,\epsilon_{\nu}^{(k+1)} & \ge0
\end{align}
}

\textcolor{black}{We consider the minimum value of the thresholds:
$\eta=\min_{\nu}\epsilon_{\nu}^{(k)}$. }

\textcolor{black}{If }\textbf{\textcolor{black}{$\eta>0$, }}none
of the $\vec{x}^{\nu}$ points are support vectors for \textcolor{black}{$SVM_{T_{k}}^{slack}$.
In this case, we define a linear set of slack variables: $\xi_{p}(\lambda)=\xi_{p_{k}}^{(k)}$
for $p=p_{k}$, and $\xi_{p}(\lambda)=\xi_{p}^{(k)}+\lambda\Delta\xi_{p}^{(k)}$
for $p\neq p_{k}$, and $\vec{w}(\lambda)=\vec{w}^{(k)}+\lambda\Delta\vec{w}^{(k)}$,
which satisfy the constraints for $SVM_{T_{k}}$. Then, this implies
\begin{equation}
\left\langle \Delta\vec{w}^{(k)},\vec{w}^{(k)}\right\rangle +C\sum_{p\ne p_{k}}\Delta\vec{\xi}_{p}^{(k)}\ge0\label{eq:nonsv_gradient_bound}
\end{equation}
}

which implies $\Delta F^{(k)}\ge\min\left(\frac{1}{2L^{2}}\delta_{k}^{2},\,\frac{1}{2}C\delta_{k}\right)$. 

\textcolor{black}{If $\eta=0$,}\textbf{\textcolor{black}{{} }}\textcolor{black}{at
least one support vector lies in $M_{p_{k+1}}$. Consider $\varepsilon=\min_{\epsilon_{\nu}^{(k)}=0}\epsilon_{\nu}^{(k+1)}\ge0$. }

\textcolor{black}{We then define $\xi_{p}(\lambda)=\xi_{p_{k}}^{(k)}+\lambda\left(\Delta\xi_{p}^{(k)}-\varepsilon\right)$
for $p=p_{k}$, and $\xi_{p}(\lambda)=\xi_{p}^{(k)}+\lambda\Delta\xi_{p}^{(k)}$
for $p\ne p_{k}$, and $\vec{w}(\lambda)=\vec{w}^{(k)}+\lambda\Delta\vec{w}^{(k)}$.
Then, there exists $0\le\lambda\le\lambda_{min}$ for which $\vec{\xi}(\lambda)$
and $\vec{w}(\lambda)$ satisfy the constraints for $SVM_{T_{k}}$,
so that }

\begin{equation}
\left\langle \Delta\vec{w}^{(k)},\vec{w}^{(k)}\right\rangle +C\sum_{p}\Delta\vec{\xi}_{p}^{(k)}\ge C\varepsilon\label{eq:sv_gradient_bound}
\end{equation}

\textcolor{black}{We also have a support vector $\left(\vec{x}^{\nu},\,p_{k+1}\right)\in T_{k}$,
with $y_{p_{k}}\left\langle \Delta\vec{w}^{(k)},\vec{x}^{\nu}\right\rangle +\Delta\xi_{p_{k}}^{(k)}=\varepsilon$,
then $\left\Vert \Delta\vec{w}^{(k)}\right\Vert ^{2}\ge\frac{1}{4L^{2}}\left(\delta_{k}-\varepsilon\right)^{2}$
by using Lemma \ref{lem:new_support_vector}. }

Then, by using Lemma \eqref{lem:quadratic_bound} on $\varepsilon$,
we get

\begin{equation}
\Delta F^{(k)}\ge\min\left(\frac{1}{8L^{2}}\delta_{k}^{2},\,\frac{1}{2}C\delta_{k}\right)\label{eq:sv_quadratic_bound}
\end{equation}

\textcolor{black}{Thus, we obtain the final bound combining results
from two cases of $\eta$.}
\end{proof}
\textcolor{black}{Since the $M_{slack}^{4}$ algorithm is guaranteed
to increase the objective by a finite amount, it will terminate in
a finite number of iterations if we require $\delta_{k}>\delta$ for
some positive $\delta>0$.}
\begin{cor}
\textcolor{black}{The $M_{slack}^{4}$ algorithm for a given $\delta>0$
will terminate after at most $P\cdot\max\left(\frac{8CL^{2}}{\delta^{2}},\,\frac{2}{\delta}\right)$
iterations where $P$ is the number of manifolds, L bounds the norm
of the points on the manifolds.}
\end{cor}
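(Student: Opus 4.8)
The plan is to treat $F(\vec w,\vec\xi)=\tfrac12\|\vec w\|^{2}+C\sum_{p}\xi_{p}$ as a potential function and argue that it increases by a fixed positive amount at every iteration while staying bounded above by $CP$ throughout the run; the number of iterations is then the ratio of these two quantities, and a one-line arithmetic simplification rewrites that ratio as $P\cdot\max\!\bigl(\tfrac{8CL^{2}}{\delta^{2}},\tfrac{2}{\delta}\bigr)$.

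For the per-step increase I would invoke Theorem~\ref{prop:objective_increase}, which guarantees $\Delta F^{(k)}\ge\min\!\bigl(\tfrac{\delta_{k}^{2}}{8L^{2}},\tfrac12 C\delta_{k}\bigr)$ on iteration $k$. An iteration only executes when step~4 of $M_{slack}^{4}$ returns a point with violation $\delta_{k}>\delta$, and both expressions inside the $\min$ are increasing in $\delta_{k}$, so the increase is in fact at least $\mu:=\min\!\bigl(\tfrac{\delta^{2}}{8L^{2}},\tfrac12 C\delta\bigr)>0$. Since $F\ge0$, after $k$ executed iterations $F^{(k+1)}\ge k\mu$.

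For the upper bound I would first note that $T_{k}$ carries only finitely many manifold-point constraints, so $SVM_{T_{k}}^{slack}$ is a relaxation of the full semi-infinite program $SVM_{manifold}^{slack}$ (they share the same center and nonnegativity constraints), whence $F^{(k)}\le F^{\star}$, the optimum of the full program. The remaining step — and the one I expect to require the most care — is to show $F^{\star}\le CP$ by exhibiting a cheap feasible point: with the centers dropped, $\vec w=0,\ \xi_{p}\equiv1$ is feasible with $F=CP$, and in the presence of the hard center constraints one instead takes $\vec w$ to be any center-separating vector and $\xi_{p}=\bigl[\,1-\min_{\vec x\in M_{p}}y_{p}\langle\vec w,\vec x\rangle\,\bigr]_{+}$, so the content lies in bounding this construction (equivalently, in realizing $\xi_{p}\le1$ for every $p$) using the standing assumptions that the manifolds are bounded and that the centers admit a common separator.

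Putting the two bounds together, every run of $M_{slack}^{4}$ satisfies $k\mu\le F^{\star}\le CP$, so the number of iterations is at most $CP/\mu=CP\cdot\max\!\bigl(\tfrac{8L^{2}}{\delta^{2}},\tfrac{2}{C\delta}\bigr)=\max\!\bigl(\tfrac{8CPL^{2}}{\delta^{2}},\tfrac{2P}{\delta}\bigr)=P\cdot\max\!\bigl(\tfrac{8CL^{2}}{\delta^{2}},\tfrac{2}{\delta}\bigr)$; in particular the algorithm terminates after finitely many steps, which is the assertion.
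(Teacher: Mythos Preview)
Your potential-function argument is exactly the paper's: invoke Theorem~\ref{prop:objective_increase} for a per-step increase of at least $\min(\delta^{2}/8L^{2},\,C\delta/2)$, bound the objective above by $CP$, and divide. The paper dispatches the upper bound in one line by declaring $\vec w=0,\ \xi_{p}=1$ feasible for $SVM_{manifold}^{slack}$, which gives $F^{\star}\le CP$ directly.

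You are right to pause on the center constraints: with $\vec w=0$ the hard inequalities $y_{p}\langle\vec w,\vec x_{p}^{c}\rangle\ge1$ read $0\ge1$, so strictly speaking $(\vec w,\vec\xi)=(0,\mathbf 1)$ is \emph{not} feasible for the formulation as written. The paper simply glosses over this; it does not construct an alternative feasible point or adjust the bound. Your proposed fix (take any $\vec w$ separating the centers and set $\xi_{p}$ to the residual violation) is the natural repair, but as you note it does not recover the clean $CP$ ceiling without further assumptions, so the stated constant is really the bound for the formulation without the hard center constraints. Apart from this caveat---which is a gap in the paper, not in your reasoning---your proof is the paper's proof.
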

\begin{proof}
\textcolor{black}{$\vec{w}=0$ and $\xi_{p}=1$ is a feasible solution
for $SVM_{manifold}^{slack}$. Therefore, the optimal objective function
is upper-bounded by $F\left(\vec{w}=0,\vec{\xi}=1\right)=PC$. The
upper bound on the number of iterations is then provided by Theorem
\eqref{prop:objective_increase}. }
\end{proof}
\textcolor{black}{We can also bound the error in the objective function
after $M_{slack}^{4}$ terminates:}
\begin{cor}
\textcolor{black}{With $\delta>0$, after $M_{slack}^{4}$ terminates
with solution $\vec{w}_{M^{4}}$, slack $\vec{\xi}_{M^{4}}$ and value
$F_{M^{4}}=F\left(\vec{w}_{M^{4}},\vec{\xi}_{M^{4}}\right)$, then
the optimal value $F^{\star}$ of $SVM_{manifold}^{slack}$ is bracketed
by:
\begin{equation}
F_{M^{4}}\le F^{\star}\le F_{M^{4}}+PC\delta.
\end{equation}
}
\end{cor}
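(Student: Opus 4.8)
The plan is a direct sandwiching argument that exploits two facts available at termination: the training set $T_k$ used at every iteration is a subset of the full manifold data, and the stopping criterion of step 5 gives a uniform bound on the constraint violation of $\vec{w}_{M^4}$ over all the manifolds. No heavy machinery is needed; the work is entirely in checking feasibility of two carefully chosen candidate points.

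For the lower bound $F_{M^4}\le F^\star$, I would observe that $SVM_{T_k}^{slack}$ is a relaxation of $SVM_{manifold}^{slack}$: it has the identical objective $F(\vec w,\vec\xi)=\frac{1}{2}\left\Vert\vec w\right\Vert^2+C\sum_{p}\xi_p$, the identical center constraints and nonnegativity constraints $\xi_p\ge0$, but the infinitely many manifold constraints are replaced by the finitely many constraints coming from the points in $T_k\subset\bigcup_p M_p$. Hence any $(\vec w^\star,\vec\xi^\star)$ optimal for $SVM_{manifold}^{slack}$ is feasible for $SVM_{T_k}^{slack}$, so the optimal value of the latter is at most $F^\star$; at termination that optimal value is exactly $F_{M^4}$, giving $F_{M^4}\le F^\star$.

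For the upper bound $F^\star\le F_{M^4}+PC\delta$, I would use the fact that when $M_{slack}^4$ halts in step 5 there is no $p$ and no $\vec x\in M_p$ with $y_p\langle\vec w_{M^4},\vec x\rangle+\xi_{p,M^4}<1-\delta$, i.e. $y_p\langle\vec w_{M^4},\vec x\rangle+\bigl(\xi_{p,M^4}+\delta\bigr)\ge1$ for all $p$ and all $\vec x\in M_p$. Setting $\xi_p'=\xi_{p,M^4}+\delta$, the pair $(\vec w_{M^4},\vec\xi')$ then satisfies all manifold constraints; it also satisfies the center constraints, since $\vec w_{M^4}$ already did (these are imposed in every $SVM_{T_k}^{slack}$ and hence survive to termination), and $\xi_p'\ge0$ because $\xi_{p,M^4}\ge0$ and $\delta>0$. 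So $(\vec w_{M^4},\vec\xi')$ is feasible for $SVM_{manifold}^{slack}$, whence $F^\star\le F(\vec w_{M^4},\vec\xi')=\frac{1}{2}\left\Vert\vec w_{M^4}\right\Vert^2+C\sum_{p}(\xi_{p,M^4}+\delta)=F_{M^4}+PC\delta$.

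There is essentially no difficult step; the only points requiring care are that the finite programs $SVM_{T_k}^{slack}$ retain the full set of center and nonnegativity constraints (so the relaxation claim and the feasibility of $\vec\xi'$ are both legitimate), and that termination of $M_{slack}^4$ with $\delta>0$ is guaranteed by the preceding corollary, which makes the bracketing statement non-vacuous.
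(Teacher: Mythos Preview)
Your proof is correct and follows essentially the same approach as the paper: the lower bound comes from $SVM_{T_k}^{slack}$ being a relaxation of $SVM_{manifold}^{slack}$, and the upper bound comes from inflating each slack by $\delta$ to make $(\vec w_{M^4},\vec\xi_{M^4}+\delta)$ feasible for the full problem. If anything, you have spelled out more of the routine verifications (center constraints, nonnegativity of $\xi_p'$) than the paper does.
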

\begin{proof}
\textcolor{black}{The lower bound on $F^{\star}$ is apparent since
$SVM_{manifold}^{slack}$ includes $SVM_{T_{k}}^{slack}$ constraints
for all $k$. Setting the slacks $\xi_{p}=\xi_{M^{4},p}+\delta$ will
make the solution feasible for $SVM_{manifold}^{slack}$ resulting
in the upper bound. }
\end{proof}

\section{\textcolor{black}{Experiments \label{sec:Experiments}}}

\subsection{\textcolor{black}{Synthetic Manifolds }}

\paragraph*{Random $L_{2}$ balls }

\textcolor{black}{As an illustration of our method, we have generated
manifolds consisting of random $D$-dimensional Euclidean balls with
a given radius. Each manifold $M_{p}$ is described by a center vector
$\vec{x}_{p}^{c}\in\mathbb{R}^{N}$ and $D$ basis vectors $\left\{ \vec{u}_{i}^{p}\in\mathbb{R}^{N},\:i=1,...,D\right\} $.
The points on the manifold can be parameterized as $M_{p}=\left\{ \vec{x}\left|\vec{x}=\vec{x}_{p}^{c}+R\sum_{i=1}^{D}s_{i}\vec{u}_{i}^{p}\right.\right\} $
where $R$ is the radius of the ball and $\vec{s}\in\mathbb{R}^{D}$
are normalized so that $\sum_{i=1}^{D}s_{i}^{2}=1$. }

\paragraph*{Simulations}

We compare the performance of $M^{4}$ to the conventional point SVM
with samples uniformly drawn from the $L_{2}$ ball manifolds. Performance
is measured by generalization error as a function of the number of
samples used by the algorithm. 

\textcolor{black}{For these manifolds, the worst constraint-violating
point can easily be computed by taking the derivative of the constraint
$y^{p}\left[\vec{w}\cdot\left(\vec{x}_{0}^{p}+R\sum_{i=1}^{D}s_{i}\vec{u}_{i}^{p}\right)\right]+\xi_{p}\geq1$
with respect to $\vec{s}$ for all $p$. This results in the analytic
solution $s_{i}^{p,worst}=-\frac{y^{p}\vec{w}\cdot\vec{u_{i}^{p}}}{\sqrt{\sum_{i=1}^{D}\left(\vec{w}\cdot\vec{u_{i}^{p}}\right)^{2}}}$
. For problems with non-separable manifolds in $M_{slack}^{4}$, we
used an additional single margin constraint per manifold given by
the center $\vec{x}_{p}^{c}$. }

\textcolor{black}{We used the following parameters in the simulations
shown below: embedding dimension $N=500$, manifold dimension $D=10$,
radius $R=20$. With these parameters, the critical manifold capacity
for linear classification is estimated to be $P_{critical}=48.3$
\cite{chung2016linear}, hence we consider $P=46$ to test $M_{simple}^{4}$
and $P=50$ for the $M_{slack}^{4}$ simulations. }

The results are presented in figure \ref{fig:L2-balls-GenErr} for
the separable case and non-separable case.

\textcolor{black}{}
\begin{figure}[h]
\noindent \begin{centering}
\textcolor{black}{\includegraphics[width=1\columnwidth]{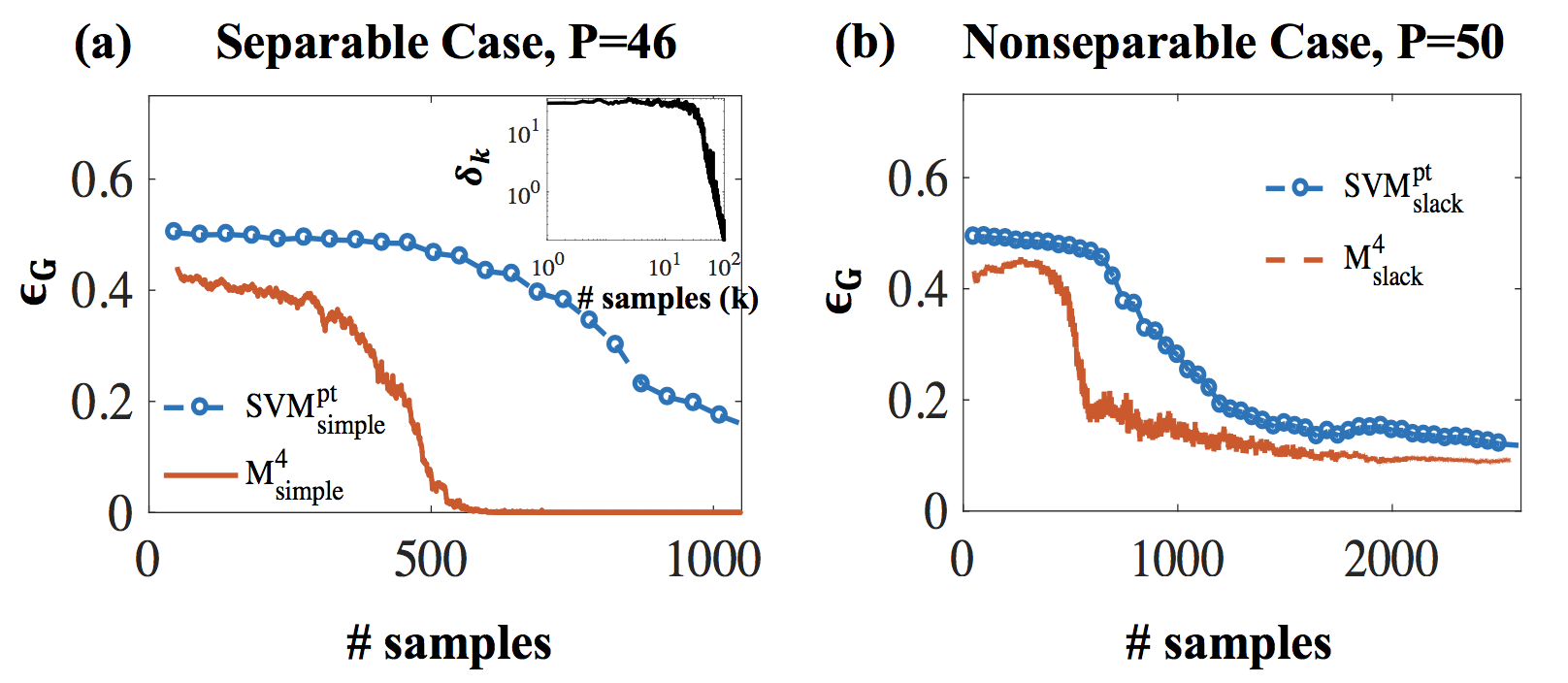}}
\par\end{centering}
\textcolor{black}{\caption{\textbf{Generalization error of the $M^{4}$ solution for $L_{2}$
ball manifolds}, shown as a function of the total number of training
samples per manifold (red solid) compared with that of conventional
point SVM (blue dashed). $N=500$, $D=10$, $R=20$, and (a) $P=46$
is used for $M_{simple}^{4}$and (b) $P=48$ for $M_{slack}^{4}$
with $C=100$. (a)-(Inset) $\delta_{k}$ is shown for the $k_{th}$
added point in $M_{simple}^{4}$. The critical capacity with these
parameters is $P_{c}\approx48$. \label{fig:L2-balls-GenErr}}
}
\end{figure}

\subsection{\textcolor{black}{ImageNet Dataset }}

\paragraph*{\textcolor{black}{Image-based Object Manifolds}}

We also apply the $M^{4}$ algorithm to a more realistic class of
object manifolds. Here each object manifold is defined by the infinite
set of images created by applying 2-D affine transformations on a
single template image. In order to create object manifolds, $P$ template
images were sampled from the ImageNet 2012 data set for which exact
object bounding boxes are available \cite{deng2009imagenet}, and
each image was cropped and scaled such that the object occupies the
middle $48\times48$ pixels of the template image. 

Each sample from the object manifold is a $64\times64$ gray-scale
image created by applying a 2-D affine transformation on the template
image. Those transformations are defined as a composition of seven
basic transformations: horizontal or vertical translation, horizontal
or vertical scaling, horizontal or vertical shear, and rotation. The
range of each basic transformations was chosen so that the largest
pixel displacement was equivalent to 8 pixels. The composition of
these seven basic transformations thus defines a 7-D highly non-linear
object manifold for each object.

\paragraph*{Simulations }

We compared the performance of $M^{4}$ to point SVM in classification
of samples from the object manifolds. Performance is measured as above
by generalization error as a function of the number of samples each
algorithm uses. We used object manifolds with up to $M=10000$ samples
drawn from each manifold, using $80\%$ of the samples as training
set and $20\%$ as a testing set. Rather than performing classification
directly on image pixels the samples were projected to the space defined
by their $N=500$ largest principal components. For this data set
the classification problem is separable for $P=4$ and non-separable
for $P=6$.

Point SVM (defined as $SVM_{naive}^{slack}$ above) was trained with
varying numbers of training set samples, with $C$ obtained through
cross validation. The training was repeated $11$ times with different
samples to estimate the variability of the generalization error. $M^{4}$
was trained \textcolor{black}{with a constraint per manifold given
by the center of mass at the training set}

and $C$ obtained through cross-validation. At each iteration of the
algorithm, the worst-violating constraint point was found using local
search. Initialized with a random sample from each manifold, it was
compared to a set of $K$ neighboring samples in the space of potential
transformation ($K=5$ was used throughout). This process is iterated
until a set of local minima were found, and these points were candidates
to be added to the active set of the $M^{4}$ algorithm. 

Figure \ref{fig:image-based-object-manifolds}-b compares the generalization
error for a separable classification problem (at $P=4$) while Figure
\ref{fig:image-based-object-manifolds}-c compare those for a non-separable
classification problem (at $P=6$). Those representative results illustrate
that in both cases $M^{4}$ achieve a very low generalization error
(compared to point SVM) already at very small number of samples.

\textcolor{black}{}
\begin{figure}[h]
\noindent \begin{centering}
\textcolor{black}{\includegraphics[width=1\columnwidth]{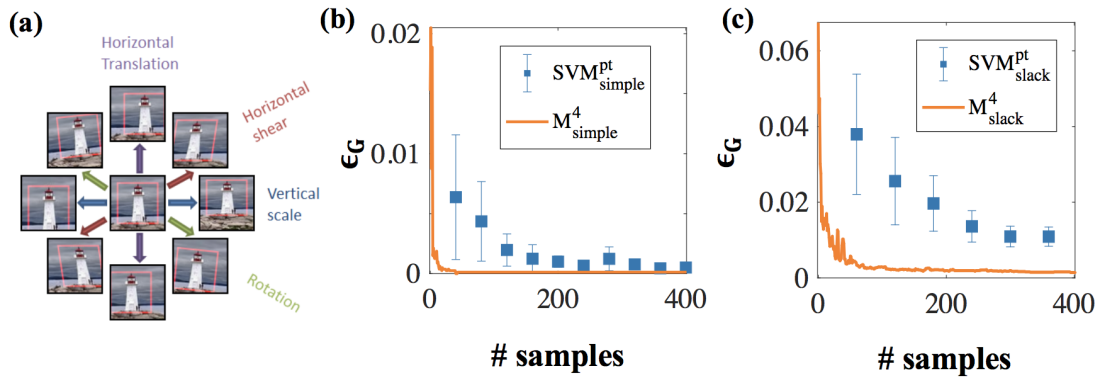}}
\par\end{centering}
\textcolor{black}{\caption{\label{fig:image-based-object-manifolds}\textbf{Image-based object
manifolds}. (a) Basic affine transformation: a template image (middle)
surrounded with changes along 4 axes defined by basic affine transformation.
A $48\times48$ square marking the object bounding box was added to
the template image for illustration purposes. (b-c) Generalization
error of the $M^{4}$ solution for 7-D image-based object manifolds,
shown as a function of the number of training samples per manifold
(solid line) compared with that of conventional point SVM (blue squares).
At $N=300$ the problem is separable for $P=4$ (b) and non-separable
for $P=6$ (c).}
}
\end{figure}

\section{\textcolor{black}{Discussion }}

\textcolor{black}{We described and analyzed a novel algorithm for
finding the maximum margin solution for classifying manifolds. The
algorithm, called $M^{4}$, is based upon a cutting-plane method and
iterates between adding the worst violating point to a finite training
set, and updating a maximum margin solution. There are two versions
of the algorithm, one without slack variable appropriate for separable
manifolds, and a slack version for non-separable manifolds. We proved
the convergence of $M^{4}$, and provided bounds on the number of
iterations required and the deviation from the optimal objective function.
On experiments with both synthetic manifolds and with actual image
manifolds, our empirical results demonstrate the efficiency of $M^{4}$
and the superior performance in terms of generalization error, compared
to conventional SVM's, using data augmentation techniques with many
virtual examples. Ongoing work includes theoretical research to understand
how $M^{4}$ explicitly scales with the number of manifolds and the
embedding dimensionality.}

\textcolor{black}{There is natural extension of $M^{4}$ to nonlinear
classifiers via the kernel trick, as all our operations involve dot
products between the weight vector $\vec{w}$ and manifold points
$M_{p}(\vec{s})$. At each iteration, the algorithm would solve the
dual version of the $SVM_{T_{k}}$ problem which is readily kernelized.
In addition, the $M^{4}$ algorithm relies upon finding a point on
a manifold with sufficiently strong violation of the constraints.
Since a local minimization of the constraint violation at each stage
is sufficient in the relaxed version of the algorithm, we expect that
this step of $M^{4}$ will be practical for simpler kernel functions.
However, we note that with infinite-dimensional kernels such as RBF's,
the full manifold optimization problem becomes a fully infinite quadratic
programming problem, no longer a QSIP which requires further theoretical
work to establish the existence and properties of optimal solutions.}

\textcolor{black}{Beyond binary classification, variations of $M^{4}$
can also be used to solve other machine learning problems including
multi-class classification, ranking, one-class learning, etc. In this
work, we have shown how $M^{4}$ can be used to classify image manifolds
at pixel input representations. We can also use this algorithm to
evaluate the computational benefits of manifold representations at
successive layers of deep networks in both machine learning and in
brain sensory hierarchies. We also anticipate using $M^{4}$ to build
novel hierarchical architectures that can incrementally reformat the
manifold representations through the layers for better overall performance
in machine learning tasks.}

\textcolor{black}{We anticipate this work will make an important contribution
to the understanding of how neural architectures can learn to process
high dimensional real-world signal ensembles and cope with large variability
due to continuous modulation of the underlying physical parameters.}

\chapter{Linear Classification of General Manifolds \label{cha:genManifolds}}

\section{Introduction}

In chapter \ref{cha:spheres}, we applied methods from statistical
mechanics of spin glasses to solve the problem of linear, max margin,
classification of manifolds with simple geometries such as lines,
as well as $L_{2}$and $L_{p}$ balls embedded in embedded in a linear
subspace with dimensions $D$, where $D$ is much smaller than the
ambient dimension $N$. In chapter \ref{cha:m4}, we presented a new
efficient algorithm for finding max margin linear classifier of manifolds. 

In this chapter, we return to the theory and consider the problem
of linear classification of general manifolds, again with embedding
dimension much smaller than $N$. To set the stage of more complex
geometries, we begin by considering the classification of $D$-dimensional
$L_{2}$ ellipsoids. The results from the analysis of ellipsoids are
readily extended to the case general smooth convex manifolds. We then
move to consider classification of non-smooth low dimensional manifolds,
which exhibit a more complex solution structure. characterized by
variety of 'support' structures. Nevertheless, we derive a set of
mean field equations that apply to general low dimensional smooth
as well as non-smooth manifolds, including also manifolds consisting
of finite number of points (point clouds). We identify key geometric
descriptors of the manifolds: the effective manifold dimension $D_{M}$
and the effective manifold radius $R_{M}$ , two geometric 'order
parameters' which determine the capacity of linear classification
of general manifolds (when their dimensionality is high), and provide
an iterative algorithm that can efficiently solve for $D_{M}$ and
$R_{M}$ , as well as general manifold capacity $\alpha_{M}$ . Finally,
we emphasize that although the data manifolds may not be convex, any
hyperplane that separates them must also separate their \emph{convex
hull}. Hence, all geometric properties discussed in this chapter refer
to \emph{convex manifolds}. 

Note that in general, the capacity of manifolds embedded in $D$ dimension
can be upper and lower bounded by 

\begin{equation}
\frac{2}{1+2D}<\alpha_{c}<2
\end{equation}

This is because in the limit where extents of a manifold in all $D$
embedding dimensions go to zero, a manifold becomes a point, whose
perceptron capacity is $\alpha_{0}=2$ \cite{gardner1988space}. In
the limit where extents of a manifold in all $D$ embedding dimensions
go to infinity, then the linear classifier $\mathbf{w}$ has be in
the subspace orthogonal to all $D$ directions of the $P$ manifolds
\cite{chung2016linear}. Since $P$ of the $D$-dimensional manifolds
occupy $PD$ dimension, the classification becomes point classification
in $N-PD$ dimension, resulting in the maximum number of manifolds
linear separable $P=2(N-PD)$, resulting in the capacity $P/N=\frac{2}{1+2D}$.
These asymptotic bounds of a manifold capacity apply for arbitrary
manifold shapes. Now let us focus on simplest extension of classification
of $L_{2}$ balls, classification of $D$ -dimensional ellipsoids. 

\section{$L_{2}$ Ellipsoids }

\subsection{Model}

Consider the problem of linearly classifying $D$-dimensional ellipsoids
(Figure. \ref{fig:EllipsoidsIllustratoin}) in $N$-dimensional ambient
space, where each point within the $\mu$th ellipsoid is expressed
as 

\begin{multline}
\left\{ \mathbf{x}_{0}^{\mu}+\sum_{i=1}^{D}s_{i}\mathbf{u}_{i}^{\mu},y^{\mu}=\pm1\right\} \label{eq:xpoint}
\end{multline}

\begin{figure}
\begin{centering}
\includegraphics[scale=0.5]{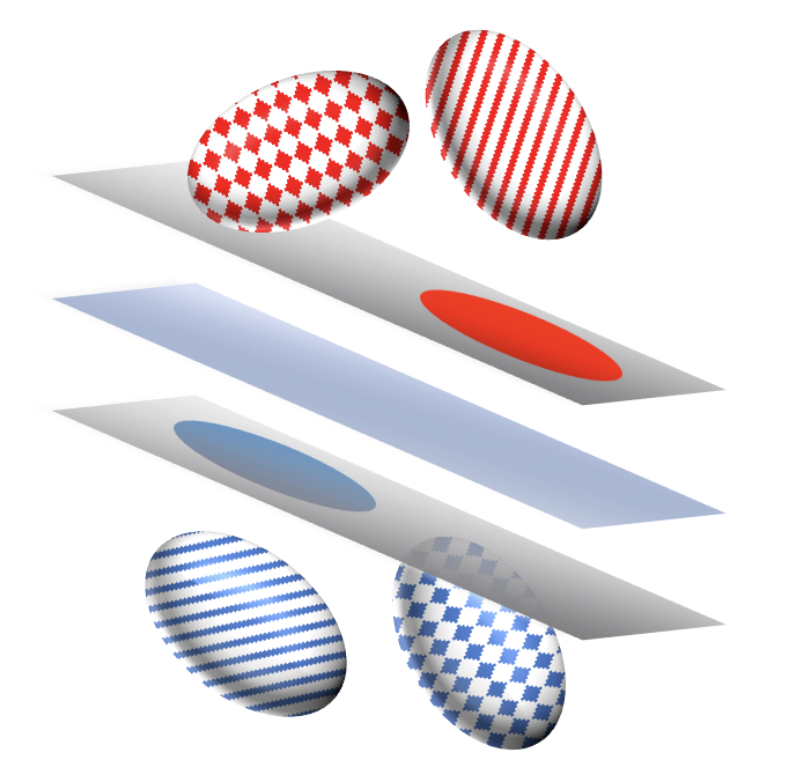}
\par\end{centering}
\caption{\textbf{Linear Classification of Ellipsoids (Illustration).} In $N$-dimensional
ambient space, the solution hyperplane (blue plane) has to separate
between red $D$-dimensional ellipsoids and blue ellipsoids with margin
$\kappa$. Margin $\kappa$ is the distance between the closest point
on the ellipsoids and the solution hyperplane. We refer to all points
of such distance to the solution hyperplane as ``margin planes''
(grey planes). Different patterns are used to denote different support
configuration of manifolds. Solid pattern: ellipsoids embedded in
the margin plane, diamond pattern: ellipsoids touching the margin
plane with one point, striped pattern: interior ellipsoids (ellipsoids
that are in the interior space shattered by margin planes). \label{fig:EllipsoidsIllustratoin}}
\end{figure}
For each $\mu$, $\mathbf{x}_{0}^{\mu}\in\mathbb{R}^{N}$ is N-dim
vector representing the center of the ellipsoid, the set of D N-dim
vectors, $\mathbf{u}_{i}^{\mu}\in\mathbb{R}^{N}$, for $i=1,...,D$
, are the ellipsoid's basis vectors. The vectors $\vec{s}\in\mathbb{R}^{D}$
parametrize the points on the manifolds and obey the constraint

\begin{equation}
f(\vec{s})\leq0\label{eq:f(s)}
\end{equation}
where, 

\begin{align}
f(\vec{s}) & =\sum_{i=1}^{D}s_{i}^{2}R_{i}^{-2}-1\label{eq:constraint_ellipsoid}
\end{align}
$R_{i}$ represent the ellipsoid's radii in the $i$ th direction,
relative to the center norm. In order to evaluate the ability of the
perceptron to classify the ellipsoids, we need to specify their statistical
properties. Here we assume that each component of $\mathbf{x}_{0}^{\mu}$,
$\mathbf{u}_{i}^{\mu}$ are independent Gaussian random variables
with unit variance. With these assumptions, and assuming large $N$,
the norm of the centers is (approximately) $\sqrt{N}$ and the $\mathbf{u}_{i}$'s
are (approximately) orthogonal vectors with norms $\sqrt{N}$ .

We assume the ellipsoids are assigned binary labels (which are therefore
the same for all points on the ellipsoid) denoted as $y^{\mu}=\pm1$
. We search of a set of weight vectors $\mathbf{w}\in\mathbb{R}^{N}$
that obey the following inequalities,
\begin{equation}
y^{\mu}\mathbf{w}^{T}\left(\mathbf{x_{\mathbf{0}}}^{\mu}+\sum_{i=1}^{D}s_{i}\mathbf{u}_{i}^{\mu}\right)\geq\kappa\left\Vert \mathbf{w}\right\Vert \quad\forall\vec{s},f(\vec{s})\leq0\label{eq:ineq}
\end{equation}
The maximum $\kappa$ that admits a solution $\mathbf{w}$ will be
called the margin of the system. Here we assume the labels for the
manifolds are assigned randomly i.i.d with probability half for $y^{\mu}=\pm1$
. The case where the fraction of positive and negative labels are
not equal (sparse labels) will be covered in Chapter \ref{cha:extensions}. 

\subsection{Fields of the Closest Point}

To classify all the points on the ellipsoids correctly, a necessary
and sufficient condition is that the weight vector $\mathbf{w}$,
satisfies the constraints on the 'worst' points on each ellipsoid
, namely the ones which are closest to the separating plane defined
by $\mathbf{w}$. To find this point for the $\mu$th manifold we
define the fields $h_{0}^{\mu}=\left\Vert \mathbf{w}\right\Vert ^{-1}y^{\mu}\mathbf{w}\cdot\mathbf{x_{0}^{\mu}}$,
which are the field (the protection on $\mathbf{w}$) induced by the
manifold's center, and $h_{i}^{\mu}=\left\Vert \mathbf{w}\right\Vert ^{-1}y^{\mu}\mathbf{w}\cdot\mathbf{u}_{i}^{\mu}\:i=1,...,D$
, which are the $D$ fields induced by the basis vectors of the manifold.
In terms of these fields, \ref{eq:ineq} can be written as

\begin{equation}
h_{0}^{\mu}+\Phi(\vec{h}^{\mu})\geq0\label{eq:phi}
\end{equation}
where

\begin{equation}
\Phi(\vec{h})=\tilde{s}(\vec{h})\cdot\vec{h}-\kappa\label{eq:minS}
\end{equation}
where

\begin{equation}
\tilde{s}(\vec{h})=\text{arg}\min_{\vec{s},\,f(\vec{s})=0}\{\vec{s}\cdot\vec{h}\}\label{eq:argminS}
\end{equation}
gives the point on the manifold which has the smallest (signed) projection
on the hyperplane $\mathbf{w}$. In order to evaluate $\tilde{s}(\vec{h})$,
we differentiate $\sum_{i=1}^{D}s_{i}h_{i}+\lambda f(\vec{s}$) with
respect to $s_{i}$, where $\lambda$ is a Lagrange multiplier enforcing
the manifold constraint, yielding in the case of ellipsoids,

\begin{equation}
\tilde{s}_{i}=-\frac{h_{i}R_{i}^{2}}{\sqrt{\sum_{j}h_{j}^{2}R_{j}^{2}}}=-\frac{h_{i}R_{i}^{2}}{\left\Vert \vec{h}\circ\vec{R}\right\Vert }\label{eq:stilde}
\end{equation}
where $\circ$ denotes element-wise product, and $||\vec{h}\circ\vec{R}||$
is the $L_{2}$ norm of the $D$-dimensional vector whose components
are $\{h_{i}R_{i}$\}. To evaluate $\Phi$, we note that $\tilde{s}\cdot\vec{h}=-\left\Vert \vec{h}\circ\vec{R}\right\Vert $,
hence, 

\begin{equation}
\Phi(\vec{h})=-\left\Vert \vec{h}\circ\vec{R}\right\Vert -\kappa\label{eq:L2 Inequality}
\end{equation}

\subsection{Mean Field Theory}

We consider a thermodynamic limit where $N,\,P\rightarrow\infty$
whereas $\alpha=P/N$$,$ D, and $\vec{R}$ are finite. Following
Gardner's framework, we compute the average of $\text{ln}V$ , where
$V$ is the volume of the space of the solutions, which in our case,
can be written as: 
\begin{equation}
V=\int d^{N}\mathbf{}\mathbf{w_{\alpha}}\delta(\mathbf{w}^{2}-N)\Pi_{\mu=1}^{P}\Theta_{\mu}(h_{0}^{\mu}+\Phi(\vec{h}^{\mu}))\label{eq:V}
\end{equation}
where $\Theta$ is the Heaviside function. We use replica theory,
$\langle\ln V\rangle=\lim_{n\rightarrow0}\frac{\langle V^{n}\rangle-1}{n}$,
where $\langle\rangle$ refers to the average over the 'quenched random
variables: the input parameters, $\mathbf{x_{0}}^{\mu}$ and $\mathbf{u}^{\mu}$
and the labels, to evaluate $\langle\ln V\rangle$ via the replica
symmetric saddle point equations. The saddle point approximation is
exact in the thermodynamic limit and the replica symmetric ansatz
holds for convex problems such as ours. These equations are expressed
in terms of the order parameter, $q=\frac{1}{N}\mathbf{\langle w}_{\alpha}\cdot\mathbf{w}_{\beta},\,\alpha\neq\beta$,
where $\mathbf{w}_{\alpha}$ and $\mathbf{w}_{\beta}$ are two typical
solutions of the classification problem.

The 'free energy' $G$ associated with $\langle V^{n}\rangle$ is
given by, 

\begin{equation}
\langle V^{n}\rangle_{x_{0},u,y}\sim e^{Nn\left[G(q)\right]}=e^{Nn\left[G_{0}(q)+\alpha G_{1}(q)\right]}\label{eq:G-1}
\end{equation}
where,

\begin{equation}
G_{0}(q)=\frac{1}{2}\ln(1-q)+\frac{q}{2(1-q)}\label{eq:G0-1}
\end{equation}
is the entropic term representing the volume of $\mathbf{w}$ subject
to the constraint that $q=\frac{1}{N}\mathbf{w}_{\alpha}\cdot\mathbf{w}_{\beta}$
. $G_{1}(q)$ embodies the constraints imposed by the classification
task and is expressed in terms of the fields $h_{0}^{\mu}$ and $\vec{h}^{\mu}$.
In the considered limit, these fields can be written as sums of two
random fields, where $t_{0}$ and $\vec{t}$ are the quenched component
resulting from the quenched random variables, namely the input vectors
$\mathbf{x}_{0}^{\mu}$ and $\mathbf{u}_{i}^{\mu}$, while the $z_{0}$and
$\vec{z}$ are the fields representing the variability of different
$\mathbf{w}$'s within the volume of solutions for each realization
of inputs and labels: 

\begin{equation}
h_{0}^{\mu}=\sqrt{q}t_{0}^{\mu}+\sqrt{1-q}z_{0}^{\mu},\;\vec{h}^{\mu}=\sqrt{q}\vec{t}^{\mu}+\sqrt{1-q}\vec{z}^{\mu}
\end{equation}

and, 

\begin{equation}
G_{1}(q)=\langle\text{ln}Z(q,t_{0,}\vec{t})\rangle_{t_{0},\vec{t}}\label{eq:G1-1}
\end{equation}
where the average wrt $t_{0}$, $\vec{t}$ denotes integrals over
the gaussian variables $t_{0}$, $\vec{t}$ with measures $Dt_{0}$
and $D\vec{t}=\pi_{i}Dt_{i}$, respectively, and 

\begin{equation}
Z(q,t_{0},t)=\int_{-\infty}^{\infty}Dz_{0}\int_{-\infty}^{\infty}D\vec{z}\Theta\left[\left(\sqrt{q}t_{0}+\sqrt{1-q}z_{0}\right)+\Phi\left(\sqrt{q}\vec{t}+\sqrt{1-q}\vec{z}\right)\right]\label{eq:Z}
\end{equation}
Finally, $q$ is determined by solving $\frac{\partial G}{\partial q}=0$
. Solution with $q<1$ indicates a finite volume of solutions. For
each $\kappa$ there is a maximum value of $\alpha$ where a solution
exists. As $\alpha$ approaches this maximal value, $q\rightarrow1$
indicating the existence of a unique solution, which is the max margin
solution for this $\alpha$. We focus on the properties of the \textit{max
margin} solution, i.e., on the limit $q\rightarrow1.$

\subsection{The Capacity Limit}

We define

\begin{equation}
Q=\frac{q}{1-q}
\end{equation}
and study the limit of $Q\rightarrow\infty$. In this limit, the leading
order for $G_{0}$ term is $G_{0}=\frac{Q}{2}$ and $G_{1}$ can be
evaluated by a saddle point approximation of the $z_{0}$and $\vec{z}$
integrals, 

\begin{equation}
\ln Z(t_{0},\vec{t})=-\min_{z_{0},\vec{z},\sqrt{Q}t_{0}+z_{0}+\Phi\left(\sqrt{Q}\vec{t}+\vec{z}\right)>0}\frac{1}{2}\left[z_{0}^{2}+\left\Vert \vec{z}\right\Vert ^{2}\right]
\end{equation}
Scaling the variables $z_{0}$ and $z$ such that $z_{0}\rightarrow\sqrt{Q}z_{0}$
and $z\rightarrow\sqrt{Q}z$ and using the fact that $\Phi(\vec{h})$
is linear in the magnitude of $\vec{h}$ to write $\text{\ensuremath{\ln}}Z(t_{0},\vec{t})=-\frac{Q}{2}F(t_{0},\vec{t})$
, 

\begin{equation}
F(t_{0,}t)=\min_{z_{0},\vec{z},t_{0}+z_{0}+\Phi\left(\vec{t}+\vec{z}\right)>0}\left[z_{0}^{2}+\left\Vert \vec{z}\right\Vert ^{2}\right]\label{eq:logZSpeheres-1-1}
\end{equation}
where $\Phi(\vec{h})=-\left\Vert \vec{h}\circ\vec{R}\right\Vert -\kappa$
(Eqn. \ref{eq:L2 Inequality}). Finally, 

\begin{equation}
\langle\ln V\rangle=\frac{Q}{2}\left[1-\alpha\langle F(t_{0},\vec{t})\rangle_{t_{0},\vec{t}}\right]\label{eq:logVLargeQ}
\end{equation}
so the capacity, defined by vanishing $\langle\ln V\rangle$is given
by,

\begin{equation}
\alpha_{E}^{-1}(\kappa)=\langle F(t_{0},\vec{t})\rangle_{t_{0},\vec{t}}\label{eq:EllipsoidCapacity}
\end{equation}
where the subscript $E$ stands for ellipsoids. For each $\vec{t}$
the nature of the solution to the $\min$ operation in \ref{eq:EllipsoidCapacity}
depends on $t_{0}$ yielding three regimes of $t_{0}$ with qualitatively
different contributions to the capacity, as described below. 

\subsubsection{Regime 1 (Interior Manifolds): $t_{0}-\kappa>\left|\vec{T}\right|$}

where,

\begin{equation}
\vec{T}=\vec{t}\circ\vec{R}\label{eq:T}
\end{equation}
In this case, the solution is $z_{0}=z=0$ and does not contribute
to Eq. \ref{eq:EllipsoidCapacity}. 

For values of $t_{0}-\kappa\leq\left|\vec{T}\right|$, the solution
obeys

\begin{equation}
t_{0}+z_{0}+\Phi\left(\vec{t}+\vec{z}\right)=0\label{eq:t0etc}
\end{equation}
meaning that the closest point is on the margin plane. This regimes
is divided into two cases: 

\subsubsection{Regime 2 (Touching Manifolds): $t_{C}<t_{0}-\kappa<\left|\vec{T}\right|$}

where,

\begin{equation}
t_{C}=-\sqrt{\sum_{i}R_{i}^{-2}t_{i}^{2}}\label{eq:tc}
\end{equation}
Here, $t_{0}+z_{0}+\Phi\left(\vec{t}+\vec{z}\right)=0$ but $h_{0}=t_{0}+z_{0}>\kappa$,
implying that the ellipsoid center is an interior point; in other
words, the ellipsoid touches the margin plane only at a single point.
Thus, for a given $t_{0}$ and $\vec{t}$ we need to solve

\begin{equation}
\min_{\vec{z}}\left[z_{0}^{2}+\left\Vert \vec{z}\right\Vert ^{2}\right]
\end{equation}
where $z_{0}=-t_{0}-\Phi\left(\vec{t}+\vec{z}\right)$. Differentiating
with respect to $\vec{z}$ yields, $\vec{z}=z_{0}\partial_{\vec{z}}\Phi=z_{0}\partial_{\vec{h}}\{\vec{s}\cdot\vec{h}\},$
namely,

\textcolor{black}{
\begin{equation}
\vec{z}=z_{0}\vec{s}\label{eq:z0s}
\end{equation}
 where from now on, unless otherwise specified, $\vec{s}$will be
a shorthand of $\tilde{s}(\vec{h})=\tilde{s}(\vec{t}+\vec{z}).$ Note
that this is a self consistent equation for $\vec{s}$ due to }\ref{eq:z0s}\textcolor{black}{.
This yields also, $\Phi=\vec{s}\cdot(\vec{t}+z_{0}\vec{s})-\kappa$,
hence}

\textcolor{black}{
\begin{equation}
z_{0}=\frac{(\kappa-t_{0}-\vec{t}\cdot\vec{s})}{(1+s^{2})}\label{eq:phi2}
\end{equation}
Finally, $z_{0}^{2}+z^{2}=z_{0}^{2}(1+s^{2})$ yielding,}

\textcolor{black}{
\begin{equation}
F(t_{0,}t)=\frac{(\kappa-\vec{t}\cdot\vec{s}-t_{0})^{2}}{1+s^{2}}\label{eq:F}
\end{equation}
}To conclude the evaluation of $F$ we need to calculate $\vec{s}$.
Eq, \ref{eq:stilde}, for the ellipsoid, yields,
\begin{equation}
s_{i}=-\frac{H_{i}R_{i}}{\left\Vert \vec{H}\right\Vert }
\end{equation}

with $\vec{H}\equiv\vec{h}\circ\vec{R}=(\vec{t}+z_{0}\vec{s}$$)\circ\vec{R}$.
Substituting in the above equation, one obtains,

\begin{equation}
s_{i}=-\frac{R_{i}^{2}t_{i}}{||\vec{H}||+z_{0}R_{i}^{2}}\label{eq:si}
\end{equation}
which yields an equation of $\vec{s}(\vec{t})$ in terms of $||\vec{H}||$
and $z_{0}$. These two scalars are related through 

\begin{equation}
z_{0}=-t_{0}-\Phi\left(\vec{t}+\vec{z}\right)=-\kappa-t_{0}+||\vec{H}||\label{eq:z0}
\end{equation}
where $||\vec{H}||=\vec{s}\cdot(\vec{t}+z_{0}\vec{s})$ and $\Phi$
is given by Eqn. \ref{eq:L2 Inequality}. Finally, an equation for
$z_{0}$ can be derived from the ellipsoid constraint $f(\vec{s})=0$, 

\begin{equation}
1=\sum_{i}s_{i}^{2}R_{i}^{-2}\label{eq:s2}
\end{equation}

To summarize, Eqns. \ref{eq:si} -\ref{eq:s2} yields $\vec{s}(\vec{t},t_{0})$
which we use to evaluate $F$ , Eq. \ref{eq:F}.

\subsubsection{Regime 3 (Embedded Manifolds): $t_{0}-\kappa<t_{C}$}

Here $\vec{h}=\vec{t}+\vec{z}=0$, and $h_{0}=t_{0}+z_{0}=\kappa,$
implying that the center as well as the entire manifold is on the
margin plane, hence 

\begin{equation}
F(t_{0},\vec{t})=\left(t_{0}-\kappa\right)^{2}+\left\Vert \vec{t}\right\Vert ^{2}
\end{equation}

Finally, combining contributions from regimes 2 and 3, the expression
of the capacity is 

\noindent\fbox{\begin{minipage}[t]{1\columnwidth \fboxsep \fboxrule}%
\begin{equation}
\alpha_{E}^{-1}(\kappa)=\int D\vec{t}\,\int_{\kappa+t_{C}(\vec{t})}^{\kappa+\vec{|T|}}Dt_{0}\left[\frac{(\kappa-\vec{t}\cdot\vec{s}-t_{0})^{2}}{1+s^{2}}\right]+\int D\vec{t}\,\int_{-\infty}^{\kappa+t_{C}(\vec{t})}Dt_{0}\left[\left(t_{0}-\kappa\right)^{2}+\left\Vert \vec{t}\right\Vert ^{2}\right]\label{eq:EllipsoidCapacityExpression}
\end{equation}

\begin{equation}
\vec{|T}|=\sqrt{\sum_{i}R_{i}^{2}t_{i}^{2}}\label{eq:T-1-1}
\end{equation}

\begin{equation}
t_{C}=-\sqrt{\sum_{i}R_{i}^{-2}t_{i}^{2}}\label{eq:tc-1-1}
\end{equation}

In the first integral, $\vec{s}$ is given by,

\begin{equation}
s_{i}=-\frac{R_{i}^{2}t_{i}}{\kappa+t_{0}+z_{0}(1+R_{i}^{2})}\label{eq:si_elps_analytic}
\end{equation}

and $z_{0}(\vec{t},t_{0})$ is evaluated by solving, 

\begin{equation}
1=\sum_{i}\frac{R_{i}^{2}t_{i}^{2}}{(\kappa+t_{0}+z_{0}(1+R_{i}^{2}))^{2}}\label{eq:z0_elps_analytic}
\end{equation}
\end{minipage}} 

\subsection{The Large D limit}

If the size of the ellipsoid is not small, we expect the capacity
to be small (of order $1/D$, see Eqn. \ref{eq:alphaEdE_largeR}).
On the other hand, when the radii are small the capacity should be
order 1 as in the case of points. We inquire how small $R_{i}$'s
should be in order to yield a finite capacity even when $D$ is large.
The answer is provided by a scaling analysis, below.

\subsubsection{Large $D$ , $R_{i}=O(1)$}

In this limit, $||\vec{T|}|,-t_{C}=O(D^{1/2})$, so integral bounds
in the first term of \ref{eq:EllipsoidCapacityExpression} can be
taken to $\pm\infty.$ From \ref{eq:s2}, $s_{i}=O(D^{-1/2}$) and
from Eqns \ref{eq:z0}, $z_{0}\approx||\vec{H}||=O(D^{1/2})$ . 

\begin{equation}
s_{i}=-\frac{R_{i}^{2}t_{i}}{z_{0}(1+R_{i}^{2})}
\end{equation}

and from the normalization,

\begin{equation}
z_{0}^{2}\approx\langle z_{0}^{2}\rangle=\sum_{i=1}^{D}\frac{R_{i}^{2}t_{i}^{2}}{(1+R_{i}^{2})^{2}}\approx\sum_{i=1}^{D}\frac{R_{i}^{2}}{(1+R_{i}^{2})^{2}}\label{eq:z0_LargeD_RO1}
\end{equation}

where we have replaced $t_{i}^{2}\approx1$ under the summation. Similarly,
\begin{equation}
s^{2}\approx\langle s^{2}\rangle=\frac{1}{z_{0}^{2}}\sum_{i=1}^{D}\frac{R_{i}^{4}}{(1+R_{i}^{2})^{2}}=O(1)
\end{equation}

\begin{equation}
\vec{t}\cdot\vec{s}\approx\langle\vec{t}\cdot\vec{s}\rangle=-\frac{1}{z_{0}}\sum_{i=1}^{D}\frac{R_{i}^{2}}{(1+R_{i}^{2})}=O(D^{1/2})
\end{equation}

Hence,

\begin{equation}
\alpha_{E}^{-1}\approx\frac{\langle\vec{t}\cdot\vec{s}\rangle^{2}}{1+\langle s^{2}\rangle}\quad\text{when }D\gg1,R_{i}=O(1)
\end{equation}

which is of order $D$ as expected. 

\textbf{Effective Dimensionality and Radius:} These results suggest
to express the capacity by introducing the ellipsoid effective dimension
($D_{E}$) and radius ($R_{E}$), as follows,

\begin{equation}
\alpha_{E}^{-1}=\frac{R_{E}^{2}D_{E}}{1+R_{E}^{2}}\quad\text{when }D\gg1,R_{i}=O(1)\label{eq:alphaE_highD}
\end{equation}

\begin{equation}
R_{E}^{2}=\langle s^{2}\rangle=\sum_{i}\frac{R_{i}^{4}}{(1+R_{i}^{2})^{2}}/\sum_{j}\frac{R_{j}^{2}}{(1+R_{j}^{2})^{2}}\quad\text{when }D\gg1,R_{i}=O(1)\label{eq:RE_Rorder1}
\end{equation}

\begin{equation}
D_{E}=\left(\sum_{i}\frac{R_{i}^{2}}{1+R_{i}^{2}}\right)^{2}/\sum_{i}\frac{R_{i}^{4}}{(1+R_{i}^{2})^{2}}\quad\text{when }D\gg1,R_{i}=O(1)\label{eq:DE_Rorder1}
\end{equation}

Thus, \emph{the capacity of ellipsoids in large $D$ is equivalent
to that of $L_{2}$balls with radii $R_{E}$and dimensionality $D_{E}.$}

\subsubsection{Large $D$, Large $R$ Regime }

Finally, when most of the $R_{i}$ are large, $R_{E}\gg1$ and 

\begin{equation}
\alpha_{E}^{-1}=D_{E}=D\quad\text{when }D\gg1,R_{i}\gg1\label{eq:alphaEdE_largeR}
\end{equation}

In this case, $\mathbf{w}$ is orthogonal to the basis vectors with
large $R_{i}$. 

\subsubsection{Scaling Regime: Large $D$, $R_{i}\propto D^{-1/2}$}

The above results suggest that when the radii are small such that,
$R_{E}\propto D_{E}^{-1/2}$ the capacity becomes order 1. Thus, the
scaling relation $R_{i}\propto D^{-1/2}$ denotes the regime of finite
capacity, namely the balance between large dimension and small size
that maintains a finite capacity. This regime requires its own analysis
of the various terms that contributes to the capacity. First, 

\begin{equation}
||\vec{T|}|\approx||\vec{R}||=O(1)
\end{equation}

\begin{equation}
-t_{C}=O(D^{1/2})
\end{equation}
So the integral bounds in the first term of \ref{eq:EllipsoidCapacityExpression}
is from $-\infty$ to $\kappa+||\vec{R}||$ and the second term is
negligible. From \ref{eq:s2}, $s_{i}=O(D^{-1})$, and $||\vec{H}||=\sqrt{\sum_{i}h_{i}^{2}R_{i}^{2}}=O(1)$
and from Eqns \ref{eq:z0}, $z_{0}\approx\kappa-t_{0}+||\vec{H}||=O(1)$.
Hence, $z_{0}R_{i}^{2}=O(D^{-1})$. Then, from \ref{eq:si}, 

\begin{equation}
s_{i}\approx-\frac{R_{i}^{2}t_{i}}{||\vec{H}||}=O(D^{-1})
\end{equation}

as expected. 

And from this normalization, 

\begin{equation}
s^{2}\approx\langle s^{2}\rangle=\frac{1}{||\vec{H}||^{2}}\sum_{i}R_{i}^{4}t_{i}^{2}=\frac{1}{||\vec{H}||^{2}}\sum_{i}R_{i}^{4}=O(1)O(D*D^{-2})=O(D^{-1})
\end{equation}

\begin{equation}
\vec{t}\cdot\vec{s}\approx\left\langle \vec{t}\cdot\vec{s}\right\rangle =-\sum_{i}\frac{R_{i}^{2}}{||\vec{H}||}=-\frac{||\vec{R}||^{2}}{||\vec{H}||}=0(1)
\end{equation}

and from normalization, 

\begin{equation}
1=\sum_{i}s_{i}^{2}R_{i}^{-2}\approx\frac{1}{||\vec{H}||^{2}}\sum_{i}R_{i}^{2}\label{eq:s2-1}
\end{equation}

implying $||\vec{H}||=||\vec{R}||$ and $\vec{t}\cdot\vec{s}=-||\vec{R}||$
. Hence,

\begin{equation}
\alpha_{E}^{-1}=\frac{\int_{-\infty}^{\kappa+||\vec{R}||}Dt_{0}(\kappa+||\vec{R}||-t_{0})^{2}}{1+\left\langle s^{2}\right\rangle }
\end{equation}

Although $\left\langle s^{2}\right\rangle =||\vec{R}\circ\vec{R}||^{2}/||\vec{R}||^{2}$
is a correction of order $D^{-1}$ , we will keep it because it turns
out to be important to keep in simulations. 

We can express these results in terms of the effective dimensionality
and radius introduced above. In the limit of small $R$s these quantities
reduce to,

\begin{equation}
R_{E}^{2}=\frac{\sum_{i}R_{i}^{4}}{\sum_{i}R_{i}^{2}}=O(D^{-1})\quad\text{when }D\gg1,R_{i}\lesssim O\left(D^{-1/2}\right)\label{eq:RE_elps_scale}
\end{equation}

\begin{equation}
D_{E}=\frac{(\sum_{i}R_{i}^{2})^{2}}{\sum_{i}R_{i}^{4}}=O(D)\quad\text{when }D\gg1,R_{i}\lesssim O\left(D^{-1/2}\right)\label{eq:DE_elps_scale}
\end{equation}

and, 

\begin{equation}
||\vec{R}||^{2}=R_{E}^{2}D_{E}
\end{equation}

Using these quantities, and the formula for the point capacity, $\alpha_{0}^{-1}(\kappa)=\int_{-\infty}^{\kappa}Dt(t-\kappa)^{2},$
we can write,

Capacity for Ellipsoids: 

\begin{equation}
\alpha_{E}(\kappa)=(1+R_{E}^{2})\alpha_{0}(\kappa+R_{E}\sqrt{D_{E}})\quad\text{when }D\gg1,R_{i}\lesssim O\left(D^{-1/2}\right)\label{eq:alphacScaling}
\end{equation}

with $R_{E}$ and $D_{E}$ are defined by Eqns \ref{eq:RE_elps_scale}-\ref{eq:DE_elps_scale}
when $D\gg1,R_{i}\lesssim D^{-1/2}$, and $R_{E}\sqrt{D_{E}}$ behaves
like an additional margin $\kappa_{E}=R_{E}\sqrt{D_{E}}$ introduced
by the ellipsoid structure. 

Interestingly, in the scaling regime, the effective dimension for
the ellipsoids is equivalent to another measure of dimension, $D_{svd}$,
called the participation ratio \cite{amir2016non,rajan2010stimulus,litwin2017optimal},
defined by 

\begin{equation}
D_{svd}=\frac{\left(\sum_{i=1}^{D}\lambda_{i}^{2}\right)^{2}}{\sum_{i=1}^{D}\lambda_{i}^{4}}\label{eq:Dsvd}
\end{equation}
 where $\lambda_{i}$ is an eigenvalue of single value decomposition
(not normalized), whereas for$D_{E}$, $R_{i}$ is a radius in $i$
th dimension of an ellipsoid. $R_{i}$ and $\lambda_{i}$ are closely
related, as both definitions are measures for how extended data are
in the $i$th dimension. Particularly when $s_{i}R_{i}^{-1}$ are
uniformly sampled from a sphere, then $R_{i}$ and $\lambda_{i}$
are proportional to each other (Lemma \ref{lem:RiLambdai} in Appendix
to the chapter) . 

\subsubsection{Combined Expression for the Capacity in Large $D$}

Finally, we note that we can combine the results for all the above
regimes can be expressed by a single set of equations. 

\noindent\fbox{\begin{minipage}[t]{1\columnwidth \fboxsep \fboxrule}%
For large $D$, 

\begin{equation}
\alpha_{E}(\kappa)=(1+R_{E}^{2})\alpha_{0}(\kappa+\kappa_{E})\label{eq:alphacLargeD}
\end{equation}

\begin{equation}
\kappa_{E}=R_{E}\sqrt{D_{E}}\label{eq:kappaE_LargeD}
\end{equation}

\begin{equation}
R_{E}^{2}=\langle s^{2}\rangle=\sum_{i}\frac{R_{i}^{4}}{(1+R_{i}^{2})^{2}}/\sum_{j}\frac{R_{j}^{2}}{(1+R_{j}^{2})^{2}}\label{eq:RE_LargeD}
\end{equation}

\begin{equation}
D_{E}=\left(\sum_{i}\frac{R_{i}^{2}}{1+R_{i}^{2}}\right)^{2}/\sum_{i}\frac{R_{i}^{4}}{(1+R_{i}^{2})^{2}}\label{eq:DE_LargeD}
\end{equation}
\end{minipage}}

Finally, we note that the definition of the ellipsoid dimension above,
$D_{E}$, is not invariant to a global scale of all radii, except
in the regime of small $R_{i}$ . The reason is that the separation
of the manifolds depend not only on their intrinsic geometry but also
on their distance from the common origin. Thus both dimensionality
and radii take into account the center norm. Indeed, the size scale
$1$ appearing in the definition of $D_{E}$ above is the scale of
the center norm. This is reflected in the numerical evaluation of
$D_{E}$ in Figure \ref{fig:EllipsoidsResults} below. 

\subsection{Support Manifold Structures }

It is instructive to consider the types of manifold support structures
that arise. In general, the fraction of touching ellipsoids is

\begin{equation}
p_{touching}^{E}=\int D\vec{t}\,\int_{\kappa+t_{C}(\vec{t})}^{\kappa+\vec{||T||}}Dt_{0}
\end{equation}

The fraction of embedded ellipsoids is

\begin{equation}
p_{embedded}^{E}=\int D\vec{t}\,\int_{-\infty}^{\kappa+t_{C}(\vec{t})}Dt_{0}
\end{equation}

The fraction of interior ellipsoids is

\begin{equation}
p_{interior}^{E}=\int D\vec{t}\,\int_{\kappa+\vec{||T||}}^{\infty}Dt_{0}
\end{equation}

\subsubsection{Large D Limit }

Here we consider the limit of large $D$, and analyze the behavior
of support structures in different regimes of $R_{i}$. 

\paragraph{Large $D$ , $R_{i}=O(1)$}

In this limit $||\vec{T|}|,-t_{C}=O(D^{1/2})$, so that 

\begin{equation}
p_{touching}^{E}=\int D\vec{t}\,\int_{\kappa+t_{C}(\vec{t})}^{\kappa+\vec{||T||}}Dt_{0}=\int D\vec{t}\,\int_{-\infty}^{\infty}Dt_{0}=1
\end{equation}
, and 
\begin{equation}
p_{embedded}^{E}=\int D\vec{t}\,\int_{-\infty}^{\kappa+t_{C}(\vec{t})}Dt_{0}=\int D\vec{t}\,\int_{-\infty}^{-\infty}Dt_{0}=0
\end{equation}
 and 
\begin{equation}
p_{interior}^{E}=\int D\vec{t}\,\int_{\kappa+\vec{||T||}}^{\infty}Dt_{0}=0
\end{equation}
implying that all of the manifolds are touching the hyperplane in
this regime.

\paragraph{Large $D$ , $R_{i}\propto D^{-1}$: Scaling Regime }

In this limit, $||\vec{T|}|\approx||\vec{R}||=O(1)$ and $-t_{C}=O(D^{1/2}).$
Therefore, 

\begin{equation}
p_{touching}^{E}=\int D\vec{t}\,\int_{\kappa+t_{C}(\vec{t})}^{\kappa+\vec{||T||}}Dt_{0}=\int D\vec{t}\,\int_{-\infty}^{\kappa+\vec{||T||}}Dt_{0}=\left\langle 1-H(\kappa+|\vec{t}\circ\vec{R}|\})\right\rangle _{\vec{t}}
\end{equation}

\begin{equation}
p_{embedded}^{E}=\int D\vec{t}\,\int_{-\infty}^{\kappa+t_{C}(\vec{t})}Dt_{0}=\int D\vec{t}\,\int_{-\infty}^{-\infty}Dt_{0}=0
\end{equation}

and 

\begin{equation}
p_{interior}^{E}=1-p_{embedded}^{E}-p_{touching}^{E}=\left\langle H(\kappa+|\vec{t}\circ\vec{R}|\})\right\rangle _{\vec{t}}
\end{equation}

implying that there is no manifold embedded, but most of the manifolds
are either touching the margin plane or in the interior space. 

\subsection{Remarks}

It is notable that the capacity of ellipsoids in the high $D$ limit
(Eqn. \ref{eq:alphaE_highD}) resembles that of $L_{2}$ balls (Eqn.
\ref{eq:alphaD}), with an effective dimension $D_{E}$ and radius
$R_{E}$. The support structures of the ellipsoids also behave similarly
to the spherical $L_{2}$ balls in Chapter \ref{cha:spheres}, exhibiting
three regimes of support structures (embedded, touching and interior)
and in high $D$, non of the manifolds are embedded in the margin
plane, and the fraction of touching manifolds increase like $1-H(\kappa+R\sqrt{D})$.
In the next section, this analogy extends to more general case of
arbitrary smooth convex manifolds, and we show the replica treatment
of the smooth convex manifolds. 

\section{General Smooth Convex Manifolds }

\subsection{Model }

We now consider the problem of linear binary classification of points
on convex smooth manifolds. We define a smooth convex manifold as
a compact convex manifold in Euclidean space with convex and twice
differentiable bounding curve. It is useful to parametrize such a
manifold as the set of all points, $\mathbf{x}$ in $\mathbb{R}^{N}$,
of the form 

\begin{equation}
\mathbf{x}_{0}+\sum_{i=1}^{D}s_{i}\mathbf{u}_{i}
\end{equation}
where $\mathbf{x}_{0}$ and $\mathbf{u}_{i}$ are (linearly independent)
vectors in $\mathbb{R}^{N}$ and the $D$-dimensional vector $\vec{s}$
obeys the constraint $f(\vec{s})\leq0,$ where $f$: $\mathbb{R}^{D}\rightarrow\mathbb{R}$
is a twice differentiable convex function. We will refer to $\mathbf{x}_{0}$
as the center of the manifold and to $\mathbf{u}_{i}$ as its $D$
axes. Examples of smooth and non-smooth convex manifolds are provided
(Smooth: Figures \ref{fig:EllipsoidsIllustratoin}-\ref{fig:GeneralSmooth_Illustration1},
Non-smooth: Figure \ref{fig:L1embedding}). Our data consists of $P$
such manifolds and their target binary labels denoted as $y^{\mu},\:\mu=1,...,P$.
We search of a set of weights $\mathbf{w}\in\mathbb{R}^{N}$ that
obey the following inequalities,
\begin{equation}
y^{\mu}\mathbf{w}^{T}\left(\mathbf{x_{\mathbf{0}}}^{\mu}+\sum_{i=1}^{D}s_{i}\mathbf{u}_{i}^{\mu}\right)\geq\kappa\left\Vert \mathbf{w}\right\Vert \quad\forall\vec{s},f(\vec{s})\leq0\label{eq:ineq-1}
\end{equation}
In order to evaluate the ability of the perceptron to classify the
manifolds, we need to specify their statistical properties. As before,
we assume that each component of $\mathbf{x}_{0}^{\mu}$, $\mathbf{u}_{i}^{\mu}$
are independent Gaussian random variables with unit variance. With
these assumptions, and assuming large $N$, the centers have norm
$\sqrt{N}$ and the $\mathbf{u}_{i}$'s are orthogonal vectors with
norms $\sqrt{N}$ . 

Similar to the replica calculation for ellipsoids, we consider the
thermodynamic limit $N,P\rightarrow\infty$. We assume the manifold
embedding dimension, $D$ is finite in the thermodynamic limit, and
that the function $f(\vec{s}$) is independent of $N$ . 

\subsection{Fields of the Closest Point }

Given $\mathbf{w}$, we define the fields induced by the centers $h_{0}^{\mu}$
and the fields induced by the basis vectors $\vec{h}^{\mu}$ as $h_{0}^{\mu}=\left\Vert \mathbf{w}\right\Vert ^{-1}y^{\mu}\mathbf{w}\cdot\mathbf{x_{0}^{\mu}}$,
which are the field induced by the manifold centers, and $h_{i}^{\mu}=\left\Vert \mathbf{w}\right\Vert ^{-1}y^{\mu}\mathbf{w}\cdot\mathbf{u}_{i}^{\mu}\:i=1,...,D$
. Using these fields we can express the constraints \ref{eq:ineq-1}
by Eqn \ref{eq:phi} and \ref{eq:minS} corresponding to the point
on the manifolds with the smallest projection on the margin hyperplane
of $\mathbf{w}$. The evaluation of $\tilde{s}(\vec{h})$ requires
the differentiation of$\sum_{i=1}^{D}s_{i}h_{i}+\lambda f(\vec{s}$)
with respect to $s_{i}$, where $\lambda$ is a Lagrange multiplier
enforcing the manifold constraint, yielding, 

\begin{equation}
h_{i}=-\lambda\partial_{s_{i}}f(\vec{s}),\ f(\vec{s})=0\label{eq:minSgenM}
\end{equation}
which needs to be solved for $\tilde{s}=\vec{s}$ and substitute in
$\Phi(\vec{h})=\tilde{s}(\vec{h})\cdot\vec{h}-\kappa$. The relation
between the vector $\vec{h}$and $\tilde{s}(\vec{h})$ is shown in
Fig. \ref{fig:GeneralSmooth_Illustration1}(a). 

\subsection{Mean Field Equations of the Capacity}

We use the replica theory to evaluate the limit where the volume of
solutions vanishes. Similar to the replica calculation of ellipsoids,
above, the equation capacity is given by 

\begin{equation}
\alpha_{M}^{-1}(\kappa)=\langle F(t_{0},\vec{t})\rangle_{t_{0},\vec{t}}
\end{equation}

where $M$ stands for manifolds 

\begin{equation}
F(t_{0,}t)=\min_{z_{0},\vec{z},t_{0}+z_{0}+\Phi\left(\vec{t}+\vec{z}\right)>0}\left[z_{0}^{2}+\left\Vert \vec{z}\right\Vert ^{2}\right]
\end{equation}

where 
\begin{equation}
\Phi(\vec{h})=\tilde{s}(\vec{h})\cdot\vec{h}-\kappa
\end{equation}
and $\tilde{s}$ is the parameterization of the point on the manifold
that is closest to the solution hyperplane characterized by $\vec{h}$
(minimizing $\Phi)$, given by \ref{eq:minSgenM} (Figure . \ref{fig:GeneralSmooth_Illustration1}(a)) 

\subsubsection{Regime 1 (Interior Manifolds): $t_{0}-\kappa>-\vec{t}\cdot\vec{s}(\hat{t})$}

In this regime, $z_{0}=z=0$ so that the fields $h_{0}$and $\vec{h}$are
simply $t_{0}$ and $\vec{t}$, $\vec{s}$=$\tilde{s}(\vec{t})$ and
$\Phi(\vec{t}+\vec{z})=\Phi(\vec{t})$. This regime corresponds to
the case where all manifolds are interior and do not contribute to
$F$ . The regime exists until the inequality $t_{0}+\Phi(\vec{t})\geq0$
becomes equality, i.e., $t_{0}-\kappa=$$-\vec{t}\cdot\vec{s}(\hat{t}).$ 

\subsubsection{Regime 2 (Touching Manifolds): $t_{C}<t_{0}-\kappa<-\vec{t}\cdot\vec{s}(\hat{t})$}

Here, $t_{0}+z_{0}+\Phi\left(\vec{t}+\vec{z}\right)=0$ but $h_{0}=t_{0}+z_{0}\neq\kappa$,
implying that the manifold' center is an interior point; in other
words, the manifold touches the margin plane only at a single point.
Thus, for a given $t_{0}$and $\vec{t}$ we need to solve

\begin{equation}
\min_{\vec{z}}\left[z_{0}^{2}+\left\Vert \vec{z}\right\Vert ^{2}\right]
\end{equation}

where $z_{0}=-t_{0}-\Phi\left(\vec{t}+\vec{z}\right)$. Differentiating
with respect to \textcolor{black}{$\vec{z}$ yields, $\vec{z}=z_{0}\partial_{\vec{z}}\Phi=z_{0}\partial_{\vec{h}}\{\vec{s}\cdot\vec{h}\},$
namely,}

\textcolor{black}{
\begin{equation}
\vec{z}=z_{0}\vec{s}
\end{equation}
}

\textcolor{black}{(where we have changed notation from $\tilde{s}$to
$\vec{s}$). This yields also, $\Phi=\vec{s}\cdot(\vec{t}+z_{0}\vec{s})-\kappa$,
hence}

\textcolor{black}{
\begin{equation}
z_{0}=(\kappa-t_{0}-\vec{t}\cdot\vec{s})(1+s^{2})^{-1}\label{eq:phi2-1}
\end{equation}
}

\textcolor{black}{Finally, $z_{0}^{2}+z^{2}=z_{0}^{2}(1+s^{2})$ yielding,}

\textcolor{black}{
\begin{equation}
F(t_{0,}t)=\frac{(\kappa-\vec{t}\cdot\vec{s}-t_{0})^{2}}{1+s^{2}}\label{eq:F-1}
\end{equation}
}

\textcolor{black}{
\begin{equation}
z_{0}=-\left(t_{0}-\kappa\right)+\Phi(\vec{z}+\vec{t})
\end{equation}
}

This regime holds as long as the interior fields $\vec{t}+\vec{z}$
are non zero. The lower limit of this regime is when $t_{0}$ is such
that these fields vanish, i.e.,$\vec{z}\rightarrow-\vec{t}$, hence,

\begin{equation}
\vec{t}+z_{0}\vec{s}(\vec{h})\rightarrow0
\end{equation}
so that $\vec{s}$itself is antiparallel to $\vec{t}$, 

\begin{equation}
\vec{s}(\vec{h})=-z_{0}^{-1}\vec{t}\label{eq:regime3_s}
\end{equation}
and where $z_{0}=\kappa-t_{0}$ , hence $t=-z_{0}s$ yielding for
the lower limit of this regime, 

\begin{equation}
t_{C}=\kappa-\frac{t}{s_{C}(\vec{t})}
\end{equation}
where, $s_{C}(\vec{t})$ is the magnitude of the point $\vec{s}^{*}(\vec{h})$
defined by a vector $\vec{h}$ such that $\vec{s}^{*}(\vec{h})$ is
parallel to $\vec{t}$. Thus, $s_{C}(\vec{t})$ is simply the magnitude
of the intersection of $\vec{t}$ with the manifold, Figure \ref{fig:GeneralSmooth_Illustration1}(b). 

\begin{equation}
F=\frac{(-\vec{t}\cdot\vec{s}-\left(t_{0}-\kappa\right))^{2}}{1+s^{2}}
\end{equation}

where the $D$-dim vector $\vec{s}$ has to be calculated self-consistently
through, 
\begin{equation}
\vec{s}=\tilde{s}(\vec{t}-z_{0}\vec{s})
\end{equation}
\begin{equation}
z_{0}=\frac{-\vec{t}\cdot\vec{s}+\kappa-t_{0}}{1+s^{2}}
\end{equation}

\subsubsection{Regime 3 (Embedded Manifolds): $t_{0}-\kappa<-t_{C}$}

Here $\vec{h}=\vec{t}+\vec{z}=0$, and $h_{0}=t_{0}+z_{0}=\kappa,$
implying that the center point as well as the entire manifold is on
the margin plane, hence 

\begin{equation}
F(t_{0},\vec{t})=\left(t_{0}-\kappa\right)^{2}+\left\Vert \vec{t}\right\Vert ^{2}
\end{equation}

Putting results from the two regimes, we get: 

\noindent\fbox{\begin{minipage}[t]{1\columnwidth \fboxsep \fboxrule}%
\begin{equation}
\alpha_{M}^{-1}=\int D\vec{t}\int_{\kappa-t/s_{C}}^{\kappa-\vec{t}\cdot\vec{s}(\vec{t})}Dt_{0}\frac{(-\vec{t}\cdot\vec{s}-t_{0}+\kappa)^{2}}{1+s^{2}}+\int D\vec{t}\int_{-\infty}^{\kappa-t/s_{C}}Dt_{0}([t_{0}-\kappa]^{2}+t^{2})\label{eq:alpha_M}
\end{equation}

where,
\begin{equation}
\vec{s}=\tilde{s}(\vec{t}-z_{0}\vec{s})\label{eq:s_min}
\end{equation}
\begin{equation}
z_{0}=\frac{-\vec{t}\cdot\vec{s}-t_{0}+\kappa}{1+s^{2}}\label{eq: z0_smin}
\end{equation}

and $s^{2}=||\vec{s}||^{2}.$$s_{C}$ is the magnitude of the intersection
of $\vec{t}$ with the manifold. 

Here, $\tilde{s}(\vec{h})$ is defined via

\begin{equation}
\tilde{s}(\hat{h})=\arg\text{min}_{s'}\vec{s}'\cdot\hat{h},\:f(\vec{s}')=0\label{eq:s_htilde}
\end{equation}
\end{minipage}}

\begin{figure}[h]
\begin{centering}
\includegraphics[width=0.95\textwidth]{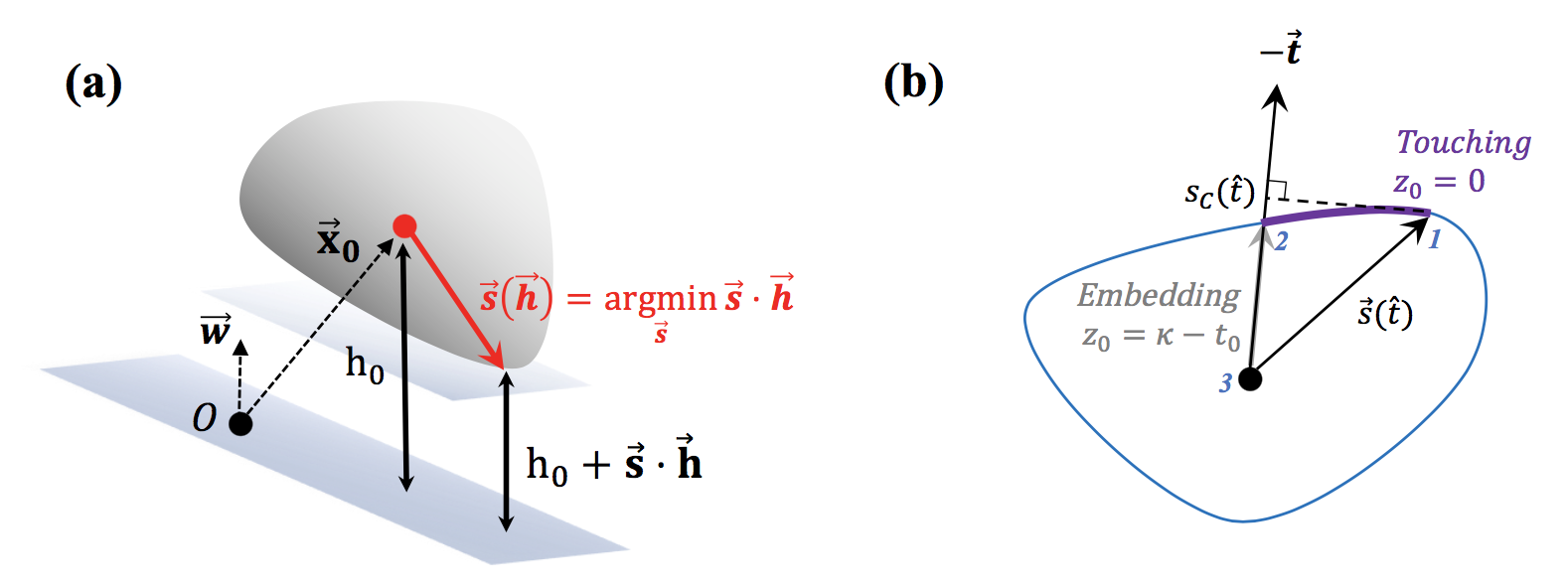}
\par\end{centering}
\caption{\textbf{Geometrical Interpretation. }(a) Relationship between different
fields. $h_{0}=\left\Vert \mathbf{w}\right\Vert ^{-1}y\mathbf{w}\cdot\mathbf{x_{0}}$:
field induced by the center of the manifold $\mathbf{x}_{0}$, i.e.
the distance between the center $\mathbf{x}_{0}$ and the solution
hyperplane characterized by $\mathbf{w}$ . $\vec{h}$ is the vector
of fields induced by basis vectors, i.e. $h_{i}^{\mu}=\left\Vert \mathbf{w}\right\Vert ^{-1}y^{\mu}\mathbf{w}\cdot\mathbf{u}_{i}^{\mu}$.
Together, $h_{0}+\vec{s}\cdot\vec{h}$ determines the distance between
the solution hyperplane and the closest point on the manifold characterized
by the manifold shape constraint $f(\vec{s})=0$. (b) Geometric interpretation
of different regimes. Purple line denotes the range of$\vec{s}(\hat{t})$
when the manifold is in the touching regime, from the point 1 to point
2. From point 2 to 3 denotes the range of $\vec{s}(\hat{t})$ when
the manifold is in the embedded regime. \label{fig:GeneralSmooth_Illustration1}}
\end{figure}

\subsection{Large D Limit}

We assume that $D$ is large but the size of the manifold is such
that $s\ll\sqrt{D}$, 

\begin{equation}
\alpha_{M}^{-1}=\int D\vec{t}\int_{-\infty}^{\kappa-\vec{t}\cdot\vec{s}}Dt_{0}\frac{(\kappa-\vec{t}\cdot\vec{s}-t_{0})^{2}}{1+s^{2}},\:\text{when}\:D\gg1\label{eq:largeD1}
\end{equation}

\begin{equation}
\vec{s}=\tilde{s}(\vec{t}-z_{0}\vec{s})
\end{equation}
\begin{equation}
z_{0}=\frac{-\vec{t}\cdot\vec{s}-t_{0}}{1+s^{2}}
\end{equation}

Since $\vec{t}\cdot\vec{s}$ is large we can approximate 
\begin{equation}
z_{0}=\frac{-\vec{t}\cdot\vec{s}}{1+s^{2}}=\vec{h}\cdot\vec{s}
\end{equation}

and assume self averaging,

\begin{equation}
z_{0}=\frac{-\langle\vec{t}\cdot\vec{s}\rangle}{1+\langle s^{2}\rangle}=\langle\vec{h}\cdot s\rangle
\end{equation}
We can introduce manifold dimensions and radii, 

\paragraph{Manifold Radius and Dimension }

We can now express the above results in terms of the effective manifold
dimensionality and radius. In the limit of large $D$, we can define 

\begin{equation}
R_{M}^{2}=\langle s^{2}\rangle_{\vec{t}}\label{eq:RM}
\end{equation}

\begin{equation}
D_{M}=\frac{\langle\vec{t}\cdot\vec{s}\rangle_{\vec{t}}^{2}}{R_{M}^{2}}\label{eq:DM}
\end{equation}

where$\text{\ensuremath{\vec{s}}}$ is defined by the coupled equations

\begin{equation}
\vec{s}=\tilde{s}(\vec{t}-z_{0}\vec{s})\label{eq: s_gen}
\end{equation}
\begin{equation}
z_{0}=\frac{-\langle\vec{t}\cdot\vec{s}\rangle}{1+\langle s^{2}\rangle}\label{eq:z0_gen}
\end{equation}
 so that the capacity can be expressed as 

\begin{equation}
\alpha_{M}=(1+R_{M}^{2})\alpha_{0}(\kappa+R_{M}\sqrt{D_{M}})\label{eq:alpha_M_smooth_largeD}
\end{equation}

where $R_{M}\sqrt{D_{M}}$ behaves like an additional margin $\kappa_{M}$
introduced by the manifold structure. 

\subsubsection{Scaling Regime}

In the scaling regime, $s_{i}=O(D^{-1})$ and $z_{0}$is $O(1)$ so
$\vec{t}-z_{0}\vec{s}\approx\vec{t}$. In this regime, Eqns. \ref{eq:RM}-\ref{eq:alpha_M_smooth_largeD}
hold but the expression for the manifold radius and dimension are
simpler, since $\vec{s}$ simply becomes 

\begin{equation}
\vec{s}=\tilde{s}(\vec{t})\label{eq:s_gen_scale}
\end{equation}
Here, the expressions for effective radius and dimension is given
with Eqn. \ref{eq:s_gen_scale}, 

\begin{equation}
R_{W}^{2}=\langle s^{2}\rangle_{\vec{t}},\;\text{in scaling regime}\label{eq:R_W}
\end{equation}

\begin{equation}
D_{W}=\frac{\langle\vec{t}\cdot\vec{s}\rangle_{\vec{t}}^{2}}{R_{M}^{2}},\:\text{in scaling regime}\label{eq:D_W}
\end{equation}

with the excess margin

\begin{equation}
\kappa_{W}=R_{W}\sqrt{D_{W}}\:\text{in scaling regime}\label{eq:kappa_W}
\end{equation}

where $W$ stands for \textit{widths}, see Figure \ref{fig:MeanWidthOfManifold}. 

\paragraph*{Mean Width}

Interestingly, the excess margin $\kappa_{W}$ is related to a well
known measure of a size of convex manifolds, known as the \emph{mean
width}s\cite{vershynin2015estimation}, and is defined as 

\begin{equation}
\text{Gaussian Mean Width}=\langle\max_{\vec{s}_{1},\vec{s}_{2}\in M}\left[\vec{t}\cdot\left(\vec{s_{1}}-\vec{s_{2}}\right)\right]\rangle_{\vec{t}}\sim2R_{W}\sqrt{D_{W}}=2\kappa_{W}\label{eq:GaussianMW}
\end{equation}

where $\vec{s}_{1}$ and $\vec{s}_{2}$ are points on a given manifold
$M$ in $\mathbb{R}^{N}$ and $\vec{t}$ is a Gaussian random vector
$\sim I(0,I_{D_{\vec{t}}})$ . 

The relationship between the manifold dimension $D_{W}$ and manifold
radius $R_{W}$ and Gaussian Mean Width is illustrated in Fig. \ref{fig:MeanWidthOfManifold}. 

\begin{figure}[h]
\begin{centering}
\includegraphics[width=0.85\textwidth]{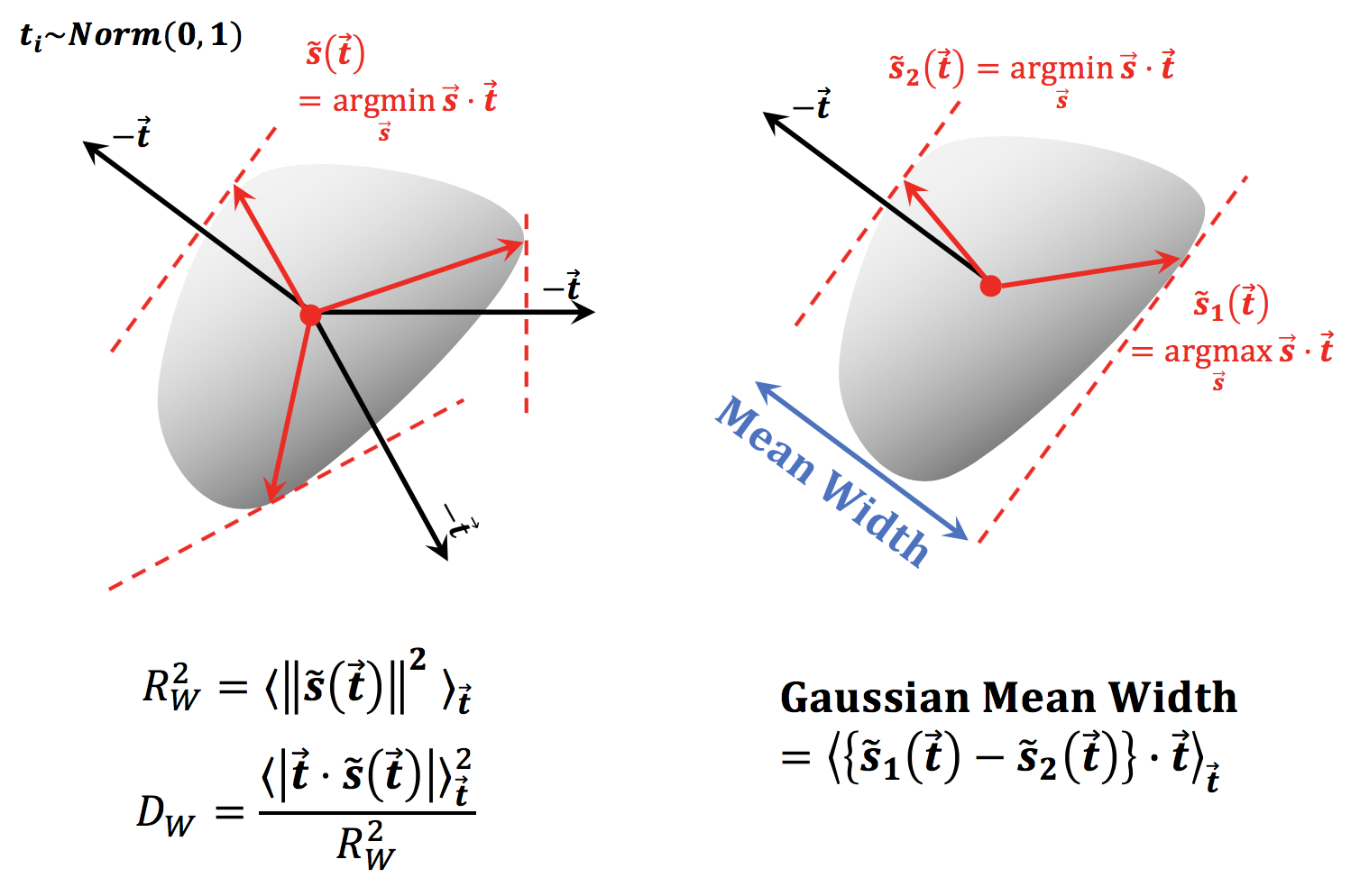}
\par\end{centering}
\caption{\textbf{Relationship between the Gaussian Mean Width the Effective
Manifold Radius and Dimension in the Scaling Regime. }(a) Effective
Radius $R_{W}=\langle\left\Vert \tilde{s}(\vec{t})\right\Vert ^{2}\rangle_{\vec{t}}$
is the mean of max projection points $\tilde{s}(\vec{t})$ along the
random directions $\vec{t}$, while the effective dimension $D_{W}$
is defined as $\frac{\langle\left|\vec{t}\cdot\tilde{s}(\vec{t})\right|\rangle_{\vec{t}}^{2}}{R_{W}^{2}}$.
(b) Gaussian Mean Width is defined as $GMW=\langle\left\{ \tilde{s_{1}}(\vec{t})-\tilde{s_{2}}(\vec{t})\right\} \cdot\vec{t}\rangle_{\vec{t}}$,
and in this definition, $GMW=R_{W}\sqrt{D_{W}}$. \label{fig:MeanWidthOfManifold} }
\end{figure}

\section{General Manifolds }

Smooth networks are simple in that they can touch a hyperplane by
a single point or be fully embedded in it. This is not true for non-smooth
manifolds, as there are many facets that can be partially embedded
due to the non-smoothness. In other words, a non-smooth manifold can
touch the hyperplane by a point, line segment, a facet, or multiple
facets (Fig. \ref{fig:L1embedding}). 

\begin{figure}
\begin{centering}
\includegraphics[width=0.9\textwidth]{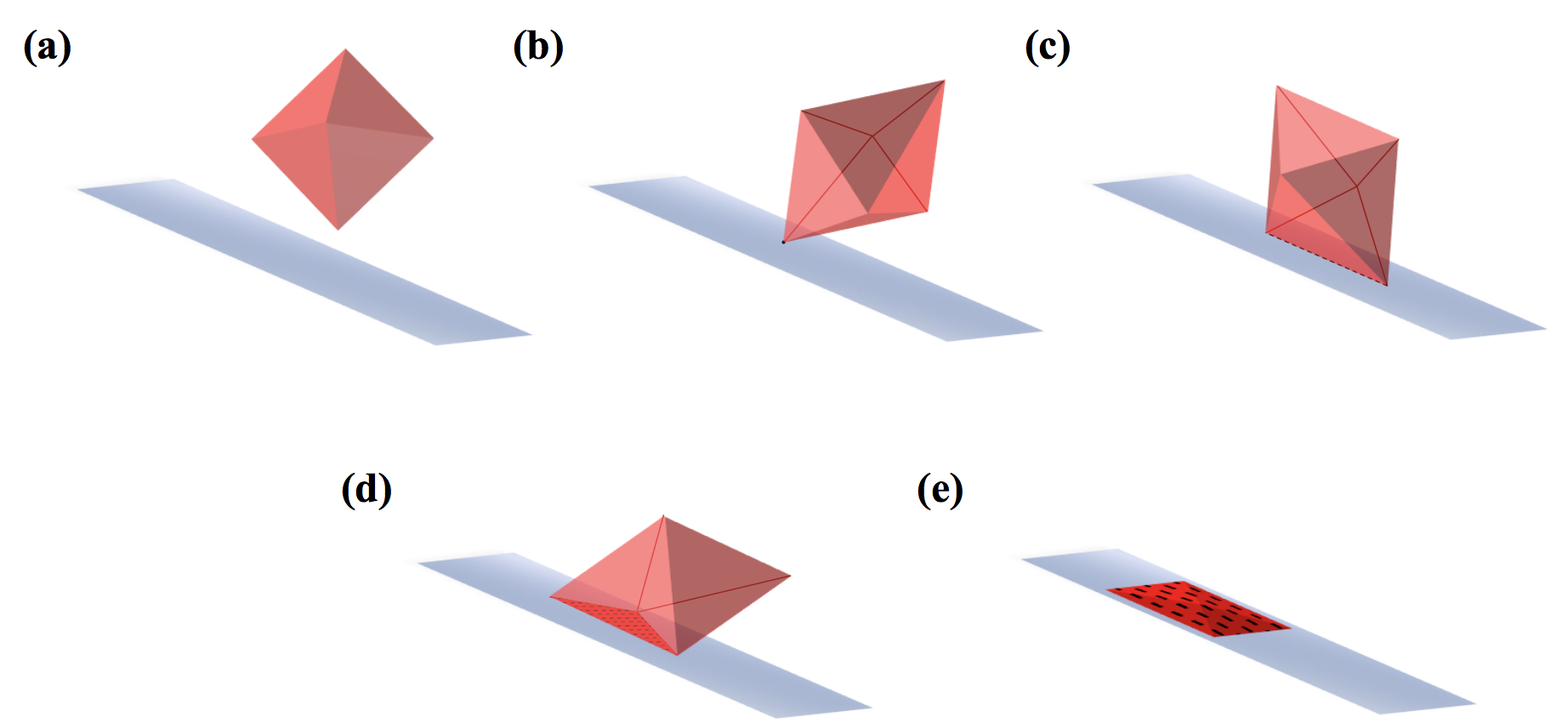}
\par\end{centering}
\caption{\textbf{Embedding (support) structures of non-smooth manifolds: $L_{1}$
manifolds. }(a) Interior manifolds. (b) Touching with a point. (c)
Touching with a line. (d) Touching with a facet. (e) Embedded in the
margin plane. \label{fig:L1embedding}}
\end{figure}

\subsection{Capacity of Smooth and Non-smooth Manifolds }

Given the complicated geometric relations between non smooth manifolds
and the margin planes, explicit expression for the capacity that delineates
the different regimes in $\vec{t}$ and $t_{0}$ is cumbersome and
depends on the specific details of the manifold at hand. Here we note
that for any manifold, we can write down the capacity in the following
\emph{universal} form

\noindent\fbox{\begin{minipage}[t]{1\columnwidth \fboxsep \fboxrule}%
Universal mean field equation for manifold capacity
\begin{equation}
\alpha_{M}^{-1}=\int D\vec{t}\int_{-\infty}^{\kappa-\vec{t}\cdot\vec{s}(\vec{t})}Dt_{0}\frac{(-\vec{t}\cdot\vec{s}-t_{0}+\kappa)^{2}}{1+s^{2}}\label{eq:alphaM_gen}
\end{equation}

where, $\vec{s}$ is defined via 

\begin{equation}
\vec{s}=\tilde{s}(\vec{t}-z_{0}\vec{s})\label{eq:s_min_gen}
\end{equation}
\begin{equation}
z_{0}=\frac{-\vec{t}\cdot\vec{s}-t_{0}+\kappa}{1+s^{2}}\label{eq: z0_smin_gen}
\end{equation}
\end{minipage}}

The key point is that the solution for $\vec{s}$ changes its nature
as $t_{0}$ decreases (for a given $\vec{t}$) and automatically dissects
the range of integration over $t_{0}$ to the specific domains (touching
with points, lines, facets, etc). Note that the fully embedded regime
is also incorporated in \ref{eq:alphaM_gen}. In this regime, $\vec{t}+\vec{z}=\vec{t}-z_{0}\vec{s}=0,$
and $z_{0}=\kappa-t_{0}$, hence, $\vec{s}=\vec{t}/(\kappa-t_{0})$
, which in the embedding regime will be a point \emph{inside }the
convex manifold in the direction of $\vec{t}$, see Fig. \ref{fig:GeneralSmooth_Illustration1}(b). 

\subsection{Large $D$ Approximation for a General Manifold }

In the case of smooth manifolds, we have shown that in the limit of
large $D$ the capacity can be approximated by Eqns. \ref{eq:alphaM_gen}-\ref{eq:z0_gen}.
Here we note that the same approximation applies to general, smooth
as well as non-smooth manifolds. Specifically, we approximate the
capacity as 

\noindent\fbox{\begin{minipage}[t]{1\columnwidth \fboxsep \fboxrule}%
Capacity for General Manifolds in high $D$: 

\begin{equation}
\alpha_{M}=(1+R_{M}^{2})\alpha_{0}(\kappa+\kappa_{M})\label{eq:alphaM_gen_LargeD}
\end{equation}

\begin{equation}
\kappa_{M}=R_{M}\sqrt{D_{M}}\label{eq:kappaM_LargeD}
\end{equation}

\begin{equation}
R_{M}^{2}=\langle s^{2}\rangle\label{eq:RM_LargeD}
\end{equation}

\begin{equation}
D_{M}=\frac{\langle\vec{t}\cdot\vec{s}\rangle^{2}}{R_{M}^{2}}\label{eq:DM_LargeD}
\end{equation}

where 

\begin{equation}
\vec{s}=\tilde{s}(\vec{t}-z_{0}\vec{s})\label{eq: s_gen-1}
\end{equation}

\begin{equation}
z_{0}=\frac{-\langle\vec{t}\cdot\vec{s}\rangle}{1+\langle s^{2}\rangle}\label{eq:z0_gen-1}
\end{equation}
\end{minipage}}

and averages are over gaussian $D-dimensional$ vectors $\vec{t}$.
As is the case of smooth manifolds, in the scaling regime where $R_{M}$
is $O(D^{-1/2})$, $R_{M}$ and $D_{M}$ are given via $\vec{s}$
where $\vec{s}$ is simply $\tilde{s}(\vec{t})$ , hence they coincide
with $R_{W}$ and $D_{W}$ and are related to the Gaussian Mean Width
as in the smooth case above (Figure \ref{fig:MeanWidthOfManifold}). 

\section{Numerical Investigations}

\subsection{Numerical Solutions of the Mean Field Equations}

In simple cases analytical expressions can be used to solve numerically
the mean field equations, as is the case of ellipsoids discussed above.
Here we show how to use the analytical formulae to solve the mean
field equations for the ellipsoids. For a general manifold, calculating
$\vec{s}\cdot\vec{t}$ and $s^{2}$ for $\vec{s}$ on the manifold
for each $\vec{t}$ and $t_{0}$ can be done by iterative methods.
Here we present such an algorithm, adequate for general manifolds
in the large $D$ regime. In the limit of large $D$, $\left|\vec{t}\cdot\vec{s}\right|\gg\left|t_{0}\right|,\kappa$,
and we can use Eqns \ref{eq:alphaM_gen_LargeD}-\ref{eq:z0_gen-1}.
Furthermore, in this limit, $\vec{t}\cdot\vec{s}$, and $s^{2}$ are
self averaged with respect to $\vec{t}$. The pseudocode for the algorithm
is given in Alg. \ref{alg:alpha_iter_algo} below. 

\subsubsection{Iterative Numerical Solution}

In order to solve for $s$ and $z_{0}$ (Eqn \ref{eq: s_gen-1}- \ref{eq:z0_gen-1}),
we use an algorithm for each $\vec{t}$, that essentially iterates
between updating $z_{0}$ given the current estimate of $\vec{s}$,
using Eqn \ref{eq:z0_gen-1} and updating the estimate of $\vec{s}$
given the new estimate of $z_{0}$ and the current estimate of $\vec{s}$,
Eqn \ref{eq: s_gen-1}. 

Solving eq. \ref{eq: s_gen-1}: First we note that evaluating the
$min$ operation in Eqn \ref{eq: s_gen-1} can be done explicitly
in simple cases (in particular, for convex smooth manifolds with known
parametrization). In general, one can search numerically for the max
projection points (or signed min projection points). If the manifold
is non-smooth and has a finite number of vertices, then one can simply
iterate over all vertices. Otherwise, a local search using a gradient
can be done, and since the search is on the convex hull, the local
search guarantees the convergence to the global optimum. This appears
as a $maxproj$ function in Alg. \ref{alg:alpha_iter_algo}. 

Note that $\vec{s}$ we are solving is not simply a max projection
point on $\vec{t}$, but a max projection on $\vec{h}=\vec{t}-z_{0}\vec{s}$
(Eqn. \ref{eq: s_gen-1}). The solution $\vec{s}$ may come from anywhere
inside of the convex hull or the surface of the manifold, hence we
allow the search on$\vec{s}$ to be a linear combination of the vertices.
To search for $\vec{s}$, in the next step we define $\vec{s}_{k}=\eta\vec{s_{h}}+(1-\eta)\vec{s}_{k-1}$
which is a linear sum of $\vec{s}_{k-1}$ in the previous step $k-1$
and the max projection in the direction of $\vec{h}$ at time $k$,
$\vec{s}_{h}$. If the difference between $\vec{s_{k}}$ and $\vec{s}_{k-1}$
is smaller than a given tolerance $\epsilon_{0}$, then $\vec{s}$
converged, as well as $z_{0}$. Otherwise, continue to the $k+1$
th step, where the new $\vec{h}$ is computed with the new $\vec{s}$.
The algorithm for this is summarized in the pseudocode (Alg. \ref{alg:alpha_iter_algo}). 

\begin{algorithm}
\textbf{{[}$\alpha_{M}$,$R_{M}$, $D_{M}${]} = function manifold\_capacity($D$,
$n_{t}$,$\eta$, $n_{max}$ , $\epsilon_{0}$, $S$) }

\textbf{Input: }\{Manifold dimension $D$, number of $t's$ $n_{t}$,
learning rate $\eta$, max iteration $n_{max}$, tolerance $\epsilon_{0}$,
the manifold data$S$ defined in $\mathbb{R}^{D\times M}$\}

\textbf{for }i=1 \textbf{to}$n_{t}$ \textbf{do}

$\:$$\vec{t}=\vec{t}^{(i)}\sim Norm(0,\mathbb{I}_{D}])$

$\:$Set $k=0$, $\epsilon_{ts}=\infty$ $ $

$\:$$\vec{s}_{k}=maxproj(\vec{t}$,$S$)

$\:$\textbf{while }$k<n_{max}$ \textbf{and $\epsilon_{s}>\epsilon_{0}$}

$\:$$\:$ $k=k+1$

$\:$$\:$ $z_{0}=\frac{-\vec{t}\cdot\vec{s}}{1+s^{2}}$

$\:$$\:$$\vec{h}=\vec{t}-z_{0}\vec{s}$

$\:$$\:$$\vec{s}_{h}=maxproj(\vec{h},S)$

$\:$$\:$$\vec{s_{k}}=\eta\vec{s_{h}}+(1-\eta)\vec{s}_{k-1}$

$\:$$\:$$\epsilon_{s}=||\vec{s}_{k}-\vec{s}_{k-1}||/||\vec{s}_{k}||$

$\:$\textbf{end}

$\:$$\vec{s}^{(i)}=\vec{s}_{k}$

\textbf{end}

$R_{M}=sqrt\left\{ \langle\left\Vert \vec{s}^{(i)}\right\Vert ^{2}\rangle_{i}\right\} $

$D_{M}=\frac{\left\{ \langle\vec{t}^{(i)}\cdot\vec{s}^{(i)}\rangle_{i}\right\} ^{2}}{R_{M}^{2}}$

$\alpha_{M}=(1+R_{M}^{2})\alpha_{0}(\kappa+R_{M}\sqrt{D_{M}})$

\textbf{Output }= {[}$\alpha_{M}$,$R_{M}$, $D_{M}${]} 

\caption{Iterative method for approximating capacity of general manifolds in
high $D$ \label{alg:alpha_iter_algo}}
\end{algorithm}

In the following sections, we show the specific manifold examples
and how the input $S$ of $maxproj(\vec{t},S)$ , as well as the details
of the max projection search differ based on the type of the problem. 

\subsection{Simulation Results }

\subsubsection{Ellipsoids }

Consider the ellipsoids give by Eqn. \ref{eq:xpoint} and constraint,
Eqn. \ref{eq:constraint_ellipsoid}. In this case, the manifold parameterization
is known, then $\vec{s}_{k}=\underset{\vec{s}}{\text{argmax}}\vec{t}\cdot\vec{s}$
has an analytical solution given by the minus of Eqn. \ref{eq:argminS}
and Eqn. \ref{eq:stilde} for ellipsoids. In this case, the input
to the max projection operation for ellipsoid, called $maxproj^{ellipsoid}$should
include the radii vector. Pseudocode for $maxproj^{ellipsoid}$ is
given in Alg. \ref{alg:maxproj_ellipsoid}. Furthermore, in the case
of ellipsoids, it is possible to solve for $z_{0}$ analytically in
the of the large $D$ (i.e. Eqn.\ref{eq:z0_LargeD_RO1}). However,
to test the effectiveness of the iterative algorithm, we proceed to
test the iterative algorithm in the simulations below. 

\begin{algorithm}
\textbf{$\tilde{s}$ = function $maxproj^{ellipsoid}(\vec{t},S)$ }

Input: $D$-dimensional direction vector $\vec{t}$, $S=\{\text{\ensuremath{D}-dimensional radii vector }\vec{R}\}$

\textbf{for }i=1 \textbf{to}$D$ \textbf{do}

$\tilde{s}_{i}=\frac{t_{i}R_{i}^{2}}{\sqrt{\sum_{j}t_{j}^{2}R_{j}^{2}}}$

\textbf{end}

Output: The point on the manifold with max projection in $\vec{t}$
, $\tilde{s}$

\caption{Maximum Projection Point on Ellipsoid $S$ \label{alg:maxproj_ellipsoid}}
\end{algorithm}

Using the iterative methods described above, we calculated the theoretical
estimate of linear classification capacity of ellipsoids, with embedding
dimension $D=50$ in ambient dimension $N=100$ . The radii for each
ellipsoid is given by the $R_{i}=Unif[0.5R_{0},1.5R_{0}]$ for each
$i=1,...,D$ . $R_{0}$ is shown in the $x$ axis of the Fig \ref{fig:EllipsoidsResults}
as $\langle R_{i}\rangle_{i}$. We compare the capacity estimated
by the iterative algorithm using the mean field approximation (noted
as $\alpha_{iter}$), with the capacities estimated by the expressions
of effective radius and effective dimension in different regimes (in
the regime of $R_{i}\sim O(1)$, Eqns. \ref{eq:RE_Rorder1}-\ref{eq:DE_Rorder1},
and in the scaling regime, Eqns \ref{eq:RE_elps_scale}-\ref{eq:DE_elps_scale}.
We also calculated the simulation capacity in the similar manner to
$M_{4}$ algorithms for $L_{2}$ balls described in Chapter \ref{cha:spheres}-\ref{cha:m4},
with the worst point analytically calculated by Eqn. \ref{eq:argminS}-\ref{eq:stilde}.
Using the ambient dimension of $N=100$ for the ellipsoids, the critical
value of $P$ was determined by finding the value of $P$ such that
the average number of being separable was half the times of the total
number of $50$ repetitions. 

The iterative algorithm capacity matches the estimated capacities
using effective radii and dimensions well, as well as simulation capacities
described above. In the limit of large $R_{i}$, the capacity approaches
$1/D$, where the hyperplane is orthogonal to all embedded dimensions
of an ellipsoid.

We also compare the measures of dimensions relevant for the classification
capacity of ellipsoids (Fig. \ref{fig:EllipsoidsResults}). The dimensions
calculated from the iterative algorithm ($D_{M}$, M for manifolds)
is compared with the approximate effective ellipsoid dimensions $D_{E}$
in different regimes (Eqn \ref{eq:DE_Rorder1} and \ref{eq:DE_elps_scale}),
and they agree well. Furthermore, these estimated effective ellipsoid
dimensions match the participation ratio (\ref{eq:Dsvd}), $D_{svd}$
, when $R_{i}$ is in the scaling regime, and match the actual full
embedding dimension $D$ when $R_{i}$ is large. Intuitively, this
means that $D_{svd}$ is a relevant measure for linear classification
capacity when $R_{i}$'s are small ($\sim D^{-1/2}$). In the case
where $R_{i}$ is large, the embedding dimension $D$ is the relevant
measure for the capacity, because the solution has to orthogonalize
all embedding dimensions independent of the structure. 

We also compare the ratio between effective radius and the actual
scale of the ellipsoid (in this case $R_{0}=\langle R_{i}\rangle_{i}$),
for $R_{M}$ calculated from the iterative algorithm and $R_{eff}$
calculated from approximations in each regime of $R_{i}\sim D^{-1/2}$
(scaling) and $R_{i}\sim O(1)$ (Fig. \ref{fig:EllipsoidsResults}).
As before, the agreement between the $R_{M}$, $R_{eff}$'s are good.
Furthermore, the ratio between effective radius (for capacity) and
the mean of the radii start out above 1, and decreases with increasing
$R_{0}$ . This means that the larger the ellipsoids get, the more
fraction of ellipsoids get embedded, hence effective radius $R_{eff}$
contributing to the effective increase in the margin $\kappa_{eff}=R_{eff}\sqrt{D_{eff}}$
gets smaller due to the embedding configuration. 

\begin{figure}
\begin{centering}
\includegraphics[width=1\textwidth]{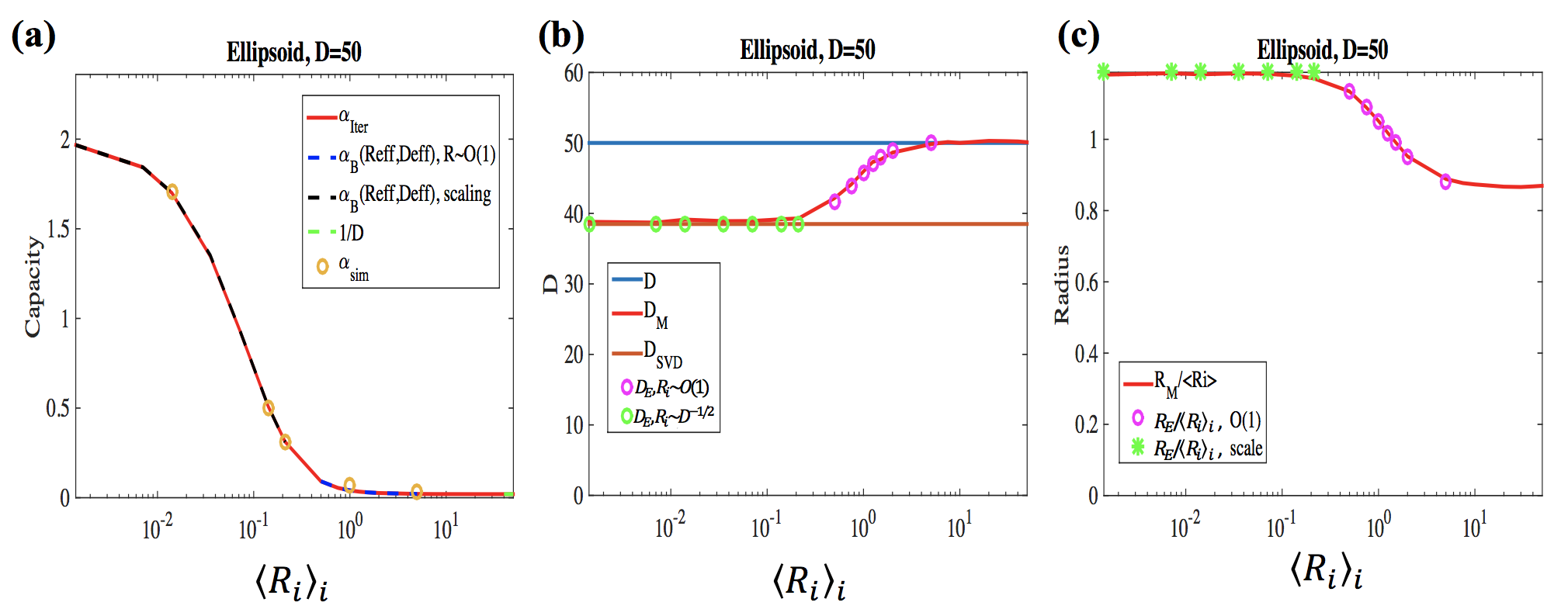}
\par\end{centering}
\caption{\textbf{Linear Classification of Ellipsoids.} (a) Linear Classification
Capacity of $D$-dimensional ellipsoids, $\alpha_{E}=P/N$, where
$N$ is the ambient dimension, and $P$ is maximum the number of ellipsoids
such that $P$ ellipsoids are linearly separable. The embedding dimension
of ellipsoids used was $D=50$ and $R_{i}\sim Unif\left[0.5R_{0},1.5R_{0}\right]$
for $i=1,...,D$ and $R_{0}=\langle R_{i}\rangle_{i}$ is shown in
the $x$-axis. (Red) Mean field approximation capacity $\alpha_{iter}$
, evaluated by the iterative algorithm given in Alg. \ref{alg:alpha_iter_algo}
and Alg. \ref{alg:maxproj_ellipsoid}. (Blue dashed) Approximation
of the ellipsoid capacity as the capacity of a ball using a large
$D$ and $R_{i}\sim O(1)$ approximation for $R_{E}$ and $D_{E}$
given by Eqns. \ref{eq:RE_Rorder1}- \ref{eq:DE_Rorder1}. (black
dashed) Ellipsoid capacity approximated using $R_{E}$and $D_{E}$
approximation when $R_{i}$ is in the scaling regime, given by Eqns.
\ref{eq:RE_elps_scale}-\ref{eq:DE_elps_scale}. (Green) Capacity
approximation for large $R_{i}$, $1/D$, where all of the ellipsoid
embedding dimensions are orthogonalized by the solution. (Yellow)
Simulation capacity computed with $N=100$ and $50$ repetitions.
(b) Dimensions of the ellipsoid. (Blue) Embedding dimension $D$.
(Red) Dimension of the ellipsoid evaluated by the iterative algorithm,
$D_{M}$. (Orange) Participation ratio, $D_{SVD}$, given by Eqn.
\ref{eq:Dsvd} using $R_{i}$ as eigenvalues. (Pink) $D_{E}$ approximation
in large $D$ $R_{i}\sim O(1)$ regime given by Eqn.\ref{eq:DE_Rorder1}
(Green) $D_{E}$ approximation in large $D$ and scaling regime given
by Eqn. \ref{eq:DE_elps_scale}. (c) Size (Radius) of Ellipsoids.
(Red) Effective manifold radius $R_{M}$ evaluated by the iterative
algorithm, divided by the overall scale $\langle R_{i}\rangle_{i}$.
(Pink) $R_{E}$ approximation in large $D$, $R_{i}\sim O(1)$ regime
given by Eqn. \ref{eq:RE_Rorder1}(Green) $R_{E}$ approximation in
large $D$ and scaling regime given by Eqn. \ref{eq:RE_elps_scale}
\label{fig:EllipsoidsResults} }
\end{figure}

\subsubsection{$D$ Dimensional $L_{1}$ Manifolds }

Consider the problem of linearly classifying $P$ of $D$-dimensional
$L_{1}$ manifolds where the point on the $L_{1}$ manifold is given
by Eqn. \ref{eq:xpoint} where $f(\vec{s})=\sum_{i=1}^{D}|s_{i}/R_{i}|-1=0$
. The explicit expression for classifying $L_{1}$ manifolds is considered
in \cite{chung2016linear}, and in this section we focus on finding
their perceptron capacity as an example of manifolds that are defined
by their vertices (Fig. \ref{fig:L1embedding}). In this case, there
are only $2D$ vertices (2 extreme points along the direction vector
$\mathbf{u}_{i}$), and finding the max projection point in the direction
of $\vec{t}$ from the set of points $S$ is given by $\vec{s}_{k}=\underset{l}{\text{argmax}}\vec{t}\cdot S_{l}$
, simply the search over all $2D$ vertices whose computation time
is linear in the number of points. In this case, the input for the
max projection operation in the iterative algorithm, called $maxproj^{setpoints}$,
should include the set of points. Pseudocode for $maxproj^{setpoints}$
is given by Alg. \ref{alg:maxproj_setpoints}. 

\begin{algorithm}
\textbf{$\tilde{s}$ = function $maxproj^{setpoints}(\vec{t},S)$ }

\textbf{Input:} $D$-dimensional direction vector$\vec{t}$, A set
of $M$ points in $D$ dimensional basis which define the vertices
of the convex hull$S=\{X\in\mathbb{R}^{D\times M}\}$

$i_{l}=argmax_{l}\vec{t}\cdot X(:,l)$

$\tilde{s}=X(:,i_{l}$) 

\textbf{Output: }The point on the manifold with max projection in
$\vec{t}$ , $\tilde{s}$

\caption{Maximum Projection Point on a set of points $S$\label{alg:maxproj_setpoints} }
\end{algorithm}

Using the iterative methods described above, we calculated the linear
classification of capacity of $L_{1}$ manifolds, with the embedding
dimension $D=20$ in the ambient dimension $N=100.$ The radii for
each direction is set to be equal, i.e. $R_{i}=R$ (all vertices are
distance $R$ away from the center ) for all $i=1,...D$. $R$ is
shown in the x axis of the Fig. \ref{fig:L1results}. 

We compare the capacity estimated by the iterative algorithm using
the mean field approximation (noted as $\alpha_{Iter}$), with the
capacities estimated as that of a ball using the effective radius
and effective dimension of $L_{1}$ manifolds in the scaling regime.
In the scaling regime, the replica analysis gives us $R_{M}=R$ and
$D_{M}=2\text{log}(D)$, and the derivations are given in the appendix
to the chapter (section \ref{subsec:L1effD_2logD}). The estimated
capacity using the iterative algorithm agrees well with the simulation
capacity, as well as approximations in the scaling regime, and large
R regime. 

We also compare the measures of dimensions relevant for the classification
capacity of $L_{1}$ manifolds. The dimension estimated by the iterative
algorithm matches the approximation of $D_{M}(L_{1})\sim2\text{log}(D)$
in the scaling regime (due to the extreme value theory, details in
the Appendix to the chapter), and in the regime of large $R$ it matches
the embedding dimension $D$, which in this case is equivalent to
the participation ratio $D_{svd}$ (as all $R_{i}=R$ ). 

Furthermore, in the scaling regime, the effective manifold radius
found by the iterative algorithm $R_{M}$ is close to $R$, as predicted
by the theory (details in the appendix). $R_{M}/R$ transitions from
1 (in the scaling regime) to a value much smaller than 1 (in the large
$R$ regime), due to the increased fraction of $L_{1}$ manifolds
that are embedded.

\begin{figure}
\begin{centering}
\includegraphics[width=1\textwidth]{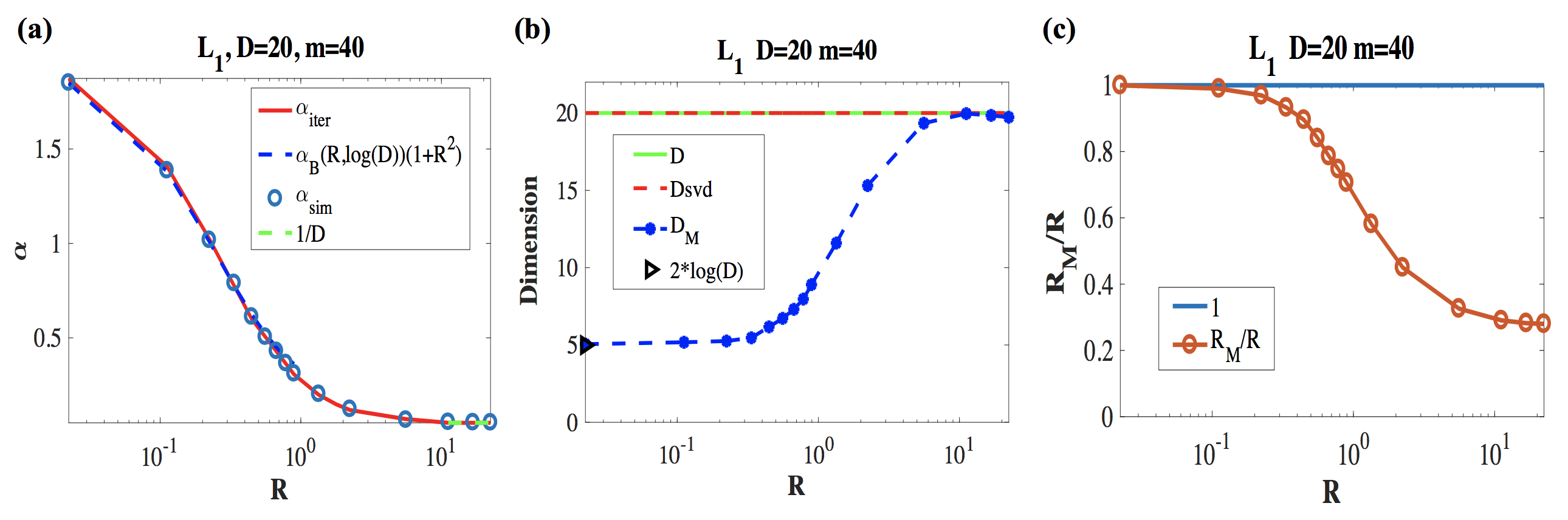}
\par\end{centering}
\caption{\textbf{Linear Classification of Non-smooth manifolds: $D$-dimensional
$L_{1}$ Manifolds of Radius $R$. }(a) Linear Classification Capacity
of $D$-dimensional $L_{1}$ manifolds, $\alpha=P/N$, where $N$
is the ambient dimension, and $P$ is the maximum number of manifolds
such that $P$ $L_{1}$manifolds are linearly separable. The embedding
dimension of $L_{1}$manifolds used was $D=20$ and the number of
subsamples used for testing was $m=40$, two endpoints in each basis
vector direction. $R$ is shown in the $x$-axis. (Red) Mean field
approximation capacity $\alpha_{iter}$ , evaluated by the iterative
algorithm given in Alg. \ref{alg:alpha_iter_algo} and Alg. \ref{alg:maxproj_setpoints}.
(Blue dashed) Approximation of the $L_{1}$ capacity as the capacity
of a ball using $R$ as the actual $R$ and $D_{M}=\text{log}D$,
which is the approximation of effective manifold properties in the
large $D$ regime. (blue markers) Simulation capacity calculated with
$N=100$ and $50$ iterations to compute the fraction of linear separability.
(Green) Capacity approximation for large $R$, $1/D$, where all of
the $L_{1}$ manifold embedding dimensions are orthogonalized by the
solution. (b) Dimensions of the $L_{1}$ manifolds. (Green) Embedding
dimension $D$. (Red dashed) Participation ratio, $D_{SVD}$, given
by Eqn. \ref{eq:Dsvd} using $R_{i}=R$ as eigenvalues. (Blue) Dimension
of the $L_{1}$ manifolds evaluated by the iterative algorithm, $D_{M}$.
(Black marker) $D_{B_{1}}$ approximation in large $D$ regime, $2\text{log}D$
(Derivation in appendix). (c) Size (Radius) of $L_{1}$ manifolds
divided by $R$. (Red) Effective manifold radius $R_{M}$ evaluated
by the iterative algorithm, divided by $R$, compared with unity (Blue).
\label{fig:L1results} }
\end{figure}

\subsubsection{Random Strings }

Consider the problem of linearly classifying $P$ of random strings,
whose intrinsic dimension is 1, but the embedding dimension is $D$,
and the ambient dimension of $N$ . Each point on the random string
is parameterized by the vector $\vec{s}$ whose components are

\begin{equation}
s_{2k}=R_{k}cos\left\{ k\left(\theta-\phi_{k}\right)\right\} ;\label{eq:s_k_randstr_even}
\end{equation}
 
\begin{equation}
s_{2k+1}=R_{k}sin\left\{ k\left(\theta-\phi_{k}\right)\right\} \label{eq:s_k_randstr_odd}
\end{equation}

This can be re-written as 

\begin{equation}
x_{i}^{\mu}=\left(x_{0}\right)_{i}^{\mu}+\sum_{n=1}^{D/2}R_{n}e^{jn(\theta-\phi_{n})}u_{i}\label{eq:Model_RandStr}
\end{equation}

(where the $j$ is used as an imaginary $\sqrt{-1}$ to distinguish
from the index $i$ ). 

\begin{figure}
\begin{centering}
\includegraphics[width=1\textwidth]{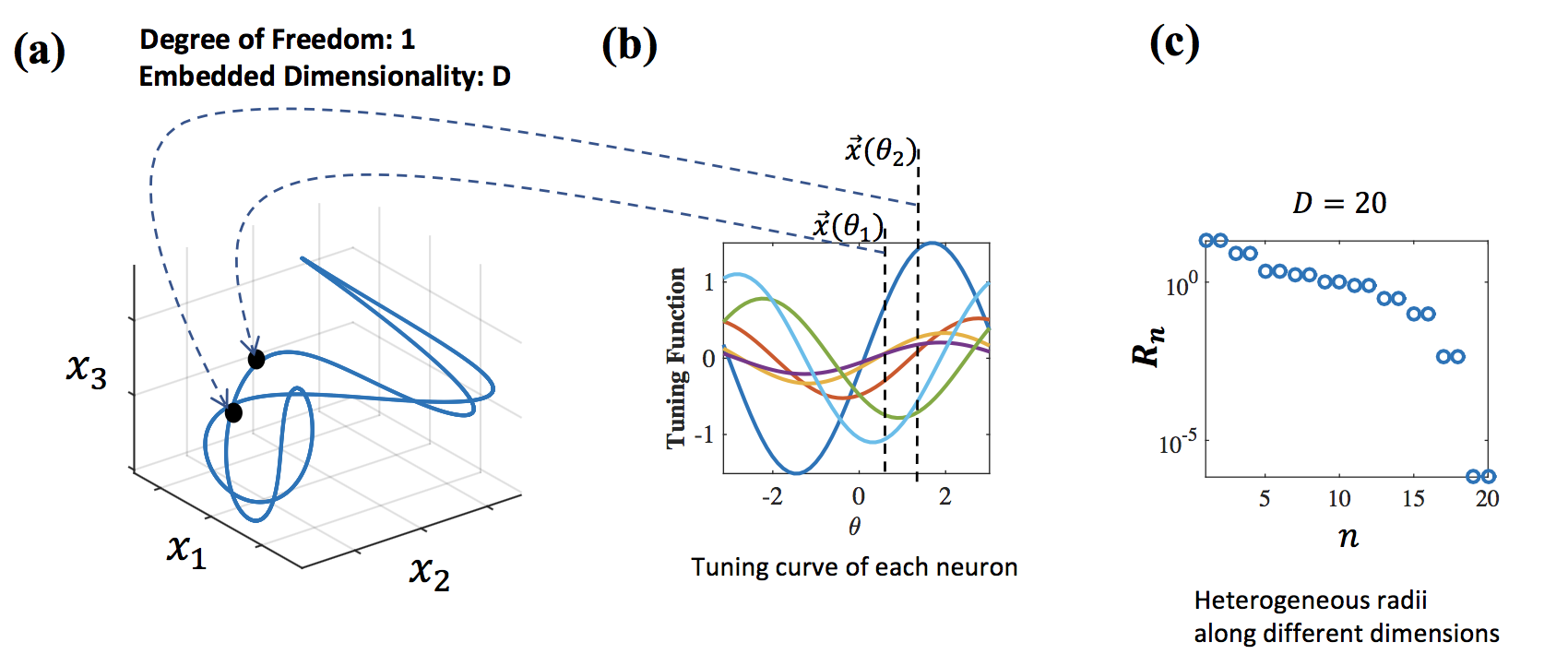}
\par\end{centering}
\caption{\textbf{Random Strings (Illustration). }(a) Random String in Neural
State Space. The values in each ambient dimension is given by each
neuron's activity. The random string's degree of freedom is 1, while
the embedding dimension is $D$. Each point on the sample manifold
represents neural activity of a same object, with different latent
variable such as orientation of an object. (b) Neural Interpretation
of Random Strings. The random string manifolds given by Eqns. \ref{eq:s_k_randstr_even}-\ref{eq:Model_RandStr}
can be interpreted as orientation tuned neurons with different amplitudes
and frequencies. (c) Illustration of $R_{n}$ for $n$th basis vector,
which is similar to the amplitude of the neural tuning curve for $n$
th neuron. \label{fig:randstrings_illu} }
\end{figure}

Figure \ref{fig:randstrings_illu} illustrates an example of a random
string. This definition has an interesting analogy with the activity
patterns of the population of orientation tuned neurons. For instance,
the value of point $\mathbf{x}$ in $i$ th dimension, $x_{i}$ can
be thought of as $i$ th neuron's activity, where each neuron is tuned
to a different orientation (middle panel). The heterogenous $R_{i}$
can be thought of as different amplitudes of the neural activity.
In this analogy, if you take an object at one angle, then all neurons
will have different levels of activations, and the slice of the activity
patterns correspond to a point on a random string, parametrized by
$\theta$, which is essentially the angle of an object. If you change
the orientation of the stimulus, then the activity patterns correspond
to a different slice, which corresponds to a different point on a
$\theta$-parametrized random string. Once you rotate the stimulus
the full $360\mathring{}$, then the activity pattern comes back to
the original slice, and you come back to the original point on the
random string. Note that this particular random string lies on the
surface of a ball, whose radius is $R_{string}=\sqrt{\sum_{n=1}^{D/2}R_{n}^{2}}$. 

What should be an effective dimension of this string with heterogenous
scales $\vec{R}$? And, if these strings are in random positions and
directions, what should be their capacity? To test this, we calculated
the classification capacity of samples of random strings in embedding
dimension $D$, ambient dimension $N=100$, where the number of samples
used was $m=200$, such that $\theta_{m}=\frac{2\pi}{m}$, and $R_{n}=R$
for all $n$. 

\begin{figure}
\begin{centering}
\includegraphics[width=1\textwidth]{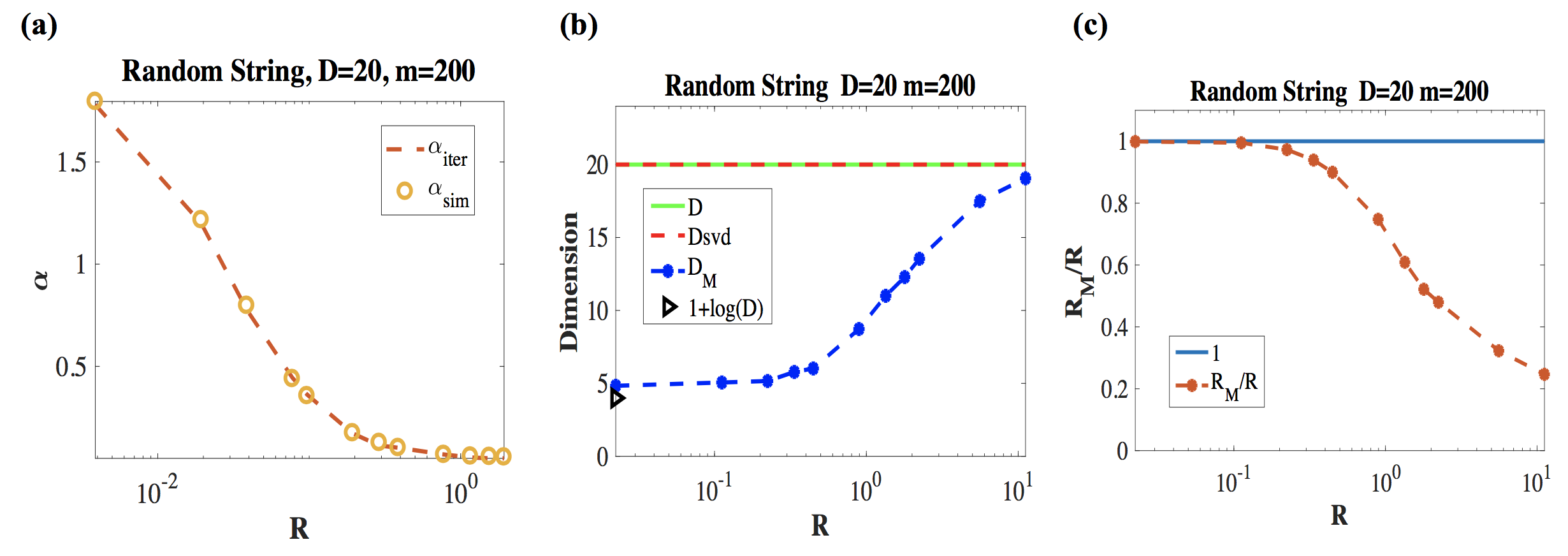}
\par\end{centering}
\caption{\textbf{Linear Classification of $D$-dimensional Random Strings.
}(a) Linear Classification Capacity of $D$-dimensional Random Strings,
$\alpha=P/N$, where $N$ is the ambient dimension, and $P$ is the
maximum number of random strings such that $P$ random strings are
linearly separable. The embedding dimension of random strings used
was $D=20$ and the number of subsamples used for testing was $m=200$.
$R_{string}$, overall scale of the random string, and the radius
of the $D$-dimensional ball that the random string is on, is shown
in the $x$-axis. (Red dashed) Mean field approximation capacity $\alpha_{iter}$
, evaluated by the iterative algorithm given in Alg. \ref{alg:alpha_iter_algo}
and Alg. \ref{alg:maxproj_setpoints}. (Yellow marker) Simulation
capacity calculated with $N=100$ and $50$ iterations to compute
the fraction of linear separability. (b) Dimensions of the Random
Strings. (Green) Embedding dimension $D$. (Red dashed) Participation
ratio, $D_{SVD}$, given by Eqn. \ref{eq:Dsvd} using $R_{i}=R$ as
eigenvalues. (Blue) Dimension of the random strings evaluated by the
iterative algorithm, $D_{M}$. (Black marker) $\text{1+log}D$. (c)
Size (Radius) of random strings divided by $R$. (Red) Effective manifold
radius $R_{M}$ evaluated by the iterative algorithm, divided by $R$,
compared with unity (Blue).\label{fig:RandString_ARD} }
\end{figure}

We find that the classification capacity $\alpha_{M}=\alpha_{iter}$
found by an iterative algorithm matches the simulation capacity of
random strings (Figure \ref{fig:RandString_ARD}(a)). In the case
of random strings, the manifold effective dimension $D_{M}$ (also
found via the iterative algorithm) has a very low effective dimension
in the scaling regime, due to the fact that it is a string whose intrinsic
dimension is merely $1$ and is not filling the space spanned by $D$
basis vectors, although when $R$ is large, the solution has to orthogonalize
the manifolds and the effective dimension approaches the embedding
dimension $D$, and it is reflected in the $D_{M}$ found via the
iterative algorithm (Figure \ref{fig:RandString_ARD}(b)). Note that
in this case, since all $R_{i}$ has the same size, $D_{svd}=D$ .
Furthermore, $R_{M}$ is $R$ in the scaling regime, and $R_{M}$goes
to a value much smaller than $R$ in the large $R$ regime, reflecting
the fact that many of the random strings must be orthogonalized by
the solution, and therefore more of them are embedded, resulting in
smaller $R_{M}/R$ seen by the centers. 

\begin{figure}
\begin{centering}
\includegraphics[width=1\textwidth]{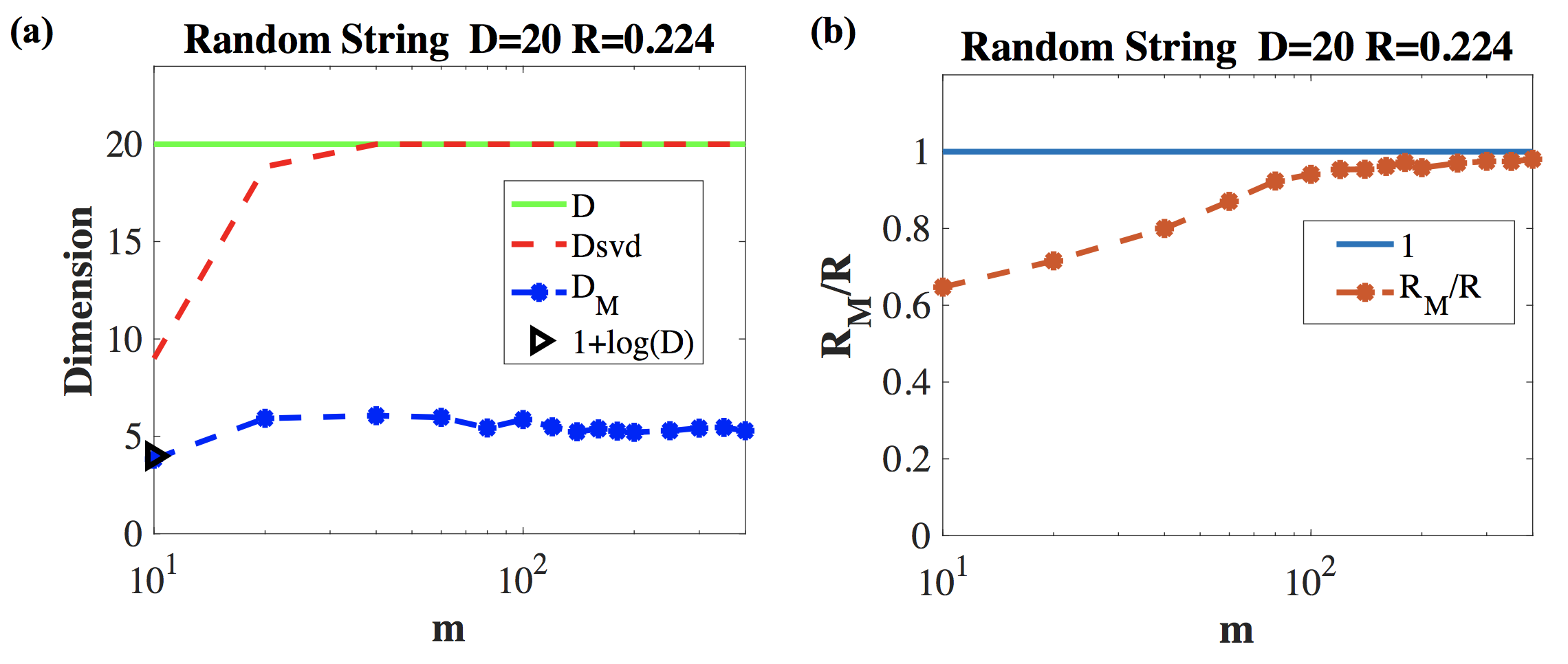}
\par\end{centering}
\caption{\textbf{Effective Properties Random Strings for Different Number of
Samples Per Manifold}. (a) Random String Dimension versus the Number
of Samples Per Manifold ($m$), for $N=100$, $D=20$, $R=0.224$($~1/\sqrt{D})$).
Both participation ratio, $D_{svd}$ (red) and the effective manifold
dimension found by the Mean Field iterative algorithm, $D_{M}$, saturates
around $m=50$, indicating that $m=200$ used in the Figure \ref{fig:RandString_ARD}
is already a saturated, $m$ -independant properties. (b) Random String
Radius versus the Number of Samples Per Manifold (m). The effective
manifold radius ($R_{M}$) found by the Mean Field iterative algorithm
saturates around $m=100$. \label{fig:RandString_M}}
\end{figure}

In the simulations for Figure \ref{fig:RandString_ARD}, we used $m=200$
training samples per each manifold. In the above figure we find that
the manifold dimensions $D_{M}$and radius $R_{M}$ is in the saturated
regime as a function of $m$, due to the fact that the samples are
coming from a string, and the sampling is already dense at $m=200.$
This is essentially like doing a local search for the worst points
(although not explicit), because we are in the densely sampled regime. 

\section{Discussion}

We have presented a mean field theoretic calculation of linear classification
of general manifolds, extending Gardner's replica theory of classification
of random isolated points\cite{gardner87,gardnerEPL} and the recently
developed theory of classification of random balls \cite{chung2016linear}.
This theory characterize the capacity as the inverse rate of reduction
in the entropy of the weight vector space by the separability constraints
per manifold. These constraints are expressed in terms of special
points on the manifolds that have minimum projections on self consistent
$D-$dimensional field vector. Importantly, we were able to derive
a set of universal mean field equations applicable to all low dimensional
convex manifolds, \ref{eq:alphaM_gen}-\ref{eq: z0_smin_gen}. The
key point is that for a given manifold geometry, the position of the
worst point on the manifold represented by $\vec{s}=\tilde{s}(\vec{t}-z_{0}\vec{s})$
changes as the field on the center represented by $t_{0}$ spanning
the sequence of increasingly large overlap between the manifold and
the the margin plane. This sequence depends of course on the details
of the geometry of the convex manifold. These equations cannot be
solved analytically except for the simplest geometries. We have developed
an iterative algorithm to solve these self consistent equations, and
in our experience, their converge is remarkably fast even when dealing
with $D$ dimensional manifolds with $D$ in the range of $10-100$. 

Of particular interest is the case of manifolds with high dimension
($D\gg1$) In this case, the key parameters are Manifold Radius, $R_{M}$
and Dimensionality $D_{M}$. The manifold capacity is equivalent to
that of $L_{2}$ balls with dimensionality and radius equal to $D_{M}$
and $R_{M}$ respectively. These quantities appear in the capacity
mainly through the excess margin $\kappa_{M}=R_{M}\sqrt{D_{M}}$.
The reason for this combination is the following. Consider first,
an $N$ dimensional ball. The margin is not dimensionless but depends
on the norm of the inputs. Thus, if we demand a margin $\kappa$this
is equivalent to demanding a margin $\kappa+R\sqrt{N}$from the centers,
since, if the distance of the points on the circumference of the ball
from the center is $R\sqrt{N}$. Now when the ball is low dimensional,
a reduction in the required excess margin occurs due to the tilt of
the ball with respect to the hyperplane, so that the projection of
the center should increase only by a factor $N\sqrt{D/N}=R\sqrt{D}$.
A similar argument holds for a general manifold in this limit. 

We have noted that the geometric parameters $R_{M}$ and $D_{M}$
are not intrinsic geometric measures but depend also on the overall
size of the manifolds relative to the center norm. Thus, if the manifold
is increased by scaling all the points by a global factor, say $r$,
relative to the center, then when $r$ grows, eventually $D_{M}$
approaches the full embedding dimension $D$, reflecting the need
of a solution to be orthogonal to the entire manifold subspace. Conversely,
when $r$ decreases, eventually the manifold reaches the scaling regime
where capacity is finite despite the high dimension, and $R_{M}\sqrt{D_{M}}$
approach the Gaussian Mean Width value $R_{W}\sqrt{D_{W}}=0.5MW$
, see \ref{fig:MeanWidthOfManifold} The change in the Manifold Dimension
as $r$ increases may be dramatic. For instance, in the random string
example (as well as $L_{1}$balls), $D_{M}$ increases from $D_{M}\sim\log D$
for small $r$ (the scaling regime) to $D_{M}=D$ for large $r$,
see \ref{fig:RandString_ARD}b. 

In conclusion, the generality of the theory developed in this Chapter
opens the door for applications of the derived results and methods
for the investigation of neuronal representations of perceptual manifolds
in biological as well as artificial neuronal networks. However, in
order to do so, some limitations of the current theory might need
to be relaxed. For instance, the present results deal with random
labels where the two classes are of roughly equal size. In many real
problems this may not be the case. Another issue is the assumption
of random orientation of the manifolds. It would be important to understand
the role of correlation between the manifolds. Also, it will be interesting
to explore extensions to manifold representations which are not linearly
separable. These issues and others are discussed in the next Chapter. 

 \textcolor{red}{}

\section{Appendix }

\subsection{Equivalence between $R_{i}$ and $\lambda_{i}$ }
\begin{lem}
If $\langle s_{i}^{2}R_{i}^{-2}\rangle=\frac{1}{D}$ then, $R_{i}\propto\lambda_{i}$.
\label{lem:RiLambdai}
\end{lem}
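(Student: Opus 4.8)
The plan is to unpack the phrase ``$\lambda_i$ is a singular value of the SVD of the ellipsoid (with uniform measure)'' and reduce the claim to an elementary second-moment computation. Concretely, I would sample points $\mathbf{x}-\mathbf{x}_0=\sum_{i=1}^{D}s_i\mathbf{u}_i$ on the ellipsoid according to the uniform measure on its surface and let $C$ denote the associated covariance (second-moment) operator, so that the $\lambda_i^2$ are by definition the eigenvalues of $C$. Since the $\mathbf{u}_i$ are, in the thermodynamic limit used throughout the chapter, orthogonal with common norm $\sqrt{N}$, the orthonormal frame $\hat{\mathbf{u}}_i=\mathbf{u}_i/\sqrt{N}$ diagonalizes the ambient geometry, and the coordinate of a sampled point along $\hat{\mathbf{u}}_i$ is simply $\sqrt{N}\,s_i$.

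Next I would introduce the rescaled parameters $\sigma_i=s_i/R_i$, so that the uniform-measure hypothesis says precisely that $\vec{\sigma}$ is uniform on the unit sphere $S^{D-1}$. Isotropy of that measure gives $\langle\sigma_i\sigma_j\rangle=\tfrac1D\delta_{ij}$: the diagonal part $\langle s_i^2R_i^{-2}\rangle=\tfrac1D$ is exactly the hypothesis as stated, while the off-diagonal vanishing is the part I invoke from uniformity. Hence
\[
C_{ij}=\big\langle(\sqrt{N}\,s_i)(\sqrt{N}\,s_j)\big\rangle=N R_iR_j\langle\sigma_i\sigma_j\rangle=\frac{N}{D}\,R_i^{2}\,\delta_{ij},
\]
so $C$ is already diagonal in the $\hat{\mathbf{u}}_i$ frame with eigenvalues $\lambda_i^{2}=\tfrac{N}{D}R_i^{2}$. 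Taking square roots gives $\lambda_i=\sqrt{N/D}\,R_i$ with a constant independent of $i$, i.e. $R_i\propto\lambda_i$, which is the claim. If one prefers the empirical SVD of $m$ drawn points, the same computation yields $\lambda_i^{2}\approx\tfrac{mN}{D}R_i^{2}$ up to the usual $O(m^{-1/2})$ fluctuations, which vanish as $m\to\infty$.

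The only real subtlety, and the step I would treat most carefully, is pinning down the normalization conventions so that ``eigenvalue of the SVD'' is unambiguous: whether one works with the continuous covariance operator or a finite empirical one, and how the $\sqrt{N}$ from $\|\mathbf{u}_i\|$ and the $\sqrt{m}$ from the sample size enter. None of these affect the \emph{proportionality} statement, since the same scalar multiplies every $R_i$, but they should be made explicit. Once this is fixed it is immediate that substituting $\lambda_i\propto R_i$ into the participation ratio $D_{svd}=(\sum_i\lambda_i^{2})^{2}/\sum_i\lambda_i^{4}$ of Eq.~\eqref{eq:Dsvd} reproduces $D_E=(\sum_iR_i^{2})^{2}/\sum_iR_i^{4}$ of Eq.~\eqref{eq:DE_elps_scale}, which is the reason the lemma is invoked in the scaling regime.
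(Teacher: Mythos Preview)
Your proposal is correct and follows essentially the same route as the paper: compute the covariance $C=\sum_{ij}\langle s_is_j\rangle\,\mathbf{u}_i\mathbf{u}_j^T$, use the (near-)orthogonality of the $\mathbf{u}_i$ and the hypothesis $\langle s_i^2\rangle=R_i^2/D$ to read off $\lambda_i^2=\tfrac{N}{D}R_i^2$. If anything, you are slightly more careful than the paper in making explicit the off-diagonal vanishing $\langle s_is_j\rangle=0$ for $i\neq j$ and in tracking the normalization constants, which the paper's short argument leaves implicit.
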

\begin{proof}
From the definition we know that $\langle s_{i}^{2}\rangle=\frac{R_{i}^{2}}{D}$.
Consider the covariance $C$ of the manifold data $X$ from many realizations
of $\vec{s}$ 

\[
X=\mathbf{x}_{0}+\sum_{i}s_{i}\mathbf{u_{i}}
\]

Then, 

\[
C=\langle\left\{ X-\langle X\rangle\right\} \left\{ X-\langle X\rangle\right\} ^{T}\rangle=\sum_{ij}\langle s_{i}s_{j}\rangle\mathbf{u}_{i}\mathbf{u}_{j}^{T}
\]

$\lambda_{i}$ are by definition, 

\[
\frac{1}{N}\lambda_{i}^{2}=\langle s_{i}^{2}\rangle
\]

Hence, 

\[
\lambda_{i}^{2}=\frac{N}{D}R_{i}^{2}
\]
\end{proof}

\subsection{Effective Dimension of $L_{1}$ manifolds \label{subsec:L1effD_2logD}}

In the scaling regime, $\vec{s}(\vec{t})$ for $L_{1}$ manifolds
is the $i$-th vertex where $i\underset{i}{=\text{argmax}\:t_{i}}$.
So, $\vec{s}\cdot\vec{t}=R\underset{i}{\text{max}\:t_{i}}$. Given
$D$ normally distributed $t_{i}$'s, the max value is centered in
large $D$ 

\begin{equation}
\underset{i}{\text{max}\:t_{i}}\sim\sqrt{2\log D}
\end{equation}

due to the extreme value theory. Hence, we obtain

\begin{equation}
D_{M}=2\log D\label{eq:L1DM_2LogD-2}
\end{equation}
\begin{equation}
R_{M}=R
\end{equation}

for $L_{1}$ manifolds in the scaling regime.

\chapter{Extensions \label{cha:extensions}}

\section{Correlated Manifolds}

So far, we have considered randomly oriented manifolds. In real world
data we expect that the manifolds will be correlated; in particular,
that the subspaces spanned by the different manifolds will be partially
aligned. We first consider the simple case of $D$ dimensional spheres
that all share the same subspace in the ambient dimension $\mathbb{R}^{N}$. 

\subsection{Parallel Spheres \label{subsec:ParallelSpheres}}

Consider a perceptron classifying parallel $D$-dimensional discs,
embedded in $N$ dimensions. The training data is given by:

\begin{equation}
\mathbf{x_{0}}^{\mu}+R\sqrt{\frac{1}{D}}\sum_{i=1}^{D}\mathbf{u}_{i}s_{i}\quad\forall\left|\vec{s}\right|^{2}\le1\label{eq:modelParallel}
\end{equation}

where as before the components of $\mathbf{x_{0}}^{\mu}$ and $\mathbf{u}_{i}$
are i.i.d. normally distributed random variables. For reasons that
will be clear later on, it is convenient to scale $R$ to $R/\sqrt{D}.$
It is also convenient to rotate the axes so that $\mathbf{u}_{i}$
are along the standard first $D$ axes, so that the corresponding
fields are 
\begin{equation}
h^{\mu}(\vec{s})=\frac{1}{\sqrt{N}}y^{\mu}\left(\mathbf{w}^{T}\mathbf{x}_{0}^{\mu}+R\sqrt{\frac{N}{D}}\sum_{i=1}^{D}w_{i}s_{i}\right)>0\quad\forall\left|\vec{s}\right|^{2}\le1\label{eq:fieldsParallel}
\end{equation}

where here we restrict ourselves to zero (fixed) margin. 

Minimizing with respect to $s_{i}$, we get

\begin{equation}
s_{i}=-y^{\mu}w_{i}/\sqrt{\sum_{j}w_{j}^{2}}
\end{equation}

Therefore, the minimum of LHS of \ref{eq:fieldsParallel} is 

\begin{equation}
\min_{s}h^{\mu}(s)=\frac{1}{\sqrt{N}}y^{\mu}\mathbf{w}^{T}\mathbf{x}_{0}^{\mu}-R\sqrt{\frac{1}{D}\sum_{i=1}^{D}w_{i}^{2}}
\end{equation}

Thus, the constraints are reduced to

\begin{equation}
\frac{1}{\sqrt{N}}y^{\mu}\mathbf{w}^{T}\mathbf{x}_{0}^{\mu}>\kappa_{\rho}\label{eq:parallelConstraint}
\end{equation}

where

\begin{equation}
\kappa_{\rho}=R\sqrt{\rho}\label{eq:marginParallel}
\end{equation}

and 
\begin{equation}
\rho=\frac{1}{D}\sum_{i=1}^{D}w_{i}^{2}\label{eq:p_overlap}
\end{equation}

which is an average of dot products between the solution \textbf{$\mathbf{w}$
}and direction vectors (in a rotated coordinate). Let us call $\rho$
the overlap parameter. 

Note that the total variance of each manifold is $R^{2}N/D$ while
the square distance between center pairs is: $2N$. Thus, their ratio
is $R^{2}/2D$ . In contrast, in the random spheres, the total variance
of each manifold is $R^{2}N$ whereas the square distance is as before
$2N$ . Hence the ratio is $R^{2}/2$ .

It is important to note that if $D\ll N$ then the solution vector
\textbf{$\mathbf{w}$} can be orthogonal to all manifolds by zeroing
the corresponding $D$ components for any size of the manifold $R$,
without sacrificing significant degrees of freedom. In this case,
we expect the capacity to be the same as for the capacity with the
centers only, in the reduced dimension of $N-D$. 

Thus, the problem of parallel manifolds is interesting only when $D\approx N$
($D$ scales with $N$). Let us define the relative dimension parameter
$d$, such that 

\begin{equation}
D=dN
\end{equation}
In this regime, zeroing all $D$ components of $\mathbf{w}$ is costly,
so we expect that the nature of \textbf{$\mathbf{w}$} depends on
$d$ and $R$. Note that in this case (since $D\gg1$) the relevant
scale of the radius should be radius over $\sqrt{D}$. In our normalization
above it means that $R$ is of order $1$. 

\subsubsection{Capacity}

The basic constraint \ref{eq:parallelConstraint} is equivalent to
a Gardner's theory with margin $\kappa$ which however is not a fixed
parameter but assumes self consistent value, set by the order parameter
$\rho$ which measures the overlap between $\mathbf{w}$ and the manifold
subspaces. To evaluate $\rho$ (Eqn. \ref{eq:p_overlap}) we need
to evaluate the entropy of the solution space given the constraint
that the solutions' average projection on the common manifold subspaces
is $\rho$ . We denote this entropy (per $N$) by$S_{0}(\rho)$. The
analog of \ref{eq:logVLargeQ} is 

\begin{equation}
\frac{1}{N}\langle\log V\rangle=S_{0}(\rho)-\alpha\alpha_{0}^{-1}(\kappa_{\rho})\label{eq:logVParallel}
\end{equation}

where $\alpha_{0}$ refers to the capacity for points. As before,
in the capacity limit, $\langle\log V\rangle$ vanishes and from \ref{eq:logVParallel}
it follows that the capacity is

\begin{equation}
\alpha_{||}=\alpha_{0}(\kappa_{\rho})S_{0}^{-1}(\rho)\label{eq:AlphaParallel}
\end{equation}

where the symbol $||$ stands for parallel manifolds and the entropy
term is 

\begin{equation}
S_{0}(\rho)=\frac{dx^{2}}{(x-1+d)^{2}+d(1-d)}\label{eq:ParallelEntropicTerm}
\end{equation}

where $x$ is related to $\rho$ through,

\begin{equation}
\rho=\frac{(x-1+d)^{2}}{d[(x-1+d)^{2}+d(1-d)]}\label{eq:RelateRhoWithXD}
\end{equation}

Different overlap $\rho$ yields different capacities, so optimizing
the capacity $\alpha_{||}$ with respect to $\rho$ yields the following
equation (for $\rho$ or $x$), as $\alpha_{||}$ is the maximum possible
capacity

\begin{equation}
\frac{x(1-x)\sqrt{d}}{\sqrt{(x-1+d)^{2}+d(1-d)}}=\alpha_{||}R\left(\kappa_{\rho}H(-\kappa_{\rho})+\frac{\exp-\frac{1}{2}\kappa_{\rho}^{2}}{\sqrt{2\pi}}\right)\label{eq:optimizeAlphaWithRho}
\end{equation}

Since there are 3 equations (Eqns \ref{eq:AlphaParallel}-\ref{eq:ParallelEntropicTerm},
\ref{eq:RelateRhoWithXD} and \ref{eq:optimizeAlphaWithRho}), and
3 unknowns ($\alpha_{||}$, $x$ , $\rho$), one can solve for $\alpha_{||}$
or $\rho$. 

\subsubsection{Phase Transition}

The solution for the capacity above shows dependence on the overlap
parameter $\rho$, which needs to be solved via $x$ as a function
of $R$. It turns out that there are two regimes of solutions for
$\rho$, one where $\rho$ decreases with increasing $R$, up to $R<R_{c}$,
and another where $\rho=0$ for $R>R_{c}$. Qualitatively, this means
that for the parallel manifolds whose radii are smaller than $R_{c}$
, the overlap between $\mathbf{w}$ and manifold subspace is nonzero,
and the overlap increases with increasing $R$. However, when the
manifold $R$ is beyond a critical value $R_{c}$, all manifold subspaces
are orthogonal to $\mathbf{w}$ and the overlap becomes $0$, due
to the tradeoff between orthogonalizing and sacrificing the degrees
of freedom. The geometric intuition for two different phases is given
in Fig. \ref{fig:Parallel_illu_ov_alpha} (a)-(b). 

Let us consider the value of $R_{c}(d)$ such that $\rho=\kappa\rightarrow0$
as $R\rightarrow R_{c}^{+}$. At this value, the solution must be
orthogonal to the manifold, therefore $\alpha=2(1-d)$ (Fig. \ref{fig:Parallel_illu_ov_alpha}(c),
Regime $R>R_{c}$ ). Also, for $\rho$ to vanish, $x=1-d$, yielding,

\noindent\fbox{\begin{minipage}[t]{1\columnwidth \fboxsep \fboxrule}%
\begin{equation}
R_{c}(d)=\sqrt{\frac{\pi d^{2}}{2(1-d)}}\label{eq:Rc(d)_parallel}
\end{equation}
\end{minipage}}

Thus for $R>R_{c}(d)$, $\alpha_{||}=2(1-d)$ and $\kappa_{\rho}=\rho=0$.
See Figure \ref{fig:Parallel_illu_ov_alpha}-(c). As the figure \ref{fig:Parallel_illu_ov_alpha}shows,
the predictions agree well with numerical simulations. 

\subsubsection{Field Distribution}

Since the field distribution is determined by the set of constraints
on the fields, in our case they should be equivalent to the distribution
of fields in the Gardner's theory with margin given by \ref{eq:marginParallel}
(see \ref{eq:logVParallel}). This means that as long as $\rho>0$
i.e. $R<R_{c}$ , $\mathbf{w}$ is not in the null space of any manifold.
The fraction of manifolds that touch the margin is given (as in the
Gardner theory) by

\begin{equation}
B=H(\kappa)
\end{equation}

and the fraction that are interior is $H(-\kappa).$ When $R>R_{c}$
, $\mathbf{w}$ is in the null space of the manifolds. Half of the
manifold center are on the margin plane and half are not. See Figure
\ref{fig:Parallel_illu_ov_alpha}(d). 

\begin{figure}[H]
\begin{centering}
\includegraphics[width=1\textwidth]{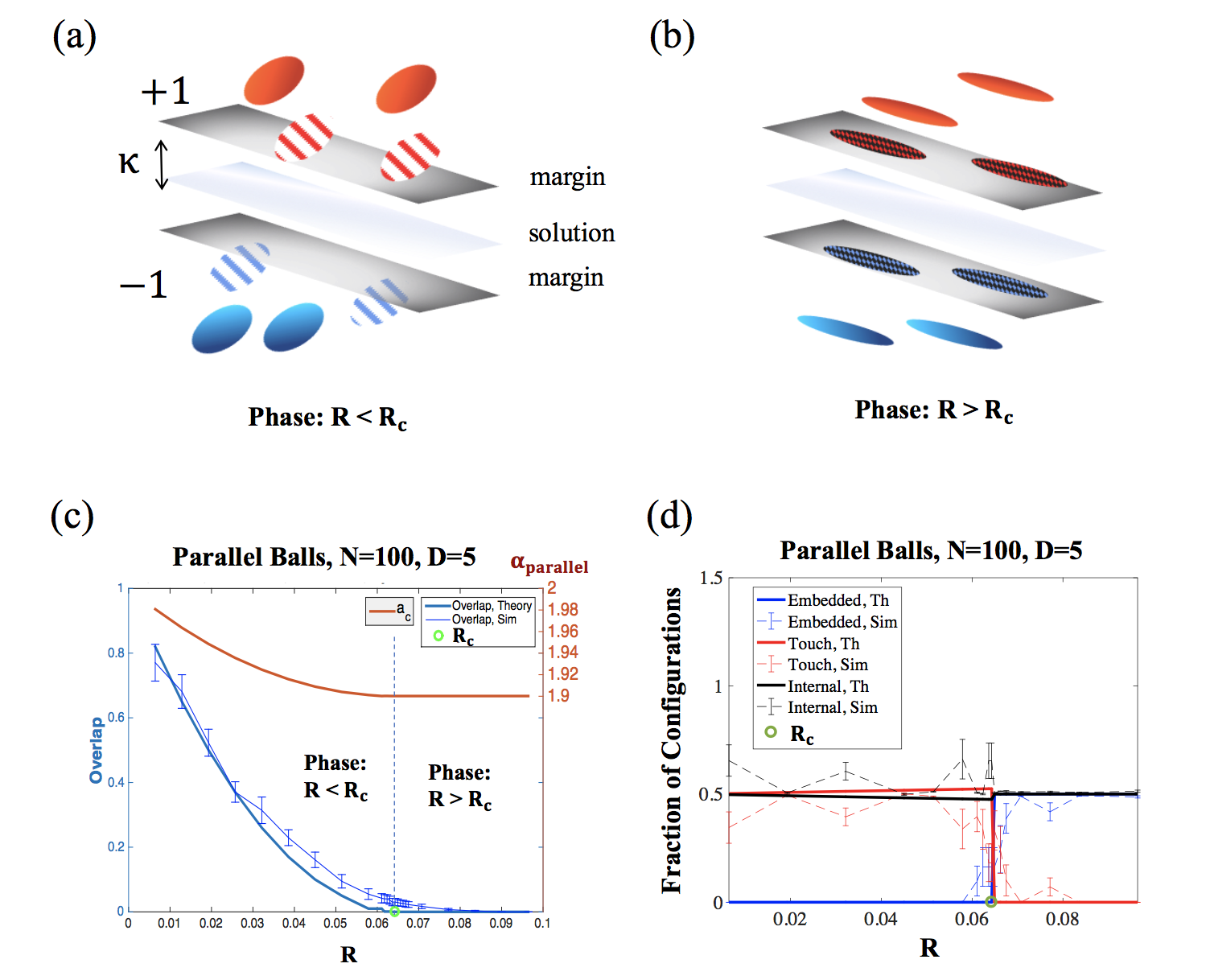}
\par\end{centering}
\caption{\textbf{The Phase Transition of Linear Classification of Parallel
Manifolds. }(a) When $R<R_{c}$ , manifolds are either interior (solid)
or touching the margin planes (striped). (b) When $R>R_{c}$ , manifolds
are either interior or embedded in the margin planes (diamonds). (c)
Phase transition of overlap and capacity at critical radius $R_{c}$,
for classification of parallel balls with $D=5$ in ambient dimension
$N=100$. The overlap ($\rho$ , Eqn. \ref{eq:p_overlap}) between
the manifold axes and the $\mathbf{w}$ denoting the solution hyperplane
vanishes as the ball's radius $R$ approaches $R_{c}$. The capacity
$\alpha_{parallel}$ also goes through phase transition as well, however
the value of capacity is stays large, as the embedding dimensions
by all manifolds are limited to $D$. (d) Phase transition of manifold
configurations at crucial radius $R_{c}$. When $R<R_{c}$ , most
manifolds are either interior or touching the margin plane (as shown
in (a)), and when $R>R_{c}$ , most manifolds are either embedded
or touching the margin plane (as shown in (b)). The simulations and
theory show good agreement. \label{fig:Parallel_illu_ov_alpha}}
\end{figure}

\subsection{Discussion}

We can generalize this analysis to partially parallel correlated balls,
where only a fraction of dimensions ($D_{1}=d_{1}N<D$) are shared
between them and the rest of the subspaces ($D_{2}=D-D_{1}$) are
random. In this case, in the regime where the shared subspaces are
orthogonalized, the problem remains as the classification of balls
in the null space of the shared subspaces. In this case, the solution
will take the form of 

\begin{equation}
\alpha_{||}=\alpha_{B}(\kappa_{\rho},R_{\rho},D_{\rho})S_{0}^{-1}(\rho)
\end{equation}

where $\alpha_{B}$ is the capacity of balls and $\kappa_{\rho}$
is the effective margin in the null space due to the radii in the
parallel subspaces and $R_{\rho}$is the effective radius in the null
space due to the random directions and $D_{\rho}$ is the effective
ball dimension in the null space due to the random directions. Furthermore,
the problem can be generalized to other types of correlations (i.e.
correlation between the manifold subspace and the center of the manifold,
or the correlation between centers.) We hope to explore these issues
of various types of correlations to take into account the structures
in the realistic data. 

\section{Mixtures of Shapes \label{sec:MixShapes}}

\begin{figure}[H]
\begin{centering}
\includegraphics[scale=0.6]{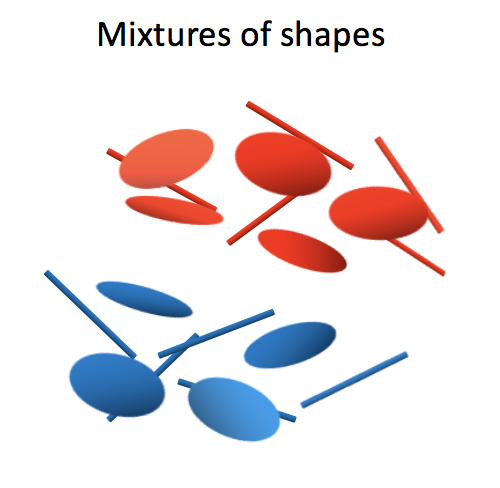}
\par\end{centering}
\caption{\textbf{Linear Classification of Mixtures of Shapes} (Illustration).
The linear classification capacity of mixtures of shapes is given
by Eqn. \ref{eq:CapacityMixture}. }
\end{figure}

So far, we have discussed the classification of manifolds with same
shapes and sizes. In this section, we generalize the problem to the
classification of manifolds with \emph{different} shapes and sizes.
Suppose there are $C$ \emph{different} manifold types, and each of
these types have the capacity of $\alpha_{s}$ where $s=1,...,C$
($s$ for shape). 

In this problem, the self-consistent term $G_{1}$ (the free energy
term) is an average of each $G_{1}$ of the classification problem
of manifolds of the shape $s$. Recall that $G_{1}$ of each shape
determines the capacity for each shape, through $\alpha_{s}^{-1}$. 

Then, the linear classification capacity of the mixtures of $C$ different
manifold types can be simplified to

\begin{equation}
\alpha_{mixture}^{-1}=\langle\alpha_{s}^{-1}\rangle{}_{s}\label{eq:CapacityMixture}
\end{equation}

where $s=1,...,C$ refers to the index for each manifold type. 

This remarkably simple and general theoretical result opens doors
to the treatment of a vastly diverse set of manifold classification
problems, from classification of manifolds different shapes and sizes
to different ratios of labels. 

\section{Class Imbalance \label{sec:Sparse-Coding}}

So far, we have covered the binary classification of manifolds where
the number of positively labeled manifolds is equal to the number
of negatively labeled manifolds. Here we consider the class imbalance problem, where the number of positive labels is far less than the number of negative labels, or vice versa. This is also known as classification with sparsity in the labels. In the theory of classification of
points, increasing the sparsity of labels has been known to increase
the point classification capacity by orders of magnitude (\cite{gardner1988space}). 
In this section, we ask the question whether increasing the sparsity
of manifold labels also improve the manifold classification capacity. 

Note that classification of manifolds with sparse labels is an important
example of classification with inhomogeneous manifolds (Section \ref{sec:MixShapes}).
Notice that sparsity term $f$ is defined as the fraction of positively
(or negatively) labeled manifolds out of the total number of manifolds,
so a large sparsity actually refers to a small $f$ (Fig. \ref{fig:SparseBalls}). 

One important thing to note is that in the balanced binary classification
case where the number of positive and negative labels are equal ($f=0.5$),
the bias term, $b$, of the linear classification 

\begin{equation}
y^{\mu}\left\{ \mathbf{w}^{T}\left(\mathbf{x_{\mathbf{0}}}^{\mu}+\sum_{i=1}^{D}s_{i}\mathbf{u}_{i}^{\mu}\right)+b\right\} \geq\kappa||\mathbf{w}||
\end{equation}

was ignored because the optimal bias term which maximizes the classification
capacity with balanced labels is $b=0$. However, in the sparse label
case, this is no longer true, and the nonzero bias term needs to be
included in the evaluation of the capacity, and the bias term needs
to be optimized. 

\subsection{Sparse General Manifolds }

Using the similar analysis as the calculation of perceptron capacity
for mixtures of shapes (Section \ref{sec:MixShapes}), we average
the inverse of capacities for the positive and negative labels, and
arrive at the expression for the manifold capacity with sparse labels. 

\paragraph{Sparse General Manifolds }

With this, we can express capacity of general manifolds with label
sparsity $f$ as

\textcolor{black}{
\begin{equation}
\alpha_{M}^{-1}(\kappa,f)=f\alpha_{M}^{-1}(\kappa+b_{max})+(1-f)\alpha_{M}^{-1}(\kappa-b_{max}).\label{eq:alpha_sparse_M}
\end{equation}
}

where 

\textcolor{black}{
\[
b_{max}=\mbox{argmax}_{b}\alpha_{M}(\kappa,f,b)
\]
}

and 

\[
\alpha_{M}^{-1}(\kappa,f,b)=f\alpha_{M}^{-1}(\kappa+b)+(1-f)\alpha_{M}^{-1}(\kappa-b)
\]

\paragraph{Sparse D-dimensional Balls }

As an example, we can express capacity of $D$-dimensional $L_{2}$
balls of radius $R$ with label sparsity $f$ as 

\textcolor{black}{
\begin{equation}
\alpha_{B}^{-1}(\kappa,R,D,f)=f\alpha_{B}^{-1}(\kappa+b_{max},R,D)+(1-f)\alpha_{B}^{-1}(\kappa-b_{max},R,D).\label{eq:alpha_sparse_balls}
\end{equation}
}

where $b_{max}$ needs to be found so that it maximizes the capacity
with sparsity $f$ , i.e. 

\textcolor{black}{
\begin{equation}
b_{max}=\mbox{argmax}_{b}\alpha_{B}(\kappa,R,D,f,b)
\end{equation}
}

where 

\begin{equation}
\alpha_{B}^{-1}(\kappa,R,D,f,b)=f\alpha_{B}^{-1}(\kappa+b,R,D)+(1-f)\alpha_{B}^{-1}(\kappa-b,R,D).
\end{equation}

This result implies that similarly to the case of the points, the
manifold capacity increases significantly with the increased sparsity
(reduced $f$) (Fig. \ref{fig:SparseBalls}). 

\paragraph{Fraction of Support Structures }

In a similar manner to the replica calculation of fraction of support
structures in \cite{chung2016linear}, the fraction of support structures
for sparse labels can be calculated. Note that due to the asymmetry
in the number of positive and negative labels and the non-zero bias,
the terms with majority labels and non-majority labels are different.
We give here the expression for the fraction of support structures
for $D$-dimensional balls. First, manifolds with non-majority labels,
which consists of $f$ of the total manifolds can be either embedded,
touching, or in the interior side of the shattered space. All together,
they consist of the first (non-majority) term of the capacity expression
with coefficient $f$, in Eqn. \ref{eq:alpha_sparse_balls}. 

With this, we can derive the fraction of support structures of manifolds
with sparse labels. Unlike the problem with dense labels, the minority
manifolds (red manifolds in Fig. \ref{fig:SparseBalls}(a)) and the
majority manifolds (blue manifolds in Fig. \ref{fig:SparseBalls}(a))
have different behaviors. 

First, fraction of embedded manifolds with non-majority labels is 

\begin{equation}
p_{emb}^{minor}=f\int_{0}^{\infty}dt\chi{}_{D}(t)\left[\int_{-\infty}^{\kappa+b-\frac{1}{R}t}Dt_{0}\right]\label{eq: p-sparse-emb}
\end{equation}

Then, fraction of manifolds with non-majority labels that touch the
margin plane is 

\begin{equation}
p_{touch}^{minor}=f\int_{0}^{\infty}dt\chi{}_{D}(t)\left[\int_{\kappa+b-\frac{1}{R}t}^{\kappa+b+Rt}Dt_{0}\right]\label{eq:p-sparse-touch}
\end{equation}

The rest of the non-majority manifolds are those in the interior space
shattered by the margin planes

\begin{equation}
p_{interior}^{minor}=f\int_{0}^{\infty}dt\chi{}_{D}(t)\left[\int_{\kappa+b+Rt}^{\infty}Dt_{0}\right]\label{eq:p-sparse-interior}
\end{equation}

Note that $p_{emb}^{minor}+p_{touch}^{minor}+p_{interior}^{minor}=f$.
Similarly, the fraction of embedded manifolds with majority labels
is 

\begin{equation}
p_{emb}^{major}=(1-f)\int_{0}^{\infty}dt\chi{}_{D}(t)\left[\int_{-\infty}^{\kappa-b-\frac{1}{R}t}Dt_{0}\right]\label{eq:p-nonsparse-emb}
\end{equation}

The fraction of manifolds that touch the margin plane with majority
labels is 

\begin{equation}
p_{touch}^{major}=(1-f)\int_{0}^{\infty}dt\chi{}_{D}(t)\left[\int_{\kappa-b-\frac{1}{R}t}^{\kappa-b+Rt}Dt_{0}\right]\label{eq:p-nonsparse-touch}
\end{equation}

The fraction of the manifolds with majority labels that are in the
interior space shattered by the margin planes is 

\begin{equation}
p_{interior}^{major}=(1-f)\int_{0}^{\infty}dt\chi{}_{D}(t)\left[\int_{\kappa-b+Rt}^{\infty}Dt_{0}\right]\label{eq:p-nonsparse-interior}
\end{equation}

Like with the sparse case, Note that $p_{emb}^{major}+p_{touch}^{major}+p_{interior}^{major}=1-f$.
The example of this theoretical prediction is tested in Fig. \ref{fig:SparseBalls}(c),
where we show that it matches well the fraction of sparse manifold
structures in the case of $1D$ manifolds (lines). 

\paragraph{Simulations }

The linear classification capacity for $2$ dimensional balls with
sparsity $f$ has been evaluated numerically and theoretically. For
the numerical evaluation, $M_{4}$ algorithm (Chapter \ref{cha:m4})
has been used with sparse labels. For the fraction of support structures,
$1D$ line manifolds with sparse labels were used. For the theoretical
evaluation, the Eqn. \ref{eq:alpha_sparse_balls} and Eqn.\ref{eq: p-sparse-emb}
- \ref{eq:p-nonsparse-interior} has been used, and agree well with
the simulations. 

\begin{figure}[H]
\begin{centering}
\includegraphics[clip,width=1\textwidth]{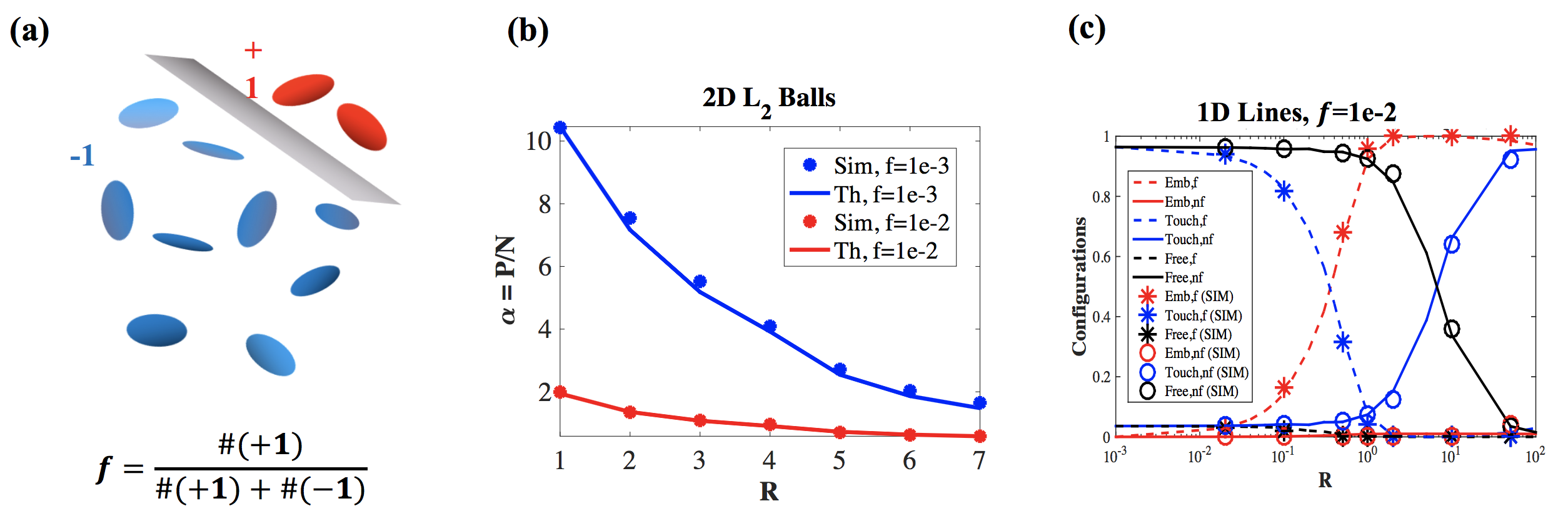}
\par\end{centering}
\caption{\textbf{Linear Classification of Balls with Sparse Labels. }(a) (Illustration)
The solution hyperplane (grey) separates manifolds, where the fraction
of positively labeled manifolds out of the total number of manifolds
are given by $f\ll1$ . (b) Capacity of $2D$ $L_{2}$ balls with
sparsity $f=0.001$(blue) and $f=0.01$ (red). Theory (line) matches
simulations (markers) well. (c) Support configurations of $1D$ line
segments with sparsity $f=0.01$, for majority labels (denoted as
$nf$, solid line in the legend) and non-majority (sparse) labels
(denoted as $f$ , dashed line in the legend). Theory matches simulations
well. As $R$ is increased, the fraction of embedded line segments
becomes 1, and the transition happens at smaller $R$ in manifolds
with non-majority label compared with manifolds with majority labels.
\label{fig:SparseBalls}}

\end{figure}

Note that in Fig. \ref{fig:SparseBalls}(c), when $R$ starts out
small, most of the non-majority manifolds are touching, and most of
the majority manifolds are interior, and as $R$ is increased, the
phase transition where most of the sparse manifolds become embedded
happens first, and then, the phase transition where most of the non-sparse
manifolds become touching happens. 

\subsection{Small $f$ Regime }

Let us focus on the $D=1$, the classification of lines with sparsity
$f$ . What is the behavior of the line capacity $\alpha_{L}$ , in
the case of extreme sparsity, i.e. $f\rightarrow0$? The dominant
term analysis in different regimes of $R$ gives the following analytical
approximations for capacity with $\kappa=0$. 

1. $R=O(1)$

\begin{equation}
\alpha_{L}^{-1}(R,f)=2(1+R^{2})f|\log f|
\end{equation}

2. $f^{-1/2}\gg R\gg1$ 
\begin{equation}
\alpha_{L}^{-1}(R,f)=2R^{2}f|\log R^{2}f|
\end{equation}

3. $f^{-1}\gg R\gg f^{-1/2}$
\begin{equation}
\alpha_{L}^{-1}(R,f)\approx1-\frac{2}{\pi R^{2}f}
\end{equation}

4. $R\gg f^{-1}$, $R\rightarrow\infty$
\begin{equation}
\alpha_{L}^{-1}(R,f)=1+2f|\log f|
\end{equation}

It is interesting to note that in the limit of large $R,$ the capacity
does not depend on the $R$ any more. 

\subsubsection{Object Recognition Limit, $f=1/P$ }

Particularly interesting regime is when the sparsity $f$ is equivalent
to $1/P$ where $P$ is number of manifolds. This is analogous to
the \emph{one-vs-all} task in the multi-class classification problem,
where the output unit is activated only when the input comes from
the correct class out of the possible $P$ classes. Can we estimate
the capacity of object manifolds, in this relevant sparse object recognition
limit? 

Using our theory, the minimum network size $N^{*}$required in order
to classify one manifold out of $P$ given manifolds can be estimated
to be

\begin{equation}
N^{*}=P/\alpha_{M}\left(f=\frac{1}{P}\right)
\end{equation}

where we can use $1/P$ as sparsity $f$ . 

One can also estimate the largest allowed size of the manifolds if
one is given with the network size $N$ and the number of object manifolds
$P$. That is, solve for $R$ and $D$ such that 

\begin{equation}
\frac{P}{N}=\alpha_{\text{B}}(\kappa,R,D,\frac{1}{P})
\end{equation}

which can be found numerically. 

\paragraph{Simulation Results and Discussion}

We tested below the capacity of line segments, in the one-vs-all object
recognition task limit. In the simulations where critical $P^{*}$(number
of classes/manifolds) had to be found, we fixed the network size,
$N=100$ and $N=200$, and $P$ was varied to find $P^{*}$ at which
the probability of finding a linear classifier goes from 1 to 0, given
100 repetitions. The theoretical prediction matches the simulation
capacity well (Fig. \ref{fig:obj_rec_lim}(a)).

\begin{figure}[H]
\begin{centering}
\includegraphics[width=1\textwidth]{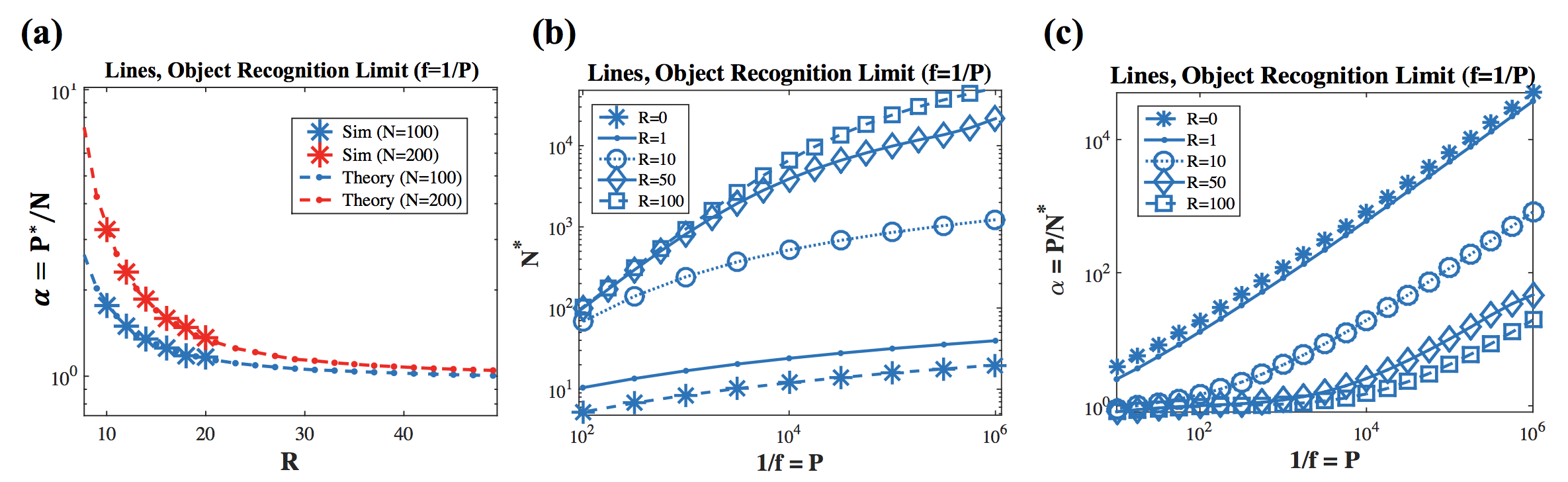}
\par\end{centering}
\caption{\textbf{Line Classification of Line Segments in Object Recognition
Limit ($f=1/P$)}. (a) Capacity $(\alpha_{L}(f=1/P))$ as a function
of size $R$ for different network sizes. (b) Minimum network size
required for linear separability ($N^{*}$) as a function of number
of manifolds/objects ($P$), for different sizes $R$ (c) Capacity
as a function of number of manifolds/objects ($P$). \label{fig:obj_rec_lim}}
\end{figure}

If we want to classify $P=1000$ classes, whose object manifolds are
1 dimensional, with one-versus-all task, how many neurons are required
for the problem to be linearly separable? Figure \ref{fig:obj_rec_lim}(b)
provides an answer to that question with our theoretical estimate
of minimum required network size $N^{*}$. In the limit of point,
$R=0$, roughly $N^{*}=10$ is enough to classify 1000 objects. However,
if $R>50$, at least $N^{*}=1000$ is required to be linearly separable. 

Notice that the capacity in the units of load $(P/N)$ shows an interesting
behavior in the limit of large $R$ in Figure \ref{fig:obj_rec_lim}(c).
Notice that in large $R$ , the capacity is dominated by $1/D$ and
the improvement due to sparsity is smaller than when $R$ is small.
In other words, if the manifold sizes are large, making the labels
sparse does not improve the capacity as much as when the manifold
sizes are small. Therefore, the effect of size of the manifold $(R)$
on the capacity is more dramatic in the case of classification with
sparsity in the object recognition limit (compared to the dense label
classification). 

\section{Building Multi Layer Networks of Sparse Classifiers\label{sec:Multilayer}}

\textcolor{black}{In the section \ref{sec:Sparse-Coding}, we showed
how introducing the sparsity in the manifold labels improves substantially
the classification capacity of manifolds. Here we show that we can
use this feature to solve }\textcolor{black}{\emph{dense }}\textcolor{black}{classification
task. The general idea is as follows. Suppose we have an input neural
layer with size $N$ representing $P$ manifolds and a dense classification
task (i.e. label sparsity $f$ is $\sim0.5$), such that a linear
classifier applied directly to this input layer fails to classify
all stimuli correctly, namely, the manifolds in the input representations
are not linearly separable. }

\textcolor{black}{To solve the task we add a single hidden layer with
$M$ binary units (Fig. \ref{fig:sparseMulti}(a)). We would like
to generate a hidden layer representation of the manifolds that is
invariant, namely that all inputs from a given manifold are mapped
to a single activity pattern in the hidden layer. If we can achieve
this, the invariant representations in the hidden layer can easily
be linearly separable. In order to generate this invariant representations
in the intermediate layer, we generate $M$ random sparse labels for
each manifold, and learn the connection from the input layer to the
intermediate layer as a sparse linear classification in each unit
in the intermediate layer, which we assume is below the capacity and
therefore can be implemented wi}thout error. In the following subsection
below, we analyze the range of parameters and the performance of this
two layer network. 

\noindent \textcolor{black}{}
\begin{figure}[H]
\noindent \begin{centering}
\textcolor{black}{\includegraphics[width=1\textwidth]{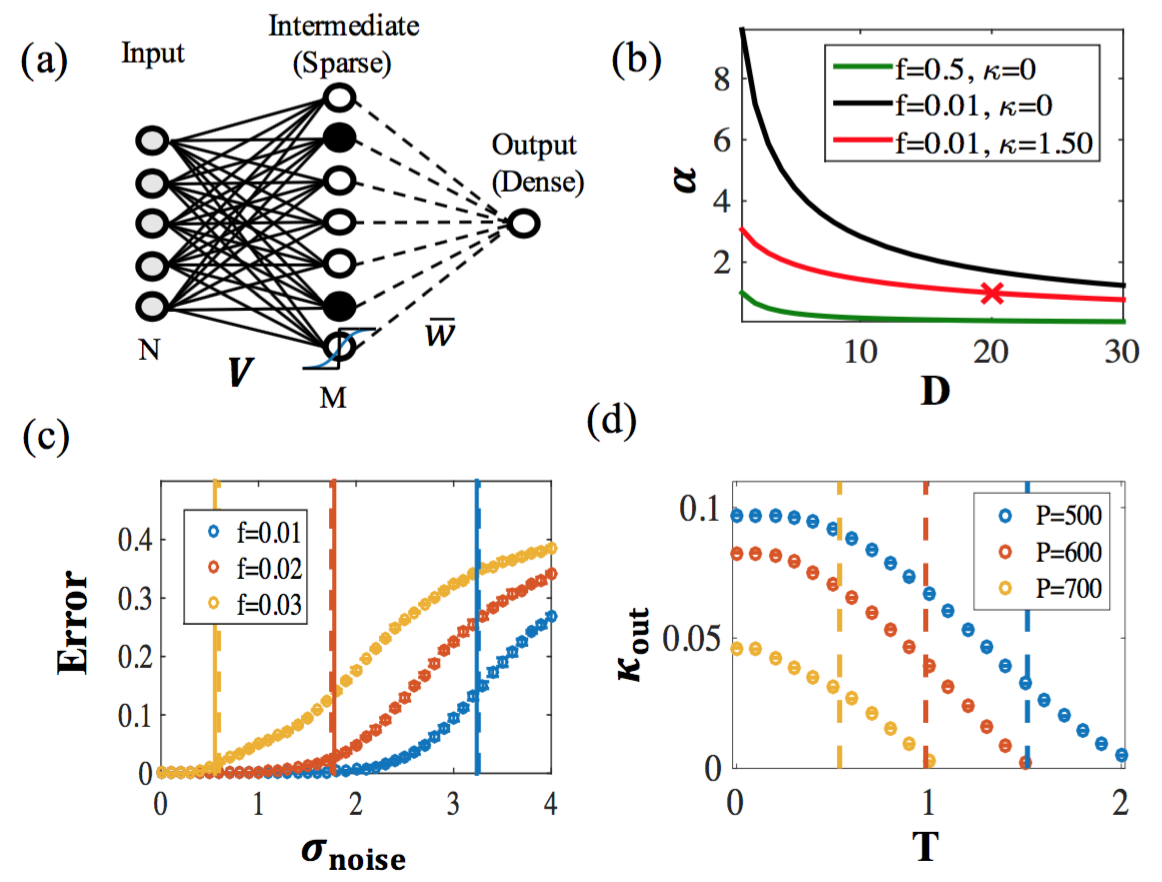} }
\par\end{centering}
\textcolor{black}{\caption{\textbf{\textcolor{black}{Sparse Intermediate Representation Enhances
the Invariant Processing of Manifolds.}}\textcolor{black}{{} (a) Input
layer with dimension $N$, where input vectors are drawn from a set
of $P$ manifolds. The weight matrix, $\boldsymbol{V}$, from the
input to the intermediate layer is constructed so that each node in
the intermediate layer is activated in invariant manner by randomly
chosen}\textcolor{black}{\emph{ fP}}\textcolor{black}{{} manifolds (where
$f$ is sparsity). This yields a sparse representation where all inputs
from the same manifold activates the same }\textcolor{black}{\emph{fM
}}\textcolor{black}{intermediate nodes. Output node classifies the
manifolds with desired dense binary labels. (b) Perceptron capacity
of manifold classification, $\alpha=P/N$ versus manifold ball dimension
$D$ at $R=1$, for different sparsity ($f$) and margin ($\kappa$).
The X marker denotes the working point for simulation in (c-d). (c-d):
Robustness to noise. (c) Probability of error at output layer, versus
the standard deviation of the input additive Gaussian noise $(\sigma_{\mbox{noise}})$
for different intermediate layer sparsity $f$. Input manifold dimension
and radius are: $D=20$ and $R=1$. For simulation, $N=M=500$, and
number of manifolds $P=250$ (which is five times the single layer
capacity, $P_{\mbox{SL}}=\alpha_{\mbox{SL}}N\approx48$, for $\kappa=0$,
for these manifolds). Markers indicate the simulation results for
the error for different sparsities. Robustness to noise is achieved
by ensuring a significant margin $\kappa_{\mbox{int}}$ at the intermediate
nodes; the margins $\kappa_{\mbox{int}}$ are shown as vertical lines.
(dashed) simulation (solid) analytical prediction. Higher sparsity
($f=0.01)$ ensures larger margin, and more robustness to input layer
variability. (d) Output margin ($\kappa_{\mbox{out}}$) versus the
smoothness parameter ($T$) of the sigmoid in the intermediate layer
$(1+e^{-x/T})^{-1}$, for different number of manifolds, $P$. $T=0$
is the binary limit. For the simulation, $N=500$, $M=500$, $f=0.01$,
$R=1$, $D=20$ was used. Values of $P$ are an order of magnitude
larger than the single layer capacity of the input layer which is
$P_{\mbox{SL}}\approx48$. (vertical lines) intermediate layer margins.
For the same $f$, smaller number of manifolds ($P=500$) allows larger
margin, and higher robustness to the smoothing of the intermediate
layer responses and the resultant manifold variability. }\textcolor{black}{\footnotesize{}\label{fig:sparseMulti}}}
}
\end{figure}

\noindent \textcolor{black}{}%

\subsection{Capacity of Two-Layer Network with Sparse Invariant Representation }

\noindent \textcolor{black}{As a simple example, we focus on input
manifolds of $D$ dimensional balls with radius $R$ such that $P$
(number of balls) is larger than the linear classification capacity
of such balls, $N\alpha_{B}(R,D)$. We first compute the perceptron
capacity for manifolds with sparse labels, parametrized as above by
$\kappa$ $D$, $R$ and $f$ , $\alpha_{B}(\kappa,R,D,f)$ , where
$f$ is the fraction of positive examples. Similar to classification
of points\cite{gardner1988space,babadi2014sparseness}, the perceptron
capacity of manifolds with sparse $f\ll1,$ is much higher than when
the labels are dense ($f=0.5)$ (Fig. \ref{fig:sparseMulti}(a)).
Consider now the task of invariant classification of manifolds where
the task labels are dense, (e.g., $f=0.5,$ and $\kappa=0$). If the
number of manifolds $P$ relative to the size of the input layer,
$N$ is above $\alpha_{B}(\kappa,R,D,0.5$), then the single layer
architecture will be unable to solve the linear classification task.
However, we can use the improved perceptron capacity for sparse labels
(Fig. \ref{fig:sparseMulti}(b)), to construct an intermediate representation
of $M$ nodes, each one of them trained for a randomly chosen sparse
labeling of the manifolds. These sparse labels are unrelated to the
task labels; they are used solely for building the intermediate representation.
As long as $P/N$ is below the perceptron capacity for sparse classification
of manifolds, the resultant intermediate layer generates an invariant
representation. The subsequent single layer readout at the output
layer can then perform the required dense classification as long as
$P/M<\alpha_{0}(\kappa,f=0.5)$, namely the perceptron capacity for
dense labels of }\textcolor{black}{\emph{points}}\textcolor{black}{.
The overall capacity for classification of manifolds of this two-layer
network is given by }$\alpha_{B}^{TL}(0,R,f=0.5)=\mbox{min}\left(2M,\:N\alpha_{D}(0,R,f)\right)/N$,
where $TL$ stands for two layer, much higher than that of the single
layer ($SL$), $\alpha_{\mbox{SL}}=\alpha_{B}(0,R,D,0.5)$ (for $M\geq N$)
\textcolor{black}{(Fig. \ref{fig:sparseMulti}(c-d) captions)}. In
addition, below the zero-margin capacity, $\alpha_{\mbox{D}}^{TL}(0,R,f=0.5)$,
the maximum margin $\kappa_{\mbox{out}}$ at the output node is given
by $\alpha_{0}(\kappa_{\mbox{out}},0.5)=P/M$.

\subsection{Robustness to Noise }

\noindent \textcolor{black}{This two-layer architecture shows not
only enhanced capacity, but can also enhance the system's robustness
to noise. To achieve robustness, the sparsity of the intermediate
representation should be sufficiently large to have a significant
margin, $\kappa_{\mbox{int}}>0$ in their representation (where subscript
'int' stands for intermediate layer), i.e, $P/N<\alpha_{D}(\kappa_{\mbox{int}},R,f)$.
Then adding noise may cause the sparse representation in the intermediate
layer to be only approximately invariant to the input manifold degrees
of freedom, nevertheless the effect on the performance will be small
provided the noise level is small compared with $\kappa_{\mbox{int}}$.
We demonstrate this in two cases. First, additive full-rank Gaussian
noise was introduced to the input layer. As shown in (Fig. \ref{fig:sparseMulti}(c))
the two layer network is robust to a range of noise values, and networks
with sparser intermediate representations have output error probabilities
close to zero for a larger range of noise.} For output noise in the
intermediate layer we haven't analyzed the performance explicitly,
but we present the heuristic analysis based on the idea that the level
of robustness to noise in each unit should be determined by the size
of each intermediate unit's classification margin $\kappa_{\text{int}}$.
The details of the numerics are given in the appendix to the chapter
(Section \ref{subsec:Multilayer_Appendix}). 

\noindent \textcolor{black}{Next, stochasticity in the activation
of the intermediate binary units is modeled by changing their activation
function to smooth sigmoidal units with gain parameter $1/T$. Smooth
rate functions are also more representative of biologically realistic
rate-based models of neural networks. Fig. \ref{fig:sparseMulti}(d)
shows that for a broad range of $T$ the output readout was able to
correctly classify the manifolds. The critical value of $T$ above
which classification fails is roughly given by the margin of the intermediate
layer. These examples show how to construct a network able to classify
manifolds with small error even when the intermediate layer is not
completely invariant to the manifold representation.}

\subsection{Discussion }

\noindent \textcolor{black}{In this section, we have shown that by
using classifiers of manifolds with sparse labels, a two layer network can be constructed
with enhanced manifold processing capacity and robustness to noise.
Thus, our theory provides a biologically plausible simple feedforward
network model that is capable of processing object relation information
in an invariant manner. The current theory can be extended to in several
important ways. Here we focused on training the network weights with
full manifolds and adding an additive noise after training, but the
network weights can be trained with subsamples of manifolds, or with
noisy realizations of manifolds. Here we focused on the intermediate
layer nonlinearity to be a sigmoid function, but other types of nonlinearities
such as ReLu can be considered, which can have a different effect
on the reformatted shape of the manifold. Intermediate layer invariance
can be achieved by different methods such as max-pooling or polynomial
nonlinearities, and we hope to explore the role of such processing
on the reformatting of the manifolds using the manifold capacity framework
in the future. }

\section{Kernel Extensions}

In section \ref{sec:Multilayer}, we showed how an additional intermediate
layer with sparsity can improve the output readout capacity for manifolds.
This is an example where a nonlinearity in each unit in the intermediate
layer created a new representation that is easier to be read out by
the output unit. In this section, we show another example of how nonlinear
processing reformats the input manifolds so that the output linear
separability is improved, by using kernels. 

Traditionally, nonlinear kernels have been used in the SVM dual framework
to allow for nonlinear classification of points \cite{vapnik1998statistical}.
Here we show that when input patterns are on manifolds, how nonlinear
kernels achieve the non-linear classification can be analyzed as improved
manifold classification capacity of reformatted manifolds in the kernel
feature space (Fig. \ref{fig:kernelmanifold}\cite{burges1999geometry}).
We also extend $M^{4}$ algorithm provided in Chapter \ref{cha:m4}
to show that an iterative method with the same principle can be used
to find a kernel-SVM solution for manifold classification (kernel-$M^{4}$). 

\subsection{Manifold Capacity under a Quadratic Kernel }

The effect of the kernel operation on the geometric properties of
the manifolds depends on the kernels. As a simple example of a non-linear
kernel, we study the improved classification capacity of manifolds
with quadratic kernels. We extend our theory to provide upper and
lower bounds for the classification capacity of manifolds with dimensionality
$D$ in input space after their transformation to a quadratic feature
space.

Consider arbitrary manifolds embedded in $D$-dimension, where each
point in the manifold can be expressed as $\mathbf{x}=\mathbf{x}_{0}+\sum_{i=1}^{D}s_{i}\mathbf{u}^{i}$,
where $\mathbf{x}_{0}$ is a $N$-dimensional center vector and $\mathbf{u}^{i}$
($i=1,...,D$) are the basis vectors, and parametrized by $\vec{s}$,
where $s_{i}$ corresponds to the $i$ th basis vector. The feature
space of a homogenous quadratic kernel is $\left\{ x_{i}x_{j},i\leq j\right\} $
which has $N_{f}=(N+1)N/2$ unique components. The feature space,
$\left(\mathbf{x}_{0}+\sum_{k=1}^{D}s_{k}\mathbf{u}^{k}\right)_{i}\left(\mathbf{x}_{0}+\sum_{l=1}^{D}s_{l}\mathbf{u}^{l}\right)_{j}$,
can be expanded as $x_{0i}x_{0j}+\sum_{k}s_{k}\left(u_{i}^{k}x_{0j}+x_{0i}u_{j}^{k}\right)+\sum_{k=1}^{D}\sum_{l=1}^{D}s_{k}s_{l}u_{i}^{k}u_{j}^{l}$.
This is $N_{f}$-dimensional vector with a center $x_{0i}x_{0j}$
and $(D+3)D/2$ basis vectors. The basis vectors consist of the two
classes: $u^{k}u^{l}$ where there are $(D+1)D/2$ of them, and $\left(u_{i}^{k}x_{0j}+x_{0i}u_{j}^{k}\right)$
where there are $D$ of them. Therefore, for input space dimension
$N$, the ambient dimension in this feature space is $N_{f}=(N+1)N/2$.
On the other hand, the dimensionality of the manifolds in feature
space is $D(D+3)/2$. 

In order to illustrate the effect of kernels on the learning of classifier
of the manifolds, we present 2 simple examples. Quadratic kernels
applied to $1D$ lines and $2D$ circles. The geometric illustration
of how$1D$ lines and $2D$ circles with radius $R$ map from input
space to quadratic kernel's feature space is provided in Fig. \ref{fig:kernelmanifold}(a),(c).
For $R=0$, the (zero margin) capacity $P$ increases from $2N$ to
$(N+1)N$, as given by the capacity for classifying points. On the
other hand, for $R=\infty$, the weight vector has to be orthogonal
to all the dimensions spanned by the reformatted manifolds, yielding
the capacity $P=\frac{N(N+1)}{(1+D(D+3))}$ in this limit. For an
intermediate $R$, the capacity will be affected by the extent and
geometry of the manifolds in feature space. The predicted bounds are
compared with numerical simulations in Fig. \ref{fig:kernelmanifold}.
These considerations can be easily generalized to polynomial kernels
with higher degree.

\begin{figure}[h]
\begin{centering}
\includegraphics[width=1\textwidth]{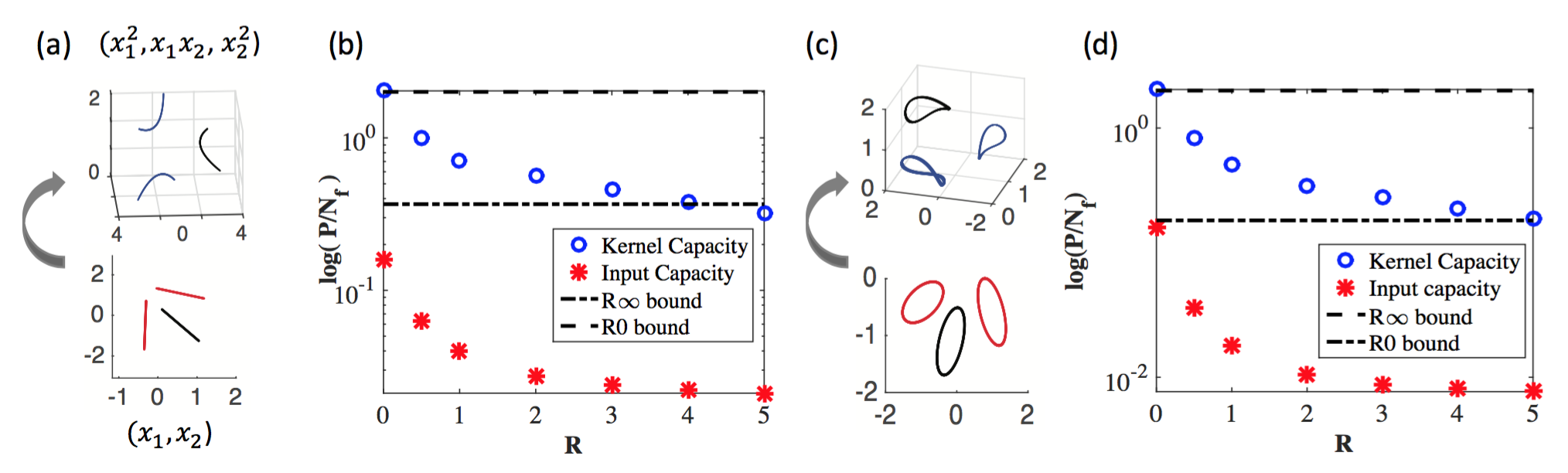}
\par\end{centering}
\caption{\textbf{Manifold Classification with a Quadratic Kernel.} (a) Classification
of Lines embedded in $N$-dim input space (black versus red, bottom)
maps to $2D$ curves (black vs. blue, top) embedded in $N_{f}=(N+1)N/2$
dim kernel feature space. (b) Line capacity in input space (red),
and quadratic kernel space (blue), shown as $P$ (number of manifolds)
over $N_{f}$, and the bounds on the kernel capacity: $\alpha=2$
($R=0$) and $\alpha=2/5$($R=\infty,D=2$) (dashed lines). (c) Classification
of $2D$ circles (black v. red, bottom) maps to $5D$ manifolds (black
vs. blue, top). (d) Manifold capacity of $2D$ circles in input space
(red) and quadratic kernel space (blue), and the bounds on the kernel
capacity: $\alpha=2$ ($R=0$) and $\alpha=2/11$($R=\infty,D=5$).
In both the line and 2D circles, the manifold capacities are improved
by the quadratic kernel operation. \label{fig:kernelmanifold}}
\end{figure}

\subsection{Kernel-$M^{4}$ Algorithm }

\paragraph*{General Framework }

The Kernel-$M^{4}$ algorithm applies the same logic as the $M^{4}$
algorithm in the chapter \ref{cha:m4}, but in the dual SVM framework.
The separating hyperplane is represented implicitly by dual coefficients
$\vec{\alpha}$ and point examples $\left(\mathbf{x}_{l},y_{l}\right)$
and a bias $b$, and the field induced by an input $\mathbf{x}$ is
given as $\sum_{l}\alpha_{l}y_{l}K\left(\mathbf{x},\mathbf{x}_{l}\right)+b$.
The Kernel-$M^{4}$ algorithm iteratively calls a quadratic optimization
solver on a finite number of labeled examples in the dual framework.
Given a current estimate of $\vec{\alpha}$ and $b$ , the algorithm
searches for the point on the manifolds with the worst margin. If
the margin of the new point is worse than the previous estimate, the
point set is augmented, i.e. the kernel matrix is increased by one
column and row, and the dual SVM solver is run to update values for
$\vec{\alpha}$ and $b$. The pseudocode for kernel-$M^{4}$is given
by Alg. \ref{alg:kernelM4}. 

\begin{algorithm}
{[}$\vec{\alpha}$,$b${]} = \textbf{function kernel-M4}($K$ , $S$
,$\epsilon$ )

\textbf{Input:} kernel type $K$ , data manifold parameters $S$ ,
tolerance $\epsilon$

Initialize: $\vec{\alpha}$,\textbf{$b$, $\delta$}

$t=0$

\textbf{while $\delta>\epsilon$ do }

$\quad$1. $t=t+1$

$\quad$2.$\alpha^{t}=$ $\max_{\alpha_{l}\geq0}\sum_{l=1}^{M}\alpha_{l}-\frac{1}{2}\sum_{j=1}^{M}\sum_{k=1}^{M}\alpha_{j}\alpha_{k}y_{j}y_{k}K\left(\mathbf{x}_{j},\mathbf{x}_{k}\right)$
$\mbox{ for \ensuremath{\forall\nu<t}}$ and and $\alpha^{tT}y=0$. 

$\quad$3. Compute $b$ and $h_{min}^{t}$. 

$\quad$4. Search for the new pattern such that $\frac{\left(h_{min}^{t}\right)_{new}}{||w||}=\left\{ \mbox{min}_{\vec{s},\mu}y_{\mu}\left(\sum_{l}\alpha_{l}y_{l}K\left(\mathbf{x}_{l},\mathbf{x}_{\mu}(\vec{s})\right)+b\right)\right\} $. 

$\quad$5. $\delta=\left\Vert \left(h_{min}^{t}\right)_{new}-h_{min}^{t}\right\Vert $

\textbf{end}

\textbf{Output:} $\alpha$, $b$

\caption{Pseudocode for kernel-$M^{4}$\label{alg:kernelM4}}
\end{algorithm}

In general, finding the point with the worst margin in the kernel
feature space may be hard as the convexity of the input manifolds
may be lost by the nonlinear kernel operation. If the manifolds are
given by finite sets of points, then, the search over all points to
find the worst point can be performed, where each search is upper
bounded by the number of examples. If the input manifolds are uncountable
sets of points, where the complete parameterization for the shape
$f(\vec{s})=0$ is given, then the search for the worst point is limited
to finding the worst $\vec{s}^{*}$, and sometimes $\vec{s}^{*}$
may be found analytically. If the parameterization is not available,
then one may need to find the worst point with a local search using
a gradient, but if there is no estimate of the convex hull, this operation
is not necessarily a convex problem, which may be investigated further
in the future.

With certain manifolds and kernel functions, the worst point operation
can still be done efficiently. As an illustrative example, we demonstrate
an example of the maximum margin classification of line segments and
$2D$ circles under a homogenous quadratic kernel, $K\left(\mathbf{x}_{j},\mathbf{x}_{k}\right)=\left(\mathbf{x}_{j}^{T}\mathbf{x}_{k}\right)^{2}$.
For these examples, we can reduce the worst point operation to a finite
set of analytical solutions, which is as efficient as regular$M^{4}$operation.
This computation can also be generalized to $D$-dimensional balls.

\subsubsection{$D$-dimensional Balls and Quadratic Kernel $M^{4}$ }

Here we show an example of quadratic kernel-$M^{4}$with $D$-dimensional
balls with radius $R$. In this case, finding the smallest distance
to the solution plane from each point on the manifold in the kernel
space is: 

\begin{equation}
\underset{\mbox{\ensuremath{\vec{s}}},f_{\mu}(\vec{s})=0}{\text{argmin}}h^{*}(\vec{s})=\underset{\mbox{\ensuremath{\vec{s}}},f_{\mu}(\vec{s})=0}{\text{argmin}}y_{\mu}\left(\sum_{j=1}^{M}\alpha_{j}y_{j}K\left(\mathbf{x}_{j},\mathbf{x}(\vec{s})\right)+b\right)\label{eq:minH_Kernel_Ball}
\end{equation}

where $\mathbf{x}=\mathbf{x}_{0}^{\mu}+\sum_{i=1}^{D}s_{i}\mathbf{u}_{i}^{\mu}$,
and $f_{\mu}(\vec{s})=0$ is the shape constraint of the $\mu$th
ball. 

Now, the closet point in the $D$-dimensional ball manifold $\mu$
can be found by considering

\begin{equation}
\vec{s}^{*}=\mbox{argmin}_{\vec{s}}\left[y_{\mu}\left(\sum_{l=1}^{\nu}\alpha_{l}y_{l}K\left(\mathbf{x_{l}},\mathbf{x}_{0}^{\mu}+R\sum_{j=1}^{D}s_{j}\mathbf{u}_{j}^{\mu}\right)+b\right)+\lambda\left\{ \sum_{j=1}^{D}s_{j}^{2}-1\right\} \right]
\end{equation}

This is in general hard, but analytically solvable for quadratic kernel
and a ball. For a homogenous quadratic kernel $K(\mathbf{x}_{n},\mathbf{x}_{m})=\left(\mathbf{x}_{n}^{T}\mathbf{x}_{m}\right)^{2}$,
we solve for $\vec{s}^{*}$ by taking a derivative of $h(\vec{s})$. 

The worst point $\vec{s}^{*}$ on the $\mu$th $D$-dimensional balls
are found to be 

\[
\lambda\left\{ s_{j}^{\mu}\right\} ^{*}=A_{j}^{\mu}+\sum_{j'=1}^{D}B_{jj'}^{\mu}s_{j'}^{\mu k}
\]

where 

\[
A_{j}^{\mu}=y_{\mu}\sum_{l=1}^{\nu}\alpha_{l}y_{l}\left[\left\{ \mathbf{x}_{l}^{T}\mathbf{x}_{0}^{\mu}\right\} \left(\mathbf{x}_{l}^{T}\mathbf{u}_{j}^{\mu}\right)\right]
\]

\[
B_{jj'}^{\mu}=Ry_{\mu}\left[\sum_{l=1}^{\nu}\alpha_{l}y_{l}\left(\mathbf{x}_{l}^{T}\mathbf{u}_{j}^{\mu}\right)\left(\mathbf{x}_{l}^{T}\mathbf{u}_{j'}^{\mu}\right)\right]
\]

with normalization on $\left\Vert \vec{s}^{*}\right\Vert =1$, and
$j,j'=\{1,...,D\}$, and $\vec{\alpha}$ , $\vec{y}$ are given. $l=1,...,\nu$
is an index of all the training points added so far, 

And for line segments, the worst point $s^{*}$ on the $\mu$ th line
is 

\[
a^{\mu}+b^{\mu}\left(s^{\mu}\right)^{*}=\lambda\left(s^{\mu}\right)^{*}
\]

where 

\[
a=y_{\mu}\sum_{l=1}^{\nu}\alpha_{l}y_{l}\left[\left\{ \mathbf{x}_{l}^{T}\mathbf{x}_{0}^{\mu}\right\} \left(\mathbf{x}_{l}^{T}\mathbf{u}^{\mu}\right)\right]
\]

\[
b=Ry_{\mu}\left[\sum_{l=1}^{\nu}\alpha_{l}y_{l}\left(\mathbf{x}_{l}^{T}\mathbf{u}^{\mu}\right)\left(\mathbf{x}_{l}^{T}\mathbf{u}^{\mu}\right)\right]
\]

with normalization on $\left\Vert \vec{s}^{*}\right\Vert =1$. 

In general, the solution for $\vec{s}^{*}$ has two classes. One is
where the closest point to the hyperplane in the kernel space comes
from the interior area of the convex hulls of the manifold in the
input space ($\lambda=0$ ), and the other is when $\vec{s}^{*}$
comes from the convex hull in the input space (nonzero $\lambda$).
Generally, one can solve for both cases, and check which $\vec{s}^{*}$
gives a smaller field, by computing Eqn. \ref{eq:minH_Kernel_Ball}
for each $\mu$ and do this over all manifolds, and with $P$ candidate
points, find the smallest one again. This is the step 4 of Alg. \ref{alg:kernelM4},
for $D$-dimensional balls. 

\subsection{Discussion }

In this section, we showed how the role of nonlinear kernel on data
manifolds with small manifold capacity (in other words, linearly non-separable
manifolds) in the input layer can be viewed as reformatting them to
increase the manifolds capacity in the nonlinear feature space, using
quadratic kernels and simple $1D$ and $2D$ balls in the input space
as examples. Which kernels are best suited for the classification
problem depends on the types of the data, which, in our case, are
manifold structures. We laid the ground here for future analysis with
manifolds capacity in kernel feature space, by formulating the iterative
algorithm for finding the max margin solution for manifolds with kernels
and demonstrating the simple examples of $1D$ and $2D$ balls, as
well as providing bounds on their manifold capacities in input and
feature space. Enabled with our theory of capacity of general manifolds
(Chapter \ref{cha:genManifolds}), we hope to extend our manifold
capacity analysis for more complex manifolds to study the role of
nonlinear kernels. 

\section{Generalization Error }

So far, we have considered different aspects and extensions of manifold
classification capacity, mainly motivated by a linear classifier that
achieves zero training error. However, another important aspect of
the linear separability of data manifolds is the generalization error
of a linear classifier. This is particularly relevant in more realistic
settings, where the manifolds given for training are not full manifolds,
but only a subset of the manifolds. An example of this would be when
the training data are convex hulls of subsamples of underlying manifolds.
Another relevant case is when there is noise in the input. By focusing
on the distribution of fields arising from manifold structures, here
we show how the our theory allows for the estimation of generalization
error for the manifold classification problem. 

\subsection{Generalization Error for General Manifolds, Given Weights. }

Exact analytical expression for the generalization error for general
manifolds is complicated; furthermore, the error depends on the assumed
sampling measure on the manifold (whereas the separability problem
is measure invariant). However, in the case of linearly separable
manifolds with high $D$ we can use the insight from the above theory
(the notions of effective dimensions and radii) to derive a particularly
simple approximation. 

Assume we have obtained a set of weights from learning manifolds either
from subsampling or from full unrealizable manifolds, so that we have
a vector $\mathbf{w}$ . Then, the generalization error can be expressed
as 

\begin{equation}
\epsilon_{G}^{\mu}=\left\langle \Theta\left(-h_{0}^{\mu}-\sum_{i}s_{i}h_{i}^{\mu}\right)\right\rangle _{s}
\end{equation}

where $h_{0}$are the field on the center $\mathbf{x}_{0}$and $h_{i}$are
the $D$ fields on the basis vector$\mathbf{u}_{i}$, generated by
the trained $\mathbf{w}$. The average over $\vec{s}$ is an integration
with constraint $f(\vec{s})=0$ . In other words, 

\begin{equation}
\langle F(\vec{s})\rangle_{\vec{s}}=\int d\vec{s}p(\vec{s})\delta(f(\vec{s}))F(\vec{s})
\end{equation}

and we assume $\int d\vec{s}p(\vec{s})\delta(f(\vec{s}))$=1 . An
important point to note is that the generalization error is in general
sensitive to the choice of measure to use in this average i.e., $p(\vec{s})$
.

The dependence on the weight $\mathbf{w}$ is via the fields $h_{0}$
and $\vec{h}$. Depending on the learning rule used to generate $\mathbf{w}$,
the above fields induced by $\mathbf{w}$ in general will not be the
same for all manifolds but will vary with a distribution $P(h_{0},\vec{h})$
even if all manifolds have the same geometry. For example if \textbf{$\mathbf{w}$}
is trained by the max margin classification of subsampled manifolds,
$P(h_{0})$ will be the distribution described in Chapter \ref{cha:genManifolds}
for finite point manifolds. Therefore, for a given $\mathbf{w}$,
the generalization error can be expressed as a double averaging 

\begin{equation}
\epsilon_{G}=\left\langle \left\langle \Theta\left(-h_{0}-\sum_{i}s_{i}h_{i}\right)\right\rangle _{\vec{s}}\right\rangle _{h_{0},h_{i}}
\end{equation}

\subsection{Gaussian Approximation in High Dimensional Manifolds.}

In high dimensional manifolds, we assume that $h_{i}$ are distributed
as projections of gaussian $\mathbf{w}$ on the $\mathbf{u}_{i}'s$
. Hence $h_{i}$ themselves are i.i.d. Gaussian distributed with norm
$1$. If we make this assumption we get, 

\begin{equation}
\epsilon_{G}^{\mu}=\left\langle \left\langle \Theta\left(-h_{0}^{\mu}-z\right)\right\rangle _{z}\right\rangle _{s}
\end{equation}

In other words, 

\begin{equation}
\epsilon_{G}=\left\langle H\left(\frac{h_{0}}{\sigma}\right)\right\rangle _{h_{0}}
\end{equation}

where the width of the distribution is roughly 
\begin{equation}
\sigma^{2}\approx\langle\vec{s}\cdot\vec{s}\rangle\approx R_{W}^{2}
\end{equation}

in the crude approximation. 

We might want to take into account the fact that the $h_{i}'s$ are
bounded so they are not unbounded Gaussians. We can use the mean width
of the manifold $R_{W}\sqrt{D_{W}}$ (Fig. \ref{fig:GenErr-Illu}(a))
as a measure of the bound. Finally, we can estimate the generalization
error to be 
\begin{equation}
\epsilon_{G}=\left\langle \frac{H(\frac{h_{0}(w)}{R_{W}})-H\left(\sqrt{D_{W}}\right)}{1-2H\left(\sqrt{D_{W}}\right)}\right\rangle _{h_{0}}\label{eq:GenErrorManifold}
\end{equation}

where we added a normalization so that $\epsilon_{G}=0.5$ for $h_{0}=0$.
Note that $h_{0}$ is the field from the center, and $D_{W}$ and
$R_{W}$ are effective dimension and effective radius of a manifold
identified in Chapter \ref{cha:genManifolds}. Intuitively, the generalization
error shows the fraction of a manifold on the wrong side of the hyperplane,
which is approximated by the area under one end of the tail of a Gaussian
distribution outside of the size of the center field induced by the
solution \textbf{$\mathbf{w}$}, divided by the area under the Gaussian
distribution with tails cut at the size of the mean width from both
end (approximately 1 in high $D_{W}$ regime). This relation is shown
in the Figure \ref{fig:GenErr-Illu}. 

\begin{figure}
\begin{centering}
\includegraphics[width=1\textwidth]{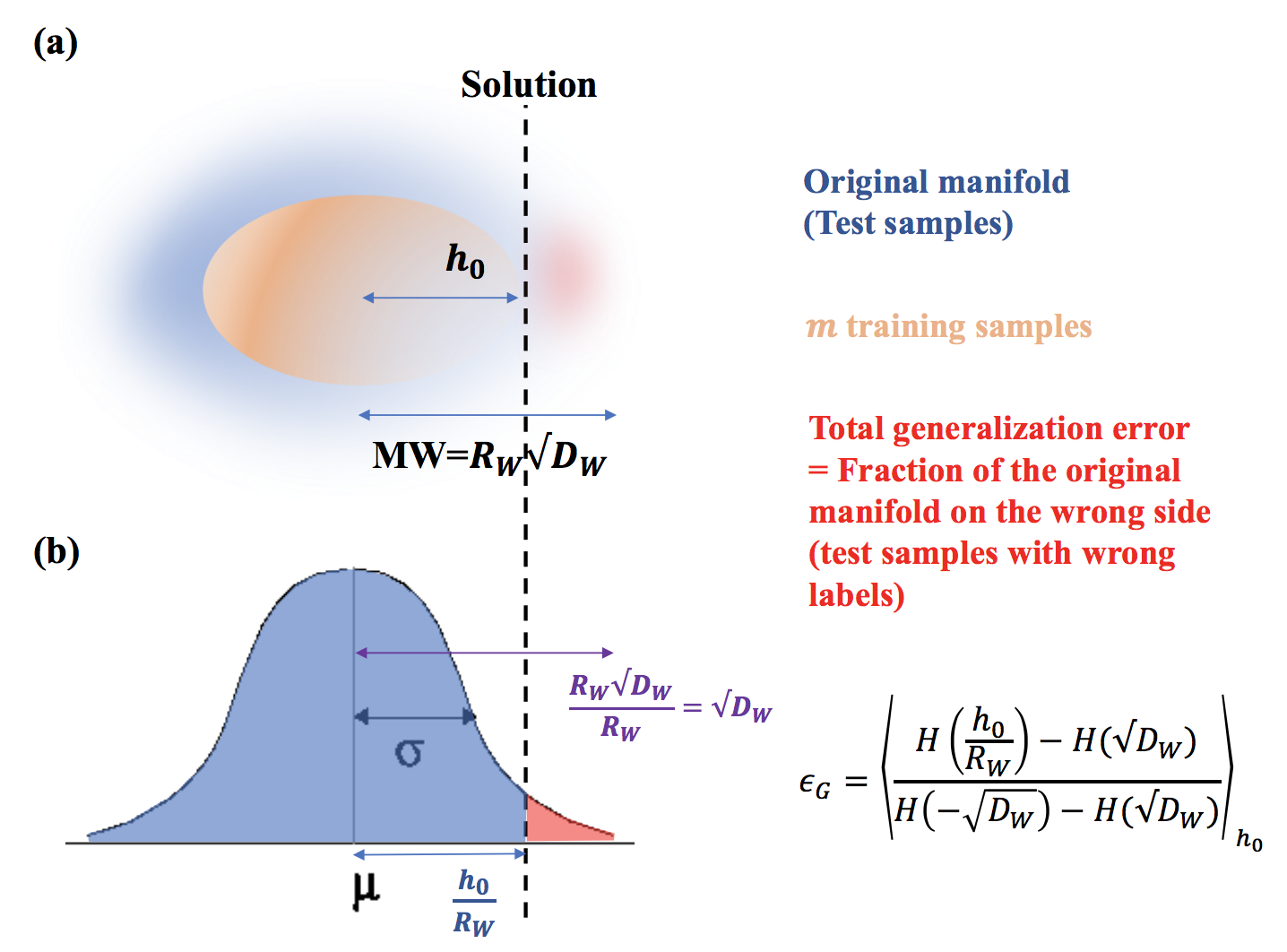}
\par\end{centering}
\caption{\textbf{Generalization Error from Each Manifold.} (a) The solution
hyperplane (dashed line) is determined by the training samples given
by the manifold (Orange manifold), and the distance between the center
of the original manifold (assuming the center of the testing manifold
and original manifold is the same) and the solution hyperplane is
given by $h_{0}$ (along the direction of $\mathbf{w}$ ), the field
induced by the testing manifold center. The size of the original (testing)
manifold, along the direction of $\mathbf{w}$ , is approximated by
the Mean Width, $R_{W}\sqrt{D_{W}}$ . (b) The generalization error
is the fraction of samples on the correct side of the hyperplane out
of the total samples (which is measure-dependent). Assuming the projections
of the manifold along $\mathbf{w}$ are Gaussian, we approximate generalization
error as ratio between the blue area and blue plus red area under
the gaussian. The width of the gaussian is estimated to be $\sigma=R_{W}$
as a crude approximation (assuming $h_{i}$'s are Gaussians with norm
1), with the tails truncated at$R_{W}\sqrt{D_{W}}$, and the distance
between the peak and the location of the hyperplane at $h_{0}$ (the
separation between correct and incorrect labels). Rescaling the $x$
axis such that $\sigma$ is 1, we get the expression $\epsilon_{G}\sim\langle\frac{H(\frac{h_{0}}{R_{W}})-H(\sqrt{D_{W}})}{H(-\sqrt{D_{W}})-H(\sqrt{D_{W}})}\rangle$.
\label{fig:GenErr-Illu}}
\end{figure}

The generalization error shows two regimes. In the case where full
(underlying) manifolds are linearly separable with margin $\kappa$,
Then the generalization error will eventually vanish as more samples
per manifold , $m$ , are presented during the training.

One thing to note is that as we increase the number of subsamples
for training $m$, the size of $h_{0}$ must grow, as the hyperplane
is always outside the mean width of the subsampled manifolds. According
to the extreme value theory, we expect the max distance between the
manifold center and extreme tail to grow like $\sqrt{2\text{log}m}$,
like the $m$th maximum value of samples of Gaussian iid distribution
\cite{bovier2005extreme}. In the limit of large $m$ , we obtain, 

\begin{equation}
\epsilon_{g}(m)\sim H(\kappa(m)+\sqrt{2\text{log}m})\propto\frac{\exp[-\kappa\sqrt{2\log m}]}{m}
\end{equation}

Interestingly, this decay is faster than the generic power law, $\epsilon_{g}(m)\propto m^{-1}$
of generalization bounds in linearly separable problem and reflects
the presence of finite margin of the entire manifold. This dependence
is demonstrated in Figure \ref{fig:genEtrE_vs_m_ellipsoids} (a1-a2)
using ellipsoids. 

In the case where the full manifolds are not separable and $\alpha$
is above the capacity, even subsampled manifolds with $m$ points
are not necessarily separable. Because the solution hyperplane may
intersect the sampled manifolds, $h_{0}$ no longer scales like $\sqrt{2\text{log}m}$.
Perhaps for this reason, the $m$ dependence of generalization error
given inseparable underlying manifolds is more like conventional general
power law, $\epsilon_{g}(m)-\epsilon_{g}(m=\infty)\propto m^{-1}$.
This is shown in \ref{fig:genEtrE_vs_m_ellipsoids} (b1-b2) using
ellipsoids.

\subsection{Numerical Investigations}

As a simple example, we computed the generalization error for a binary
dense classification of $P$ ellipsoids, where $R_{i}$ (radius in
$i$ th embedding dimension) is sampled from an uniform distribution
of $Unif[0.5R_{0},1.5R_{0}]$, and centers and axes are random Gaussian
distribution. From each ellipsoid, $m$ training samples and $m_{g}$
test samples were sampled, so that $s_{i}R_{i}^{-1}$ is from a uniform
spherical distribution. With these $mP$ finite training samples,
max margin solution was found using a standard slack-SVM solver, and
generalization error was computed using $m_{g}P$ test samples. Using
the centers of the ellipsoids and the max margin solution \textbf{$\mathbf{w}$},
theoretical estimation of the generalization error was computed using
Eqn. \ref{eq:GenErrorManifold}. We show the results of this simulation
in Fig. \ref{fig:genEtrE_vs_m_ellipsoids}. 

\begin{figure}[H]

\begin{centering}
\includegraphics[width=1\textwidth]{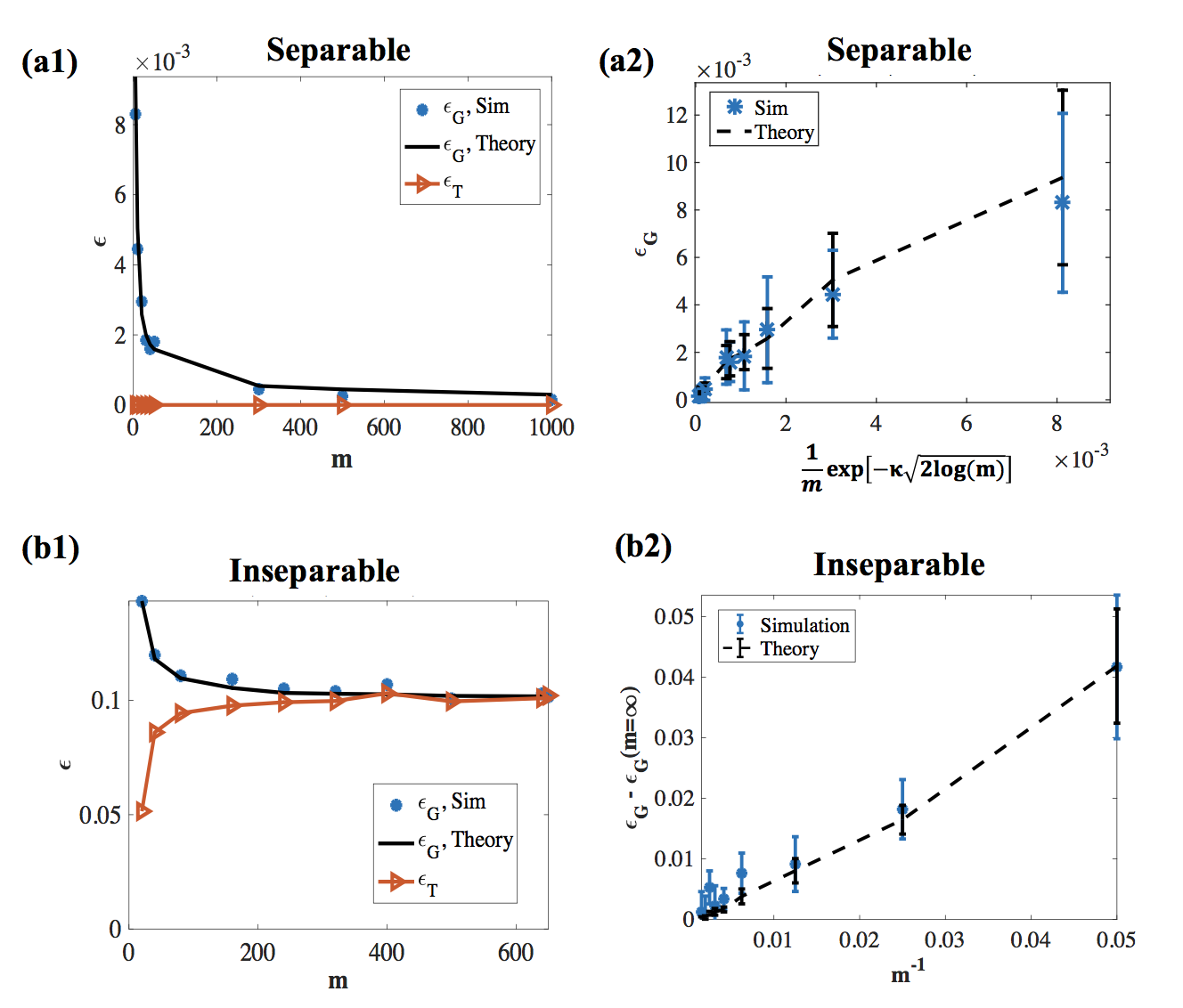}
\par\end{centering}
\caption{\textbf{Generalization Error of Ellipsoids Classification as a Function
of Number of Subsamples Per Manifold ($m$ ).} (a1-a2) In the regime
where the full manifolds are separable, the generalization error approaches
zero as $m$ is increased, at the rate of $\epsilon_{G}\sim\frac{1}{m}\text{exp}[-\kappa\sqrt{2\text{log}m}]$.
The manifolds used were $\text{\ensuremath{D}}$-dimensional ellipsoids
with $R_{i}\sim Unif[0.5R_{0},1.5R_{0}]$ and $P=10,N=50,R_{0}=1.00,D=50$.
(b1-b2) In the regime where the full manifolds are not separable,
the generalization error approaches at the rate of $\epsilon_{G}-\epsilon_{G}(m=\infty)\sim m^{-1}$
. The parameters for the ellipsoids used for the simulation were $P=20,N=50,R_{0}=2.00,D=50$.
In all simulations generalization error was tested with $m_{g}=1000$
per manifold. The theoretical predictions for generalization error
patches the generalization error calculated by the simulation well
in this regime. \label{fig:genEtrE_vs_m_ellipsoids}}

\end{figure}

We find that the Gaussian approximation of the generalization error
works quite well, and the estimation matches the generalization error
using simulations. We find that indeed in the separable case, the
$\epsilon_{G}$ is close to $\frac{1}{m}exp[-\kappa\sqrt{2logm}]$,
and in the inseparable case, $\epsilon_{g}(m)-\epsilon_{g}(m=\infty)$
is approximately $1/m.$

\subsection{Discussion }

Here we demonstrated how the insights from the manifolds capacity
theory can be used to compute the crude approximation of generalization
error in the high embedding dimension, which works surprisingly well
in the case of classification of ellipsoids. Clearly, we made some
assumptions for the sake of the approximation (i.e. $h_{i}$ are Gaussians).
The fully general replica theory of generalization error for manifolds
will require considerations of the actual manifold geometry, and the
measure of the samples on the manifold, $p(\vec{s})$, and using the
manifold-dependent distribution of fields $h_{0}$ and $h_{i}$ from
the replica theory.

\section{Analysis of GoogLeNet Manifolds }

So far, we have considered various extensions of the theory of manifold
classification for the analysis and application to real data. In this
section, we show how our manifold capacity theory can be applied to
realistic data, by analyzing manifold representations in conventional
deep networks as an illustrative example. In the recent years, the
performance of the artificial systems for visual classification tasks
has been focused on the generalization error of the final layer on
the test dataset. However, the underlying goal for training such system
is to create representations such that different objects are easy
to distinguish from each other. This idea is closely related to our
notion of manifold classification capacity. Using our theory, we analyze
how data representations reformat across different layers of GoogLeNet\cite{szegedy2015going}
, one of the widely used deep networks for a popular visual recognition
task, ImageNet \cite{deng2009imagenet} classification task. Using
different object classes of ImageNet dataset as manifolds, we show
how the quantities that contribute to the manifold classification
capacity, i.e. effective dimension, effective radius, Gaussian mean
width, and various correlations, change across the hierarchy of the
layers. 

\subsection{Methods}

We defined an object manifold as convex hulls of training samples
from different object classes from ImageNet classification task dataset
\cite{deng2009imagenet}. ImageNet Dataset has 1000 object classes
with roughly 1000 training samples in each object class. Here, we
computed the center of mass of each class (such that $\mu$th manifold's
center is $\mathbf{x}_{0}^{\mu}$ and mean of the centers are set
to be the origin), and selected a small set of object classes ($P$)
such that their center-to-center overlap, or center-to-center correlation,
($\rho_{center}=\hat{x}_{0}^{\mu}\cdot\hat{x}_{0}^{\nu}$) is smaller
than a threshold value. The correlation between centers of the image
object manifolds are surprisingly high, and there is a tradeoff between
a low threshold value $\rho$ and the number of object classes $P.$
In our simulation, we used $P=22$ and $\rho_{center}$ was roughly
0.3. 

To study how ImageNet object manifolds reformat in a network that
is guaranteed to achieve a high classification performance, we chose
GoogLeNet \cite{szegedy2015going}, a winner of ImageNet Large Scale
Visual Recognition Challenge (ILSVRC). We used pre-trained GoogLeNet
weights available via MatConvNet framework \cite{vedaldi2015matconvnet}
to extract and analyze the ImageNet Object Manifolds in different
layers. As the network size is extremely large, we focused on the
layers after the pooling layers. We randomly selected $N_{sub}$ units
from the total units from each layer, and computed $R_{W}$, $D_{W}$
using the set of points (defined in Chapter \ref{cha:genManifolds},
Eqns \ref{eq:R_W}-\ref{eq:D_W}). We also computed correlation coefficients
between the centers of manifolds in each layer, as well as the overlap
between centers and their own axis (self center-axis correlation)
and centers and axis of the rest of the object manifolds (cross center-axis
correlation). In the case of center-axis correlation, because object
manifolds are high dimensional and have different sizes of extents
along different embedding dimensions, the overlap measures were scaled
with the eigenvalue of the covariance matrix of the object manifold. 

\[
c_{c-c}=\langle\mathbf{\hat{x}}_{0}^{\mu}\cdot\mathbf{\hat{x}}_{0}^{\nu}\rangle_{\mu\nu}
\]

where 

\[
c_{c-u}^{self}=\left\langle \frac{\lambda_{i}^{\mu}}{||\vec{\lambda}^{\mu}||}\mathbf{\hat{x}}_{0}^{\mu}\cdot\mathbf{\hat{u}}_{i}^{\mu}\right\rangle _{i,\mu};\mu=1,...,Pi=1,...,D
\]

\[
c_{c-u}^{cross}=\left\langle \frac{\lambda_{i}^{\mu}}{||\vec{\lambda}^{\mu}||}\mathbf{\hat{x}}_{0}^{\nu}\cdot\mathbf{\hat{u}}_{i}^{\mu}\right\rangle ;\nu=1,...,P,\mu\neq\nu,i=1,...,D
\]

where $\lambda_{i}^{\mu}$ are the $i$ th eigenvalue of the covariance
matrix of the samples of the$\mu$ th manifold. 

\subsection{Results and Discussions}

In the figure below, we show the summary of the results from the analysis
described. We know from the perceptron capacity of random general
manifolds, that effective dimension and effective radius need to be
reduced in order to increase the manifold classification capacity.
We also have an ongoing theoretical work indicating that correlations
between manifolds reduce the effective ambient dimension of the data,
resulting in the reduced critical number of manifolds that can be
separated. 

\begin{figure}[H]
\begin{centering}
\includegraphics[width=1\textwidth]{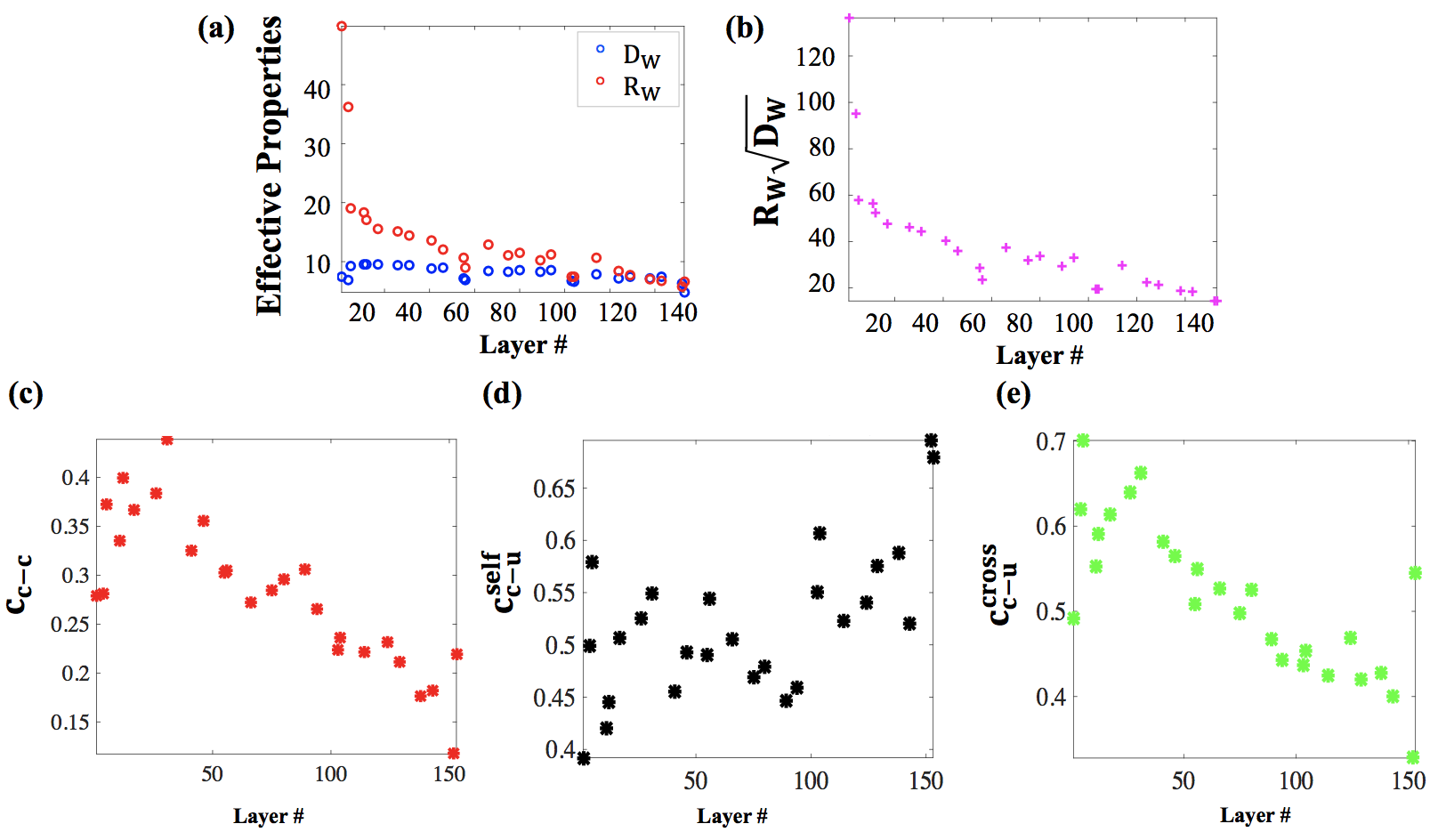}
\par\end{centering}
\caption{\textbf{Analysis of Manifold Properties in Different Layers of Deep
Networks.} Using the ImageNet Dataset for different class as object
manifolds, effective manifold properties were analyzed in different
layers of GoogLeNet. (a) Effective Dimension $D_{W}$ and Effective
Radius $R_{W}$ for different layers of GoogLeNet. (c) Half of Gaussian
Mean Width $(R_{W}\sqrt{D_{W}}$) for different layers of GoogLeNet.
(c) Correlation between manifold centers (d) Correlation between each
manifold center and its axes, averaged over manifolds. (e) Correlation
between each manifold center and the axes of the other manifolds,
averaged over manifolds. \label{fig:AnalysisDeepNets} }
\end{figure}

Interestingly, we find that the mean of the effective radius $R_{W}$
of ImageNet object manifolds decrease systematically while being processed
by the layers of the deep network (Fig \ref{fig:AnalysisDeepNets}(a)).
Particularly, the most dramatic improvement in $R_{W}$ appears in
the early part of the processing. On the contrary, the mean effective
dimension $D_{W}$ of ImageNet manifolds doesn't decrease as significantly
(Fig \ref{fig:AnalysisDeepNets}(a)). The values of $D_{W}$ at different
layers are quite close to $2\text{log}m$ throughout (where $m$ is
the number of training samples in each object class), and this may
imply that the effective dimension of object manifolds is determined
by the number of training samples of the data. Further investigations
on what determines $D_{W}$ of the realistic data is ongoing work. 

Recall that $R_{W}\sqrt{D_{W}}$, half of the Gaussian mean width,
is directly related to the capacity of manifolds. There is a systematic
drop in the size of $R_{W}\sqrt{D_{W}}$ across layers, where the
most dramatic improvement is in the early stage and there is a consistent
improvement in the rest of the layers (Fig \ref{fig:AnalysisDeepNets}(b)).
Supposedly the utility of the deep structure of the network is increasing
the manifold capacity, by reducing the Gaussian mean width. 

Note that in $R_{W}$ and $R_{W}\sqrt{D_{W}}$, the best values are
in the output (readout) layer, and the worst values are in the input
(pixel) layer (Fig \ref{fig:AnalysisDeepNets}(a)-(b)). 

Center-center correlations ($c_{c-c}$) also show systematic decrease
across layers (Fig \ref{fig:AnalysisDeepNets}(c)), implying that
deep network gradually decorrelates the centers of the object manifolds.
Note that when there is more correlation between the centers, the
total space spanned by the manifold centers will be more skewed (compared
to spherical), and the effective ambient dimension spanned by the
centers will be smaller, making the maximum critical number of manifolds
smaller \cite{litwin2017optimal}. Therefore, the deep network works
towards increasing the manifold capacity by reducing the correlation
between the centers. Other types of correlations (such as center to
axes, within the manifold, $c_{c-u}^{self}$, or across manifolds,
$c_{c-u}^{cross}$) show not as strong trends across the layers, and
we hope to address this further in the future. 

\subsection{Future Work }

In the future, we hope to connect the relationship between the theory
and the experimental manifold capacity, and take into account various
types of classifications, such as classification with sparse labels
in the object recognition limit. Many deep networks have similar computational
building blocks (i.e. convolution, max pooling, ReLu, dropouts, sparsity,
etc), and we hope to address the role of each computational building
block in terms of how they change the data manifolds. The change in
the manifold shapes and manifold capacity during the learning is also
an important future direction, in order to study what different types
of learning can achieve in each layer, with a quantitative measure
(manifold capacity). Finally, we hope to analyze the manifold representations
in neural data, particularly in different areas of the brain (i.e.
object representations in different stages of the visual hierarchy
in the cortex\cite{dicarlo2007untangling}). 

\section{Appendix }

\subsection{Multi-Layer Networks with Sparse Intermediate Layer \label{subsec:Multilayer_Appendix}}

In this section, we show the details for the numerical demonstrations
shown in Section. \ref{sec:Multilayer}. 

\subsubsection{\textcolor{black}{Training Two-Layer Network with Sparse Intermediate
Layer }}

\textcolor{black}{Input layer activity is $N$ dimensional and organized
as manifolds such that they can be described as $\mathbf{z_{in}^{\mu}}=\mathbf{x_{0}}^{\mu}+R\sum_{i=1}^{D}s_{i}\mathbf{u}_{i}^{\mu}\in\mathbb{R}^{N}$
where subscript 'in' denotes input layer, $\mathbf{x_{0}^{\mu}}$,$\mathbf{u}_{\mathbf{i}}^{\mathbf{\mu}}$,s,$R$
are from \ref{cha:spheres}. We draw a set of i.i.d., random sparse
binary labels $\{y_{i}^{\mu}\}$ (where $i=1,...,M$, $M$ being the
dimensionality of the intermediate layer) with probability $f$ of
being $1$. The output of the $i$-th nodes in the intermediate layer
is binary $0,1$ with $z_{i}=\Theta\left[\mathbf{V}^{(i)}\cdot\mathbf{z_{in}}+b^{i}\right]$
where $\Theta(x)$ is Heaviside step function. The weight vector to
the $i$ th node,$\mathbf{V}^{(i)}$ , and bias, $b^{i}$ are found
as a max margin (SVM) solution for $D$-dimensional balls to the set
of labeled pairs $\{\mathbf{z_{in}^{\mu}},y_{i}^{\mu}\}$, using the
same method as in Algorithm \ref{alg:Procedure-for-sampling}. In
Fig. }\ref{fig:sparseMulti}\textcolor{black}{, the network size used
was $N=M=500$. Initial training data for the SVM solver for spheres
had $m=2D$ samples per manifold, and then more points were added
until the iteration converged (Algorithm \ref{alg:Procedure-for-sampling}).
The output node is a linear readout of the intermediate layer representation,
$z_{out}=\mbox{sign}\left[\bar{w}\cdot\bar{z}+b_{f}\right]$ . {[}We
use overline to denote $M$-dim vectors.{]} The weight vector from
second layer to the output node, }\textbf{\textcolor{black}{$\bar{w}$}}\textcolor{black}{{}
, and bias, $b_{f}$ (where subscript $f$ stands for final), is trained
as an SVM solution to set of the labeled pairs, $\left\{ \mathbf{z}^{\mu},y_{out}^{\mu}\right\} $,
with the task dense labels $y_{out}^{\mu}=\pm1$ with $f=0.5$. For
details, see Algorithm \ref{alg:Pseudocode-for-Two-Layer}. }

\subsubsection{\textcolor{black}{Robustness to Input Noise }}

\textcolor{black}{Once the two layered network is fully trained (without
additive noise), we evaluate the probability of error in the output
by adding additive full-rank gaussian noise (with standard deviation
of $\sigma_{\mbox{noise}}$) to each input node and measuring the
output node classification error as a function of the variance of
the noise. For }\ref{fig:sparseMulti}\textcolor{black}{{} (c), about
100 samples per manifold ($m$) were used and number of trials was
5. For details, see Alg. \ref{alg:Pseudocode-for-Two-Layer}.}

\subsubsection{\textcolor{black}{Robustness to Noise in the Intermediate Layer }}

\textcolor{black}{We test the robustness to the introduction of smooth
sigmoidal units in the intermediate layer, by first evaluating the
intermediate weight matrix $\boldsymbol{V}$ to generate a sparse
binary representation in the intermediate layer. After training of
the intermediate layer, we replace the Heaviside step function of
the intermediate layer nodes with $\Phi_{T}(x)=\left(1+e^{-x/T}\right)^{-1}$,
which can be interpreted as the result of smoothing the binary function
by stochastic noise. The readout weights are calculated as SVM solutions
to the the densely labeled task pairs $\left\{ \mathbf{z}^{\mu},y_{out}^{\mu}\right\} $
where $\boldsymbol{\mathbf{z}^{\mu}}$ are the intermediate layer
sigmoidal responses to sampled inputs.}\textbf{\textcolor{black}{{}
}}\textcolor{black}{For }\ref{fig:sparseMulti}(d),\textcolor{black}{{}
about $100$ samples per manifold were used and number of trials was
5. The effect of the smoothness of the intermediate level on the classification
performance of the binary output unit is measured as a function of
the gain parameter $T$. For details, see Algorithm \ref{alg:Pseudocode-for-Two-Layer}.}

\begin{algorithm}
\textbf{Initialize: }

\textcolor{black}{$\quad$ $\mathbf{x_{0}}^{\mu},\mathbf{\vec{u}}^{\mu}\sim Norm(0,1)$
( $\mu=1,...,P$) {[}Sample centers and direction vectors{]}}

\textcolor{black}{$\quad$ $y_{\mbox{out}}^{\mu}\sim\mbox{sign}\left\{ \mbox{Unif}(-1,1)\right\} $
( $\mu=1,...,P$) {[}Sample dense labels for manifolds for output
layer{]}}\\

\textbf{Train V (Input to Intermediate Layer): }

\textcolor{black}{$\quad$ $s_{i}^{k,\mu}\sim\mbox{Unif}\left(-1,1\right)$
and $||\vec{s}^{k,\mu}||=1$ $i=1...D$,$k=1,...,m$, $\mu=1...P$.
{[}Sample $m$ coefficient vectors from each manifold{]}}

\textcolor{black}{$\quad$ }\textbf{\textcolor{black}{repeat:}}\textcolor{black}{{}
for$j=1,...,M$ {[}for each node in intermediate layer{]}}

\textcolor{black}{$\quad$ $\quad$ $\left(y^{\mu}\right)^{(j)}\sim P(y^{\mu})=f\delta(y^{\mu}-1)+(1-f)\delta(y^{\mu}+1).$
($\mu=1,...,P$) {[}Sample sparse labels for manifolds for intermediate
layer{]}}

\textcolor{black}{$\quad$ $\quad$ }$\mathbf{V}^{(j)},b^{(j)}$ =
$M_{B}^{4}$($\mathbf{x_{0}}^{\mu},\mathbf{\vec{u}}^{\mu},\left(y^{\mu}\right)^{(i)},R$,$m$)
{[}$M_{B}^{4}:$ $M^{4}$ algorithm for $L_{2}$ balls, see Chapter
\ref{cha:m4}{]}

\textcolor{black}{$\quad$ }\textbf{\textcolor{black}{end}}\\

\textbf{Train w (Intermediate to Output Layer): }

\textcolor{black}{$\quad$ $s_{i}^{k,\mu}\sim\mbox{Unif}\left(-1,1\right)$
and $||\vec{s}^{k,\mu}||=1$ $\mbox{for all}$ $i,k,\mu$}

\textcolor{black}{$\quad$ $\mathbf{z}_{in}^{k,\mu}=$$\mathbf{x_{0}}^{\mu}+R\sum_{i=1}^{D}s_{i}^{k,\mu}\mbox{\textbf{u}}_{i}^{\mu}$$\mbox{ for all \ensuremath{\mu,k}}$
{[}Sample $m$ points on each manifold (first layer activity){]}}

\textcolor{black}{$\quad$$z_{i}^{k,\mu}=\Theta\left[\mathbf{V}^{(i)}\cdot\mathbf{z_{in}}^{k,\mu}+b^{i}\right]$
for all $i=1,...,M$ and $k,\mu$ {[}Intermediate layer activity{]}}

\textcolor{black}{$\quad$$\bar{w},b_{f}=\mbox{svmsolver}(\bar{z}^{k,\mu},y_{out}^{k,\mu})$
{[}Find SVM solution with Intermediate layer activity and dense labels.{]}}\\

\textbf{Test Robustness to Input Noise (for $\sigma_{\mbox{noise}}$) }

\textcolor{black}{$\quad$ $s_{i}^{k,\mu}\sim\mbox{Unif}\left(-1,1\right)$
and $||\vec{s}^{k,\mu}||=1$ $\mbox{for all}$ $i,k,\mu$}

\textcolor{black}{$\quad$ $\mathbf{z}_{in}^{k,\mu}=$$\mathbf{x_{0}}^{\mu}+R\sum_{i=1}^{D}s_{i}^{k,\mu}\mbox{\textbf{u}}_{i}^{\mu}+\eta$
, where $\eta\sim Norm(0,\sigma_{\mbox{noise}}I)$ $\mbox{ for all \ensuremath{\mu,k}}$
{[}Sample $m$ points on each manifold with additive Gaussian noise
(input layer activity with noise){]}}

\textcolor{black}{$\quad$$z_{i}^{k,\mu}=\Theta\left[\mathbf{V}^{(i)}\cdot\mathbf{z_{in}}^{k,\mu}+b^{i}\right]$
for all $i,k,\mu$ }

\textcolor{black}{$\quad$$z_{out}^{k,\mu}=\mbox{sign}\left[\bar{w}\cdot\bar{z}^{k,\mu}+b_{f}\right]$ }

\textcolor{black}{$\quad$}\textbf{return}: $\epsilon_{G}=\left\langle \frac{1}{4}\left(z_{out}^{k,\mu}-y_{out}^{k,\mu}\right)^{2}\right\rangle _{k,\mu,\mbox{trials}}$\\

\textbf{Test Robustness to Noise in the Intermediate Layer (for $T\geq0$)}

\textcolor{black}{$\quad$ $s_{i}^{k,\mu}\sim\mbox{unif}\left(-1,1\right)$
and $||\vec{s}^{k,\mu}||=1$ for all $i,k,\mu$}

\textcolor{black}{$\quad$ $\mathbf{z}_{in}^{k,\mu}=$$\mathbf{x_{0}}^{\mu}+R\sum_{i=1}^{D}s_{i}^{k,\mu}\mbox{\textbf{u}}_{i}^{\mu}$
$\mbox{ for all \ensuremath{\mu,k}}$ }

\textcolor{black}{$\quad$$z_{i}^{k,\mu}=\Phi_{T}\left[\mathbf{V}^{(i)}\cdot\mathbf{z_{in}}^{k,\mu}+b^{i}\right]$
for all $i,k,\mu$ {[}Use smooth sigmoid for intermediate layer{]}}

\textcolor{black}{$\quad$$\bar{w},b_{f},\kappa_{\mbox{margin}}=\mbox{svmsolver}(\bar{z}^{k,\mu},y_{out}^{k,\mu})$
{[}SVM solution for intermediate layer activity and labels{]}}

\textcolor{black}{$\quad$}\textbf{\textcolor{black}{return:}}\textcolor{black}{{}
$\kappa_{\mbox{output}}\leftarrow\kappa_{\mbox{margin}}$ {[}Output
margin with SVM solution{]}}

\caption{Pseudocode for Two-Layer Network for Invariant Manifolds Classification\label{alg:Pseudocode-for-Two-Layer} }
\end{algorithm}

\chapter{Appendix A: Symbols and Notations }

\section{Notations}
\begin{description}[leftmargin=8em,style=nextline]
\item [{$N$}] Ambient Dimension for Data 
\item [{$P$}] Number of Manifolds 
\item [{$D$}] Embedding Dimension 
\item [{$\kappa$}] Margin 
\item [{$R$}] Radius of a ball 
\item [{$R_{i}$}] $i$ th radius of an ellipsoid ($i=1,...D$ for $D$
dimensional ellipsoid) 
\item [{$\alpha=P/N$}] Load (Number of Manifolds, $P$ /Ambient Dimension,
$N$) 
\item [{$\alpha_{C}$}] Critical capacity (general expression)
\item [{$\alpha_{G}$,$\alpha_{0}$}] Gardner's perceptron capacity for
points 
\item [{$\alpha_{L}$}] Capacity for Line Segments 
\item [{$\alpha_{B}$,$\alpha_{B_{2}}$}] Capacity for $L_{2}$ Balls 
\item [{$\alpha_{B_{p}}$}] Capacity for $L_{p}$ Balls 
\item [{$\alpha_{E}$}] Capacity for Ellipsoids 
\item [{$\alpha_{M}$}] Capacity for General Manifolds 
\item [{$\alpha_{iter}$}] Capacity for General Manifolds $\alpha_{M}$
found via iterative algorithm 
\item [{$\alpha_{||}$}] Capacity for Parallel Manifolds 
\item [{$\mu,p$}] Index of the $\mu$th ($p$th) manifold 
\item [{$\mathbf{x}$}] Point on a manifold 
\item [{$\mathbf{x}_{0}$}] Center of a manifold 
\item [{$\mathbf{u_{i}}$}] $i$ th basis vector of a manifold ($i=1,...,D$) 
\item [{$\mathbf{v}$}] $N$-dimensional vector $\mathbf{v}$ for an arbitrary
vector 
\item [{$\vec{v}$}] $D$-dimensinoal vector $\vec{v}$ for an arbitrary
vector 
\item [{$\mathbf{w}$}] Solution of a linear separating problem. 
\item [{$b$}] bias of a linear perceptron 
\item [{$y$}] binary ($\pm1$) labels 
\item [{$V$}] Volume of space of solutions 
\item [{$h_{0}$}] field induced by a center of a manifold $\mathbf{x_{0}}$
\item [{$h_{i}$}] field induced by the $i$ th axis of a manifold $\mathbf{u}_{i}$
\item [{$h$}] field induced by a data point $\mathbf{x}$
\item [{$R_{E}$}] Effective radius for an ellipsoid 
\item [{$D_{E}$}] Effective dimension for an ellispoid.
\item [{$R_{M}$}] Effective manifold capacity radius for a general manifold
(using self-consistent equations)
\item [{$D_{M}$}] Effective manifold capacity dimension for a general
manifold (using self-consistent equations)
\item [{$R_{W}$}] Effective manifold capacity radius for a general manifold
(using max projections on Gaussian vectors, related to the mean width) 
\item [{$D_{W}$}] Effective manifold capacity dimension for a general
manifold (using max projections on Gaussian vectors) 
\item [{$\epsilon_{g}$,$\epsilon_{G}$}] Generalization error 
\item [{$\epsilon_{t}$,$\epsilon_{T}$}] Training error 
\item [{$m$}] number of subsamples per manifold 
\item [{$\xi_{p}$}] Slack parameter for the $p$th manifold
\item [{$s_{min}(\vec{v}),\tilde{s}(\vec{v})$}] $\underset{\vec{s},f(\vec{s})=0}{argmin}\vec{s}\cdot\vec{v}$
\item [{$s^{*}$}] $s$ solved via self-consistent equations on $z_{0}$
and $s$
\item [{$\kappa_{E}$}] Excess margin for ellipsoids defined via $R_{E}\sqrt{D_{E}}$
\item [{$\kappa_{M}$}] Excess margin for general manifolds defined via
$R_{M}\sqrt{D_{M}}$
\item [{$\kappa_{W}$}] Excess margin for general manifolds in scaling
regime defined via $R_{W}\sqrt{D_{W}}$
\item [{$R_{c}$}] Critical radius for phase transition 
\item [{$d$}] Embedding dimension per ambient dimension $D/N$
\item [{$\rho$}] Overlap between the solution $\mathbf{w}$ and axes $\mathbf{u}_{i}$,
$\frac{1}{D}\sum_{i=1}^{D}\left(\mathbf{w}\cdot\mathbf{u}_{i}\right)^{2}$
\item [{$f$}] Sparsity of labels (Number of posive (or negative) labels/
Total number of labels) 
\item [{$\kappa_{\text{int }}$}] Margin in the intermediate layer unit
(if intermediate unit is like a binary classifier) 
\item [{$m$}] Number of subsamples (training samples) per manifold 
\item [{$M$}] Intermediate layer size 
\end{description}

\section{Mathematical Conventions}
\begin{description}[leftmargin=8em,style=nextline]
\item [{$\left\Vert \mathbf{v}\right\Vert $}] $L_{2}$ norm of a vector
$\mathbf{v}$
\item [{$\left\Vert \mathbf{v}\right\Vert _{p}$}] $L_{p}$ norm of a vector
$\mathbf{v}$
\item [{$\Theta(x)$}] Heaviside step function 
\item [{$\langle X\rangle$}] Average of a random variable $X$ 
\item [{$\langle\vec{x},\vec{y}\rangle$}] Inner product between two vectors
$\vec{x}$, $\vec{y}$
\item [{$\vec{x}\circ\vec{y}$}] Hadamard (element-wise) product 
\item [{$\left|S\right|$}] Cardinality (number of elements) of a set $S$ 
\item [{$\left|x\right|$}] Absolute value of a scalar $x$ 
\item [{$Dx$}] Gaussian measure, $\frac{1}{\sqrt{2\pi}}e^{-\frac{1}{2}x^{2}}dx$
\item [{$\chi_{D}(t)$}] Chi distribution, $\frac{2^{1-\frac{D}{2}}}{\Gamma(\frac{D}{2})}t^{D-1}e^{-\frac{1}{2}t^{2}}$
\item [{$H(x)$}] $\int_{x}^{\infty}Dz=\frac{1}{\sqrt{2\pi}}\int_{x}^{\infty}dze^{-\frac{z^{2}}{2}}$
\item [{$\hat{v}$}] $\vec{v}/||\vec{v}||$
\item [{$v$}] $||\vec{v}||$
\end{description}

\chapter{Appendix B: Gardner's Replica Theory of Isolated Points }

Consider a perceptron with $P$ input points $\mathbf{x}^{\mu}\in\mathbb{R}^{N}$
and $P$ labels $y^{\mu}=\pm1$ , $\mu=1,...,P$. Assume that each
component of$\mathbf{x}^{\mu}$ are Gaussian i.i.d. The weight plane
$\mathbf{w\in\mathbb{R}^{N}}$ needs to classify all the input points
such that all points are at least $\kappa$ away from the solution
hyperplane. For simplicity, consider the regime where the number of
positive labels and negative labels are equal. {[}This regime is called
``dense classification'' regime, where sparsity is $f=0.5$.{]}
In this regime, optimal bias for maximum capacity is $b=0$ by symmetry,
so let us ignore the bias term $b$ for now. 

The problem is now to find $\mathbf{w}$ such that 

\begin{equation}
h_{\mu}=\frac{y^{\mu}\mathbf{w}^{T}\mathbf{x}^{\mu}}{\sqrt{N}}>\kappa\label{eq:G_constraint}
\end{equation}

where $\kappa$ is a margin. $h_{\mu}$, which we refer to as a field
from a pattern $\mathbf{x}^{\mu}$ is a measure of distance between
the pattern $\mathbf{x}^{\mu}$ and the solution hyperplane denoted
by $\mathbf{w}$. Note that the denominator $\sqrt{N}$ is introduced
so that $h_{\mu}$ does not grow with the network size $N$. In general,
if $\mathbf{w}$ and $\mathbf{x}$ are random, then the scale of $\mathbf{w}^{T}\mathbf{x}$
is $\sqrt{N}$ . For now let us consider $\kappa=0$ case. 

Gardner calculated $V$, which is the volume of solutions (weight
vectors $\mathbf{w}$), which satisfy the classification constraint
Eqn. \ref{eq:G_constraint} with $\kappa=0$. If $V$ goes to 0, then
there is no solution and the network is beyond capacity. In order
to compute $V$, each component of $w_{i}$ needs to be integrated
with a term that is $0$ when the constraint is not satisfied, hence 

\begin{equation}
V=\int d\mathbf{w}^{N}\delta(\mathbf{w}^{2}-N)\prod_{\mu=1}^{P}\Theta(h^{\mu})\label{eq:G_Volumeh}
\end{equation}

where $\delta$ is a delta function, $\Theta$ is a Heaviside step
fuction, and the norm on $\mathbf{w}$, $\delta(w^{2}-N)$, is imposed
to count $\mathbf{w}$ in the same direction only once. Using the
expression for the field (Eqn. \ref{eq:G_constraint}), we get 

\begin{equation}
V=\int dw^{N}\delta(w^{2}-N)\prod_{\mu=1}^{P}\Theta(\frac{y^{\mu}w^{T}x^{\mu}}{\sqrt{N}})\label{eq:G_volumeW}
\end{equation}

where $\Theta(h^{\mu})$ is one if $\mathbf{w}$ is a separating solution,
and zero if it is not a solution, as the field (input to the $\Theta(x)$)
will be negative.

Note that $V$ involves a \emph{product} of many random contributions.
Products of independent random numbers are known to possess distributions
with long tails for which the average and the most probable value
are markedly different. However, the logarithm of such quantities
is a large \emph{sum }of independent terms, hence is expected to have
a Gaussian distribution so that its average and the most probable
value match. Therefore, the most typical value of $V$ is expected
to be 

\begin{equation}
V_{typical}\sim\text{exp}\left[\langle\text{log}V\rangle\right]
\end{equation}

Therefore, in order to get the typical behavior, we are interested
in the average of log$V$ \cite{engel2001statistical}. Hence, we
need to calculate

\begin{equation}
\langle\log V\rangle_{x^{\mu}}
\end{equation}

which is the average of $\log V$ over the quenched distribution of
patterns. We can do this by the following formula:

\begin{equation}
\langle\log V\rangle=\lim_{n\rightarrow0}\frac{\langle V^{n}\rangle-1}{n}\label{eq:ReplicaTrick}
\end{equation}

called ``replica trick'', originally developed to calculate quenched
averages in the theory of disordered solides. 

Let us then calculate the expectation of $V^{n}$, which, in the case
of a natural $n$, can be expanded as

\begin{equation}
\langle V^{n}\rangle=\langle\int\prod_{\alpha=1}^{n}dw^{\alpha}\delta(w_{\alpha}^{2}-N)\prod_{\mu}^{P}\prod_{\alpha}^{n}\Theta(h_{\alpha}^{\mu})\rangle_{x^{\mu}}
\end{equation}

where $\alpha$ is an index for each one of the $n$ replicas of the
original system with the same realization of random samples. We note
that the question of how to go from a natural $n$ to the limit of
$n\rightarrow0$ is in general a hard problem, and more on this matter
can be found in \cite{mezard1988spin}. 

We can replace $\Theta$ using the integral representation of the
$\Theta$ function, 

\begin{equation}
\langle\int\prod_{\alpha=1}^{n}dw^{\alpha}\delta(w_{\alpha}^{2}-N)\underbrace{\int_{\kappa}^{\infty}\prod_{\mu,\alpha}^{P,n}dh_{\mu}^{\alpha}\int_{-\infty}^{\infty}\prod_{\mu,\alpha}^{P,n}\frac{d\hat{h}_{\mu}^{\alpha}}{2\pi}e^{\frac{\left[i\hat{h}_{\mu}^{\alpha}h_{\mu}^{\alpha}-i\hat{h}_{\mu}^{\alpha}w^{\alpha T}x^{\mu}y^{\mu}\right]}{\sqrt{N}}}}_{C}\rangle_{x^{\mu}}
\end{equation}

But because $\kappa=0$ and $\hat{h}>0$, we can change the range
of the integrals to 

\begin{equation}
\langle\int\prod_{\alpha=1}^{n}dw^{\alpha}\delta(w_{\alpha}^{2}-N)\underbrace{\int_{0}^{\infty}\prod_{\mu,\alpha}^{P,n}dh_{\mu}^{\alpha}\int_{0}^{\infty}\prod_{\mu,\alpha}^{P,n}\frac{d\hat{h}_{\mu}^{\alpha}}{2\pi}e^{\frac{\left[i\hat{h}_{\mu}^{\alpha}h_{\mu}^{\alpha}-i\hat{h}_{\mu}^{\alpha}w^{\alpha T}x^{\mu}y^{\mu}\right]}{\sqrt{N}}}}_{C}\rangle_{x^{\mu}}\label{eq:G_VtoN_avg_hh}
\end{equation}
\begin{equation}
\end{equation}

Now average over the random patterns $x^{\mu}$only affect the term
noted as $C$, that is, 

\begin{equation}
C=\prod_{\mu=1}^{P}\int\prod_{\alpha=1}^{n}d\hat{h}_{\mu}^{\alpha}\int dh_{\mu}^{\alpha}\langle\prod_{\alpha=1}^{n}e^{\frac{-i\hat{h}_{\mu}^{\alpha}w^{\alpha T}x^{\mu}y^{\mu}}{\sqrt{N}}}\cdot e^{\frac{-i\hat{h}_{\mu}^{\alpha}h_{\mu}^{\alpha}}{\sqrt{N}}}\rangle_{x^{\mu}}
\end{equation}

Now change the product of exponentials to the sum of powers, and taking
out terms that don't depend on $x^{\mu}$, we get 

\begin{equation}
C=\prod_{\mu=1}^{P}\int\prod_{\alpha=1}^{n}d\hat{h}_{\mu}^{\alpha}\int dh_{\mu}^{\alpha}\langle e^{\sum_{\alpha=1}^{n}\frac{-i\hat{h}_{\mu}^{\alpha}w^{\alpha T}x^{\mu}y^{\mu}}{\sqrt{N}}}\rangle_{x^{\mu}}\cdot e^{\frac{-i\hat{h}_{\mu}^{\alpha}h_{\mu}^{\alpha}}{\sqrt{N}}}
\end{equation}

Since $w^{\alpha}$ is a vector, we expand the vector dot product
$w^{\alpha T}x^{\mu}$ as sum of each vector's elements, then

\begin{equation}
C=\prod_{\mu=1}^{P}\int\prod_{\alpha=1}^{n}d\hat{h}_{\mu}^{\alpha}\int dh_{\mu}^{\alpha}\langle e^{\frac{\sum_{\alpha=1}^{n}-i\hat{h}_{\mu}^{\alpha}\sum_{j}w_{j}^{\alpha}x_{j}^{\mu}y^{\mu}}{\sqrt{N}}}\rangle_{x^{\mu}}\cdot e^{\frac{-i\hat{h}_{\mu}^{\alpha}h_{\mu}^{\alpha}}{\sqrt{N}}}
\end{equation}

Then the sum $\sum_{j}$ in the power can be the product of exponentials,
then

\begin{equation}
C=\prod_{\mu=1}^{P}\int\prod_{\alpha=1}^{n}d\hat{h}_{\mu}^{\alpha}\int dh_{\mu}^{\alpha}\langle\prod_{j}e^{\frac{-i\sum_{\alpha=1}^{n}\hat{h}_{\mu}^{\alpha}w_{j}^{\alpha}x_{j}^{\mu}y^{\mu}}{\sqrt{N}}}\rangle_{x^{\mu}}\cdot e^{\frac{-i\hat{h}_{\mu}^{\alpha}h_{\mu}^{\alpha}}{\sqrt{N}}}
\end{equation}

Becasue each element is independent, we can take out product over
j

\begin{equation}
C=\prod_{\mu=1}^{P}\int\prod_{\alpha=1}^{n}d\hat{h}_{\mu}^{\alpha}\int dh_{\mu}^{\alpha}\prod_{j}\langle e^{\frac{-i\sum_{\alpha=1}^{n}\hat{h}_{\mu}^{\alpha}w_{j}^{\alpha}x_{j}^{\mu}y^{\mu}}{\sqrt{N}}}\rangle_{x_{j}^{\mu}}\cdot e^{\frac{-i\hat{h}_{\mu}^{\alpha}h_{\mu}^{\alpha}}{\sqrt{N}}}
\end{equation}

Now use the fact that $\int Dxe^{ixA}=e^{-A^{2}/2}$ , we get rid
of $\langle\rangle_{x}\sim\int Dx$ . Note that $A^{2}$ is like 

\begin{equation}
\left(\sum_{\alpha}\left(-y^{\mu}\right)\frac{\hat{h}_{\mu}^{\alpha}w_{j}^{\alpha}}{\sqrt{N}}\right)^{2}=\left(\sum_{\alpha}\frac{\hat{h}_{\mu}^{\alpha}w_{j}^{\alpha}}{\sqrt{N}}\right)^{2}=\frac{\left(\sum_{\alpha}\hat{h}_{\mu}^{\alpha}w_{j}^{\alpha}\right)^{2}}{N}
\end{equation}

So we have: 

\begin{equation}
C=\prod_{\mu=1}^{P}\int\prod_{\alpha=1}^{n}d\hat{h}_{\mu}^{\alpha}\int dh_{\mu}^{\alpha}\prod_{j=1}^{N}e^{\frac{-\left(\sum_{\alpha=1}^{n}\hat{h}_{\mu}^{\alpha}w_{j}^{\alpha}\right)^{2}}{2N}}\cdot e^{\frac{-i\hat{h}_{\mu}^{\alpha}h_{\mu}^{\alpha}}{\sqrt{N}}}
\end{equation}

which can be re-written as 

\begin{equation}
C=\prod_{\mu=1}^{P}\int\prod_{\alpha=1}^{n}d\hat{h}_{\mu}^{\alpha}\int dh_{\mu}^{\alpha}\prod_{j=1}^{N}e^{\frac{-\sum_{\alpha,\beta=1}^{n}\hat{h}_{\mu}^{\alpha}\hat{h}_{\mu}^{\beta}w_{j}^{\alpha}w_{j}^{\beta}}{2N}}\cdot e^{\frac{-i\hat{h}_{\mu}^{\alpha}h_{\mu}^{\alpha}}{\sqrt{N}}}
\end{equation}

Define $ $$q_{\alpha\beta}=\frac{1}{N}\sum_{j}w_{j}^{\alpha}w_{j}^{\beta}$
which is the replica symmetric order parameter. 

\begin{equation}
C=\prod_{\mu=1}^{P}\int\prod_{\alpha=1}^{n}d\hat{h}_{\mu}^{\alpha}\int dh_{\mu}^{\alpha}\underbrace{e^{-\frac{1}{2}\left(\sum_{\alpha=1}^{n}\sum_{\beta=1}^{n}\hat{h}_{\mu}^{\alpha}q_{\alpha\beta}^{-1}\hat{h}_{\mu}^{\beta}\right)}}_{X}\cdot e^{\frac{-i\hat{h}_{\mu}^{\alpha}h_{\mu}^{\alpha}}{\sqrt{N}}}
\end{equation}

Define 

\begin{equation}
X(\hat{h}_{\mu}^{\alpha})=e^{-\frac{1}{2}\left(\sum_{\alpha,\beta=1}^{n}\hat{h}_{\mu}^{\alpha}q_{\alpha\beta}^{-1}\hat{h}_{\mu}^{\beta}\right)}\label{eq:G_delta_X}
\end{equation}

\begin{equation}
C=\prod_{\mu=1}^{P}\int_{0}^{\infty}\prod_{\alpha=1}^{n}dh_{\mu}^{\alpha}\left[\int_{0}^{\infty}\prod_{\alpha=1}^{n}d\hat{h}_{\mu}^{\alpha}X(\hat{h}_{\mu}^{\alpha},....)e^{\frac{-i\hat{h}_{\mu}^{\alpha}h_{\mu}^{\alpha}}{\sqrt{N}}}\right]
\end{equation}

(Here we changed the orders of $h$ and $\hat{h}$.) 

By integrating out $d\hat{h}$, we get 

\begin{equation}
C=\prod_{\mu=1}^{P}\int_{0}^{\infty}\prod_{\alpha=1}^{n}dh_{\mu}^{\alpha}\left[X(\hat{h}_{\mu}^{\alpha}=h_{\mu}^{\alpha})\frac{1}{\sqrt{\det q}}\right]\label{eq:G_C_int}
\end{equation}

where we used the delta function identity, Eqn. \ref{eq:G_delta_X},(
$X(\hat{h}_{\mu}^{\alpha}=h_{\mu}^{\alpha})$) and n-dimensional Gaussian
integration with matrix $q$ to get $\frac{1}{\sqrt{\det q}}$ term.
We re-write Eqn. \ref{eq:G_C_int} such that: 

\begin{equation}
C=\prod_{\mu=1}^{P}\int_{0}^{\infty}\prod_{\alpha=1}^{n}dh_{\mu}^{\alpha}\left[X(\hat{h}_{\mu}^{\alpha}=h_{\mu}^{\alpha})e^{-\frac{1}{2}\log\det q}\right]
\end{equation}

which is: 

\begin{equation}
C=\prod_{\mu=1}^{P}\int_{0}^{\infty}\prod_{\alpha=1}^{n}dh_{\mu}^{\alpha}\left[[\underbrace{e^{-\frac{1}{2}\left(\sum_{\alpha,\beta=1}^{n}h_{\mu}^{\alpha}q_{\alpha\beta}^{-1}h_{\mu}^{\beta}\right)}}_{X}]e^{-\frac{1}{2}\log\det q}\right]
\end{equation}

where

\begin{equation}
X=e^{-\frac{1}{2}\left(\sum_{\alpha,\beta=1}^{n}h_{\mu}^{\alpha}q_{\alpha\beta}^{-1}h_{\mu}^{\beta}\right)}\label{eq:G_X}
\end{equation}

Then, $C$ can be simplied with $X$, 

\begin{equation}
C=\prod_{\mu=1}^{P}\int_{0}^{\infty}\prod_{\alpha=1}^{n}dh_{\mu}^{\alpha}\left[Xe^{-\frac{1}{2}\log\det q}\right]
\end{equation}

Now going back to $\langle V^{N}\rangle$, in Eqn. \ref{eq:G_VtoN_avg_hh},
is now (after the above calculations): 

\begin{equation}
\langle V^{N}\rangle=\left\langle \int\prod_{\alpha=1}^{N}dw^{\alpha}\delta(w_{\alpha}^{2}-N)\underbrace{\prod_{\mu=1}^{P}\left[\int_{0}^{\infty}\prod_{\alpha=1}^{P}dh_{\mu}^{\alpha}Xe^{-\frac{1}{2}\log\det q}\right]}_{C}\right\rangle \label{eq:G_VN_withC}
\end{equation}

where $X$ is given as Eqn. \ref{eq:G_X}. We note that the term $C$
has two parts 

\begin{equation}
C(X)=\prod_{\mu=1}^{P}\int_{0}^{\infty}\prod_{\alpha=1}^{P}dh_{\mu}^{\alpha}\underbrace{\left\{ e^{-\frac{1}{2}\left(\sum_{\alpha,\beta=1}^{n}h_{\mu}^{\alpha}q_{\alpha\beta}^{-1}h_{\mu}^{\beta}\right)}\right\} }_{X}\label{eq:G_C(X)}
\end{equation}

and 

\begin{equation}
C(0)=e^{-\frac{P}{2}\log\det q}\label{eq:G_C(0)}
\end{equation}

for later use. 

Because it's an integral of $q_{\alpha\beta}=\frac{1}{N}\sum_{j}w_{j}^{\alpha}w_{j}^{\beta}=q_{\alpha\beta}(w)$
which is a complex function of $w$, we write it in terms of integral
of deltas of $q_{\alpha\beta}$. Intuitively, $q$ is generally thought
of as a function of the overlaps between the solution of weights. 

Now, we can re-write the $\langle V^{N}\rangle$ as functions of $q_{\alpha\beta}$
, by introducing 
\begin{equation}
\int dq_{\alpha\beta}d\hat{q}_{\alpha\beta}e^{i\hat{q}_{\alpha\beta}\left(-q_{\alpha\beta}N+\sum_{j}w_{j}^{\alpha}w_{j}^{\beta}\right)}
\end{equation}

And we get: 

\begin{align}
\langle V^{n}\rangle & =\int dq_{\alpha\beta}d\hat{q}_{\alpha\beta}\int\prod_{\alpha=1}^{n}dw^{\alpha}\left[\delta(w_{\alpha}^{2}-N)\right.\label{eq:G_<V^N>}\\
 & \left.e^{-iq_{\alpha\beta}\hat{q}_{\alpha\beta}N}e^{i\hat{q}_{\alpha\beta}\sum_{j}w_{j}^{\alpha}w_{j}^{\beta}}e^{-\frac{P}{2}\log\det q}C(X)\right]
\end{align}

where $C(X)$ is defined in Eqn. \ref{eq:G_C(X)}. 

Let us use the replica symmetric ansatz 

\begin{equation}
q_{\alpha\beta}=(1-q)\delta_{\alpha\beta}+q\label{eq:G_q_RSansatz}
\end{equation}

Now going back to evaluating $C(X)$, 

\begin{equation}
C(X)=\prod_{\mu=1}^{P}\int_{0}^{\infty}\prod_{\alpha}dh_{\mu}^{\alpha}[e^{-\frac{1}{2}\left(\sum_{\alpha,\beta=1}^{n}h_{\mu}^{\alpha}q_{\alpha\beta}^{-1}h_{\mu}^{\beta}\right)}]\label{eq:G_C(X)_withqab}
\end{equation}

where, given our ansatz (Eqn. \ref{eq:G_q_RSansatz}), 

\begin{equation}
q^{-1}_{\alpha\beta}=\frac{1}{1-q}\delta_{\alpha\beta}+\left(\frac{-q}{\left(1-q\right)\left(1+(n-1)q\right)}\right)\label{eq:G_qab}
\end{equation}

where $n$ is a dimension of the matrix. However, we are in the limit
$n\rightarrow0$, therefore 

\begin{equation}
q^{-1}_{\alpha\beta}=\frac{1}{1-q}\delta_{\alpha\beta}+\frac{-q}{(1-q)^{2}}\label{eq:G_qab_n0}
\end{equation}

With Eqn. \ref{eq:G_qab_n0}, Eqn. \ref{eq:G_C(X)_withqab} becomes 

\begin{equation}
C(X)=\prod_{\mu=1}^{P}\int_{0}^{\infty}\prod_{\alpha=1}^{n}dh_{\mu}^{\alpha}[e^{-\frac{1}{2(1-q)}\sum_{\alpha}\left(h_{\mu}^{\alpha}\right)^{2}+\frac{q}{2(1-q)^{2}}\left(\sum_{\alpha}h_{\mu}^{\alpha}\right)^{2}}]
\end{equation}

Let us introduce $h'=\frac{h}{\sqrt{1-q}}$ (for simplicity), then 

\begin{equation}
C(X)=\prod_{\mu=1}^{P}\int_{0}^{\infty}\prod_{\alpha=1}^{n}dh{}_{\mu}^{\prime\alpha}[e^{-\frac{1}{2}\sum_{\alpha}\left(h_{\mu}^{\prime\alpha}\right)^{2}+\frac{q}{2(1-q)}\left(\sum_{\alpha}h{}_{\mu}^{\prime\alpha}\right)^{2}}]\label{eq:G_C(X)_changehprime}
\end{equation}

Using Hubbard\textendash Stratonovich transformation, $e^{A^{2}/2}=\int_{-\infty}^{\infty}\frac{dt}{\sqrt{2\pi}}e^{-\frac{1}{2}t^{2}+At}$,
we introduce additional expansion, 

\begin{equation}
C(X)=\prod_{\mu=1}^{P}\int Dt_{\mu}\prod_{\alpha=1}^{n}\int_{0}^{\infty}dh{}_{\mu}^{\prime\alpha}[e^{-\frac{1}{2}\sum_{\alpha}\left(h_{\mu}^{\prime\alpha}\right)^{2}+\sqrt{\frac{q}{1-q}}t_{\mu}\sum_{\alpha}h{}_{\mu}^{\prime\alpha}}]
\end{equation}

which, finally, can be simplified to 

\begin{equation}
C(X)=\prod_{\mu=1}^{P}\int Dt_{\mu}\left\{ \int_{0}^{\infty}dh{}_{\mu}^{\prime\alpha}[e^{-\frac{1}{2}\left(h_{\mu}^{\prime\alpha}\right)^{2}+\sqrt{\frac{q}{1-q}}t_{\mu}h{}_{\mu}^{\prime\alpha}}]\right\} ^{n}
\end{equation}

In other words, 

\begin{equation}
C(X)=\prod_{\mu=1}^{P}\int Dt_{\mu}\left[\ e^{\log Z(t_{\mu},q)}\right]^{n}
\end{equation}

where 
\begin{equation}
Z(t_{\mu},\:q)=\int_{0}^{\infty}\frac{dh}{\sqrt{2\pi}}[e^{-\frac{1}{2}h^{2}+\sqrt{\frac{q}{1-q}}t_{\mu}h}]\label{eq:G_Z(t,q)}
\end{equation}

Now 

\begin{equation}
C(X)=\left[\int Dt\ e^{n\log Z(t,q)}\right]^{P}
\end{equation}

And by simple power expansion, 

\begin{equation}
C(X)=e^{P\log\left(1+n\langle\log Z(t,q)\rangle\right)}
\end{equation}

Expanding log, we get: 

\begin{equation}
C(X)=e^{Pn\langle\log Z(t,q)\rangle}\label{eq:C(X)_withLogZ}
\end{equation}

Note, we are trying to evaluate

\begin{equation}
\langle\log V\rangle=\lim_{n\rightarrow0}\frac{\langle V^{n}\rangle-1}{n}\label{eq:G_ReplicaVN}
\end{equation}

by using Eqn. \ref{eq:G_VN_withC} and which has $C(X)$ (Eqn.\ref{eq:G_C(X)})
and $C(0)$ (Eqn. \ref{eq:G_C(0)}). We need to now evaluate $C(0)$
with Eqn. \ref{eq:G_C(0)}. 

Now, noting that $\mbox{\ensuremath{\log}det}q_{\alpha\beta}$ is
related to the sum of log of eigenvalues, 

\begin{equation}
\log\det q_{\alpha\beta}=\sum_{l}\log\lambda_{l}
\end{equation}

where $\lambda_{l}$, $l=1,...$ are eigenvalues of $q_{\alpha\beta}$.
With the replica symmetric ansatz of $q_{\alpha\beta}$ is given by
Eqn. \ref{eq:G_q_RSansatz}, the first eigenvalues are $1+(n-1)q$
and the rest of the $(n-1)$ eigenvalues are $1-q$ , 

Therefore 

\begin{equation}
\log\det q=n\log\left(1-q\right)+\frac{nq}{\left(1-q\right)}\label{eq:G_logDetq_withnq}
\end{equation}

where the linear term in $n$ is dominant. Then,using Eqn. \ref{eq:G_logDetq_withnq},
$C(0)$ from Eqn. \ref{eq:G_C(0)} is now, 

\begin{equation}
C(0)=\mbox{exp}\left(-\frac{1}{2}Pn\log\left(1-q\right)-\frac{1}{2}\frac{Pnq}{\left(1-q\right)}\right)
\end{equation}

Now, back to the original equation. 

\begin{align}
\langle V^{n}\rangle & =\int dq_{\alpha\beta}d\hat{q}_{\alpha\beta}\int\prod_{\alpha=1}^{n}dw^{\alpha}\left[\delta(w_{\alpha}^{2}-N)\right.\\
 & \left.e^{-iq_{\alpha\beta}\hat{q}_{\alpha\beta}N}e^{i\hat{q}_{\alpha\beta}\sum_{j}w_{j}^{\alpha}w_{j}^{\beta}}C(0)C(X)\right]
\end{align}

Expanding $\delta(w_{\alpha}^{2}-N)$ to an exponential form with
a new variable $\lambda_{\alpha}$: 

\begin{align}
\langle V^{n}\rangle & =\int dq_{\alpha\beta}d\hat{q}_{\alpha\beta}\int\prod_{\alpha}d\lambda_{\alpha}\int\prod_{\alpha=1}^{n}dw^{\alpha}\text{exp}\Biggl\{ i\lambda_{\alpha}(w_{\alpha}^{2}-N)\\
 & \left.-iq_{\alpha\beta}\hat{q}_{\alpha\beta}N+i\hat{q}_{\alpha\beta}\sum_{j}w_{j}^{\alpha}w_{j}^{\beta}\right\} C(0)C(X)
\end{align}

We change the definition of $C(X)$ slightly, so that 

\begin{equation}
C(X)=e^{Pn\langle\log Z(t,q)\rangle}=e^{\alpha Nn\langle\log Z(t,q)\rangle}\label{eq:G_C(X)_alpha}
\end{equation}

where 

\begin{equation}
\alpha=P/N\label{eq:G_alpha_firstTime}
\end{equation}

is the capacity we wish to get. 

Doing the integral over $w^{\alpha}$ gives the term: 

\begin{equation}
e^{\left\{ -\frac{N}{2}\log\det\left[\lambda_{\alpha}\delta_{\alpha\beta}+\hat{q}_{\alpha\beta}\right]\right\} }
\end{equation}

And absorbing $i$ into the new variable, that is, use $\lambda\leftarrow i\lambda$,
we get, 

\noindent\fbox{\begin{minipage}[t]{1\columnwidth \fboxsep \fboxrule}%
\begin{equation}
\langle V^{n}\rangle=\int dq_{\alpha\beta}d\hat{q}_{\alpha\beta}\int d\lambda_{\alpha}\text{exp}\left[F(q_{\alpha\beta},\hat{q}_{\alpha\beta},\lambda_{\alpha})\right]\label{eq:G_VN_F}
\end{equation}

where 

\begin{equation}
F=-\lambda_{\alpha}N+q_{\alpha\beta}\hat{q}_{\alpha\beta}N-\frac{N}{2}\log\det\left[\lambda_{\alpha}\delta_{\alpha\beta}+\hat{q}_{\alpha\beta}\right]+\alpha Nn\langle\log Z(t,q)\rangle\label{eq:G_VN_F_FFF}
\end{equation}
\end{minipage}}

Where the dependence is only on non-local variables like $N$, $P$,
$n\rightarrow0$. Let us now evaluate the integral using the saddle
point approximation, 

\begin{equation}
I=\int dx\text{exp}\left[-g(x)\right]\sim\text{exp}\left[-g(x_{0})\right]\sqrt{\frac{2\pi}{g''(x_{0})}}\label{eq:G_SaddlePoint}
\end{equation}

where$g(x)$ is at its minimum at $x_{0}$ . 

In the limit of $N\rightarrow\infty$, we can use the following ansatz
assuming replica symmetry, 

\begin{equation}
q_{\alpha\beta}=(1-q)\delta_{\alpha\beta}+q
\end{equation}

\begin{equation}
\hat{q}_{\alpha\beta}=\left(\hat{q}_{0}-\hat{q_{1}}\right)\delta_{\alpha\beta}+\hat{q_{1}}
\end{equation}

where we denote $\hat{q}_{0}-\hat{q_{1}}$ as $\Delta\hat{q}$, and
find the saddle point by taking the derivative of $F$ 

$\frac{\partial F}{\partial\lambda_{\alpha}}=0:$

\begin{equation}
1=\frac{1}{\lambda+\Delta\hat{q}}-\frac{\hat{q}_{1}}{\left(\lambda+\Delta\hat{q}\right)^{2}}\label{eq:G_dFdL}
\end{equation}

$\frac{\partial F}{\partial\hat{q}_{\alpha\beta}}=0:$

\begin{equation}
q_{\alpha\beta}=\left(1-q\right)\delta_{\alpha\beta}+q=\left(\frac{1}{\lambda+\Delta\hat{q}}\right)\delta_{\alpha\beta}-\frac{\hat{q}_{1}}{\left(\lambda+\Delta\hat{q}\right)^{2}}\label{eq:G_dF_dqhat}
\end{equation}

Thus, we get 

\begin{equation}
q=\frac{-\hat{q_{1}}}{\left(\lambda+\Delta\hat{q}\right)^{2}}
\end{equation}

Note that doing saddle points with $\lambda$ and $\hat{q}$ are easier,
and we did these operations first. Then, from Eqn. \ref{eq:G_dFdL},
we obtain

\noindent\fbox{\begin{minipage}[t]{1\columnwidth \fboxsep \fboxrule}%
\begin{equation}
1=\frac{1}{\lambda+\Delta\hat{q}}+q
\end{equation}
\end{minipage}}

If we go back to the original equation, Eqn. \ref{eq:G_VN_F}, and
plug in back the saddle points, then, we get 

\begin{equation}
\frac{Nn}{2}\left[\left(\lambda+\Delta\hat{q}\right)-\log\left(\lambda+\Delta\hat{q}\right)\right]+Nn\alpha\langle\log Z(t,q)\rangle
\end{equation}

which eventually leads us to the expression for the term 

\begin{equation}
\langle V^{n}\rangle=e^{Nn\left[G_{0}(q)+\alpha G_{1}(q)\right]}\label{eq:G_V^N}
\end{equation}

where 

\noindent\fbox{\begin{minipage}[t]{1\columnwidth \fboxsep \fboxrule}%
\begin{equation}
G_{0}(q)=\frac{1}{2}\left[\frac{1}{1-q}+\log\left(1-q\right)\right]\label{eq:G_G0}
\end{equation}

\begin{equation}
G_{1}(q)=\langle\log\underbrace{\int_{0}^{\infty}\frac{dh}{\sqrt{2\pi}}e^{-\frac{1}{2}\left(h-\sqrt{\frac{q}{1-q}}t\right)^{2}}}_{H(\sqrt{\frac{q}{1-q}}t)=H(\sqrt{Q}t)}\rangle_{t}=\langle\log H(\sqrt{Q}t)\rangle_{t}\label{eq:G_G1}
\end{equation}
\end{minipage}}

Note that $G_{0}$ is an entropic term and doesn't change with constraints,
but for classification with additional constraints (such as manifolds
classification), $G_{1}$ does change and and computing $\langle\log H(\sqrt{Q}t)\rangle_{t}$
becomes important. 

Therefore, by L'Hospital's rule, we get in the limit of $n\rightarrow0$, 

\begin{equation}
\langle\log V\rangle=N\left[G_{0}(q)+\alpha G_{1}(q)\right]\label{eq:G_LogV_NG0G1}
\end{equation}

Note that $G_{0}$ is an entropic term, and $G_{1}$ needs to be self-consistent. 

\begin{equation}
\frac{\partial G_{0}}{\partial q}=\frac{q}{2(1-q)^{2}}
\end{equation}

\begin{equation}
\alpha\frac{\partial G_{1}}{\partial q}=\alpha\frac{\partial Q}{\partial q}\frac{\partial G_{1}}{\partial Q}
\end{equation}

Now plugging the expressions back to the self-consistency requirement
($\frac{\partial G_{0}}{\partial q}=\alpha\frac{\partial G_{1}}{\partial q}$)
we get: 

\begin{equation}
1=\frac{\alpha(1-q)}{q}\langle\left(\frac{e^{-Qt^{2}/2}}{\sqrt{2\pi}H(\sqrt{Q}t)}\right)^{2}\rangle_{t}\label{eq:G_alpha_q_final}
\end{equation}

And the final step for capacity is we send $q\rightarrow1$ ($Q\rightarrow\infty$)
because there is only one solution and in this limit the overlap is
1. With such limit, we get $\alpha=2$ , which is the capacity of
isolated points for zero-margin solution. 

\emph{Remarks.} If $\alpha=0$ then $q$ has to be zero from Eqn \ref{eq:G_alpha_q_final}.
Intuitively, this means that as the number of points $P$ approach
zero (or the network size $N$ approaches infinity compared to the
number of points $P$), the problem becomes very easy and there is
a lot of solutions, making the overlap between solutions $q$ go to
0.

\setstretch{\dnormalspacing}

\backmatter
\clearpage
\begin{spacing}{\dcompressedspacing}
\addcontentsline{toc}{chapter}{References}
\bibliographystyle{unsrtnat}
\end{spacing}

\end{document}